\newtheorem{prop}{Proposition}[section]
\newtheorem{lemma}{Lemma}[section]
\newtheorem{thm}{Theorem}[section]
\theoremstyle{definition}\newtheorem{defn}{Definition}[section]
\newtheorem{remark}{Remark}[section]
\newtheorem{?}[thm]{Problem}
\theoremstyle{plain}
\theoremstyle{definition}
\definecolor{mycol}{rgb}{0,0,1}
\numberwithin{equation}{section}
\begin{document}

\title{\boldmath  Beauty And The Beast Part 2: \\
Apprehending The Missing Supercurrent}
\author[G. W. Moore]{Gregory W. Moore}
\author[R.K. Singh]
{Ranveer Kumar Singh}
\address[Gregory W. Moore]{NHETC and Department of Physics and Astronomy\\Rutgers University, 136 Frelinghuysen Rd, Piscataway, NJ 08854, USA}
\email{gwmoore@physics.rutgers.edu}
\address[Ranveer Kumar Singh]{NHETC and Department of Physics and Astronomy\\Rutgers University, 136 Frelinghuysen Rd, Piscataway, NJ 08854, USA}
\email{rks158@scarletmail.rutgers.edu}



\begin{abstract}
The Moonshine module is a $c=24$ conformal field theory (CFT) whose automorphism group is the Monster group. It was argued by Dixon, Ginsparg, and Harvey in \cite{Dixon:1988qd} that there exists a spin lift of the Moonshine CFT with superconformal symmetry. Reference  
\cite{Dixon:1988qd} did not provide an explicit construction of a superconformal current. The present paper provides an explicit construction of a supercurrent. In fact, we will construct several superconformal currents in a spin lift of the Moonshine CFT using techniques developed in \cite{Harvey:2020jvu}. In particular, our construction relies on error correcting codes. 
\end{abstract}
\maketitle
\tableofcontents
\newpage
\section{Introduction}

Moonshine phenomena have played an important role in the history of string theory. 
The explanation of Monstrous Moonshine by Borcherds, Frenkel, Lepowsky, and Meurman \cite{Borcherds:1983sq,Borcherds1992MonstrousMA,Borcherds:1997cx,Frenkel:1988xz} 
played an important role in the development of vertex operator algebra theory and string compactification on orbifolds \cite{Dixon:1986qv,Dixon:1985jw,Dixon:1986jc}. For a nice recent review see \cite{Harrison:2022zee}. In \cite{Harvey:1985fr} Harvey sketched out a clear physical interpretation of the construction of Frenkel, Lepowsky, and Meurman and a few years 
later, in a beautiful paper \cite{Dixon:1988qd}, Dixon, Ginsparg, and Harvey elaborated on the construction and suggested that there is an underlying superconformal algebra in the FLM construction. But reference \cite{Dixon:1988qd}, as was recently emphasized in \cite{Gaiotto:2018ypj}, did not provide an explicit construction of the underlying supercurrent. Although it was proved in \cite{huang_1996} that the Beauty and the Beast module does contain a vertex operator superalgebra (see Theorem \ref{thm:BBSVOA} for the precise statement) but the problem of constructing a supercurrent remains open. The aim of this paper is to provide explicit construction of a supercurrent. The key idea is to construct superconformal currents using the theory of (quantum and classical) error correcting codes, following ideas from \cite{Harvey:2020jvu}. When those ideas are applied to (a spin lift of) the Moonshine module\footnote{There are several terms used in the literature (and in this paper) for the Moonshine module, for example, FLM module, FLM Moonshine module, FLM VOA, Monster CFT, Monster module, Monster theory, Monster VOA, Moonshine CFT.} one is obliged to compute the operator products of bosonic twist fields. Such computations are technically difficult. Indeed the first such computation was a tour-de-force \cite{Dixon:1986qv}. In our computation we make heavy use of the formalism of Dolan, Goddard, and Montague \cite{Dolan:1989vr,Dolan:1994st} (which differs slightly from the standard formalism of vertex operator algebras) and arrive at sufficiently precise formulae that we can carry out the general ideas of \cite{Harvey:2020jvu} and reduce the construction of a superconformal current to the existence of certain error correcting codes related to sublattices of the Leech lattice. Along the way, we clarify the notion of spin lift of a CFT and show that the CFT in \cite{Dixon:1988qd} is a spin lift of the Monster CFT.\\\\
\par We give here an abbreviated technical summary of the paper aimed for physicists. 
We start with the Leech torus theory - the holomorphic  conformal field theory (CFT) described in Subsection \ref{subsec:untwistcft} based on the Leech lattice. Let $\Lambda_{\mathrm{L}}$ be the Leech lattice, considered as a subset of $\mathbb{R}^{24}$, with quadratic form inherited from the standard $\mathbb{R}^{24}$ inner product $\langle \cdot,\cdot\rangle$ (see Subsection \ref{subsec:lattice} for the detailed construction). Consider 24 free chiral bosons $X^j(z),~j=1,\dots,24$ with target space the Leech torus $\mathbb{T}^{24}:=\mathbb{R}^{24}/\Lambda_{\mathrm{L}}.$ The quotient $\mathbb{R}^{24}/\Lambda_{\mathrm{L}}$ is defined by the equivalence relation  
\begin{equation}
    \bm{x}\sim \bm{x}'\iff \bm{x}-\bm{x}'=2\pi\lambda
\label{eq:}
\end{equation}
for some $\lambda\in\Lambda_{\mathrm{L}}$. The field $X^j$ is expanded as 
\begin{equation}
X^{j}(z)=q^j+ip^j\ln z+i\sum_{n\neq 0}\frac{a^j_n}{n}z^{-n}.
\end{equation}
The modes satisfy the commutation relations 
\begin{equation}
[q^j,p^k]=i\delta^{jk},\quad [a^j_n,a^k_m]=n\delta^{jk}\delta_{n+m,0}.
\label{eq:untheisalg}
\end{equation}
We also require that $a^{j\dagger}_n=a^j_{-n}$. For each $\lambda\in\Lambda_{\mathrm{L}}$ we construct the Fock space $\mathscr{F}_{\lambda}$ over the highest weight vector $|\lambda\rangle$ defined by $a^j_{n}|\lambda\rangle=0$ for $n>0$ and $p^j|\lambda\rangle=\lambda^j|\lambda\rangle,$ see Subsection \ref{subsec:untwistcft} for details. The Hilbert space $\mathscr{H}(\Lambda_{\mathrm{L}})$ is then spanned by the Fock spaces. The stress tensor is given by 
\begin{equation}
T(z)=-\frac{1}{2}:\partial_z X\cdot\partial_z X:
\end{equation}
where the dot indicates contraction of $i$ index using $\delta_{ij}$.
The Virasoro modes are given by 
\begin{equation}
\begin{split}
&L_n=\frac{1}{2}\sum_{m\in\mathbb{Z}}a_{n-m}\cdot a_{m},\quad n\neq 0\\&L_0=\frac{1}{2}p^2+\sum_{n\in\mathbb{N}}a_{-n}\cdot a_n.
\end{split}
\end{equation}
They satisfy the Virasoro algebra with central charge $24$.
The torus partition function is then given by 
\begin{equation}
Z_{\Lambda_{\mathrm{L}}}(q)=\text{Tr} q^{L_0-c/24}=\frac{\Theta_{\Lambda_{\mathrm{L}}}(q)}{\eta^{24}(q)}=\frac{1}{q}+24+196884q+\dots=J(q)+24,\quad q=e^{2\pi i\tau}    
\end{equation}
with $J(q)=j(q)-744$ where $j(q)=1/q+744+\dots$ is Klein's famous weight-zero modular function generating the field of modular functions and $\tau$ is the complex moduli of the torus. In particular $J(q)$ has no constant term.\\\\
One can give a field theory description of the theory as follows: Consider fields $\mathcal{X}^{j}(z,\bar{z})$ satisfying the half lattice periodicity
\begin{equation}
    \mathcal{X}^j\sim \mathcal{X}^j+2\pi(\lambda^j/2),\quad \lambda=(\lambda^1,\dots,\lambda^{24})\in\Lambda_{\mathrm{L}}.
\end{equation}
Let $\{e_{\mu}\}_{\mu=1}^{24}$ be a generating set of $\Lambda_{\mathrm{L}}$ so that $\lambda=n^{\mu}e_{\mu}$ with $n^{\mu}\in\mathbb{Z}$ for $\lambda\in\Lambda_{\mathrm{L}}.$ We then start with the field theory action 
\begin{equation}\label{eq:curclyXaction}
    S=\frac{1}{2\pi}\int d^2z\left(\partial\mathcal{X}^j\Bar{\partial}\mathcal{X}^{j}+B_{jk}\partial\mathcal{X}^j\bar{\partial}\mathcal{X}^k\right)=\frac{1}{2\pi}\int d^2z\left(g_{\mu\nu}\partial\mathcal{X}^\mu\Bar{\partial}\mathcal{X}^{\nu}+B_{\mu\nu}\partial\mathcal{X}^\mu\bar{\partial}\mathcal{X}^\nu\right),
\end{equation}
where $g_{\mu\nu}=\langle e_{\mu},e_{\nu}\rangle$, $\mathcal{X}^{j}=(e_{\mu})^j\mathcal{X}^{\mu}$, $B_{jk}$ is a constant antisymmetric B-field and $B_{\mu\nu}=B_{jk}(e_{\mu})^j(e_{\nu})^k$. In general, the partition function for this field theory depends on both the holomorphic and antiholomorphic coordinates $q=e^{2\pi i \tau},\bar{q}=e^{2\pi i\bar{\tau}}$. If we require 
\begin{equation}
    B_{\mu\nu}\equiv \langle e_{\mu},e_{\nu}\rangle~\bmod~2,
    \label{eq:Bmunu}
\end{equation}
then the partition function takes the factorised form 
\begin{equation}
    Z(q,\bar{q})=\left|\frac{1}{\eta(q)^{24}}\sum_{\lambda\in\Lambda_{\mathrm{L}}}q^{\frac{\langle\lambda,\lambda\rangle}{2}}\right|^2,
\end{equation}
where $\eta$ is the Dedekind eta function. The Hilbert space of the theory also decomposes as a tensor product of holomorphic and antiholomorphic sectors. The restriction to the holomorphic sector gives us the Leech torus theory in the field theory description. We can relate the on-shell fields to the 
field $X^i(z)$ of the algebraic description above via   $\mathcal{X}^i(z,\bar{z})=\frac{1}{2}(X^i(z)+X^i(\bar{z}))$.  \\\\ 
\par The Monster CFT is constructed as a $\mathbb{Z}_2$-orbifold of the Leech torus CFT. To construct this orbifold CFT, we gauge the global $\mathbb{Z}_2$ symmetry defined by: 
\begin{equation}
    \theta X^j\theta^{-1}=-X^j,\quad \theta a^j_n\theta^{-1}=-a^j_n,\quad \theta |\lambda\rangle = |-\lambda\rangle
\end{equation}
where $\theta$ is the generator of the  $\mathbb{Z}_2$ automorphism. This defines an action on the Fock spaces and extends linearly to the Hilbert space. Now gauge it. The result is a theory of fields $X^j$ on the orbifold $\mathbb{T}^{24}/\mathbb{Z}_2$ where $\mathbb{Z}_2$ acts on $\mathbb{T}^{24}$ as $\bm{x}\mapsto -\bm{x}$. The fixed points of this action, $\frac{1}{2}\Lambda_{\mathrm{L}}/\Lambda_{\mathrm{L}}$,
form a group isomorphic to $T_2(\Lambda_{\mathrm{L}}):= \Lambda_{\mathrm{L}}/2\Lambda_{\mathrm{L}}$.  
%
%
The B-field defines a bilinear form $\varepsilon$ on $T_2(\Lambda_{\mathrm{L}})$:
\begin{equation}\label{eq:sympformBfield}
    \varepsilon(\lambda_1,\lambda_2)=e^{\pi i\sum_{j<k}B_{jk}(\lambda_1)^j(\lambda_2)^k}.
\end{equation}
From eq.  \eqref{eq:Bmunu} we see that 
\begin{equation}\label{eq:cocycleBfield}
\frac{\varepsilon(\lambda_1,\lambda_2)}{\varepsilon(\lambda_2,\lambda_1)}=(-1)^{\langle \lambda_1,\lambda_2\rangle}.    
\end{equation}
Following the standard rules of gauge theory, the Hilbert space of the orbifold theory is a direct sum of spaces associated with the trivial and nontrivial $\mathbb{Z}_2$ bundles over the circle. The subspace associated with the trivial bundle is the  $\theta=1$ subspace $\mathscr{H}^+(\Lambda_{\mathrm{L}})$ of $\mathscr{H}(\Lambda_{\mathrm{L}})$, called the \textit{untwisted sector}. The subspace associated with the nontrivial bundle is known as the twisted sector.

To construct the twisted sector we start with fields $X^j(z)$ on $\mathbb{T}^{24}/\mathbb{Z}_2$ satisfying the antiperiodic boundary condition 
\begin{equation}
    X^j(e^{2\pi i}z)=-X^j(z) ~(\bmod~\Lambda_{\mathrm{L}}).
\end{equation}
These fields have mode expansion of the form 
\begin{equation}
X^j(z)=\Tilde{q}^j+i\sum_{k\in\mathbb{Z}+\frac{1}{2}}\frac{c^j_k}{k}z^{-k}
\end{equation}
where the modes  satisfy the same algebra as eq.  \eqref{eq:untheisalg}:
\begin{equation}
[c^j_k,c^m_{\ell}]=k\delta_{jm}\delta_{k+\ell,0}.
\end{equation}
Classically, the $\mathbb{Z}_2$-orbifold breaks the translation symmetry on the torus to the order two points $T_2(\Lambda_{\mathrm{L}})$. 
The unbroken translational symmetry is realised in the quantum theory via the representation of a central extension of $T_2(\Lambda_{\mathrm{L}})$, as is often the case in quantum theory. 
In the present case the central extension is the Heisenberg extension and fits in an exact sequence: 
\begin{equation}
   0\longrightarrow \mathbb{Z}_2\longrightarrow\Gamma(\Lambda_{\mathrm{L}})\xrightarrow[]{~~-~~}T_2(\Lambda_{\mathrm{L}})\longrightarrow 0 
   \label{eq:centextheis}
\end{equation}
where $-$ is the surjective homomorphism. This central extension is characterised by a cocycle $\varepsilon$ which is provided by the B-field (see eq.  \eqref{eq:sympformBfield}). We may choose a section $T:T_2(\Lambda_{\mathrm{L}})\longrightarrow \Gamma(\Lambda_{\mathrm{L}})$ such that
\begin{equation}
\begin{split}    &T([\lambda_1])T([\lambda_2])=\varepsilon(\lambda_1,\lambda_2)T([\lambda_1+\lambda_2])\\
\end{split}
\end{equation}
The commutator function is given by eq.  \eqref{eq:cocycleBfield}.
%
%
%
There is a unique unitary irreducible real representation of the Heisenberg group $\Gamma(\Lambda_{\mathrm{L}})$ up to isomorphism. Let us denote it by $\mathcal{S}$. Since $|T_2(\Lambda_{\mathrm{L}})|=2^{24}$, $\mathcal{S}\cong \mathbb{R}^{2^{12}}$ as a vector space and can be constructed using gamma matrices (see the discussion below eq.  \eqref{eq:alphachi} for detailed construction). Let $\mathscr{F}_T$ be the Fock space constructed using the half-integral moded oscillators $c_r^j$. Then the twisted Hilbert space is
\begin{equation}
    \mathscr{H}_T(\Lambda_{\mathrm{L}})=\text{Span}_\mathbb{C}\{\mathscr{F}_T\otimes_{\mathbb{R}} \mathcal{S}\}.
\end{equation}
Note that $\mathscr{F}_T$ is complex and hence $\mathscr{H}_T(\Lambda_{\mathrm{L}})$ is a complex vector space. 
One can extend the action of the involution $\theta$ to $\mathscr{H}_T(\Lambda_{\mathrm{L}})$ by defining 
\begin{equation}
\theta c^j_k\theta^{-1}=-c^j_k,\quad \theta\chi_0=-\chi_0,\quad \chi_0\in\mathcal{S}. 
\label{eq:}
\end{equation}
Thus $\mathscr{H}_T(\Lambda_{\mathrm{L}})$ also decomposes as a direct sum of $\theta=\pm 1$ subspaces which we denote by $\mathscr{H}_T^{\pm}(\Lambda_{\mathrm{L}})$. The total Hilbert space of the Monster CFT is then 
\begin{equation}\label{eq:flmmod}
    \mathscr{H}_{\text{FLM}}:=\mathscr{H}^+(\Lambda_{\mathrm{L}})\oplus \mathscr{H}_T^+(\Lambda_{\mathrm{L}}).
\end{equation}
The vertex operators are defined precisely in Subsection \ref{subsec:twistcft} 
where the space $\mathscr{H}_{\text{FLM}}$ is also denoted 
$\widetilde{\mathscr{H}}(\Lambda_{\text{L}})$.   
This theory has the famous property that the automorphism group of the operator product algebra is the Monster group.
%
%
%
The partition function for the theory is 
\begin{equation}
    Z_{\mathscr{H}_{\text{FLM}}}(q)=J(q).
\end{equation}

In section \ref{sec:scft}  below we give a careful 
explanation of a result of \cite{Lin:2019hks} that the  theory studied in \cite{Dixon:1988qd} is a spin lift
\footnote{Reference \cite{Dixon:1988qd} uses the term ``$\mathbb{Z}_2$ cover,'' without 
giving a precise definition. We make this notion precise by explaining the spin lift of a CFT, see Section \ref{sec:spin_lift}.}
of the Monster CFT corresponding to an involution $\iota$ of the Monster which acts as the identity on $\mathscr{H}^+(\Lambda_{\mathrm{L}})$ and as multiplication by $-1$ on $\mathscr{H}^+_T(\Lambda_{\mathrm{L}})$.   The Hilbert space of the theory is 
\begin{equation}
    \mathscr{H}_{\text{BB}}=\mathscr{H}(\Lambda_{\mathrm{L}})\oplus\mathscr{H}_T(\Lambda_{\mathrm{L}}).
\end{equation}
The main claim of \cite{Dixon:1988qd} is that this theory is superconformal. 
As we will discuss, some care must be taken here since $\mathscr{H}_{\text{BB}}$
is not actually a vertex operator superalgebra (SVOA). As in \cite{Dixon:1988qd}, we decompose 
\begin{equation}
\mathscr{H}_{\text{BB}}=\mathscr{H}_{\text{BB}}^{\text{NS}}\oplus\mathscr{H}_{\text{BB}}^{\text{R}} ~,   
\end{equation}
where 
\begin{equation}
\mathscr{H}_{\text{BB}}^{\text{NS}}:=\mathscr{H}(\Lambda_{\mathrm{L}})^+ \oplus\mathscr{H}_T(\Lambda_{\mathrm{L}})^-,\quad \mathscr{H}_{\text{BB}}^{\text{R}}:=\mathscr{H}(\Lambda_{\mathrm{L}})^- \oplus\mathscr{H}_T(\Lambda_{\mathrm{L}})^+~.    
\end{equation}
We will show in Theorem \ref{thm:BBSVOA} that the subspace 
\begin{equation}\label{eq:BB-SUBSVOA}
\mathscr{H}_{\text{BB}}^{\text{NS}}= \mathscr{H}(\Lambda_{\mathrm{L}})^+ \oplus\mathscr{H}_T(\Lambda_{\mathrm{L}})^-
\end{equation}
is an SVOA. Then we will show that   there exists a conformal dimension $\frac{3}{2}$ vertex operator that can serve as a supercurrent, thus making 
\eqref{eq:BB-SUBSVOA} a superconformal VOA (SCVOA). The remaining space
$\mathscr{H}_{\text{BB}}^{\text{R}}=\mathscr{H}(\Lambda_{\mathrm{L}})^- \oplus\mathscr{H}_T(\Lambda_{\mathrm{L}})^+$
is a module for this SCVOA. 

In some more detail, the construction of the superconformal current proceeds as follows. 
The spinor representation $\mathcal{S}$, which was projected out in the Monster CFT, contains states with conformal weight $24\cdot\frac{1}{16}=\frac{3}{2}$. These are the twist fields in the theory and have the correct conformal dimension of a supercurrent. The main new observation of the present paper is that there does in fact exist a spinor $\Psi\in\mathcal{S}$ such that the corresponding vertex operator $V(\Psi,z)$ can be identified with $T_F(z)$ in the OPE eq.  \eqref{eq:opetbtf}. It is not difficult to see that for any $\Psi \in \mathcal{S}$ the vertex operator $V(\Psi,z)$ is a Virasoro primary of weight $3/2$. With more effort, and making use of
the formalism of \cite{Dolan:1989vr,Dolan:1994st}, we show that the vertex operator $V(\Psi,z)$ satisfies the OPE
\begin{equation}
 V(\Psi,z)V(\Psi,w)\sim \frac{\langle\Psi|\Psi\rangle}{(z-w)^3}+\frac{1}{8}\frac{\langle\Psi|\Psi\rangle}{z-w}T(w)+\frac{1}{z-w}\sum_{\substack{\lambda\in\Lambda_{\mathrm{L}}\\\langle\lambda,\lambda\rangle=4}}\kappa_{\lambda}(\Psi)e^{i\lambda\cdot X(w)}   
\end{equation}
where 
\begin{equation}
    \kappa_{\lambda}(\Psi)=\langle\Psi|T([\lambda])|\Psi\rangle,\quad T([\lambda])\in\Gamma(\Lambda_{\mathrm{L}}).
\end{equation}
The problem of constructing a supercurrent then reduces to finding a spinor $\Psi$ such that $\kappa_{\lambda}(\Psi)=0$ for every $\lambda$ of norm-squared 4. Given such a non-zero vector we can normalise it so that $\langle\Psi|\Psi\rangle=4$ and then $V(\Psi,z)$   is a supercurrent with central charge 
$\hat c = 16$.  \\\\\

Note that for any Abelian subgroup $\hat{\mathcal{L}}\subset\Gamma(\Lambda_{\mathrm{L}})$, the operator 
\begin{equation}
P(\hat{\mathcal{L}}):=\mathcal{N}\sum_{[\lambda]\in\hat{\mathcal{L}}}T([\lambda])
\end{equation}
is a projection operator for a suitable normalization factor 
$\mathcal{N} = \vert \hat{\mathcal{L}}\vert^{-1}$. Moreover if $\hat{\mathcal{L}}$ is a maximal Abelian subgroup then $P(\hat{\mathcal{L}})$ is a rank one projection operator. We will show  that there exist maximal Abelian subgroups $\hat{\mathcal{L}}$ such that, for a suitably normalized $\Psi\in\text{Im}P(\hat{\mathcal{L}})$, the vertex operator $V(\Psi,z)$ is a supercurrent. Our proof relies on the existence of sublattices $\Lambda_{\text{SC}}\subset\Lambda_{\mathrm{L}}$ such that: 
\begin{enumerate}
    \item For every $\lambda_1,\lambda_2,\lambda\in\Lambda_{\text{SC}}$, $\langle\lambda_1,\lambda_2\rangle\equiv 0\bmod~2$ and $\langle\lambda,\lambda\rangle\equiv 0\bmod ~4$. 
    \item $\Lambda_{\text{SC}}$ does not contain any vector of norm-squared 4.
    \item $2\Lambda_{\mathrm{L}}\subset \Lambda_{\text{SC}}$ with index $2^{12}$. 
\end{enumerate}
Hypothesis (1) and (3) implies that $\widetilde{\Gamma}(\Lambda_{\text{SC}}):=\{T([\lambda]):[\lambda]\in\widetilde{\Lambda}_{\text{SC}}\}$ is an order $2^{12}$ \emph{Abelian} subgroup of $\Gamma(\Lambda_{\mathrm{L}})$, where $\widetilde{\Lambda}_{\text{SC}}:=\Lambda_{\text{SC}}/2\Lambda_{\mathrm{L}}$. Hypothesis (2) is a technical requirement crucial for the construction of the supercurrent.
We call such sublattices \textit{superconformal sublattices}.
We define a specific isomorphism\footnote{note that there is an obvious isomorphism between the two spaces once we choose a generating set for $\Lambda_{\mathrm{L}}$.} $T_2(\Lambda_{\mathrm{L}})\cong\mathbb{F}_2^{24}$ using the representation of $\Gamma(\Lambda_{\mathrm{L}})$ in terms of the Dirac gamma matrices (see \eqref{eq:isof} and the discussion below it). Then the chain of maps 
\begin{equation}
 \Gamma(\Lambda_{\mathrm{L}})\longrightarrow T_2(\Lambda_{\mathrm{L}})\longrightarrow\mathbb{F}_2^{24}   
\end{equation}
determines a $[24,12]$ binary code $\mathcal{C}_{\text{SC}}$ when restricted to $\widetilde{\Gamma}(\Lambda_{\text{SC}})$. Finally we show that
\begin{equation}
\langle\lambda,\lambda\rangle=4\implies \kappa_{\lambda}(\Psi)=0    
\end{equation}
if $\Psi\in\text{Im}\,P(\widetilde{\Gamma}(\Lambda_{\text{SC}}))$ because of the error-correcting properties of $\mathcal{C}_{\text{SC}}$.\\\\ 

An example of a superconformal lattice is given by an embedding of $\sqrt{2} \Lambda_{\mathrm{L}}$ into $\Lambda_{\mathrm{L}}$. 
In fact there can be several embeddings which cannot be rotated into each other by an element of the automorphism group $\mathrm{Co}_0$ of the Leech lattice. We investigate the different supercurrents that can be constructed in this way in Section \ref{sec:N>1susy}.  We give an explicit construction of two inequivalent  superconformal sublattices
using the theory of $\mathbb{Z}_4$ codes. One of these superconformal lattices gives rise to the binary Golay code, so that the corresponding supercurrent can be considered analogous to the Duncan spinor in Conway supermoonshine \cite{Duncan:2014eha}. In principle there might be further superconformal sublattices of $\Lambda_{\mathrm{L}}$ not obtained by embeddings of $\sqrt{2}\Lambda_{\mathrm{L}}$, but we have nothing to say about these.  Since there are inequivalent supercurrents one naturally asks if there is $\mathcal{N}\geq 2$ supersymmetry in the BB module. Unfortunately, the answer appears to be no, since there is no suitable $U(1)$ R-symmetry current. Nevertheless, there is an unusual algebraic structure in this model. A deeper understanding would most likely turn out to be fruitful.  \\\\

\par We comment here on some related recent literature.  Vertex operator superalgebras (SVOAs) of central charge $c=24$ were classified in \cite{Hohn:2023auw}. The SVOA of equation \eqref{eq:BB-SUBSVOA} is entry 968 of \cite[Table 9]{Hohn:2023auw}\footnote{We thank the reviewer for pointing this out.}.  Recently, it was also observed that most of the vertex operator superalgebras associated to the odd unimodular lattices of rank 24 have an $\mathcal{N}=4$ superconformal structure \cite{höhn2018vertex}. Moreover, one of the results of \cite{Gaiotto:2018ypj} was that the vertex operator superalgebra obtained by an orbifold of the VOA based on the odd Leech lattice has at least two inequivalent $\mathcal{N}=1$ superconformal structures. Indeed, \cite{Gaiotto:2018ypj} also shows that the orbifold of the VOA based on odd Leech lattice is isomorphic to the Beauty and the Beast SCFT, thus providing an indirect construction of a supercurrent in the Beauty and the Beast SCFT. As far as we understand, the isomorphism of their description to that used by \cite{Dixon:1988qd} is non-explicit. Our construction, which deals with the orbifold theory based on the even Leech lattice, gives a direct construction of the supercurrents in the Beauty and the Beast SCFT and hence is different from the considerations of \cite{Gaiotto:2018ypj,höhn2018vertex}. 
\\\\
\par Here is a brief outline of the remainder of the paper:  In Section \ref{sec:cftextension} we review the formalism of \cite{Dolan:1994st} for vertex operator algebras (VOAs) and their extensions by representations. In Section \ref{sec:latcft} we review the twisted and untwisted construction of lattice VOAs. We discuss the Monster VOA as the prime example. In Section \ref{sec:scft} we review the notion of a superconformal vertex operator algebra (SCVOA) and the spin lift of CFTs and apply it to Monster CFT in Subsection  \ref{subsec:BB} to show that the theory in \cite{Dixon:1988qd} is a spin lift of the Monster CFT. Finally in Section \ref{sec:supconfbb} we show how superconformal lattices give rise to supercurrents. In   Section \ref{sec:N>1susy} we construct two inequivalent supercurrents by constructing two inequivalent superconformal sublattices.   
There are two technical appendices. In Appendix  \ref{sec:codeslat} we review some aspects of binary and quaternary codes and their associated lattices that are used in the paper. A second Appendix \ref{App:TwistedLocality} gives a detailed proof of a crucial locality relation satisfied by twist fields.\\\\

\textbf{Acknowledgement.} GM warmly thanks L. Dixon, P. Ginsparg, and J. Harvey for discussions on various aspects of Moonshine over the years. We also thank Theo Johnson-Freyd, Gerald H\"{o}hn, Geoff Mason and Shu-Heng Shao  for some informative correspondence on aspects of the draft. R.K.S would like to thank Richard Borcherds, Scott Carnahan and James Lepowsky for some useful correspondence, Ashoke Sen for raising some interesting questions, Anindya Banerjee for many discussions on Moonshine, and Runkai Tao for help with mathematica code. Finally the authors would like to that the anonymous referees whose comments helped to improve the presentation of the paper. The research of G.M and  R.K.S is supported by the US Department of Energy under grant DE-SC0010008.

\section{Vertex Operator Algebras And Their Extensions}
\label{sec:cftextension}
In this section, we review vertex operator algebra in the formalism of \cite{Dolan:1994st}. See \cite{Frenkel:1988xz} for the mathematical definition of vertex operator algebras. We refer the reader to \cite{Dolan:1994st} for more details and the proofs of the results in this section.
\subsection{Definition Of A Vertex Operator Algebra}
\noindent We begin with the definition of a vertex operator algebra.
\begin{defn}
A \emph{vertex operator algebra} (VOA) is a tuple\footnote{\label{foot:VOA_CFT_spin} What we call a vertex operator algebra is called a conformal field theory in  \cite{Dolan:1994st}. In the physics literature, a VOA is the \textit{chiral} part of a CFT and the full CFT is \textit{constructed} from the irreducible representations of the VOA, see Subsection \ref{subsec:repvoa} for the definition of representation of a VOA. It is in this sense that we will use the word vertex operator algebra and conformal field theories in this paper, see \cite{Moore:1988qv,Moore:1989vd} for more details on this viewpoint. For the Monster VOA, there are no other irreducible representations other than itself. So we can (and do) use Monster CFT synonymously with Monster VOA. An attempt to formalize the full CFT in terms of the \textit{full non-chiral algebra} and its modules was made in \cite{huang_kong_2006,Moriwaki:2020cxf,Singh:2023mom}. Finally, we use the term spin CFT to mean a CFT whose state spaces on circle (a time slice of the cylinder) depends on the spin structure on the circle.} $\left(\mathscr{H}, \mathscr{F}, \mathbf{V},|0\rangle, \psi_{L}\right)$ where $\mathscr{H}$ is a Hilbert space of states, $\mathscr{F}$ a dense subspace called the \emph{Fock space} and $\mathbf{V}$ a set of linear operators 
on $\mathscr{H}$ called \emph{vertex operators} $V(\psi, z)$ which are linear in $\psi$ and depend meromorphically on $z\in\mathbb{C}$. They in one-to-one correspondence with the states $\psi \in \mathscr{F}$. The product of vertex operators $V(\psi_1,z_1)V(\psi_2,z_2)\cdots$ is defined, \emph{a priori}, only in the region $|z_1|>|z_2|>\dots$. 
There are two special states in $\mathscr{F}$, the \emph{vacuum} $|0\rangle$ and a \emph{conformal state} $\psi_{L}.$ The first five properties listed below must be satisfied. If in addition the last property (6) is satisfied, we call it an \emph{Hermitian vertex operator algebra} 
\begin{enumerate}
\item The \emph{moments} of the vertex operator of $\psi_{L}$ is given by by
\begin{equation}
V\left(\psi_{L}, z\right)=\sum_{n \in \mathbb{Z}} L_{n} z^{-n-2}
\end{equation}
and satisfies the Virasoro algebra
\begin{equation}
\left[L_{m}, L_{n}\right]=(m-n) L_{m+n}+\frac{c}{12} m\left(m^{2}-1\right) \delta_{m,-n}
\label{eq:viraalg}
\end{equation}
with $L_{n}^{\dagger}=L_{-n}$ and $L_{n}|0\rangle=0, n \geq-1$. The vertex operator $T(z)=V\left(\psi_{L}, z\right)$ is called the \emph{stress tensor}. 
\item The vertex operators satisfy
\begin{equation}
(i)\quad V(\psi, z)|0\rangle=e^{z L_{-1}} \psi\quad (ii) \quad V(\psi,z)V(\phi,w)=V(\phi,w)V(\psi,z)
\label{eq:VL-1,locrel}
\end{equation}
The relation (ii) is called the \emph{bosonic locality relation} and is interpreted as follows: the left hand side is \emph{a priori} well defined for $|z|>|w|$ and the right hand side is well defined for $|w|>|z|$. The locality relation means that the analytic continuation of the matrix elements of both sides to the whole $(z,w)$ plane must be equal except for poles at $z=w$ and $z,w=0,\infty$. A vertex operator is said to be \textit{local}  with respect to $\mathbf{V}$ if it satisfies the bosonic locality relation with every vertex operator in $\mathbf{V}$.
\item The operator $w^{L_0}:=e^{L_0\ln w}$ acts locally with respect to $\mathbf{V}$, that is $w^{L_0}V (\psi,z)w^{-L_0}$ is local with respect to $\mathbf{V}$. 
\item\label{it:L0lower_bound}  $L_0$ is diagonalizable and the spectrum of $L_0$ is bounded below. The eigenvalues $h_{\psi}$ of eigenstates $\psi$ of $L_0$ are called \emph{conformal weights}.
\item\label{it:cft_type} The vacuum is the only $\mathfrak{su}(1,1)$ invariant state in the theory, \textit{i.e.} the only state annihilated by $L_0$ and $L_{\pm 1}$.
\item \label{it:hermiticity} For all $\psi\in\mathscr{F}$, the vertex operator $V(e^{z^*L_1}z^{*-2L_0}\psi,1/z^*)^{\dagger}$, where $z^*$ is the complex conjugate of $z$, is local with respect to $\mathbf{V}.$
\end{enumerate}
\label{defn:cft}
\end{defn}
\begin{defn}
\begin{enumerate}
\item A state $\psi\in\mathscr{H}$ is called a \emph{quasi-primary state}  if $L_1\psi=0$. It is called a \emph{conformal primary state} if $L_n\psi=0$ for $n=1,2$ (and hence for all $n>0$ by Virasoro algebra). The corresponding vertex operator $V(\psi,z)$ is called \emph{quasi-primary field} and \emph{conformal primary field} respectively.
\item The moments of a vertex operator corresponding to an eigenstate $\psi$ of $L_0$ with conformal weight $h_{\psi}$ are defined to be the modes in the Laurent expansion around zero:
\begin{equation}
V(\psi,z)=\sum_{n\in \mathbb{Z}}V_n(\psi)z^{-n-h_{\psi}}
\end{equation}
where $z\not=0,\infty$. 
\end{enumerate}
\end{defn}
\noindent The following properties can be proved as a consequence of the definition of an Hermitian VOA (see \cite{Dolan:1994st} for proofs):
\begin{prop}
Let $\left(\mathscr{H}, \mathscr{F}, \mathbf{V},|0\rangle, \psi_{L}\right)$ be an Hermitian vertex operator algebra. Then the following statements hold:
\begin{enumerate}
\item We have that 
\begin{equation}\label{eq:transprop}
    \frac{d}{dz}V(\psi,z)=V(L_{-1}\psi,z).
\end{equation}
This is called the translation property.
\item The conformal weights of an Hermitian VOA are non-negative integers. 
\item Any conformal primary field $V(\psi,z)$ with $\psi$ an eigenstate of $L_0$ with conformal weight $h_{\psi}$ satisfies the commutator 
\begin{equation}\label{eq:LncommVpsiz}
    [L_n,V(\psi,z)]=z^{n}\left(z\frac{d}{dz}+(n+1)h_\psi\right)V(\psi,z).
\end{equation}
Equivalently, the moments satisfy
\begin{equation}
[L_m,V_n(\psi)]=((h_{\psi}-1)m-n)V_{n+m}(\psi).
\label{eq:momcomlm}
\end{equation}
For a quasi-primary field, \eqref{eq:LncommVpsiz} and \eqref{eq:momcomlm} is true only for $L_0,L_{\pm 1}$.
\item The \emph{operator product expansion} (OPE) of any two vertex operators  $V(\psi,z)$ and $V(\phi,w)$ corresponding to eigenstates of $L_0$ is given by\begin{equation}
\begin{split}
V(\psi,z)V(\phi,w)&=\sum_{n=0}^{\infty}(z-w)^{-n-h_{\psi}-h_{\phi}}V(\phi_n,w)\\&=\sum_{n=0}^\infty(z-w)^{-h_{\phi_n}-h_{\psi}-h_{\phi}}V(\phi_n,w),
\end{split}
\label{eq:genope}
\end{equation}
where
\begin{equation}
    \phi_n:=V_{h_{\phi}-n}(\psi)\phi.
\end{equation}
Equivalently, \footnote{The left hand side of \eqref{eq:dualityV} is defined in the domain $|z|>|\zeta|$ while the right hand side of \eqref{eq:dualityU} is defined in the domain $|\zeta|>|z-\zeta|$. So the equality in \eqref{eq:dualityV} must be understood in the sense of analytic continuation of the matrix elements of the two sides as in the bosonic locality relation (ii) in \eqref{eq:VL-1,locrel}.}
\begin{equation}\label{eq:dualityV}
    V(\psi,z)V(\phi,w) = V( V(\psi,z-w)\phi, w) ~ . 
\end{equation}
\item For each $\psi\in\mathscr{F}$, the locality of $V(e^{z^*L_1}z^{*-2L_0}\psi,1/z^*)^{\dagger}$ implies that we have an antilinear map
\begin{equation}
    \psi\longmapsto \overline{\psi},
\end{equation}
where 
\begin{equation}
    V(e^{z^*L_1}z^{*-2L_0}\psi,1/z^*)^{\dagger}=V(\overline{\psi},z)
\end{equation}
so that 
\begin{equation}
\overline{\psi}:=\lim_{z\to 0}V(e^{z^*L_1}z^{*-2L_0}\psi,1/z^*)^{\dagger}|0\rangle.
\end{equation}
\item For a quasi-primary vertex operator, the moments satisfy $V_n(\psi)^{\dagger}=V_{-n}(\overline{\psi}).$
\end{enumerate}
\label{prop:propcft}
\end{prop}
\begin{remark}
We would like to put the definition above into perspective. The notion of vertex (operator) algebras was introduced by Borcherds in \cite{Borcherds:1983sq} and Frenkel-Lepowsky-Meurman in \cite{Frenkel:1988xz}. The notion of vertex operator algebra in \cite{Frenkel:1988xz} differs from ours in the sense that \cite{Frenkel:1988xz} uses formal calculus to define vertex operators and the locality axiom is replaced by Jacobi identity. In \cite{frenkel_huang_lepowsky_1993}, it has been shown that locality (called commutativity in \cite{frenkel_huang_lepowsky_1993}) and OPE \eqref{eq:dualityV} (called associativity in \cite{frenkel_huang_lepowsky_1993}) of vertex operators is equivalent to the Jacobi identity. The lower boundedness of the spectrum of $L_0$ in item \ref{it:L0lower_bound} corresponds to a grading-restriction condition on the VOA. Moreover, the Hermiticity axiom in item \ref{it:hermiticity} above is equivalent to the statement that the VOA is isomorphic to its \textit{contragradient} as modules for itself, see \cite{frenkel_huang_lepowsky_1993} for more details.     
\end{remark}
\begin{remark}
Usually primary fields in a VOA are defined based on their OPE with the stress tensor. A field $\phi(z)$ is called a primary field of conformal dimension $h$ if it has the following OPE with $T(z)$: 
\begin{equation}
T(z)\phi(w)\sim\frac{h}{(z-w)^2}\phi(w)+\frac{\partial_w\phi(w)}{z-w}~.
\end{equation}
Using the standard contour integral manipulations, one can show that this definition is equivalent to commutator \eqref{eq:momcomlm}. 
\end{remark}
\begin{defn}
A sub-vertex operator algebra (sub-VOA) of a VOA $\mathscr{H}$ is defined to be a subspace $\mathscr{I}\subset\mathscr{H}$ such that 
\begin{enumerate}
\item $\mathscr{I}$ is an invariant subspace for all the vertex operators $V(\phi,z)$ for $\phi\in\mathscr{F}_{\mathscr{I}}:=\mathscr{F}\cap \mathscr{I}$.
\item $\mathscr{I}$ is invariant under the $\mathfrak{su}(1,1)$ algebra spanned by $L_{\pm 1},L_0$.
\item $\overline{\mathscr{F}_{\mathscr{I}}}:=\{\overline{\phi}:\phi\in\mathscr{F}_{\mathscr{I}}\}=\mathscr{F}_{\mathscr{I}}$. 
\end{enumerate}
\end{defn}  
One can show that the sub-VOA contains a conformal state and that 
a sub-VOA of a (Hermitian) VOA is itself a (Hermitian) VOA. 
\begin{defn}
We say that two VOAs $\mathscr{H}$ and $\mathscr{H}'$ with Fock spaces $\mathscr{F}$ and $\mathscr{F}'$ respectively and corresponding vertex operators $V(\psi,z)$ and $V'(\psi',z)$ are isomorphic if there exists a unitary map $u:\mathscr{H}\longrightarrow\mathscr{H}'$ such that 
\begin{equation}
V'(u\psi,z)=uV(\psi,z)u^{-1},\quad \psi\in\mathscr{F}.    
\end{equation}
An isomorphism  $u:\mathscr{H}\longrightarrow\mathscr{H}$ which preserves $|0\rangle,\psi_L$ is called an automorphism.  
The group of all automorphisms of a VOA $\mathscr{H}$ is called the automorphism group of the VOA.
\end{defn}
If $u:\mathscr{H}\longrightarrow\mathscr{H}'$ is an isomorphism then 
\begin{equation} u|0\rangle=\left|0^{\prime}\right\rangle, u \psi_L=\psi_L^{\prime}, 
\end{equation}
where $|0\rangle,\left|0^{\prime}\right\rangle$ are the vacuum states and $\psi_L$, $\psi_L^{\prime}$ are the conformal states in $\mathscr{H}$ and $\mathscr{H}^{\prime}$ respectively.  \subsection{Representation Of A Vertex Operator Algebra}\label{subsec:repvoa}
\begin{defn}
A representation $(\mathbf{U},\mathscr{K})$ of a vertex operator algebra $(\mathscr{H},\mathbf{V},\mathscr{F},|0\rangle, \psi_L)$ is a Hilbert space $\mathscr{K}$ and a set of linear operators $U(\psi,z):\mathscr{K}\longrightarrow\mathscr{K}$ meromorphic in $z$ and linear in $\psi$ for $\psi\in\mathscr{F}$ and for all $\psi,\phi\in\mathscr{F}$, we have:
\begin{equation}\label{eq:dualityU}
    U(\psi, z)U(\phi, \zeta) = U(V(\psi, z - \zeta)\phi, \zeta)
\end{equation}
with $U(|0\rangle,z)\equiv \mathds{1}$. The equality is understood as in \eqref{eq:dualityV}.
\label{def:repofcft}
\end{defn}
One can show that the operators $U(\psi,z)$ are local in the sense of bosonic locality relation. Moreover, the modes of $U(\psi_L,z)$ satisfies the Virasoro algebra and $U(\psi_L,z)$ also satisfies the translation property as in \eqref{eq:transprop}. For $\chi\in\mathscr{K}$ let us define a meromorphic family of operators $W(\chi,z):\mathscr{H}\longrightarrow\mathscr{K}$ by  
\begin{equation}
W(\chi,z)\phi := e^{zL_{-1}} U(\phi,-z)\chi. 
\label{eq:defW}
\end{equation}
Then we have the following proposition.
\begin{prop}\emph{\cite[Proposition 3.4]{Dolan:1994st}}
The existence of a representation is equivalent to the existence of ``intertwining'' operators $W(\chi,z):\mathscr{H}\longrightarrow\mathscr{K}$ for $\chi\in\mathscr{K}$ (or a dense subspace of $\mathscr{K}$) satisfying 
\begin{equation}
    U(\psi,z)W(\chi,\zeta)=W(\chi,\zeta)V(\psi,z)
\label{eq:}
\end{equation}
and
\begin{equation}
    \lim_{\zeta\to 0}W(\chi,\zeta)|0\rangle=\chi.
\label{eq:}
\end{equation}
\end{prop}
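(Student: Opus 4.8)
The plan is to prove the two implications by exhibiting an explicit dictionary between a representation $\mathbf{U}$ and a system of intertwiners $\mathbf{W}$: in one direction $W$ is built from $U$ through eq.~\eqref{eq:defW}, and in the other $U$ is recovered from $W$ by inverting that relation, $U(\phi,z)\chi = e^{zL_{-1}}W(\chi,-z)\phi$. The single structural input used throughout is that $L_{-1}$ on $\mathscr{K}$, obtained from the modes of $U(\psi_L,z)$, generates translations of the vertex operators, $e^{-\zeta L_{-1}}U(\psi,z)e^{\zeta L_{-1}} = U(\psi,z-\zeta)$; this follows from the commutator $[L_{-1},U(\psi,z)] = \partial_z U(\psi,z)$, which in turn is a consequence of the duality axiom eq.~\eqref{eq:dualityU} (take $\phi=|0\rangle$, or $\psi=\psi_L$) together with $U(|0\rangle,z)\equiv\mathds{1}$.

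For the direction ``representation $\Rightarrow$ intertwiners'' I would take $W(\chi,z)$ defined by eq.~\eqref{eq:defW} and verify the requirements. Linearity in $\chi$ and meromorphy in $z$ are inherited from the corresponding properties of $U(\phi,-z)$, since on any fixed state $e^{zL_{-1}}$ contributes only a power series in $z$. The limit relation is immediate: $W(\chi,\zeta)|0\rangle = e^{\zeta L_{-1}}U(|0\rangle,-\zeta)\chi = e^{\zeta L_{-1}}\chi \to \chi$. For the intertwining relation I would compute, for $\phi\in\mathscr{F}$,
\begin{equation}
U(\psi,z)W(\chi,\zeta)\phi = U(\psi,z)\,e^{\zeta L_{-1}}U(\phi,-\zeta)\chi = e^{\zeta L_{-1}}\,U(\psi,z-\zeta)\,U(\phi,-\zeta)\,\chi ,
\end{equation}
using the translation property of the $U$'s in the second step; the duality axiom eq.~\eqref{eq:dualityU} then collapses the product $U(\psi,z-\zeta)U(\phi,-\zeta) = U\big(V(\psi,z)\phi,-\zeta\big)$, and the resulting $e^{\zeta L_{-1}}U(V(\psi,z)\phi,-\zeta)\chi$ is, by the linear extension of eq.~\eqref{eq:defW} applied term by term to the Laurent series $V(\psi,z)\phi$, precisely $W(\chi,\zeta)V(\psi,z)\phi$. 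Hence $U(\psi,z)W(\chi,\zeta) = W(\chi,\zeta)V(\psi,z)$.

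For the direction ``intertwiners $\Rightarrow$ representation'' I would run the same calculation backwards. Given $W$ (for $\chi$ in a dense subspace of $\mathscr{K}$) and $U$ linked by eq.~\eqref{eq:defW} and satisfying the intertwining and limit relations, I first get $U(|0\rangle,z)\equiv\mathds{1}$: eq.~\eqref{eq:defW} at $\phi=|0\rangle$ gives $W(\chi,\zeta)|0\rangle = e^{\zeta L_{-1}}U(|0\rangle,-\zeta)\chi$, so the limit relation forces $U(|0\rangle,0)=\mathds{1}$, and the intertwining relation with $\psi=|0\rangle$ (recall $V(|0\rangle,z)=\mathds{1}$) together with meromorphy promotes this to all $z$. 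For the $U$--$U$ duality I would start from $W(\chi,\zeta)V(\psi,z)\phi$, use the OPE/duality of the $V$'s in the form $V(\psi,z)V(\phi,w)=V\big(V(\psi,z-w)\phi,w\big)$ (eq.~\eqref{eq:genope}) together with the intertwining relation and eq.~\eqref{eq:defW} to peel off a factor $U(\psi,z-\zeta)$, and conclude $U(\psi,z)U(\phi,\zeta)\chi = U\big(V(\psi,z-\zeta)\phi,\zeta\big)\chi$ on the dense subspace, hence on $\mathscr{K}$ by continuity. If one instead wants to construct $U$ from $W$ alone, the $e^{zL_{-1}}$ needed for this is itself read off from $W$ via $e^{\zeta L_{-1}}\chi = W(\chi,\zeta)|0\rangle$, after checking this is a genuine one-parameter group with the expected commutators.

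I expect the main obstacle to be the bookkeeping of domains of convergence and the associated analytic continuations: the duality axiom eq.~\eqref{eq:dualityU}, the intertwining relation, and the OPE of the $V$'s each hold a priori only on particular annuli --- such as $|z|>|\zeta|$ versus $|\zeta|>|z-\zeta|$ --- and the argument repeatedly continues matrix elements between these regions, so one must check that the relevant regions overlap and that the continuations are consistent. A related point requiring care is justifying the term-by-term action of $W(\chi,\zeta)$ on the infinite series $V(\psi,z)\phi$ and interchanging it with continuation in $z$. The remaining manipulations are routine within the formalism of \cite{Dolan:1994st} reviewed above.
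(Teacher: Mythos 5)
The paper itself does not prove this Proposition; it cites \cite[Proposition 3.4]{Dolan:1994st} and defers all proofs in this section to that reference. Your proposal reconstructs the standard DGM argument correctly: the forward direction is the right computation (translation property plus duality collapses $U(\psi,z-\zeta)U(\phi,-\zeta)$ into $U(V(\psi,z)\phi,-\zeta)$, which by linearity of eq.~\eqref{eq:defW} is exactly $W(\chi,\zeta)V(\psi,z)\phi$), and the reverse direction reverses the steps, with the caveats about annuli of convergence and term-by-term manipulations that you flag being precisely the ones DGM address.

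The one place I would push back is a latent circularity in the reverse direction. You state that the translation property $[L_{-1},U(\psi,z)]=\partial_zU(\psi,z)$ ``is a consequence of the duality axiom eq.~\eqref{eq:dualityU},'' but in the direction ``intertwiners $\Rightarrow$ representation'' duality is the conclusion, not a hypothesis, so you may not invoke it to produce the translation property you use to peel $e^{\zeta L_{-1}}$ past $U(\psi,z)$. The way out is to derive the needed commutator directly from the intertwining hypothesis: take $\psi=\psi_L$ in $U(\psi,z)W(\chi,\zeta)=W(\chi,\zeta)V(\psi,z)$, substitute eq.~\eqref{eq:defW} on both sides, and extract the coefficient of $z^{-1}$ to obtain $L_{-1}^{\mathscr{K}}U(\phi,-\zeta)\chi = U(L_{-1}\phi,-\zeta)\chi$; combined with $U(L_{-1}\phi,w)=\partial_wU(\phi,w)$ (which one does need as a mild structural input on the $U$'s, or else derives from differentiating the limit relation in $\zeta$), this yields $[L_{-1},U(\phi,w)]\chi=\partial_wU(\phi,w)\chi$ on the dense domain, without ever appealing to duality. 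With that repair your argument closes; the rest is the bookkeeping you already identify.
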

\subsection{Extension Of A Vertex Operator Algebra By A Real Hermitian
Representation}
\begin{defn}\label{def:hermrep}
A representation $\mathbf{U}$ is said to be Hermitian if 
\begin{equation}\label{eq:Uphibarz}
    U(\overline{\phi},z)=U(e^{z^*L_1}z^{*-2L_0}\phi,1/z^*)^{\dagger}
\end{equation}
for every $\phi\in\mathscr{F}$. Here $L_1$ is the Laurent mode of $U(\psi_L,z)$. The representation is said to be real if there is an antilinear map $\chi\mapsto \overline{\chi}$ in $\mathscr{K}$ satisfying 
\begin{equation}
    \begin{split}
\overline{\overline{\chi}}=\chi,\quad L_{-1}\overline{\chi}=-\overline{L_{-1}\chi},\\f_{\chi_1\phi\chi_2}=(-1)^{h_{\chi_1}+h_{\phi}+h_{\chi_2}}f_{\overline{\chi_1}\overline{\phi}\overline{\chi_2}}
\end{split}
\label{eq:}
\end{equation}
where 
\begin{equation}
    f_{\chi_1\phi\chi_2} :=\langle \overline{\chi_1}|U(\phi,1)|\chi_2\rangle.
\end{equation}
Moreover if $L_0\chi=h_{\chi}\chi$ then $L_0\overline{\chi}=h_{\chi}\overline{\chi}.$
\end{defn}
\begin{remark}\label{rem:Udaggerrep}
The above condition \eqref{eq:Uphibarz} is equivalent to 
\begin{equation}
    U(\overline{\phi},z)=z^{-2h_{\phi}}U(e^{z^*L_1}\phi,1/z^*)^{\dagger}
\label{eq:}
\end{equation}
for $\phi\in\mathscr{F}$ with conformal weight $h_{\phi}.$
\end{remark}
\begin{defn}
Define the conjugate $\overline{W}$ of the operator $W(\chi,z):\mathscr{H}\longrightarrow\mathscr{K}$ by the relation:
\begin{equation}
\overline{W}(\overline{\chi},z)=z^{-2h_{\chi}}W(e^{z^*L_1}\chi,1/z^*)^{\dagger}.
\label{eq:defWbar}
\end{equation}

\end{defn}
\begin{prop}\emph{\cite[Proposition 4.3]{Dolan:1994st}}
The operator $\overline{W}:\mathscr{K}\longrightarrow\mathscr{H}$ satisfies the following intertwining relations
\begin{equation}
    \begin{split}
\overline{W}(\chi,\zeta)U(\psi,z)=V(\psi,z)\overline{W}(\chi,\zeta)\\\overline{W}(\chi_1,\zeta)W(\chi_2,z)=\overline{W}(\chi_2,z)W(\chi_1,\zeta).
\end{split}
\label{eq:}
\end{equation}
\label{prop:reploc}
\end{prop}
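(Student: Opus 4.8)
The plan is to deduce both identities from relations already established in the excerpt — the intertwining relation $U(\psi,z)W(\chi,\zeta)=W(\chi,\zeta)V(\psi,z)$ of the foregoing proposition, the duality \eqref{eq:dualityU} for $\mathbf{U}$, and the defining formulas \eqref{eq:defW} and \eqref{eq:defWbar} — by passing to Hermitian adjoints and performing the Möbius inversion $z\mapsto 1/z^*$ that is built into the definitions of the conjugate operators. The hypotheses that do the real work are the Hermiticity of $\mathbf{U}$ (Remark~\ref{rem:Udaggerrep}), the Hermiticity of the ambient VOA (item (5) of Proposition~\ref{prop:propcft}), and, for the second identity, the \emph{reality} of the representation (Definition~\ref{def:hermrep}); these are precisely what turn adjoints back into honest vertex and intertwiner operators.

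For the first relation I would take the Hermitian adjoint of $U(\psi,z)W(\chi,\zeta)=W(\chi,\zeta)V(\psi,z)$, which reverses the composition and gives $W(\chi,\zeta)^{\dagger} U(\psi,z)^{\dagger}=V(\psi,z)^{\dagger} W(\chi,\zeta)^{\dagger}$ as operators $\mathscr{K}\to\mathscr{H}$, and then replace each factor in turn: $W(\chi,\zeta)^{\dagger}$ by a scalar multiple of $\overline{W}(\overline{\chi},1/\zeta^*)$ via \eqref{eq:defWbar}, $U(\psi,z)^{\dagger}$ by a scalar multiple of $U(\overline{\psi},1/z^*)$ via Remark~\ref{rem:Udaggerrep}, and $V(\psi,z)^{\dagger}$ by a scalar multiple of $V(\overline{\psi},1/z^*)$ via the Hermiticity of the VOA. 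It is cleanest to do this first for quasi-primary $\psi$ and $\chi$, where the conformal factors $e^{z^*L_1}$ and $z^{*-2L_0}$ collapse to scalars, and then extend to arbitrary states by linearity together with the translation property \eqref{eq:transprop}. The scalar prefactors $z^{-2h_{\psi}}$ and $\zeta^{-2h_{\chi}}$ produced on the two sides agree — this is exactly the point of the normalisation chosen in \eqref{eq:defWbar} — so they cancel, and after relabelling $\overline{\psi}\mapsto\psi$, $\overline{\chi}\mapsto\chi$, $1/z^*\mapsto z$, $1/\zeta^*\mapsto\zeta$ one is left with $\overline{W}(\chi,\zeta)U(\psi,z)=V(\psi,z)\overline{W}(\chi,\zeta)$, understood as always via analytic continuation of matrix elements.

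For the second relation I would first observe that the first relation already forces $\mathcal{O}(\zeta_1,\zeta_2):=\overline{W}(\chi_1,\zeta_1)W(\chi_2,\zeta_2):\mathscr{H}\to\mathscr{H}$ to commute with every vertex operator of the ambient VOA:
\[
\mathcal{O}(\zeta_1,\zeta_2)\,V(\psi,z)=\overline{W}(\chi_1,\zeta_1)\,U(\psi,z)\,W(\chi_2,\zeta_2)=V(\psi,z)\,\mathcal{O}(\zeta_1,\zeta_2),
\]
the first equality by the intertwining relation for $W$ and the second by the first relation just proved. Since $V(\psi,z)|0\rangle=e^{zL_{-1}}\psi$ (eq.~\eqref{eq:VL-1,locrel}) and $\mathscr{F}$ is dense, the vacuum is cyclic, so an operator on $\mathscr{H}$ commuting with all $V(\psi,z)$ is determined by its image of $|0\rangle$; hence it suffices to prove $\overline{W}(\chi_1,\zeta_1)W(\chi_2,\zeta_2)|0\rangle=\overline{W}(\chi_2,\zeta_2)W(\chi_1,\zeta_1)|0\rangle$ in $\mathscr{H}$. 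Using $W(\chi,w)|0\rangle=e^{wL_{-1}}\chi$ (from \eqref{eq:defW} and $U(|0\rangle,z)=\mathds{1}$) together with \eqref{eq:defWbar}, a matrix element $\langle\phi|\,\overline{W}(\chi_1,\zeta_1)W(\chi_2,\zeta_2)\,|0\rangle$ with $\phi\in\mathscr{F}$ becomes — after moving the $e^{L_{-1}}$ factors across the inner product on $\mathscr{K}$ using $L_{-1}^{\dagger}=L_1$, and applying the Hermiticity of $\mathbf{U}$ to the single surviving $U$-insertion — a two-point correlator $\langle\chi_1|\,U(\overline{\phi},w)\,|\chi_2\rangle_{\mathscr{K}}$ dressed by Möbius factors, whose invariance under exchanging the two labels is precisely the content of the reality conditions $\overline{\overline{\chi}}=\chi$, $L_{-1}\overline{\chi}=-\overline{L_{-1}\chi}$ and the symmetry of $f_{\chi_1\phi\chi_2}$ under simultaneous conjugation, combined with the duality \eqref{eq:dualityU}. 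Feeding the resulting symmetric expression back through \eqref{eq:defW}–\eqref{eq:defWbar} yields the stated locality relation.

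The main obstacle is not conceptual but a genuinely delicate piece of bookkeeping: in both parts one must carry the inversion $z\mapsto 1/z^*$ together with its companion operators $z^{*-2L_0}$, $e^{z^*L_1}$ and scalar weights $z^{*-2h}$ faithfully through every adjoint and verify that they cancel, so that the identities come out in the clean form stated with no residual conformal Jacobians. For the second relation there is the added subtlety that the computation passes through the module $\mathscr{K}$, so one must be careful to invoke the \emph{real} structure on the representation — the $f_{\chi_1\phi\chi_2}$ symmetry, not merely the Hermiticity of $\mathbf{U}$ — in order to obtain the exchange symmetry, and to check that the reduction ``commutes with all $V$ $\Rightarrow$ determined by the vacuum vector'' is legitimate at the level of the meromorphic matrix elements used throughout.
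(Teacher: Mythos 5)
The paper does not supply its own proof of this proposition — it simply cites \cite[Proposition 4.3]{Dolan:1994st} — so there is no internal argument to compare with directly. However, Appendix \ref{App:TwistedLocality} does carry out the analogous locality computation for the fermionic lattice case, and that makes it possible to assess your proposal against the structure of the argument that actually succeeds.

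Your treatment of the first relation is sound. Taking the Hermitian adjoint of $U(\psi,z)W(\chi,\zeta)=W(\chi,\zeta)V(\psi,z)$, substituting via \eqref{eq:defWbar}, Remark~\ref{rem:Udaggerrep}, and Hermiticity of the ambient VOA, and then checking that the conformal Jacobians $z^{-2h_\psi}$, $\zeta^{-2h_\chi}$ cancel on the two sides — this works cleanly, at least for quasi-primary states, and the extension to general states by $\mathfrak{sl}(2)$ descendants and the translation property is standard. Your approach here is essentially the same as what one would expect from the source.

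For the second relation your overall organization (commute $\overline{W}(\chi_1,\zeta_1)W(\chi_2,\zeta_2)$ past every $V(\psi,z)$, invoke cyclicity of the vacuum to reduce to the vacuum matrix element, and then compute that matrix element as a correlator dressed by M\"obius factors) is a legitimate and slightly different packaging from the source: the Appendix and \cite{Dolan:1989vr} instead substitute the definitions \eqref{eq:defW}--\eqref{eq:defWbar} directly into $\langle\phi|\overline{W}(\chi_1,\zeta_1)W(\chi_2,\zeta_2)|\phi'\rangle$ and reduce it to a single correlator $f_{\chi'\phi\chi}=\langle\overline{\chi'}|U(\phi,1)|\chi\rangle$ without the detour through the vacuum. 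Both routes end at the same bottleneck, and that bottleneck is where your proposal has a genuine gap.

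Concretely: after all the bookkeeping the locality relation reduces — this is eq.~\eqref{eq:locWbarWlasteq} in Appendix~\ref{App:TwistedLocality}, specialized to the bosonic case — to the identity
\begin{equation*}
\bigl(f_{\overline{\chi}'\overline{\phi}\overline{\chi}}\bigr)^*=(-1)^{h_{\chi'}+h_{\chi}+h_{\phi}}\,f_{\chi'\phi\chi}.
\end{equation*}
Combining the Hermiticity of $\mathbf{U}$ with the involutivity $\overline{\overline{\chi}}=\chi$ gives only $(f_{\mu\nu\lambda})^*=f_{\overline{\lambda}\overline{\nu}\overline{\mu}}$, which applied to the left-hand side yields $f_{\chi\phi\chi'}$. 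So the locality reduces further to the exchange symmetry
\begin{equation*}
f_{\chi\phi\chi'}=(-1)^{h_{\chi}+h_{\phi}+h_{\chi'}}\,f_{\chi'\phi\chi},
\end{equation*}
equivalently (given the reality axiom) to the statement that $f_{\mu\nu\lambda}$ is real. You assert that ``invariance under exchanging the two labels is precisely the content of the reality conditions,'' but the reality axiom as quoted in Definition~\ref{def:hermrep} relates $f_{\chi_1\phi\chi_2}$ to $f_{\overline{\chi_1}\overline{\phi}\overline{\chi_2}}$ \emph{with the same ordering} and \emph{without} a complex conjugate; the Hermiticity relation introduces a conjugate and a swap simultaneously; and the two together produce a relation between $(f_{\chi\phi\chi'})^*$ and $f_{\chi'\phi\chi}$, which still differs from what is needed by a complex conjugation. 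Nothing in the axioms as written forces $f$ to be real, and in the concrete case treated in Appendix~\ref{App:TwistedLocality} this is not established abstractly at all — it is established by explicit computation with the conjugation map $\overline{\chi}=e^{-i\pi L_0^T}\theta M(\chi)$, using the matrix identity $M\gamma_\lambda^*=\gamma_\lambda M$. So the one step that carries the actual content of the second identity is precisely the step you wave through, and it does not close under the abstract hypotheses alone in the way you suggest.
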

We have the following important theorem:
\begin{thm}\emph{\cite[Proposition 4.4]{Dolan:1994st}}
Suppose that the operator $\overline{W}(\chi,z)$ satisfy the additional locality relation 
\begin{equation}
W(\chi_1,\zeta)\overline{W}(\chi_2,z)=W(\chi_2,z)\overline{W}(\chi_1,\zeta)
\end{equation}
and the spectrum of $L_0$ in the representation is strictly positive, then we may extend the vertex operator algebra $\mathscr{H}$ to a Hermitian vertex operator algebra $\widetilde{\mathscr{H}}:=\mathscr{H}\oplus\mathscr{K}$ with Fock space $\widetilde{\mathscr{F}}:=\mathscr{F}\oplus\mathscr{K}$  and vertex operators defined by 
\begin{equation}
    \widetilde{V}(\psi,z)=\begin{pmatrix}
V(\psi,z)&0\\0&U(\psi,z)
\end{pmatrix},\quad \widetilde{V}(\chi,z)=\begin{pmatrix}
0&\overline{W}(\chi,z)\\W(\chi,z)&0
\end{pmatrix}
\label{eq:}
\end{equation}
where we used the notation $\psi=(\psi,0)$ and $\chi=(0,\chi)$ for $\psi\in\mathscr{H}$ and $\mathscr{K}$ and the definition is extended to all of $\widetilde{\mathscr{F}}$ by linearity. The vacuum and conformal states are $(|0\rangle,0)$ and $(\psi_L, 0)$ respectively.
\label{thm:extensionofcft}
\end{thm}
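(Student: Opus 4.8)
The plan is to check directly that $\widetilde{\mathscr{H}}=\mathscr{H}\oplus\mathscr{K}$, with vacuum $(|0\rangle,0)$, conformal state $(\psi_L,0)$, and the block operators of the statement extended by linearity to
\[
\widetilde{V}((\psi,\chi),z)=\begin{pmatrix} V(\psi,z)&\overline{W}(\chi,z)\\ W(\chi,z)&U(\psi,z)\end{pmatrix},
\]
satisfies the six axioms of Definition \ref{defn:cft}. The observation that organizes the whole argument is that composing such operators is ordinary $2\times 2$ matrix multiplication of blocks, so every axiom decouples into statements about the entries $VV$, $UU$, and the mixed $\psi$--$\chi$ and $\chi$--$\chi$ products, while meromorphy in $z$ and linearity in the state are inherited entrywise from $V,U,W,\overline{W}$. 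I would dispatch axiom (1) first: the modes of $\widetilde{V}((\psi_L,0),z)$ are $\widetilde{L}_n=L_n\oplus L_n^{\mathscr{K}}$ with $L_n^{\mathscr{K}}$ the modes of $U(\psi_L,z)$, these satisfy \eqref{eq:viraalg} with the same central charge $c$ (the $\tfrac{c}{2}z^{-4}|0\rangle$ term of the OPE $V(\psi_L,z)\psi_L$ is carried by the duality relation \eqref{eq:dualityU} into the representation), and $\widetilde{L}_n^{\dagger}=\widetilde{L}_{-n}$, $\widetilde{L}_n(|0\rangle,0)=0$ for $n\ge-1$ hold blockwise using the Hermiticity of $\mathbf{U}$ and the vanishing of the $\mathscr{K}$-component of the vacuum. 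Axiom (2)(i) is then immediate: $\widetilde{V}((\psi,\chi),z)(|0\rangle,0)=(V(\psi,z)|0\rangle,\,W(\chi,z)|0\rangle)$ whose first entry is $e^{zL_{-1}}\psi$ and whose second entry is $e^{zL_{-1}^{\mathscr{K}}}U(|0\rangle,-z)\chi=e^{zL_{-1}^{\mathscr{K}}}\chi$ by the definition \eqref{eq:defW} of $W$, so it equals $e^{z\widetilde{L}_{-1}}(\psi,\chi)$.

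The heart of the proof is the bosonic locality axiom (2)(ii). I would expand $\widetilde{V}(a,z)\widetilde{V}(b,w)$ by block multiplication; the identity $\widetilde{V}(a,z)\widetilde{V}(b,w)=\widetilde{V}(b,w)\widetilde{V}(a,z)$, understood as analytic continuation of matrix elements, then splits into four cases. For $a,b\in\mathscr{H}$ it is diagonal and follows from bosonic locality of the $V$'s in $\mathscr{H}$ and of the $U$'s in the representation (noted after Definition \ref{def:repofcft}). For $a=\psi\in\mathscr{H}$ and $b=\chi\in\mathscr{K}$ the two off-diagonal entries demand $V(\psi,z)\overline{W}(\chi,w)=\overline{W}(\chi,w)U(\psi,z)$ and $U(\psi,z)W(\chi,w)=W(\chi,w)V(\psi,z)$, which are exactly the intertwining relations of Proposition \ref{prop:reploc} and of Proposition 3.4 of \cite{Dolan:1994st}. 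For $a=\chi_1$ and $b=\chi_2$ in $\mathscr{K}$ the two diagonal entries demand $\overline{W}(\chi_1,z)W(\chi_2,w)=\overline{W}(\chi_2,w)W(\chi_1,z)$, the second relation of Proposition \ref{prop:reploc}, and $W(\chi_1,z)\overline{W}(\chi_2,w)=W(\chi_2,w)\overline{W}(\chi_1,z)$, which is precisely the extra hypothesis of the theorem. This $\chi$--$\chi$ case is the only place that hypothesis is used. Axiom (3) then follows from (2)(ii): $w^{\widetilde{L}_0}$ is block-diagonal, so conjugation by it preserves the block structure and reduces to the locality-under-$w^{L_0}$ property of each block.

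It remains to handle axioms (4)--(6). For (4), $\widetilde{L}_0=L_0\oplus L_0^{\mathscr{K}}$ is diagonalizable with spectrum bounded below because $L_0$ has nonnegative integer spectrum on $\mathscr{H}$ and, by hypothesis, strictly positive spectrum on $\mathscr{K}$. For (5), an $\mathfrak{su}(1,1)$-invariant state is annihilated by $\widetilde{L}_0$, so strict positivity forces its $\mathscr{K}$-part to vanish and its $\mathscr{H}$-part to be $\mathfrak{su}(1,1)$-invariant in $\mathscr{H}$, hence proportional to $(|0\rangle,0)$; this is exactly where strict positivity is needed. For the Hermiticity axiom (6), I would equip $\widetilde{\mathscr{F}}$ with the antilinear conjugation $(\psi,\chi)\mapsto(\overline{\psi},\overline{\chi})$ built from Proposition \ref{prop:propcft}(5) on $\mathscr{H}$ and the reality structure of Definition \ref{def:hermrep} on $\mathscr{K}$ (so that $\widetilde{L}_{-1}\overline{\chi}=-\overline{\widetilde{L}_{-1}\chi}$ and $L_0\overline{\chi}=h_\chi\overline{\chi}$ hold blockwise), and then verify $\widetilde{V}(\overline{a},z)=\widetilde{V}(e^{z^*\widetilde{L}_1}z^{*-2\widetilde{L}_0}a,1/z^*)^{\dagger}$. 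Taking the block transpose, the $VV$ entry is the Hermiticity of $\mathscr{H}$, the $UU$ entry is Definition \ref{def:hermrep} (equivalently Remark \ref{rem:Udaggerrep}), and the two off-diagonal entries are the defining relation \eqref{eq:defWbar} of $\overline{W}$ and its conjugate, obtained by replacing $\chi$ with $\overline{\chi}$ and using $\overline{\overline{\chi}}=\chi$; locality of $\widetilde{V}(\overline{a},z)$ is the locality already established. With all six axioms verified, $\widetilde{\mathscr{H}}$ is a Hermitian VOA, with vacuum $(|0\rangle,0)$ and conformal state $(\psi_L,0)$.

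I expect the only genuinely non-routine step to be the $\chi$--$\chi$ case of locality --- and even that has been isolated into the hypothesis $W(\chi_1,\zeta)\overline{W}(\chi_2,z)=W(\chi_2,z)\overline{W}(\chi_1,\zeta)$. The remaining delicate points are purely technical: tracking the annulus in which each identity initially holds before analytic continuation, and confirming that the block operators depend meromorphically (rather than multivaluedly) on $z$, which is inherited from $V,U,W,\overline{W}$ and their composites.
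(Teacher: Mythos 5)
The paper does not actually prove this theorem: it is quoted verbatim from Dolan, Goddard, and Montague (Proposition~4.4 of \cite{Dolan:1994st}) without any argument. Your reconstruction — verifying the six axioms of Definition~\ref{defn:cft} block by block, reducing locality to the four cases $VV$, the two mixed $\psi$--$\chi$ intertwinings of Proposition~\ref{prop:reploc}, the $\overline{W}W$ relation, and the $W\overline{W}$ hypothesis, and invoking strict positivity exactly for uniqueness of the $\mathfrak{su}(1,1)$-invariant state — is correct and is precisely the argument one finds in \cite{Dolan:1994st}; you have correctly isolated the only two places where the nontrivial hypotheses enter.
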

\begin{remark}
The extended vertex operator algebra $\widetilde{\mathscr{H}}$ has an automorphism $\iota$ which acts as identity on $\mathscr{H}$ and multiplication by $-1$ on $\mathscr{K}.$
\end{remark}

\section{Lattice Vertex Operator Algebras And Their Orbifolds}\label{sec:latcft}
In this section, we study VOAs associated to lattices, first studied in 
\cite{Borcherds:1983sq,Frenkel:1988xz} Let $\Lambda$ be an even Euclidean lattice of dimension $r$. We can construct a VOA using this lattice. We can construct two types of VOAs, the untwisted or straight VOA and the twisted VOA. We describe this construction in this section. We follow \cite{Dolan:1994st} for this construction.
\subsection{Untwisted Vertex Operator Algebra}\label{subsec:untwistcft}
Consider the $r$-torus $\mathbb{T}^r:=\mathbb{R}^r/\Lambda$ where the equivalence relation is 
\begin{equation}
    \bm{x}\sim \bm{x}'\iff \bm{x}-\bm{x}'=2\pi\beta
\label{eq:}
\end{equation}
for some $\beta\in\Lambda$. We consider $r$ fields 
\begin{equation}
X^{j}(z)=q^j+ip^j\ln z+i\sum_{n\neq 0}\frac{a^j_n}{n}z^{-n}
\end{equation}
which we can view as in correspondence with the coordinates on the torus $\mathbb{T}^r.$ 
We impose the commutation relations on the modes:
\begin{equation}
[q^j,p^k]=i\delta^{jk},\quad [a^j_n,a^k_m]=n\delta^{jk}\delta_{n+m,0}. 
\end{equation}
We then introduce a family of orthogonal lines $L_\lambda$, 
one for each $\lambda\in\Lambda$. We require that $a^j_n$ for $n>0$ annihilates $L_\lambda$ and $p^j$ acts on $L_{\lambda}$ as scalar multiplication by $\lambda^j$. We may choose a family of 
basis vectors $\vert \lambda \rangle $ for $L_{\lambda}$ which are orthonormal: 
$\langle \lambda \vert \lambda' \rangle = \delta_{\lambda,\lambda'}$. Any two choices are related 
by a uniform multiplication by a phase. 
  With this inner product, we have the Hermitian conjugates $a^{j\dagger}_n=a^j_{-n}$. For each $\lambda\in\Lambda$, we then construct the Fock space $\mathscr{F}_{\lambda}$ by creating states using $a^j_n$. Thus a 
  basis is of the form: 
\begin{equation}
\psi=\left(\prod_{a=1}^M a^{j_a}_{-m_a}\right)|\lambda\rangle,
\label{psiflamda}
\end{equation}
where $m_a$ and $j_a$ are positive integers $1\leq j_a\leq r$ and $M\in\mathbb{Z}_+$ with the understanding that $\psi=|\lambda\rangle$ when $M=0$. The Hilbert space of states is then the (completion of the) span of the Fock spaces $\mathscr{F}_{\lambda}$ for $\lambda\in\Lambda$. The Fock space of the VOA is defined as 
\begin{equation}
    \mathscr{F}=\bigoplus_{\lambda\in\Lambda}\mathscr{F}_{\lambda}.
\label{eq:fockuntwist}
\end{equation}
We denote the Hilbert space by $\mathscr{H}(\Lambda)$.
For $\psi\in\mathscr{F}_{\lambda}$ given by \eqref{psiflamda}, we define the vertex operator $V(\psi,z)$ by the expression 
\begin{equation}
V(\psi,z)=~ :\left(\prod_{a=1}^{M} \frac{i}{\left(m_{a}-1\right) !} \frac{d^{m_{a}} X^{j_a}}{d z^{m_{a}}}(z)\right) \exp~ \{i \lambda \cdot X(z)\}: \sigma_{\lambda},
\label{eq:veropuntwisted}
\end{equation}
where $::$ denotes normal ordering and is defined as follows:
\begin{equation}
\begin{split}
&: a_n^j a_m^k: = : a_m^k \,  a_n^j: =\begin{cases}
a_m^k \, a_n^j& m\leq n,\\a_n^j \,  a_m^k& m\geq n,
\end{cases}
\\&: a^j_m p^k: =: p^k \,  a_m^j: = a_m^j p^k ,\\&: q^{j} \, a_n^k : = : a_n^k q^{j}:=a_n^k q^{j},\\&: q^j p^k: =: p^k \,  q^j: = q^j p^k. 
\end{split}    
\end{equation}
In particular 
\begin{equation}
:\exp~ \{i \lambda \cdot X(z)\}:= \exp~ \left\{i \lambda \cdot X_<(z)\right\}e^{i\lambda\cdot q}z^{i\lambda\cdot p}\exp~ \left\{i \lambda \cdot X_>(z)\right\}   
\end{equation}
where 
\begin{equation}
    X_{\lessgtr}^j(z)=i \sum_{n \lessgtr 0} \frac{a_n^j}{n} z^{-n}.
\end{equation}
The operator $\sigma_{\lambda}$ is a cocycle operator such that
\begin{equation}
\widehat{\sigma}_{\lambda}\widehat{\sigma}_{\mu}=(-1)^{\lambda\cdot\mu}\widehat{\sigma}_{\mu}\widehat{\sigma}_{\lambda}
\end{equation}
with $\widehat{\sigma}_{\lambda}=e^{i\lambda\cdot q}\sigma_{\lambda}$. Also the operator $q$ acts on the states $|\lambda\rangle$ only via the exponential 
\begin{equation}\label{eq:mudorqactonlamb}
e^{i\mu\cdot q}|\lambda\rangle=|\lambda+\mu\rangle,\quad\lambda,\mu\in\Lambda. 
\end{equation}
This cocycle is introduced to ensure the mutual locality and associativity of the operator product expansion.
The vacuum state is $|0\rangle$ corresponding to $0\in\Lambda$ and the conformal state $\psi_L$ is defined as 
\begin{equation}
\psi_L=\frac{1}{2}a_{-1}\cdot a_{-1}|0\rangle.
\label{eq:confelem}
\end{equation}
It can be shown that the energy momentum tensor $T(z)$ is given by 
\begin{equation}
T(z)=V(\psi_L,z)=-\frac{1}{2}:\partial_z X\cdot\partial_z X:
\end{equation}
and the Virasoro modes are given by 
\begin{equation}\label{eq:virgenuntwisted}
\begin{split}
L_n=\frac{1}{2}\sum_{m\in\mathbb{Z}}a_{n-m}\cdot a_{m},\quad n\neq 0,\\L_0=\frac{1}{2}p^2+\sum_{n\in\mathbb{N}}a_{-n}\cdot a_n.
\end{split}
\end{equation}
One can also show that the Virasoro generators satisfy the Virasoro algebra with central charge $c=r$.
The partition function of the theory is given by 
\begin{equation}\label{eq:torpartfunc}
Z_{\Lambda}(\tau)=q^{-c/24}\text{Tr}_{\mathscr{H}(\Lambda)} q^{L_0}=\frac{\Theta_{\Lambda}(\tau)}{\eta^r(q)},\quad q=e^{2\pi i\tau},
\end{equation}
where 
\begin{equation}
\Theta_{\Lambda}(\tau)=\sum_{\lambda\in\Lambda}q^{\frac{1}{2}\lambda^2}
\end{equation}
is the theta function for the lattice $\Lambda$ and $\tau\in\mathbb{H}:=\{z=x+iy\in\mathbb{C}:y>0\}$. The partition function is said to be modular invariant if it is invariant under the modular action of $\mathrm{SL}(2,\mathbb{Z})$:
\begin{equation}
Z_{\Lambda}\left(\frac{a\tau+b}{c\tau+d}\right)=Z_{\Lambda}(\tau),\quad \begin{pmatrix}
    a&b\\c&d
\end{pmatrix}\in\mathrm{SL}(2,\mathbb{Z}).    
\end{equation}
The partition function \eqref{eq:torpartfunc} is modular invariant if and only if the lattice $\Lambda$ is selfdual. 
\begin{remark}\label{rem:genpartfunc}
In physics, \eqref{eq:torpartfunc} defines what is called the torus partition function and is the partition function of a conformal field theory on the torus. In general, partition function can be defined on any general Riemann surface via a path integral. Due to conformal symmetry, the path integral on the torus simplifies to the above expression involving trace over the Hilbert space.    
\end{remark}
\begin{thm}\emph{\cite{Dolan:1994st}}\label{thm:lattice_VOA}
If $\Lambda$ is even, self-dual with dimension $r\equiv 0 (\bmod~24)$, then the Hilbert space $\mathscr{H}(\Lambda)$ with the vertex operators and structures defined above forms a Hermitian vertex operator algebra with central charge $c=r$ and modular invariant partition function. Moreover the Hermitian structure is given by 
\begin{equation}
    \overline{\phi}=e^{-i\pi L_0}\theta\phi
\label{eq:}
\end{equation}
where $\phi$ is a real linear combination of states of the form \eqref{psiflamda} and can be extended to complex linear combinations using the antilinearity of the above map and $\theta$ is an involution given by 
\begin{equation}
    \theta a^j_n\theta^{-1}=-a^j_n,\quad \theta|\lambda\rangle=|-\lambda\rangle,\quad \theta X^i\theta^{-1}=-X^i.
\label{eq:}
\end{equation}
\end{thm}
\begin{remark}
It can easily be checked that the basis states \eqref{psiflamda} have conformal dimensions 
\begin{equation}
    h_{\psi}=\frac{1}{2}\lambda^2+\sum_{a=1}^Mm_a
\label{eq:}
\end{equation}
which is an integer since $\Lambda$ is even.
\end{remark}
\subsection{Twisted Vertex Operator Algebra}\label{subsec:twistcft}
Given a modular invariant vertex operator algebra with a discrete subgroup $G$ of the automorphims group, one can construct a new vertex operator algebra by ``modding out'' or ``twisting'' or ``orbifolding'' by $G$ given that $G$ satisfies certain restrictions. This simply corresponds to gauging the symmetry, and the restriction is simply the condition that $G$ is not anomalous. For the lattice vertex operator algebra constructed above, the automorphism group $\mathbb{Z}_2$, consisting of the involution $\theta$ in Theorem \ref{thm:lattice_VOA} satisfies all the restrictions given that $\sqrt{2}\Lambda^*$ is even. Thus we can construct the twisted vertex operator algebra. The essential idea is to construct a representation of a sub-vertex operator algebra  which satisfies the hypotheses of Theorem \ref{thm:extensionofcft} and then consider the extended vertex operator algebra given in Theorem \ref{thm:extensionofcft}. We now describe the construction in detail. \\\\
Let $\Lambda,\mathscr{H}(\Lambda)$ and $\theta$ be as in Theorem \ref{thm:extensionofcft}. Then $\theta$ is an involution of $\mathscr{H}(\Lambda)$. We have the direct sum decomposition 
\begin{equation}
\mathscr{H}(\Lambda)=\mathscr{H}^+(\Lambda)\oplus \mathscr{H}^-(\Lambda)
\end{equation}
where 
\begin{equation}
\mathscr{H}^{\pm}(\Lambda):=\{\psi\in\mathscr{H}(\Lambda):\theta\psi=\pm \psi\}.
\end{equation}  
\begin{lemma}\emph{\cite{Dolan:1994st}}
$\mathscr{H}^+(\Lambda)$ is a sub-vertex operator algebra of $\mathscr{H}(\Lambda)$.
\end{lemma}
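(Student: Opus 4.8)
The plan is to check the three defining conditions of a sub-VOA directly for $\mathscr{I}=\mathscr{H}^+(\Lambda)$, using only that $\theta$ is an automorphism of $\mathscr{H}(\Lambda)$ which fixes both the vacuum and the conformal state, $\theta|0\rangle=|0\rangle$ and $\theta\psi_L=\psi_L$. The preliminary step I would carry out is to note that $\theta$ preserves the Fock space $\mathscr{F}$: it sends a basis vector \eqref{psiflamda} to $(-1)^M\big(\prod_{a}a^{j_a}_{-m_a}\big)|{-\lambda}\rangle$, which is again of that form. Since $\theta$ is a unitary involution, the projector $\tfrac12(\mathds{1}+\theta)$ is continuous and carries $\mathscr{F}$ into $\mathscr{F}\cap\mathscr{H}^+(\Lambda)=:\mathscr{F}^+$, so $\mathscr{F}^+$ is a dense subspace of the closed subspace $\mathscr{H}^+(\Lambda)$; moreover $|0\rangle,\psi_L\in\mathscr{F}^+$.

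For condition (1), I would show invariance of $\mathscr{H}^+(\Lambda)$ under the vertex operators $V(\phi,z)$ with $\phi\in\mathscr{F}^+$: for any $\psi\in\mathscr{H}^+(\Lambda)$ the automorphism property gives
\begin{equation}
\theta\,V(\phi,z)\,\psi = V(\theta\phi,z)\,\theta\psi = V(\phi,z)\,\psi ,
\end{equation}
hence $V(\phi,z)\psi\in\mathscr{H}^+(\Lambda)$. For condition (2), applying the automorphism relation to $\psi_L$ and using $\theta\psi_L=\psi_L$ yields $\theta\,V(\psi_L,z)\,\theta^{-1}=V(\psi_L,z)$; matching Laurent coefficients in $z$ then gives $\theta L_n\theta^{-1}=L_n$ for all $n\in\mathbb{Z}$, so in particular $L_0$ and $L_{\pm1}$ commute with $\theta$ and therefore preserve each $\theta$-eigenspace, establishing $\mathfrak{su}(1,1)$-invariance of $\mathscr{H}^+(\Lambda)$.

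For condition (3) I would use the explicit Hermitian structure $\overline{\phi}=e^{-i\pi L_0}\theta\phi$ recorded in the theorem above: $\theta$ maps the eigenspace $\mathscr{F}^+$ to itself, and since $L_0$ commutes with $\theta$ the operator $e^{-i\pi L_0}$ does too, so $\phi\mapsto\overline\phi$ maps $\mathscr{F}^+$ into itself; being an involution, $\overline{\mathscr{F}^+}=\mathscr{F}^+$. This completes the three checks, and since $\mathscr{H}^+(\Lambda)$ is a subspace of a Hermitian VOA closed under conjugation it is itself a Hermitian sub-VOA, with $\psi_L$ serving as its conformal state. I do not expect a genuine obstacle here: every item is forced by $\theta$ being a conformal-vector-fixing automorphism. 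The one place a careful write-up should pause is the topological bookkeeping of the first step — verifying that $\mathscr{F}^+$ really is a dense subspace of the Hilbert space $\mathscr{H}^+(\Lambda)$ on which the entire VOA structure restricts, and that the vacuum and stress tensor land inside it.
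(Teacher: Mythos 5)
Your argument is correct, and since the paper itself only cites \cite{Dolan:1994st} for this lemma without reproducing a proof, there is nothing in the paper to compare against beyond the definition of sub-VOA. Your verification hits exactly the right three points: $\theta$ being a VOA automorphism with $\theta|0\rangle=|0\rangle$ and $\theta\psi_L=\psi_L$ forces all of (1)–(3). The one place worth a second look is the step that $\overline{\mathscr{F}^+}=\mathscr{F}^+$: you argue this from the conjugation being an involution, which is one of the Hermitian-VOA axioms, but it is cleaner (and avoids any appeal to surjectivity) to note directly that the explicit formula $\overline{\phi}=e^{-i\pi L_0}\theta\phi$ restricted to a $\theta$-eigenstate $\phi$ of $L_0$-weight $h$ reduces to $e^{-i\pi h}\phi$ up to the $\theta$-sign, so conjugation visibly sends homogeneous vectors of $\mathscr{F}^+$ to scalar multiples of themselves. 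Your density argument for $\mathscr{F}^+\subset\mathscr{H}^+(\Lambda)$ via the bounded projector $\tfrac12(\mathds{1}+\theta)$ is also sound: since $\theta$ preserves $\mathscr{F}$, the projector maps $\mathscr{F}$ into $\mathscr{F}^+$, and continuity together with density of $\mathscr{F}$ in $\mathscr{H}(\Lambda)$ gives density of $\mathscr{F}^+$ in $\mathscr{H}^+(\Lambda)$. So the caution you flag at the end is already resolved by your own first paragraph.
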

Note that $\mathscr{H}^+(\Lambda)$ is not a modular invariant VOA. To get a modular invariant VOA, we construct a representation of this VOA and then extend it as described in Section \ref{sec:cftextension}. To construct the representation of the VOA $\mathscr{H}^+(\Lambda)$, we first construct a Hilbert space $\mathscr{H}_T(\Lambda)$, 
known as the space of twisted states. To do so, we consider the bosonic oscillators $c^j_\ell$ indexed by $\ell\in\mathbb{Z}+\frac{1}{2}$ satisfying the same algebra as $a^j_n$:
\begin{equation}
[c^j_\ell,c^k_{m}]=k\delta^{jk}\delta_{\ell+m,0}.
\label{eq:heisalg}
\end{equation}
%

To construct the space over which these oscillators can act, we start with the 
$\mathbb{Z}_2$ extension of the (abelian group) lattice $\Lambda$ with commutator function 
$(\lambda,\mu)\mapsto (-1)^{\lambda\cdot\mu}$. We choose  a 2-cocycle  $\varepsilon(\lambda,\mu)=\pm 1$ which the describes the extension:
\begin{equation}
    0\longrightarrow \mathbb{Z}_2\longrightarrow\widehat{\Lambda}\xrightarrow[]{~~-~~}\Lambda\longrightarrow 0
\end{equation}
where $-$ denotes the projection map. 
Consider the subgroup\footnote{It takes a few lines of algebra to verify that $K$ is a subgroup. It helps to choose (as one may) $\varepsilon$ to be a bihomomorphism.} $K$ of $\widehat{\Lambda}$ given by 
\begin{equation}
    K=\{a^2(-1)^{\frac{\bar{a}^2}{2}} : a\in\widehat{\Lambda}\}.
\end{equation}
Then \cite[eq. 10.3.18]{Frenkel:1988xz}
\begin{equation}
    \bar{K}\cong 2\Lambda
    \label{eq:Kbar}
\end{equation}
and we get the central extension 
\begin{equation}
   0\longrightarrow \mathbb{Z}_2\longrightarrow\widehat{\Lambda}/K\xrightarrow[]{~~-~~}\Lambda/2\Lambda\longrightarrow 0 
   \label{eq:centextheis}
\end{equation}
with commutator map $([\lambda],[\mu])\mapsto(-1)^{\lambda\cdot\mu}$. Clearly the commutator map does not depend on the choice of representatives. 
\begin{lemma}
The extension \eqref{eq:centextheis} is a Heisenberg extension.
\end{lemma}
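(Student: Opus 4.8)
The claim is that the central extension $0\to\mathbb{Z}_2\to\widehat{\Lambda}/K\to\Lambda/2\Lambda\to 0$, with commutator map $([\lambda],[\mu])\mapsto(-1)^{\lambda\cdot\mu}$, is a Heisenberg extension — i.e., the commutator pairing on $\Lambda/2\Lambda\cong\mathbb{F}_2^{r}$ is a nondegenerate symplectic (alternating) form over $\mathbb{F}_2$, so that $\widehat{\Lambda}/K$ is the (finite) Heisenberg group attached to that symplectic space. The plan is to verify three things: (i) the group $\widehat{\Lambda}/K$ has order $2\cdot|\Lambda/2\Lambda| = 2^{r+1}$, which follows immediately from \eqref{eq:Kbar} since $\bar K\cong 2\Lambda$ forces the exact sequence \eqref{eq:centextheis} to have the stated quotient; (ii) the commutator map $b([\lambda],[\mu]) = \langle\lambda,\mu\rangle \bmod 2$ is a well-defined $\mathbb{F}_2$-bilinear alternating form on $\Lambda/2\Lambda$ — bilinearity is clear from bilinearity of $\langle\cdot,\cdot\rangle$, independence of representatives follows because shifting $\lambda$ by $2\nu$ changes $\langle\lambda,\mu\rangle$ by $2\langle\nu,\mu\rangle\equiv 0$, and the alternating property $b([\lambda],[\lambda])=\langle\lambda,\lambda\rangle\equiv 0\bmod 2$ uses precisely that $\Lambda$ is even; and (iii) $b$ is nondegenerate, which is where self-duality of $\Lambda$ enters.

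For step (iii), suppose $[\lambda]$ is in the radical, so $\langle\lambda,\mu\rangle\equiv 0\bmod 2$ for all $\mu\in\Lambda$. Then $\tfrac12\lambda$ pairs integrally with all of $\Lambda$, i.e. $\tfrac12\lambda\in\Lambda^* = \Lambda$ (self-duality), so $\lambda\in 2\Lambda$ and $[\lambda]=0$. Hence the radical is trivial and $b$ is a nondegenerate alternating form on the $\mathbb{F}_2$-vector space $\Lambda/2\Lambda$ of dimension $r$. (This also re-confirms $r$ is even, consistent with the hypothesis $r\equiv 0\bmod 24$.) Any central extension of an $\mathbb{F}_2$-vector space $V$ by $\mathbb{Z}_2$ whose commutator pairing is a nondegenerate symplectic form is, by definition, a Heisenberg extension; one can additionally note that its isomorphism type is pinned down by a standard cohomological fact — the quadratic refinements of $b$ classify the possible $\varepsilon$ up to coboundary — but the statement to be proved only asserts "Heisenberg," for which nondegeneracy of the commutator pairing is the content.

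I expect the only genuinely substantive point is \eqref{eq:Kbar}, namely $\bar K\cong 2\Lambda$, but that is quoted from \cite[eq.~10.3.18]{Frenkel:1988xz} and may be used freely; granting it, the order count in (i) is automatic. Thus the real work is the short chain of congruences in (ii) and (iii), and the main conceptual obstacle — if one can call it that — is simply recognizing that \emph{evenness} of $\Lambda$ gives the alternating property while \emph{self-duality} gives nondegeneracy, the two lattice hypotheses playing exactly complementary roles. I would write the proof as: recall $\Lambda/2\Lambda\cong\mathbb{F}_2^r$; check $b$ is well-defined, bilinear, alternating, and nondegenerate as above; conclude that $\widehat\Lambda/K$ is the Heisenberg group of the symplectic $\mathbb{F}_2$-space $(\Lambda/2\Lambda,b)$.
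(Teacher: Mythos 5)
Your proof is correct, and it reaches the same conclusion (nondegeneracy of the commutator pairing) by a slightly different route from the paper. The paper picks a $\mathbb{Z}$-basis $\{\lambda_i\}$ of $\Lambda$, observes that the matrix of $b$ in the induced basis of $\Lambda/2\Lambda$ is the mod-$2$ reduction of the Gram matrix, and invokes unimodularity to conclude $\det b \equiv 1 \pmod 2$, hence nondegeneracy. You instead argue coordinate-free via the radical: if $[\lambda]$ pairs trivially with everything, then $\tfrac12\lambda\in\Lambda^*=\Lambda$, so $\lambda\in 2\Lambda$. Both arguments are elementary, both hinge on the same lattice hypothesis (self-duality, equivalently unimodularity for an integral lattice), and neither is more powerful than the other here; the determinant version is a one-line computation once a basis is chosen, while the radical version is basis-free and makes the role of self-duality more transparent. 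Your remarks in (ii) --- well-definedness, alternating from evenness --- are correct preliminaries that the paper leaves implicit, and your remark in (i) on the order count is a sound consistency check though not needed for nondegeneracy itself.
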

\begin{proof}
We want to show that the commutator map of the extension is nondegenerate \cite[Section 15.5]{Moore:2022}. Consider the bilinear form 
\begin{equation}
\begin{split}
    b:&\Lambda/2\Lambda\times \Lambda/2\Lambda\longrightarrow\mathbb{Z}_2\\&([\lambda],[\mu])\longmapsto \lambda\cdot\mu\bmod 2.
\end{split}    
\end{equation}
Pick a basis $\{\lambda_i\}$ of $\Lambda$ and the corresponding basis $\{[\lambda_i]\}$ of $\Lambda/2\Lambda$. Then the matrix of the bilinear form $b$ in the basis $\{[\lambda_i]\}$ is the mod 2 reduction of the gram matrix of the basis $\{\lambda_i\}$ of $\Lambda$. Then since $\Lambda$ is unimodular, the determinant of the latter matrix is $\pm 1$. Hence the determinant of $b$ is  $1$ and hence $b$ is nondegenerate. This implies that the commutator map is nondegenerate, that is for every $[\lambda]\neq 0$, there exists $[\mu]$ such that $(-1)^{\lambda\cdot\mu}=-1$. 
\end{proof}
We denote by $\Gamma(\Lambda)$ the Heisenberg group $\widehat{\Lambda}/K$. 
Let $[\lambda]\mapsto e_{[\lambda]}$ be a section of this extension. For a function $u:\Lambda/2\Lambda\longrightarrow\{\pm 1\}$, $[\lambda]\mapsto\widetilde{e}_{[\lambda]}:=u([\lambda])e_{[\lambda]}$ defines another section of the extension. Moreover the cocycle changes by a coboundary:
\begin{equation}
 \label{eq:}\widetilde{e}_{[\lambda]}\widetilde{e}_{[\mu]}=\widetilde{\varepsilon}(\lambda,\mu)\widetilde{e}_{[\lambda+\mu]}   
\end{equation}
where 
\begin{equation}\label{eq:cocyclecoboundaryrel}
\widetilde{\varepsilon}(\lambda,\mu)=\frac{u([\lambda])u([\mu])}{u([\lambda+\mu])}\varepsilon(\lambda,\mu).    
\end{equation}
Using this ``gauge freedom'' we  
can choose\footnote{Note that \eqref{eq:Kbar} implies that $e_{[0]}=\pm\mathds{1}$.}  $\varepsilon(0,\lambda)=1$ for all $\lambda$ and  $e_{[0]}=\mathds{1}\in\Gamma(\Lambda)$.
To simplify notations, we write $e_{[\lambda]}=:e_{\lambda}$. We can then write 
\begin{equation}
\begin{split}
&\Gamma(\Lambda)=\{\pm e_{\lambda}:\lambda\in\Lambda\}\\
&e_{\lambda}e_{\mu}=(-1)^{\lambda\cdot\mu}e_{\mu}e_{\lambda}=\varepsilon(\lambda,\mu)e_{\lambda+\mu},\quad e_{\lambda}^2=(-1)^{\frac{1}{2}\lambda^2},
\end{split}
\label{eq:gammamatalg}
\end{equation} 
The last relation $e_\lambda^2=(-1)^{\lambda^2/2}$ comes from the fact that $K$ is modded out in the Heisenberg group $\Gamma(\Lambda)$. Note that these relations do not depend on the choice of representatives of $[\lambda],[\mu]$ when calculating $\lambda\cdot\mu$. 
\begin{lemma}
The cocycle $\varepsilon$ satisfies 
\begin{equation}
    \varepsilon(\lambda,\lambda)=(-1)^{\frac{\lambda^2}{2}}.
\end{equation}
\end{lemma}
\begin{proof}
To prove this, we use a result of \cite{Frenkel:1988xz}. For a central extension of an elementary abelian 2-group\footnote{For a prime number $p$, an elementary abelian $p$-group is a direct product of cyclic groups of order $p$.} $E$ by $\mathbb{Z}_2$
\begin{equation}
0\longrightarrow \mathbb{Z}_2\longrightarrow\widehat{E}\xrightarrow[]{~~-~~}E\longrightarrow 0    
\end{equation}
with commutator map $(x,y)\mapsto (-1)^{c(x,y)}$, the \textit{squaring map} $s:E\longrightarrow \mathbb{Z}_2$ is defined by 
\begin{equation}
    a^2=(-1)^{s(\bar{a})},\quad a\in \widehat{E}.
\end{equation}
The bilinear form corresponding to $s$ is the commutator map:
\begin{equation}
    c(x,y)=s(x+y)-s(x)-s(y).
\end{equation}
Then by \cite[Proposition 5.3.3]{Frenkel:1988xz}, for any central extension $\widehat{E}$ of an abelian 2-group $E$ by $\mathbb{Z}_2$ with squaring map $s$, the cocycle satisfies 
\begin{equation}
    (-1)^{s(x)}=\varepsilon(x,x).
\end{equation}
Moreover, by \eqref{eq:cocyclecoboundaryrel} cohomologous cocycles do not change this relation. For the central extension \eqref{eq:centextheis} at hand, it is easy to check that the squaring map is 
\begin{equation}
    [\lambda]\mapsto\frac{\lambda^2}{2}\bmod 2.
\end{equation}
The required relation thus holds.
\end{proof}
Let $\mathcal{S}(\Lambda)$ be the carrier space of the unique (up to isomorphism) $2^{r/2}$-dimensional irreducible unitary representation of $\Gamma(\Lambda)$. Such a representation can be constructed using Dirac gamma matrices, see  \cite[Appendix C]{Dolan:1989vr} for an explicit construction. When the representation is real we choose it to be a real vector space of dimension $2^{r/2}$. The vector space $\mathcal{S}$ of the introduction is $\mathcal{S}(\Lambda_{\mathrm{L}})$ in our present notation.


The algebra of oscillators $\{c^j_\ell\}$ is represented on a Fock space $\mathscr{F}_T$ with a unique vacuum line spanned by $\vert 0 \rangle_T$, so  $c^j_\ell\vert 0 \rangle_T=0$ for $\ell>0$. 
Then the twisted sector Hilbert space is the (completion of): 
\begin{equation}
    \mathscr{H}_T(\Lambda)=\mathscr{F}_T\otimes_{\mathbb{R}}\mathcal{S}(\Lambda) ~ . 
\end{equation}
%
%
%
A general Fock space state is a linear combination of vectors of the form: 
\begin{equation}
\chi=\left(\prod_{a=1}^N c^{j_a}_{-k_a}\right)\vert 0\rangle_T \otimes \chi_0,
\label{eq:chitwistcft}
\end{equation}
where $N\in\mathbb{Z}_+$ and $k_a=m_a+\frac{1}{2}$ with $m_a\in\mathbb{Z}_+$
and $1\leq j_a\leq r$, 
and $\chi_0 \in \mathcal{S}(\Lambda)$.
 We endow $\mathscr{H}_T(\Lambda)$ with an Hermitian structure such that $(c^{j})^{\dagger}_k=c^j_{-k}$. The Virasoro modes are given by 
\begin{equation}
\begin{split}
L_{n}^T=\frac{1}{2} \sum_{k\in\mathbb{Z}+\frac{1}{2}}: c_{k} \cdot c_{n-k}:+\frac{r}{16} \delta_{n, 0}.
\end{split}
\end{equation}
In particular, the ground states $\mathcal{S}(\Lambda)$ have conformal weight $r/16$ and the conformal weight of the state $\chi$ given by \eqref{eq:chitwistcft} is 
\begin{equation}
h_{\chi}=\frac{r}{16}+\sum_{a=1}^Mk_a.
\label{eq:confwttwist}
\end{equation}
To construct the vertex operators corresponding to states in $\mathscr{H}_T(\Lambda)$, we start with the an expression analogous to the untwisted vertex operator \eqref{eq:veropuntwisted}. We  will then need to modify the naive expression in order to satisfy the translation property \eqref{eq:transprop} and the conformal transformation property \eqref{eq:LncommVpsiz}. Thus, we consider the fields
\begin{equation}
R^j(z)=i\sum_{k\in\mathbb{Z}+\frac{1}{2}}\frac{c^j_k}{k}z^{-k}
\end{equation}
and  we first define the naive vertex operator $V^0_T(\psi,z)$, corresponding to the state $\psi$ in \eqref{psiflamda}, to be: 
\begin{equation}
V^0_T(\psi,z) := \left(\prod_{a=1}^{M} \frac{i}{\left(m_{a}-1\right) !} \frac{d^{m_{a}} R^{j_a}}{d z^{m_{a}}}(z)\right) \exp~ \{i \lambda \cdot R(z)\}: e_{\lambda},
\end{equation}
where the normal ordering $::$ is defined analogous to the untwisted case. This operator fails to satisfy the $L_n^T$ commutator \eqref{eq:LncommVpsiz}. Including a factor of $(4z)^{-\frac{\lambda^2}{2}}$ rectifies the commutator with $L_n^T$ \cite{Dolan:1989vr} but now the translation property \eqref{eq:transprop} is not satisfied. To remedy this, we introduce the vertex operator 
\begin{equation}
V_T(\psi,z) :=V_T^0(e^{A(-z)}\psi,z),
\label{eq:twistverop}
\end{equation}
and try to solve for $A(z)$ demanding the required properties. The constraints imposed on $A(z)$ by the $L_n^T$ commutators and the translation property \eqref{eq:transprop} is solved by the operator \cite{Dolan:1989vr,Frenkel:1988xz}
\begin{equation}
A(z)=\frac{1}{2} \sum_{n, m \geq 0 \atop m+n>0}\left(\begin{array}{c}
-\frac{1}{2} \\
m
\end{array}\right)\left(\begin{array}{l}
\frac{1}{2} \\
n
\end{array}\right) \frac{(-z)^{-m-n}}{m+n} a_{m} \cdot a_{n}-\frac{1}{2} a_{0} \cdot a_{0} \ln (-4 z),
\label{eq:A(z)exp}
\end{equation}
where $a_0^i=p^i$.
Note that the last term corresponds to the factor $(4z)^{-\frac{\lambda^2}{2}}$ which is needed for the $L_n^T$ commutator \eqref{eq:LncommVpsiz}.
\begin{remark}\label{rem:deponsection}
\normalfont
The definition of the vertex operator $V_T(\psi,z)$ depends on the choice of the section in \eqref{eq:gammamatalg}. Indeed if we pick a different section $\lambda\mapsto e'_{\lambda}$, then the new cocycle $\varepsilon'(\lambda,\mu)$ is related to the old cocycle by a coboundary $b:\Lambda/2\Lambda\mapsto\{\pm 1\}$:
\begin{equation}
    \varepsilon'(\lambda,\mu)=\varepsilon(\lambda,\mu)\frac{b(\lambda)b(\mu)}{b(\lambda+\mu)}.
\end{equation}
The unitary operators $e'_{\lambda}$ and $e_{\lambda}$ acting on $\mathcal{S}(\Lambda)$ 
 are related by a unitary operator $S$ \cite[eq. 6.10]{Dolan:1994st}
\begin{equation}
    Se'_{\lambda}S^{\dagger}=b(\lambda)e_{\lambda}.
\end{equation}
\end{remark}
For the conformal weights \eqref{eq:confwttwist} to be integers, we need both the terms to be either strictly half-integers or integers simultaneously. This immediately implies that $r$ has to be a multiple of 8 \cite[Chapter 6]{Hamilton:2017gbn},\cite{Moore:2018}. When the second term is an integer (half-integers), the state $\chi$ is obtained by acting an even (odd) number of oscillators on a twisted ground state. To quantify this, we extend the action of the $\mathbb{C}$-linear involution $\theta$ to $\mathscr{H}(\Lambda)\oplus\mathscr{H}_T(\Lambda)$ by 
\begin{equation}
\theta c^i_k\theta^{-1}=-c^i_k,\quad \theta\chi_0=(-1)^{r/8}\chi_0,\quad \theta R^j\theta^{-1}=-R^j. 
\label{eq:thetatwisteddef}
\end{equation}
It is now clear that when $r$ is a multiple of 16, in which case the first term is an integer, $h_{\chi}$ can be integer only when $\chi$ is obtained by acting an even number of oscillators on a twisted ground state. Since the twisted ground states are even under $\theta$ for $r$ a multiple of 16, the state $\chi$ is also even under $\theta$. On the other hand if $r\equiv 8\bmod ~16$, the twisted ground states are odd under $\theta$ and if we take $\chi$ to be an odd number of oscillators acting on a twisted ground state, $\chi$ will be even under $\theta$. In this case the two terms in $h_{\chi}$ in \eqref{eq:confwttwist} are strictly half-integers. Thus we see that the only way the conformal weights can be integers is when the following two conditions are satisfied:
\begin{enumerate}
\item The dimension $r$ is a multiple of $8$.
\item The state $\chi\in\mathscr{H}_T^+(\Lambda)$ where $\mathscr{H}_T^+(\Lambda)$ is  $\theta=1$ subspace with the action of $\theta$ as defined in \eqref{eq:thetatwisteddef}.
\end{enumerate}
We now need to put an Hermitian structure on the twisted sector, see Definition \ref{def:hermrep}. Choose an orthonormal basis of $\mathcal{S}(\Lambda)$ and let $\gamma_\lambda$ be the matrix of $e_\lambda$ with respect to this basis. Let $M$ be a symmetric, unitary matrix satisfying 
\begin{equation}
M\gamma_{\lambda}^*=\gamma_{\lambda}M.
\label{eq:mdef}
\end{equation}
Such a matrix $M$ exists because the unitary representation of  the Clifford algebra is real in dimensions divisible by 8 and the unique (up to isomorphism) unitary representation of the Heisenberg extension \eqref{eq:gammamatalg} can be constructed using Dirac gamma matrices. Moreover one can choose $M$ such that $M^2=\mathds{1}$ when $r\equiv 0\bmod 8$, see \cite[Table 1]{Kuusela:2019iok} for more details and \cite[Appendix C]{Dolan:1989vr} for an explicit construction. Define the action of $M$ on $\mathscr{H}_T(\Lambda)$ by:
\begin{equation}
M:\chi=\left(\prod_{a=1}^N c^{j_a}_{-k_a}\right)\chi_0\longmapsto M(\chi)=\left(\prod_{a=1}^N c^{j_a}_{-k_a}\right)M\chi_0^*,\quad  \chi_0\in\mathcal{S}(\Lambda)   
\end{equation}
where $\chi_0^*$ is defined with respect to the chosen basis of $\mathcal{S}(\Lambda)$.
We now have the following theorem.
\begin{thm}\emph{\cite{Dolan:1989vr,Dolan:1994st}}
Let $\Lambda$ be an even lattice and $\sqrt{2}\Lambda^*$ be even. Then the following statements are true. 
\begin{enumerate}
\item The vertex operators $V_T(\psi,z)$ and the Hilbert space $\mathscr{H}_T^+(\Lambda)$ define a real, Hermitian representation of the vertex operator algebra $\mathscr{H}^+(\Lambda)$ with the conjugation map on the twisted sector $\mathscr{H}_T^+(\Lambda)$ given on states of the form \eqref{eq:chitwistcft} by
\begin{equation}
    \overline{\chi}=e^{-i\pi L_0^T} \theta M(\chi),
\label{eq:}
\end{equation}
with extension to all states of the twisted sector defined by antilinearity.
\item The extended vertex operator algebra $\widetilde{\mathscr{H}}(\Lambda):=\mathscr{H}^+(\Lambda)\oplus\mathscr{H}_T^+(\Lambda)$ is a real, Hermitian vertex operator algebra and is modular invariant if $\Lambda$ is self-dual. This is called the $\mathbb{Z}_2$-orbifold model.
\end{enumerate} 
\label{thm:latvoaext}
\end{thm}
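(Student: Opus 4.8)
The plan is to obtain both statements from the extension machinery of Section~\ref{sec:cftextension}. For part (1) one must verify that $U(\psi,z):=V_T(\psi,z)$, restricted to the $\theta$-even states $\psi\in\mathscr{F}\cap\mathscr{H}^+(\Lambda)$, is a representation of the sub-VOA $\mathscr{H}^+(\Lambda)$ in the sense of Definition~\ref{def:repofcft}; that it is Hermitian and real with the conjugation $\overline{\chi}=e^{-i\pi L_0^T}\theta M(\chi)$ of Definition~\ref{def:hermrep}; and that $L_0^T$ has strictly positive spectrum. Part (2) then follows by checking the additional locality hypothesis of Theorem~\ref{thm:extensionofcft}, and, separately, by computing the graded character and verifying its $\mathrm{SL}(2,\mathbb{Z})$-invariance when $\Lambda$ is self-dual.

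The heart of part (1) is the duality relation eq.~\eqref{eq:dualityU}, $U(\psi,z)U(\phi,\zeta)=U(V(\psi,z-\zeta)\phi,\zeta)$, together with $U(|0\rangle,z)=\mathds{1}$; the latter is immediate from the definitions. For the former, the new feature compared to the untwisted case is that contractions of the twisted fields $R^j(z)$ produce functions with square-root branch points rather than rational functions, and the operator $A(z)$ of eq.~\eqref{eq:A(z)exp} --- including the $-\tfrac12 a_0\cdot a_0\ln(-4z)$ term, which encodes the factor $(4z)^{-\lambda^2/2}$ --- is designed precisely so that $V_T(\psi,z)$ satisfies the translation property eq.~\eqref{eq:transprop} and the conformal commutators eq.~\eqref{eq:LncommVpsiz}. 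Granting those two properties, eq.~\eqref{eq:dualityU} reduces to a Baker--Campbell--Hausdorff manipulation of the exponentials $e^{i\lambda\cdot R}$, the half-integer-moded oscillators $c^j_k$, and the cocycle operators $e_\lambda$ obeying the Heisenberg relations eq.~\eqref{eq:gammamatalg}; evenness of $\Lambda$ makes each $(4z)^{-\lambda^2/2}$ single-valued, so the left-hand side is genuinely meromorphic in $z-\zeta$. By Proposition~\ref{prop:reploc} this is equivalent to producing the intertwiners $W(\chi,z)$ of eq.~\eqref{eq:defW}.

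Next I would install the Hermitian and real structure. Using $(c^j_k)^\dagger=c^j_{-k}$ and the symmetric unitary matrix $M$ with $M\gamma_\lambda^*=\gamma_\lambda M$ (eq.~\eqref{eq:mdef}), which may be taken with $M^2=\mathds{1}$, one computes $V_T(\psi,z)^\dagger$ and matches it against Remark~\ref{rem:Udaggerrep}, giving Hermiticity. The map $\overline{\chi}=e^{-i\pi L_0^T}\theta M(\chi)$ is then checked against the reality axioms of Definition~\ref{def:hermrep}: $\overline{\overline{\chi}}=\chi$ follows from $M^2=\mathds{1}$ and $\theta^2=1$; $L_{-1}^T\overline{\chi}=-\overline{L_{-1}^T\chi}$ follows from $[L_0^T,L_{-1}^T]=L_{-1}^T$ and the $\theta$-oddness of the $c^j_k$; and the symmetry $f_{\chi_1\phi\chi_2}=(-1)^{h_{\chi_1}+h_\phi+h_{\chi_2}}f_{\overline{\chi_1}\overline{\phi}\overline{\chi_2}}$ is a matter of tracking the signs produced by $M$, $\theta$, and the half-integer phases in $e^{-i\pi L_0^T}$. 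Strict positivity of the $L_0^T$-spectrum is immediate from eq.~\eqref{eq:confwttwist}, each twisted weight being $\ge r/16>0$; combined with the restriction to $\mathscr{H}_T^+(\Lambda)$ forced by integrality of conformal weights (conditions (1) and (2) preceding eq.~\eqref{eq:mdef}), this settles part (1).

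The hardest step is the remaining hypothesis of Theorem~\ref{thm:extensionofcft}: the mutual locality $W(\chi_1,\zeta)\overline{W}(\chi_2,z)=W(\chi_2,z)\overline{W}(\chi_1,\zeta)$ of two twist-field vertex operators. In contrast to the bosonic locality of two untwisted fields, the product of two twist fields carries square-root branch cuts coming from the oscillator contractions, and one must show that in every matrix element these cuts cancel, leaving only a pole at $\zeta=z$; this is exactly the point at which the orbifold non-anomaly condition ``$\sqrt2\Lambda^*$ even'' is used, since it forces the relevant monodromies to be trivial. I would prove it by reducing $\langle\chi_1'|W(\chi_1,\zeta)\overline{W}(\chi_2,z)|\chi_2'\rangle$, via eq.~\eqref{eq:defW} and eq.~\eqref{eq:defWbar}, to a standard twisted correlator and analyzing its analytic structure directly --- the computation carried out in Appendix~\ref{App:TwistedLocality}. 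Granting it, Theorem~\ref{thm:extensionofcft} yields that $\widetilde{\mathscr{H}}(\Lambda)=\mathscr{H}^+(\Lambda)\oplus\mathscr{H}_T^+(\Lambda)$ is a real, Hermitian VOA with the asserted vertex operators, vacuum, and conformal state, and the remark following Theorem~\ref{thm:extensionofcft} supplies the involution $\iota$. For modular invariance when $\Lambda$ is self-dual, I would compute the character as the usual four-term $\mathbb{Z}_2$-orbifold sum --- $\tfrac12(Z_\Lambda\pm Z_\Lambda^\theta)$ from the untwisted half, built from the lattice theta function $\Theta_\Lambda$ and powers of $\eta$, and $\tfrac12(Z_T\pm Z_T^\theta)$ from the twisted half, built from ratios of Jacobi theta and Dedekind eta functions --- and observe that $\mathrm{SL}(2,\mathbb{Z})$ permutes these four terms: the $S$-transformation exchanges the $\theta$-twisted spatial and temporal boundary conditions and acts on $\Theta_\Lambda$ by Poisson resummation, which closes the set of four terms precisely because $\Lambda$ is self-dual ($\Theta_{\Lambda^*}=\Theta_\Lambda$), so the sum is invariant. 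For $\Lambda=\Lambda_{\mathrm{L}}$ this recovers $Z_{\mathscr{H}_{\text{FLM}}}=J(q)$. I expect the analytic control of the two-twist-field locality to be the main obstacle; the rest is bookkeeping on the structures already set up in Sections~\ref{sec:cftextension} and~\ref{sec:latcft}.
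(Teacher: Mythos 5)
The paper itself does not supply a proof of Theorem~\ref{thm:latvoaext}: it is a citation-theorem attributed to \cite{Dolan:1989vr,Dolan:1994st}, and the body of the paper relies on those references for the underlying details. Your strategy is a faithful reconstruction of the Dolan--Goddard--Montague argument: reduce both parts to the extension machinery of Theorem~\ref{thm:extensionofcft}, verify the intertwining/duality relation eq.~\eqref{eq:dualityU} using the role of $A(z)$ in restoring the translation and conformal properties, install the Hermitian and real structure via $M$ and $\overline{\chi}=e^{-i\pi L_0^T}\theta M(\chi)$, note the strict positivity of $L_0^T$, isolate the two-twist-field locality as the genuinely hard step, and obtain modular invariance by the standard four-sector orbifold sum and Poisson resummation on $\Theta_\Lambda$ when $\Lambda=\Lambda^*$. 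This is the right shape, and your identification of ``$\sqrt{2}\Lambda^*$ even'' as the non-anomaly condition governing the branch-cut cancellation in twist-field products is also correct.

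One small slippage worth flagging: you point to Appendix~\ref{App:TwistedLocality} as supplying the crucial locality $W(\chi_1,\zeta)\overline{W}(\chi_2,z)=W(\chi_2,z)\overline{W}(\chi_1,\zeta)$. That appendix in fact proves the \emph{odd}-sign variant $W(\chi,z)\overline{W}(\chi',w)=-\,W(\chi',w)\overline{W}(\chi,z)$ for $\chi,\chi'\in\mathscr{H}_T^-(\Lambda_{\mathrm{L}})$, which is what is needed for the SVOA structure of $\mathscr{H}_{SC}$; the even-sign case for $\chi,\chi'\in\mathscr{H}_T^+(\Lambda)$, which is what Theorem~\ref{thm:latvoaext} requires, is the one that the appendix explicitly attributes to \cite{Dolan:1989vr} without redoing it. The analytic machinery (expressing the matrix element through $\Gamma(q,\nu)$, the $u\leftrightarrow1/u$, $q'\leftrightarrow-q'$ exchange, and the Jacobi/Poisson identities in eq.~\eqref{eq:thetarels}) is the same in both cases; only the bookkeeping of the $\theta$, $M$, and $e^{-i\pi L_0^T}$ phases differs, producing $+$ rather than $-$. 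Everything else in your outline matches the intended route, so I would only ask that the citation to Appendix~\ref{App:TwistedLocality} be replaced by (or supplemented with) a citation to the corresponding computation in \cite{Dolan:1989vr} when invoking the $\mathscr{H}_T^+$ locality.
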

\begin{remark}\label{rem:deponsectioniso}
The extended vertex operator algebra $\widetilde{\mathscr{H}}(\Lambda)$ is independent of the choice of section $e_\lambda$ used to define the operators $V_T(\psi,z)$ in \eqref{eq:twistverop} up to isomorphism. Indeed, if we choose a different section $\lambda\mapsto e'_\lambda$ of the Heisenberg extension and denote the operators in \eqref{eq:twistverop} with $e_\lambda$ replaced by $e'_\lambda$ by $V_T'(\psi,z)$, then the VOA $(\widetilde{\mathscr{H}}(\Lambda),\widetilde{V})$ is isomorphic to the VOA $(\widetilde{\mathscr{H}}(\Lambda),\widetilde{V}')$ where for $\psi\in\mathscr{H}^+(\Lambda),$ and $\chi\in\mathscr{H}^+_T(\Lambda)$
\begin{equation}
\begin{split}
&\widetilde{V}(\psi,z)=\begin{pmatrix}
V(\psi,z)&0\\0&V_T(\psi,z)
\end{pmatrix},\quad \widetilde{V}(\chi,z)=\begin{pmatrix}
0&\overline{W}(\chi,z)\\W(\chi,z)&0
\end{pmatrix},\\&\widetilde{V}'(\psi,z)=\begin{pmatrix}
V(\psi,z)&0\\0&V'_T(\psi,z)
\end{pmatrix},\quad \widetilde{V}'(\chi,z)=\begin{pmatrix}
0&\overline{W}'(\chi,z)\\W'(\chi,z)&0
\end{pmatrix}~,
\end{split}
\end{equation}
and $\overline{W}'(\chi,z)$ is defined as in \eqref{eq:defW} using $V_T'$. The explicit isomorphism $u:\widetilde{\mathscr{H}}(\Lambda)\longrightarrow \widetilde{\mathscr{H}}(\Lambda)$ is given by 
\begin{equation}
    \begin{pmatrix}
        |\lambda\rangle\\|\chi\rangle
    \end{pmatrix}\longmapsto \begin{pmatrix}
        b(\lambda)|\lambda\rangle\\S|\chi\rangle
    \end{pmatrix}~,\quad |\chi\rangle\in \mathcal{S}(\Lambda)~,
\end{equation}
where $b(\lambda),S$ is as in Remark \ref{rem:deponsection} and $u$ commutes with $a_n^i,c_k^i$. 
\end{remark}
Applying this theorem to the Leech lattice defines the Monster module, 
also known as the FLM module $\mathscr{H}_{\text{FLM}}:=\widetilde{\mathscr{H}}(\Lambda_{\text{L}})$, see \eqref{eq:flmmod} above. 
\begin{remark}\label{rem:Mi}
Note that we can take $M=\mathds{1}$ in \eqref{eq:mdef} by choosing the spin representation $\mathcal{S}(\Lambda)$ of the Heisenberg group $\Gamma(\Lambda)$ to be real. In that case the representation matrices $\gamma_\lambda$
for $e_{\lambda}$ acting on the  twisted sector Hilbert space  
\begin{equation}
    \mathscr{H}_T(\Lambda)\cong\mathscr{F}_T\otimes_{\mathbb{R}}\mathcal{S}(\Lambda)
\end{equation}
are real, i.e., $\gamma_{\lambda}^*=\gamma_{\lambda}$.
%
We will use this to simplify the discussion in Subsection \ref{subsec:ConstructTF}. 
\end{remark}

\section{$\mathcal{N}=1$ Superconformal Vertex Operator Algebra And Spin Lifts }\label{sec:scft}
We now describe a superconformal vertex operator algebra (SCVOA). Roughly speaking, an $\mathcal{N}=1$ SCVOA is a super-VOA with, in addition, $\mathcal{N}=1$ conformal supersymmetry. 
\begin{defn}
An $\mathcal{N}=1$ superconformal vertex operator algebra (SCVOA) is a tuple $\left(\mathscr{H}, \mathscr{F}, \mathbf{V},|0\rangle, \psi_{L}\right)$ where $\mathscr{H}=\mathscr{H}_{\bar{0}}\oplus\mathscr{H}_{\bar{1}}$ is a super Hilbert space of states, $\mathscr{F}=\mathscr{F}_{\bar{0}}\oplus\mathscr{F}_{\bar{1}}$ with $\mathscr{F}_{\alpha}\subset\mathscr{H}_{\alpha}$ a dense subspace called the \emph{Fock space} and $\mathbf{V}$ a set of linear operators called \emph{vertex operators} $V(\psi, z)$ which are meromorphic in $z$ and linear in $\psi$. They are in one-to-one correspondence with the states $\psi \in \mathscr{F}$. We will denote the degree of $\psi\in\mathscr{H}_\alpha$ with respect to the $\mathbb{Z}_2$-grading by $|\psi|$ and call it the \textit{parity} of $\psi$. There are three special states in $\mathscr{F}$, the \emph{vacuum} $|0\rangle\in\mathscr{F}_{\bar{0}}$, the \emph{conformal state} $\psi_{L} \in\mathscr{F}_{\bar{0}}$ and the \emph{superconformal state} $\psi_{G}\in\mathscr{F}_{\bar{1}}$. The theory must satisfy the properties listed below, and is said to be a \emph{Hermitian $\mathcal{N}=1$ superconformal vertex operator algebra} if it satisfies in addition the last property (7).
\begin{enumerate}
\item The \emph{moments} of the vertex operator $V(\psi_L,z)$ and $V(\psi_G,z)$ are given\footnote{Whenever we have strictly half-integral powers of a complex number, we will choose the principal branch of square root to define it: in the principal branch, we choose $\sqrt{-1}=e^{i\pi/2}$, the square root of a complex number  in the upper half plane is mapped to the first quadrant
and the lower half plane is mapped to fourth quadrant.} by
\begin{equation}
\begin{split}
T_B(z):=V\left(\psi_{L}, z\right)=\sum_{n \in \mathbb{Z}} L_{n} z^{-n-2}\\
T_{F}(z):=V(\psi_G,z)=\sum_{r}G_rz^{-r-\frac{3}{2}}
\end{split}
\end{equation}
and satisfy the $\mathcal{N}=1$ superconformal algebra 
\begin{equation}
\begin{split}
\left[L_{m}, L_{n}\right]=(m-n) L_{m+n}+\frac{\widehat{c}}{8} m\left(m^{2}-1\right) \delta_{m,-n}\\
[G_r,L_n]=\left(r-\frac{n}{2}\right)G_{r+n}\\
\{G_r,G_s\}=2L_{r+s}+\frac{\widehat{c}}{2}\left(r^2-\frac{1}{4}\right)\delta_{r,-s}\\
[L_m,\widehat{c}]=[G_r,\widehat{c}]=0
\end{split}
\label{eq:NSsuperalg}
\end{equation}
with $L_{n}^{\dagger}=L_{-n}$ and $L_{n}|0\rangle=0, n \geq-1$. Equivalently, the operators $T_B$ and $T_F$ satisfy the OPE
\begin{equation}\label{eq:opetbtf}
    \begin{split}
        &T_{B}(z) T_{B}\left(w\right) \sim \frac{3 \widehat{c} / 4}{\left(z-w\right)^{4}}+\frac{2}{\left(z-w\right)^{2}} T_{B}\left(w\right)+\frac{1}{\left(z-w\right)} \partial_{w} T_{B}\left(w\right)\\
&T_{B}\left(z\right) T_{F}\left(w\right) \sim \frac{3 / 2}{\left(z-w\right)^{2}} T_{F}\left(w\right)+\frac{1}{\left(z-w\right)} \partial_{w} T_{F}\left(w\right)\\&T_{F}\left(z\right) T_{F}\left(w\right) \sim \frac{\widehat{c} / 4}{\left(z-w\right)^{3}}+\frac{1 / 2}{\left(z-w\right)} T_{B}\left(w\right)
    \end{split}
\end{equation}
where $\sim$ denotes that the relation is true up to a holomorphic term.  When the operators $G_r$ are integer moded, \textit{i,e} $r\in\mathbb{Z}$, the superconformal algebra is called the \textit{Ramond} (R) algebra and for half-integer moding, $r\in\mathbb{Z}+\frac{1}{2}$, the superconformal algebra is called the \textit{Neveu-Schwartz} (NS)  algebra. 
\item For homogenous $\psi$, the moments of vertex operators $V(\psi,z)$ are linear operators of parity $|\psi|,$ where $|\psi|$ is the $\mathbb{Z}_2$-grading of $\psi\in\mathscr{H}$. 
\item The vertex operators satisfy
\begin{equation}
(i)\quad V(\psi, z)|0\rangle=e^{z L_{-1}} \psi\quad (ii) \quad V(\psi,z)V(\phi,w)=(-1)^{|\psi||\phi|}V(\phi,w)V(\psi,z)~,
\label{eq:VL-1,locrel}
\end{equation}
The relation (ii) is defined for vertex operators corresponding to homogeneous states and is called the \emph{locality relation}, see Definition \ref{defn:cft} for details. We say that the vertex operator $V(\psi,z)$ is local with degree $|\psi|$ with respect to $\mathbf{V}$. 
\item The operator $w^{L_0}:=e^{L_0\ln w}$ acts locally with respect to $\mathbf{V}$, that is $w^{L_0}V (\psi,z)w^{-L_0}$ is local with parity $|\psi|$ with respect to $\mathbf{V}$.
\item  $L_0$ is diagonalisable and its spectrum is bounded below. The eigenvalues $h_{\psi}$ of eigenstates $\psi$ of $L_0$ are called \emph{conformal weights}.
\item The vacuum is the only $\mathfrak{su}(1,1)$ invariant state in the theory, that is the only state annihilated by $L_0$ and $L_{\pm 1}$.
\item For all $\psi\in\mathscr{F}$, the vertex operator $V(e^{z^*L_1}z^{*-2L_0}\psi,1/z^*)^{\dagger}$ is local with parity $|\psi|$ with respect to $\mathbf{V}.$
\end{enumerate}
\label{defn:scft}
\end{defn}
\begin{remark}
If we only require the existence of a conformal vector and the associated Virasoro algebra in the above definition, it is called a \textit{vertex operator superalgebra} (SVOA). So an SCVOA is in particular an SVOA.       
\end{remark}
\begin{defn}\label{def:scvoa_hom}
We say that two SCVOAs $\mathscr{H}$ and $\mathscr{H}'$ with Fock spaces $\mathscr{F}$ and $\mathscr{F}'$ respectively and corresponding vertex operators $V(\psi,z)$ and $V'(\psi',z)$ are isomorphic if there exists a unitary map $u:\mathscr{H}\longrightarrow\mathscr{H}'$ such that 
\begin{equation}
V'(u\psi,z)=uV(\psi,z)u^{-1},\quad \psi\in\mathscr{F}~,\quad u\psi_G=\psi_G'~,    
\end{equation}
where $\psi_G,\psi_G'$ are the superconformal states in $\mathscr{H}$ and $\mathscr{H}^{\prime}$ respectively. 
An isomorphism  $u:\mathscr{H}\longrightarrow\mathscr{H}$ which preserves $|0\rangle,\psi_L,\psi_G$ is called an automorphism of $\mathscr{H}$. 
\end{defn}
Note that an isomorphism $u:\mathscr{H}\longrightarrow\mathscr{H}'$ of SCVOAs (or simply VOAs) automatically satisfies \cite[Proposition 2.14]{Dolan:1994st}
\begin{equation} u|0\rangle=\left|0^{\prime}\right\rangle,\quad u \psi_L=\psi_L^{\prime}~,  
\end{equation}
where $|0\rangle,\left|0^{\prime}\right\rangle$ are the vacuum states and $\psi_L,\psi_L'$ are the conformal states of the two SCVOAs.  

As will be shown in the proposition below, the conformal weights of an $\mathcal{N}=1$ superconformal vertex operator algebra are elements of $\frac{1}{2}\mathbb{Z}$. Note that the locality relation forces the singularities in the OPEs of vertex operators to be integral powers of $(z-w)$.\\\\ In superconformal field theories in physics, one often encounters OPEs with half-integer power singularities. This clearly violates the locality relation. Indeed, the meromorphic function $(z-w)^{-\frac{1}{2}}$ defined on $|z|>|w|$ when analytically continued to $|w|>|z|$ picks up a fourth root of unity and violates the locality relation above which only allows $\pm 1$ factor. Such fields are called non-local fields. A superconformal field theory is a spin CFT (see Footnote \ref{foot:VOA_CFT_spin}). The states in the spin CFT associated to the nonbounding circle\footnote{\label{foot:spin_st_circle} Spin structures on a circle are classified by $H^1(S^1,\mathbb{Z}_2)\cong\mathbb{Z}_2$. The circle with spin structure corresponding to the trivial class is alled nonbounding circle while the circle with the spin structure corresponding to the nontrivial class is called bounding circle.} are called the \textit{Ramond} (R) sector states while the states associated to the bounding circle are called the \textit{Neveu-Schwartz} (NS) sector states. The subspaces of R and NS states in $\mathscr{H}$ is denoted by $\mathscr{H}_{\text{R}}$ and $\mathscr{H}_{\text{NS}}$ respectively. The corresponding vertex operators are called the R fields and NS fields respectively. Consequently the R fields have square root singularity in their OPE with the superconformal field $T_F$ and the NS fields have integer power singularities in their OPE with the superconformal field $T_F$. We will show that the ``Beauty and the Beast'' SCFT has a subspace which forms an SCVOA while the full space is a non-local superconformal field theory in the sense that it contains non-local fields. \\The vertex operators can be Laurent-expanded as before
\begin{equation}
    V(\psi,z)=\sum_{n}V_n(\psi)z^{-n-h_{\psi}}
\label{eq:}
\end{equation}
where the summation index is such that $n+h_{\psi}\in\mathbb{Z}$ for R vertex operators and  $n+h_{\psi}\in\mathbb{Z}+\frac{1}{2}$ for NS vertex operators. Note that if $V(\psi,z)$ is an NS vertex operator then $\psi\in\mathscr{H}_{\Bar{1}}$. The moments again satisfy 
\begin{equation}
    V_{-h_{\psi}}(\psi)|0\rangle=\psi,\quad V_{-n}(\psi)|0\rangle=0, \quad n>h_{\psi}.
\label{eq:}
\end{equation}
A state $\psi\in\mathscr{H}$ with conformal weight $h$ is called a \textit{superconformal primary} if 
\begin{equation}
    L_0\psi=h\psi,\quad L_n\psi=0,\quad G_r\psi=0,\quad n,r>0.
\end{equation}
\begin{prop}
\begin{enumerate}
\item If $U(z)$ satisfies $U(z)|0\rangle=e^{zL_{-1}}\phi$ and is local with degree $|\phi|$ with respect to $\mathbf{V}$, then $U(z)=V(\phi,z)$. In particular 
\begin{equation}
    \frac{d}{dz}V(\psi,z)=V(L_{-1}\psi,z).
\label{eq:}
\end{equation}
\item The duality relation $V(\psi,z)V(\phi,w)=V(V(\psi,z-w)\phi,w)$ is true.
\item Skew-symmetry: $V(\psi,z)\phi=(-1)^{|\psi||\phi|}e^{zL_{-1}}V(\phi,-z)\psi.$ 
\item For each $\psi\in\mathscr{F}$, the locality of $V(e^{z^*L_1}z^{*-2L_0}\psi,1/z^*)^{\dagger}$ implies that we have a grading-preserving antilinear map
\[
\psi\longmapsto \overline{\psi},
\]  
where 
\begin{equation}
    V(e^{z^*L_1}z^{*-2L_0}\psi,1/z^*)^{\dagger}=V(\overline{\psi},z)
\label{eq:}
\end{equation}
so that 
\begin{equation}
\overline{\psi}=\lim_{z\to 0}V(e^{z^*L_1}z^{*-2L_0}\psi,1/z^*)^{\dagger}|0\rangle.
\end{equation}
Moreover, for a quasi-primary vertex operator, the moments satisfy $V_n(\psi)^{\dagger}=V_{-n}(\overline{\psi}).$
\item The conformal weights are non-negative.
\item For any $\phi,\psi\in\mathscr{F}$, the following holds:
\begin{enumerate}
\item if $L_0\psi=h_{\psi}\psi$ then $L_0\overline{\psi}=h_{\psi}\overline{\psi}$.
\item $\overline{L_{-1}\psi}=-L_{-1}\psi$.
\item $\overline{\overline{\psi}}=\psi$.
\item $f_{\phi_1\phi_2\phi_3}=(-1)^{|\phi_1||\phi_2|+|\phi_2||\phi_3|+|\phi_3||\phi_1|}e^{-i\pi(h_{\phi_1}+h_{\phi_2}+h_{\phi_3})}(f_{\overline{\phi}_1\overline{\phi}_2\overline{\phi}_3})^{*}$ where $\phi_{1}, \phi_{2}, \phi_{3} \in \mathscr{F}$ with $L_{0} \phi_{j}=h_{j} \phi_{j}$ for $1 \leq j \leq 3$ and
\begin{equation}
    f_{\phi_{1} \phi_{2} \phi_{3}}=\left\langle\overline{\phi}_{1}\left|V\left(\phi_{2}, 1\right)\right| \phi_{3}\right\rangle.
\label{eq:fabcdefcor}
\end{equation}
\item $\overline{L_1\psi}=-L_1\overline{\psi}.$
\item the spectrum of $L_0$ is non-negative half-integers $\frac{1}{2}\mathbb{Z}$.
\item $\langle \overline{\psi}|\phi\rangle=\langle\overline{\phi}|\psi\rangle.$
\end{enumerate}
\item The \emph{operator product expansion} of any two vertex operators  $V(\psi,z)$ and $V(\phi,w)$ corresponding to eigenstates of $L_0$ is given by
\begin{equation}
\begin{split}
    V(\psi,z)V(\phi,w)&=\sum_{n}(z-w)^{n-h_{\psi}-h_{\phi}}V(\phi_n,w)\\&=\sum_{n}(z-w)^{-h_{\phi_n}-h_{\psi}-h_{\phi}}V(\phi_n,w)
\end{split}
\label{eq:scftgenope}
\end{equation}
where
\begin{equation}
    \phi_n:=V_{h_{\phi}-n}(\psi)\phi.
\label{eq:}
\end{equation}
\end{enumerate}\label{prop:propscft}
\begin{proof}
We will only prove (1), (3) and (6) since the proof of others is similar to the VOA case and can be found in \cite{Dolan:1994st}.
\\\\
(1) We have 
$$
\begin{aligned}
U(z) e^{\zeta L_{-1}} \psi &=U(z) V(\psi, \zeta)|0\rangle=(-1)^{|\phi||\psi|}V(\psi, \zeta) U(z)|0\rangle \\
&=(-1)^{|\phi||\psi|}V(\psi, \zeta) e^{z L_{-1}} \phi=(-1)^{|\phi||\psi|}V(\psi, \zeta) V(\phi, z)|0\rangle \\
&=V(\phi, z) V(\psi, \zeta)|0\rangle=V(\phi, z) e^{\zeta L_{-1}} \psi
\end{aligned}
$$
Thus, taking $\zeta\to 0$, we deduce $U ( z ) = V(\phi,z)$. The second part follows from the fact that $\frac{d}{dz} V(z,\psi)$ is local with respect to $\mathbf{V}$ and satisfies 
\begin{equation}
    \frac{d}{dz}V(z,\psi)|0\rangle=e^{zL_{-1}}L_{-1}\psi.
\end{equation}\\\\
(3) Using duality and (3)(i) of Definition \ref{defn:scft}, we have 
\begin{equation}
V(\psi,z)V(\phi,w)|0\rangle=V(V(\psi,z-w)\phi,w)|0\rangle=e^{wL_{-1}}V(\psi,z-w)\phi.
\end{equation}
Using the locality (3) (ii) of Definition \ref{defn:scft} first, we obtain 
\begin{equation}
V(\psi,z)V(\phi,w)|0\rangle=(-1)^{|\phi||\psi|}V(\phi,z)V(\psi,w)|0\rangle=(-1)^{|\phi||\psi|}e^{zL_{-1}}V(\phi,z-w)\psi.
\end{equation}
Taking $w\to 0$ in both expressions and comparing, we get the claimed skew-symmetry.\\\\
(6) The proof of (a), (b), (c), (e) and (g) is exactly as in \cite[Proposition 2.9]{Dolan:1994st}. We now prove (d). We have  
\[
\begin{split}
f_{\phi_{1} \phi_{2} \phi_{3}} &=\left\langle\overline{\phi}_{1}\left|V\left(\phi_{2}, 1\right)\right| \phi_{3}\right\rangle  \\
&=(-1)^{|\phi_2||\phi_3|}\left\langle\overline{\phi}_{1}\right| e^{L_{-1}} V\left(\phi_{3},-1\right)\left|\phi_{2}\right\rangle \quad \text { (by skew-symmetry) } \\
&=(-1)^{|\phi_2||\phi_3|}\left\langle\phi_{2}\left|V\left(\phi_{3},-1\right)^{\dagger} e^{L_{1}}\right| \overline{\phi}_{1}\right\rangle^{*}  \\
&=(-1)^{|\phi_2||\phi_3|-2h_{\phi_3}}\left\langle\phi_{2}\left|V\left(e^{-L_{1}} \overline{\phi}_{3},-1\right) e^{L_{1}}\right| \overline{\phi}_{1}\right\rangle^{*} \text { (by (4) and (b),(c) of (6))} \\&=(-1)^{|\phi_2||\phi_3|-2h_{\phi_3}+|\phi_1||\phi_3|}\left\langle\phi_{2}\left|e^{-L_{-1}} V\left(e^{L_{1}} \overline{\phi}_{1}, 1\right) e^{-L_{1}}\right| \overline{\phi}_{3}\right\rangle^{*}\quad\text{(by skew-symmetry)}\\&=(-1)^{|\phi_2||\phi_3|-2h_{\phi_3}+|\phi_1||\phi_3|}\left\langle\phi_{2}\left|e^{-L_{-1}} V\left(\phi_{1}, 1\right)^{\dagger} e^{-L_{1}}\right| \overline{\phi}_{3}\right\rangle^{*}\\
&=(-1)^{|\phi_2||\phi_3|-2h_{\phi_3}+|\phi_1||\phi_3|+|\phi_1||\phi_2|}\left\langle\phi_{1}\left|V\left(e^{-L_{1}} \phi_{2},-1\right)^{\dagger}\right| \overline{\phi}_{3}\right\rangle^{*}\quad\text { (by skew-symmetry) }\\
&=(-1)^{|\phi_2||\phi_3|-2h_{\phi_3}+|\phi_1||\phi_3|+|\phi_1||\phi_2|-2h_{\phi_2}}\left\langle\phi_{1}\left|V\left(\overline{\phi}_{2},-1\right)\right| \overline{\phi}_{3}\right\rangle^{*}\\
&=(-1)^{|\phi_2||\phi_3|-2h_{\phi_3}+|\phi_1||\phi_3|+|\phi_1||\phi_2|-2h_{\phi_2}}e^{i\pi(-h_{\phi_1}+h_{\phi_2}+h_{\phi_3})}\left(f_{ \overline{\phi}_{1} \overline{\phi}_{2} \overline{\phi}_{3}}\right)^{*}\quad \\&=(-1)^{|\phi_1||\phi_2|+|\phi_2||\phi_3|+|\phi_3||\phi_1|}e^{-i\pi(h_{\phi_1}+h_{\phi_2}+h_{\phi_3})}(f_{\bar{\phi_1}\bar{\phi_2}\bar{\phi_3}})^*
\end{split}
\]
where in the last step we used 
\begin{equation}
w^{L_0}V(\psi,z)w^{-L_0}=w^{h_{\psi}}V(\psi,zw)\quad \text{for $w=-1$}.
\label{eq:zVz}
\end{equation}
This is proved using the axiom (4) of Definition \ref{defn:scft} as follows: we have 
\[
w^{L_0}V(\psi,z)w^{-L_0}|0\rangle=w^{L_0}e^{zL_{-1}}\psi,
\]
since $L_0|0\rangle=0$ and $V(\psi,z)|0\rangle=e^{zL_{-1}}\psi$. We now use the following version of the Baker-Campbell-Hausdorff formula:
\begin{thm}\label{thm:bch}
If $[ X , Y ] = s Y$, where $s$ is a complex number with $s \neq 2 \pi i n$ for all integers $n$, then we have
\begin{equation}
    e^{X}e^{Y}=e^{\exp(s)Y}e^{X}.
\label{eq:bch}
\end{equation} 
\end{thm}
We can write 
\[
w^{L_0}e^{zL_{-1}}=e^{L_0\ln w}e^{zL_{-1}}.
\]
Using the fact that $[L_0,L_{-1}]=L_{-1}$, we see that the above theorem implies 
\[
e^{L_0\ln w}e^{zL_{-1}}=e^{\exp(\ln w)zL_{-1}}w^{L_0}.
\]
Thus we get 
\[
w^{L_0}e^{zL_{-1}}=e^{zwL_{-1}}w^{L_0}.
\]
Thus we get 
\[
w^{L_0}V(\psi,z)w^{-L_0}|0\rangle=e^{zwL_{-1}}w^{h_{\psi}}\psi.
\]
Part (1) of this proposition then gives 
\begin{equation}
    w^{L_0}V(\psi,z)w^{-L_0}=w^{h_{\psi}}V(\psi,zw).
\label{eq:wL0Vw-L0}
\end{equation}
To prove (f), we follow \cite[Proof of Proposition 2.9 (vi)]{Dolan:1994st} to get 
\[
\langle e^{L_1}\overline{\psi}|e^{L_1}\psi\rangle=(-1)^{2h_{\psi}+|\psi||\overline{
\psi}|}\langle e^{L_1}\psi|e^{L_1}\overline{\psi}\rangle\implies (-1)^{2h_{\psi}+|\psi||\overline{
\psi}|}=1. 
\]
If $\psi$ is Fermionic then $|\psi|=|\overline{
\psi}|=1$ and $h_{\psi}\in\mathbb{Z}+\frac{1}{2}.$ In other cases $h_{\psi}\in\mathbb{Z}$. Thus $h_{\psi}\in\frac{1}{2}\mathbb{Z}$.
\end{proof}
\end{prop}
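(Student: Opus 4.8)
The plan is to run the arguments of \cite{Dolan:1994st} for bosonic Hermitian VOAs in the $\mathbb{Z}_2$-graded setting, inserting a Koszul sign $(-1)^{|\psi||\phi|}$ each time two vertex operators are interchanged; the one consistency check needed is that applying the locality relation twice returns the identity, which holds because $|\psi|^2\equiv|\psi|\pmod 2$. I would prove (1), (3) and (6) in detail and observe that (2), (4), (5), (7) are the graded analogues of results in \cite{Dolan:1994st}, proved by the same method.

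\emph{Reconstruction (1).} This is the engine for everything else. Given $U(z)$ with $U(z)|0\rangle=e^{zL_{-1}}\phi$ that is local of degree $|\phi|$ with respect to $\mathbf{V}$, I would evaluate it on the dense set of states $V(\psi,\zeta)|0\rangle=e^{\zeta L_{-1}}\psi$: commuting $U(z)$ past $V(\psi,\zeta)$ gives a factor $(-1)^{|\phi||\psi|}$; then replacing $U(z)|0\rangle$ by $e^{zL_{-1}}\phi=V(\phi,z)|0\rangle$; then commuting $V(\phi,z)$ back across $V(\psi,\zeta)$ gives a second factor $(-1)^{|\phi||\psi|}$. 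The two signs cancel, so $U(z)V(\psi,\zeta)|0\rangle=V(\phi,z)V(\psi,\zeta)|0\rangle$ for all $\psi$ and all $\zeta$, and density forces $U(z)=V(\phi,z)$. The ``in particular'' clause then follows by taking $U(z)=\frac{d}{dz}V(\psi,z)$, which is local since $V(\psi,z)$ is and which creates $e^{zL_{-1}}L_{-1}\psi$ from the vacuum.

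\emph{Duality (2) and skew-symmetry (3).} The duality identity $V(\psi,z)V(\phi,w)=V(V(\psi,z-w)\phi,w)$ I would obtain from (1) together with translation covariance of vertex operators, exactly as in the bosonic case but tracking the graded signs; this is the step I expect to be the main obstacle, since making the analytic-continuation argument rigorous requires importing the full machinery of \cite{Dolan:1994st} into the super setting. Granting duality, skew-symmetry is immediate: using duality and the first part of axiom (3) of Definition \ref{defn:scft}, $V(\psi,z)V(\phi,w)|0\rangle=e^{wL_{-1}}V(\psi,z-w)\phi$, while using locality first and then duality, $V(\psi,z)V(\phi,w)|0\rangle=(-1)^{|\psi||\phi|}e^{zL_{-1}}V(\phi,w-z)\psi$; letting $w\to0$ in both expressions and comparing gives $V(\psi,z)\phi=(-1)^{|\psi||\phi|}e^{zL_{-1}}V(\phi,-z)\psi$.

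\emph{Hermitian structure (4)--(7).} The conjugation $\psi\mapsto\overline{\psi}$ defined by $V(e^{z^*L_1}z^{*-2L_0}\psi,1/z^*)^{\dagger}=V(\overline{\psi},z)$ is grading-preserving because it is built only from the parity-even operators $L_1,L_0$ and from Hermitian conjugation, which preserves parity; this yields (4), the adjoint formula for quasi-primary moments, and items (6)(a)--(c),(e),(g) as in \cite{Dolan:1994st}. For (6)(d) I would compute $f_{\phi_1\phi_2\phi_3}=\langle\overline{\phi}_1|V(\phi_2,1)|\phi_3\rangle$ by alternating skew-symmetry (3) and Hermitian conjugation, accumulating the Koszul signs and the phases $e^{\pm i\pi h}$ produced by $w^{L_0}V(\psi,z)w^{-L_0}=w^{h_\psi}V(\psi,zw)$ at $w=-1$; that last identity itself follows from (1) and the reordering $w^{L_0}e^{zL_{-1}}=e^{zwL_{-1}}w^{L_0}$, which holds because $[L_0,L_{-1}]=L_{-1}$. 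For (6)(f), and hence the non-negativity (5) and the half-integrality of the $L_0$-spectrum, I would apply positivity of the inner product to $\langle e^{L_1}\overline{\psi}|e^{L_1}\psi\rangle$ for a real state $\psi$ and use the defining property of $\psi\mapsto\overline{\psi}$ to force $(-1)^{2h_\psi+|\psi||\overline{\psi}|}=1$, so $h_\psi\in\mathbb{Z}$ for $\psi$ even and $h_\psi\in\mathbb{Z}+\frac12$ for $\psi$ odd. Finally (7) is the Laurent expansion of the duality identity (2) resolved into $L_0$-eigenspaces, the sum running over $\mathbb{Z}$ or $\mathbb{Z}+\frac12$ according to whether $V(\psi,z)$ is an R or an NS field.
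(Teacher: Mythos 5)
Your proposal matches the paper's own proof essentially step for step: the same dense-set reconstruction argument with cancelling Koszul signs for (1), the same two evaluations of $V(\psi,z)V(\phi,w)|0\rangle$ via duality and via locality for (3), the same chain alternating skew-symmetry and Hermitian conjugation together with $w^{L_0}V(\psi,z)w^{-L_0}=w^{h_\psi}V(\psi,zw)$ at $w=-1$ for (6)(d), and the same sign constraint $(-1)^{2h_\psi+|\psi||\overline{\psi}|}=1$ for (6)(f), with the remaining items deferred to \cite{Dolan:1994st} exactly as the paper does. No substantive differences or gaps.
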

Isomorphisms of SCVOAs and SVOAs is defined analogous to those for VOAs. A consequence of the definition is that isomorphisms also preserve the superconformal vector in addition to the conformal vector and the vacuum.   
%

\subsection{Spin Lifts Of A Conformal Field Theory}\label{sec:spin_lift}

Given a modular invariant holomorphic conformal field theory $\mathscr{H}$ with a non-anomalous
%
%
$\mathbb{Z}_2$ automorphism, there is a natural way to construct a spin CFT. The essential idea goes back to the GSO projection applied to string worldsheet superconformal field theories to obtain a consistent string theory \cite{Gliozzi:1976qd} and played an important role in discussions of bosonization on arbitrary Riemann surfaces \cite{Alvarez-Gaume:1987wwg,Alvarez-Gaume:1986nqf}. More general versions of the construction of \cite{Alvarez-Gaume:1987wwg,Alvarez-Gaume:1986nqf} have been discussed in the recent literature \cite{Gaiotto:2015zta,Ji:2019ugf,Kapustin:2017jrc,Karch:2019lnn,Lin:2019hks,Shao:2023gho}. This spin-conformal field theory is called a spin lift of $\mathscr{H}$. 
The idea is to couple the non-anomalous $\mathbb{Z}_2$ automorphism of the CFT to an invertible topological field theory sensitive to spin structure known as the Arf
theory \cite{Debray:2018wfz}, and then gauge the diagonal global $\mathbb{Z}_2$ automorphism of the product theory. The resulting theory depends on the spin structure and in general the vertex operators do not satisfy the standard locality conditions of a super-vertex operator algebra. Nevertheless the GSO projection of the spin lift reproduces the original CFT. The spin lift also has an emergent $\mathbb{Z}_2$ symmetry which we will identify with the fermion number symmetry $(-1)^F$. Let us describe the construction in more detail.   
\subsubsection{Formal Description Of Gauging A Discrete Symmetry}\label{subsec:discretegauge}
Let us recall some well-known general facts. We are following \cite[Section 7.1]{Moore:2006dw}.
Let $\mathcal{F}$ be a $d$-dimensional quantum field theory,  that is, a functor $\mathcal{F}:\text{Bord}_d\longrightarrow\text{Vect}$ from the bordism category, whose objects are $(d-1)$-dimensional manifolds (possibly with extra structure) and whose morphisms are $d$-dimensional oriented bordisms, to the linear category of vector spaces and linear maps. We say that  $\mathcal{F}$ has global symmetry $G$ when,  for every $(d-1)$-manifold $Y$, the vector space $\mathscr{H}(Y):=\mathcal{F}(Y)$, called the space of states, is a representation space of $G$ and for any bordism $X:Y_1\longrightarrow Y_2$, the linear map $\mathcal{F}(X):\mathscr{H}(Y_1)\longrightarrow \mathscr{H}(Y_2)$, called the transition amplitude, is $G$-equivariant. Gauging the symmetry group involves two steps:
\begin{enumerate}
    \item Construct the ``equivariant theory'' by enlarging the number of fields. 
    In the present example our new fields are a principal $G$-bundle with connection. In our case $G$ is a finite group so the bundle has a unique connection.  
    \footnote{We use the term ``field'' in the generalized sense explained in Section 3 of 
    \cite{Freed:2013gc}. This step is the mathematical version of ``coupling the theory to background gauge fields'' in physics.}
    \item ``Sum'' or ``integrate'' over the principal $G$-bundles with connection with a suitable measure. 
\end{enumerate}
To construct the equivariant theory, we define a new bordism category $\text{Bord}_d^{G}$ whose objects are $(d-1)$-manifolds $Y$ equipped with principal $G$-bundles $P\to Y$ with connection and morphisms are $d$-dimensional bordisms $X:Y_1\longrightarrow Y_2$ also equipped with principal $G$-bundles with connection $P'\to X$ such that the restriction of $P'\to X$ to the ingoing boundary is isomorphic to $P_1 \to Y_1$ and the
restriction to the outgoing boundary is isomorphic to $P_2 \to Y_2$. 
%
%
Note that $\text{Bord}_d$ can be regarded as a subcategory of $\text{Bord}_d^{G}$ with trivial bundles associated to the objects and bordisms. 
Now, the equivariant theory $\mathcal{F}^{\text{eq}}:\text{Bord}_d^{G}\longrightarrow \text{Vect}$ assigns to every object $P\to Y$ a vector space $\mathscr{H}^{\text{eq}}(P):=\mathscr{H}(P\to Y)$ and for every bordism $(P'\to X)$ between $P_1 \to Y_1$ and $P_1 \to Y_1$ a transition amplitude 
\begin{equation}
    \mathcal{F}^{\text{eq}}(P'):=\mathcal{F}^{\text{eq}}(P'\to X):\mathscr{H}^{\text{eq}}(P_1)\longrightarrow \mathscr{H}^{\text{eq}}(P_2)
\end{equation}
such that when $P\to Y$ is the trivial bundle $Y\times G$, $\mathscr{H}^{\text{eq}}(Y\times G)=\mathscr{H}(Y)$ and for trivial bordism $X\times G$ between trivial objects $Y_1\times G$ and $Y_1\times G$, $\mathcal{F}^{\text{eq}}(P')$ reduces to the original transition amplitude $\mathcal{F}(X)$. The automorphisms of the bundle $P\to Y$ must act on $\mathscr{H}^{\text{eq}}(P)$. For the trivial bundle, this corresponds to the original global symmetry $G$-action on $\mathscr{H}(Y)$. The amplitudes are equivariant maps. 
 
Once we have the equivariant theories we can perform the second step of gauging, that is sum over bundles with connection. We need to take into account the fact that the isomorphic bundles have the same space of states. More precisely, if we have a bundle isomorphism $f:P\longrightarrow Q$ of bundles $P\to Y$ and $Q\to Y$ then we have an isomorphism $\tilde{f}:\mathscr{H}^{\text{eq}}(P)\longrightarrow\mathscr{H}^{\text{eq}}(Q)$ such that $\tilde{f}(\psi(P))=\psi(Q)$ for states $\psi(P)\in \mathscr{H}^{\text{eq}}(P)$. In particular, when $f:P\longrightarrow P$ is a bundle automorphism of $P\to Y$ then $\tilde{f}(\psi(P))=\psi(P)$. This is the mathematical equivalent of gauge invariant states. The gauged theory thus contains only gauge invariant states. The space of states of the gauged theory is then defined as a direct sum over isomorphism classes of principal bundles:
\begin{equation}
\mathscr{H}^{\text{gd}}(Y):=\bigoplus_{[P]\in\text{Prin}_G(Y)}\mathscr{H}^{\text{eq}}(P)^{\text{Aut}(P)}    
\end{equation}
where $\mathscr{H}^{\text{eq}}(P)^{\text{Aut}(P)}$ is the subspace of $\text{Aut}(P)$-invariant states of $\mathscr{H}^{\text{eq}}(P)$ and $\text{Prin}_G(Y)$ denotes the set of isomorphism classes of principal $G$-bundles over $Y$.
For a a $d$-dimensional bordism without boundary, the partition function $Z(X)$ of the theory is the complex number
%
$Z(X):=\mathcal{F}(X)$. We also have partition function $Z^{\text{eq}}(P')$ for equivariant theories corresponding to principal bundle $P'\to X$. The partition function of the gauged theory is defined as 
\begin{equation}\label{eq:gaugedpartfunc}
    Z^{\text{gd}}(X)=\sum_{[P']\in\text{Prin}_G(X)}\frac{1}{|\text{Aut}(P')|}Z^{\text{eq}}(P').
\end{equation}

We now want to apply the above general procedure to $(1+1)$d quantum field theory. We take the spatial slices to be disjoint unions of circles to determine the gauged space of states. We start by noting that  any $g\in G$ we can construct a principal $G$-bundle $P_g\to S^1$ whose isomorphism class only depends on the conjugacy class of $g$, and moreover there is a 1-1 correspondence of conjugacy classes of $G$ and isomorphism classes of principal $G$-bundles over the circle.  More explicitly we have 
\begin{equation}
    P_g:=\mathbb{R}\times_\mathbb{Z}G\longrightarrow S^1
\end{equation}
where the equivalence relation is 
\begin{equation}
    (s+1,g')\sim (s,gg').
\end{equation}
The projection map is $[s,g']\mapsto e^{2\pi is}$ and the right action of $G$ is given by $[s,g']\cdot g''=[s,g'g'']$. The automorphism group of this bundle is the centraliser $Z(g)$ of $g$ in $G$. Thus for each $[g]\in\text{Conj}(G)$ we have an equivariant space of states $\mathscr{H}^{\text{eq}}(P_g)$ which is a representation of the centraliser $Z(g)$
and the space of states of the gauged theory is 
\begin{equation}\label{eq:gaugedhilbertspace}
    \mathscr{H}^{\text{gd}}(S^1):=\bigoplus_{[g]\in\text{Conj}(G)}\mathscr{H}^{\text{eq}}(P_g)^{Z(g)}.
\end{equation}
Using the structure \eqref{eq:gaugedhilbertspace} of the gauged space of states, we obtain the partition function on the torus 
\begin{equation}
    Z^{\text{gd}}(\tau)(T^2)=\text{Tr}_{\mathscr{H}^{\text{gd}}(S^1)}e^{-2\pi \text{Im}(\tau)H}e^{2\pi i\text{Re}(\tau)P}
\end{equation}
where $\tau$ is the complex moduli of the torus and $H$ and $P$ are the Hamiltonian and momentum operators generating time and space translations respectively. By expanding the sum, one can also write the partition function as 
\begin{equation}\label{eq:toruspartfuncgen}
Z^{\text{gd}}(T^2)(\tau)=\frac{1}{|G|} \sum_{[g_1,g_2]=1}Z^{\text{eq}}(P_{g_1,g_2})(\tau)   
\end{equation}
where 
\begin{equation}\label{eq:partfuncdefecthilb}
Z^{\text{eq}}(P_{g_1,g_2})(\tau)=\text{Tr}_{\mathscr{H}^{\text{eq}}(P_{g_1})} \tilde g_2 e^{-2\pi \text{Im}(\tau)H}e^{2\pi i\text{Re}(\tau)P}    
\end{equation}
The $g_2$ action on $\mathscr{H}$ must be extended, in a consistent way, 
to an action by $\tilde g_2$ on the spaces $\mathscr{H}^{\text{eq}}(P_{g_1})$. How this is done 
is part of the definition of the gauged theory, but it is constrained by, e.g. modular invariance. On the 
left hand side of \eqref{eq:partfuncdefecthilb} 
$P_{g_1, g_2}$ denotes the bundle over $T^2$ with monodromy $g_1, g_2$ around the two cycles. To be explicit, we have
\begin{equation}
P_{g_1, g_2}=\mathbb{R}^2 \times_{\mathbb{Z} \oplus \mathbb{Z}} G
\end{equation}
and the equivalence relation is given by
\begin{equation}
(s+1, t, g) \sim\left(s, t, g_1 g\right), \quad(s, t+1, g) \sim\left(s, t, g_2g\right).
\end{equation}

A standard fact is that when the gauge group $G$ is Abelian, the gauged theory has a global symmetry isomorphic to the Pontrjagin dual $\tilde{G}:=\text{Hom}(G,\mathrm{U}(1))$ of $G$. An element $\chi\in \tilde{G}$ acts on the equivariant Hilbert space $\mathscr{H}^{\text{eq}}(P_{g})$ by multiplying $\chi(g)\in \mathrm{U}(1)$. Note that this action is well defined since $\chi$ being a group homomorphism only depends on the conjugacy class of $g\in G$. This defines the action of $\chi$ on $\mathscr{H}^{\text{gd}}(S^1)$. We can write the action schematically as: 
\begin{equation}\label{eq:pontdualactonHgd}
    \chi\cdot\mathscr{H}^{\text{gd}}(S^1)=\bigoplus_{[g]\in\text{Conj}(G)}\chi(g)\cdot\mathscr{H}^{\text{eq}}(P_g)^{Z(g)},\quad \chi\in\Tilde{G}.
\end{equation}
For a conformal field theory, the partition function takes the form 
\begin{equation}
Z^{\text{gd}}(T^2)=\frac{1}{|G|} \sum_{[g_1,g_2]=1}\text{Tr}_{\mathscr{H}^{\text{eq}}(P_{g_1})}\tilde g_2q^{L_0-\frac{c}{24}}\bar{q}^{\bar{L}_0-\frac{\bar{c}}{24}},\quad q=e^{2\pi i\tau},~~\bar{q}=e^{-2\pi i\bar{\tau}}    
\end{equation}
after recognising that 
\begin{equation}
    H=L_0+\bar{L}_0-\frac{c}{24}-\frac{\bar{c}}{24},\quad P=i(L_0-\bar{L}_0).
\end{equation}
For vertex operator algebras (\textit{chiral algebra} of conformal field theories), we only have $q$ dependence in the partition function. 
We will be mostly interested in the case when $G=\mathbb{Z}_2$. In this case, we only have two isomorphism classes of principal bundles since we only have two conjugacy classes. 
The space of states corresponding to the conjugacy class of $-1$ is called the \textit{twisted sector}.
\subsubsection{The Arf Theory}
The Arf theory is a 2 dimensional \textit{spin topological quantum field theory} (spin TQFT) defined on a generic Riemann surface $\Sigma$. See \cite{Debray:2018wfz} for a nice discussion of this theory. Recall that all
orientable Riemann surfaces have vanishing second Stiefel-Whitney class \cite{Nakahara:2003nw} and hence are spin manifolds. Moreover the set of spin structures on $\Sigma$ is in 1-1 correspondence with $H^1(\Sigma,\mathbb{Z}_2)\cong\mathbb{Z}_2^{2g}$ \cite{Atiyah:1971} where $g$ is the genus of $\Sigma$. In fact, the set of spin structure on $\Sigma$ is an $H^1(\Sigma,\mathbb{Z}_2)$-torsor over $\mathbb{Z}_2$. 
Thanks to a beautiful paper of M. Atiyah \cite{Atiyah:1971} we can  
define the Arf invariant \cite{Arf1941UntersuchungenQ} of a spin structure to be the mod-two index of the Dirac operator:
\begin{equation}
 \text{Arf}[\rho]=\text{ind}_2(D_\rho).   
\end{equation}
Thus there is a dichotomy of even and odd spin structures: 
\begin{equation}
    \text{Arf}[\rho]:=\begin{cases}
        1&\rho \text{ odd}\\0&\rho \text{ even}.
    \end{cases}
\end{equation}
On a genus $g$ orientable surface  $\Sigma$ there are  $2^{g-1}(2^g - 1)$   odd and $2^{g-1}(2^g+1)$ even spin structures. 

Using the Arf invariant, one can define a 2 dimensional invertible
topological field theory
\footnote{The notion of an invertible TQFT is discussed in \cite{Freed:2014iua,Freed:2016rqq,Freed:2004yc}.}
called the Arf theory, see \cite{Debray:2018wfz} for the detailed definition. The partition function of the Arf theory on a closed oriented Riemann surface $\Sigma$ with spin structure $\rho$ is: 
\begin{equation}\label{eq:partfuncarf}
\text{Arf}(\Sigma_\rho):=
Z_{\text{Arf}}[\Sigma_\rho]:=e^{i\pi \text{Arf}[\rho]}.    
\end{equation}
Note that the partition function is topological in that it does not depend on the conformal structure of the Riemann surface but it is sensitive to the choice of spin structure on $\Sigma$. To describe the Hilbert space of the Arf theory, we consider the Arf theory on a cylinder $\mathbb{R}\times S^1$ where $\mathbb{R}$ is the time direction. The (complex) Hilbert space of the Arf theory on the spatial circle is one dimensional. 
%
%
In fact, the Hilbert space is   $\mathbb{Z}_2$-graded and the grading is defined by the 
action of the spin automorphism on the space. In the physical literature this operator is denoted   $(-1)^F$. The Hilbert space for the NS (bounding) spin structure has $(-1)^F=1$ and for the R (nonbounding) spin structure has $(-1)^F=-1$. Thus, as supervector spaces we have 
\footnote{For any field $\mathbb{F}$ we denote by $\mathbb{F}^{p|q}$ the supervector space $\mathbb{F}^{p|q}:=\mathbb{F}^{p}\oplus \mathbb{F}^{q}$ where $\mathbb{F}^{p}$ is the even subspace ($\mathbb{Z}_2$-parity $\bar{0}$) and $\mathbb{F}^{q}$ is the odd subspace ($\mathbb{Z}_2$-parity $\bar{1}$).
} 
$\text{Arf}(S^1_{\text{NS}}) \cong \mathbb{C}^{1|0}$   and $\text{Arf}(S^1_{\mathrm{R}}) \cong \mathbb{C}^{0|1}$ and $(-1)^F$ is just the parity operator. 
%
%
\subsubsection{Coupling A CFT To The Arf Theory}
Let $\mathscr{C}$ be a modular invariant CFT (i.e. a functor from Bord${}_2$ to Vect) with statespace on a circle 
$\mathscr{C}(S^1):=\mathscr{H}$ and suppose $\mathscr{C}$ has 
a non-anomalous $\mathbb{Z}_2$ automorphism
$\sigma$. We start by taking the tensor product of the $\sigma$-equivariant lift of 
$\mathscr{C}$ with the $(-1)^F$-equivariant lift of the Arf theory.
The product theory has as domain a bordism category $\text{Bord}_2^{\mathbb{Z}_2, \text{Spin}}$ 
consisting of $1$-manifolds and their bordisms equipped with a spin structure and, independently, with a principal $\mathbb{Z}_2$-bundle. Let  $P_{\bar 0}$ and $P_{\bar 1}$  be the trivial and nontrivial principal $\mathbb{Z}_2$ bundles over the circle corresponding to the trivial and nontrivial double-covers, respectively. 

We need to define the ``equivariant lift'' of the Arf theory on the category $\text{Bord}_2^{\mathbb{Z}_2, \text{Spin}}$. We do this as follows: Isomorphism classes of principal $\mathbb{Z}_2$ bundles over any manifold $X$ are in $1-1$ correspondence with cohomology classes $H^1(X,\mathbb{Z}_2)$.
\footnote{Recall that for a discrete group $G$, ismorphism classes of principal $G$-bundles over a connected space $X$ are classified by $\text{Hom}(\pi_1(X,x_0),G)/\sim$ where the equivalence relation is given by conjugation. For $G$ Abelian we have: 
\begin{equation}
H^1(X,G)\cong \text{Hom}(H_1(X),G)\cong \text{Hom}(\pi_1(X,x_0),G)/\sim ~ .  
\end{equation}
Taking $G=\mathbb{Z}_2$, an alternate way to prove this result is to observe that the classifying space of $\mathbb{Z}_2$ is $\mathbb{RP}^{\infty}$ and the set of homotopy classes of maps from a space $X$ to $\mathbb{RP}^{\infty}$ is isomorphic to $H^1(X,\mathbb{Z}_2)$. }
For each $t\in H^1(X,\mathbb{Z}_2)$ choose a representative bundle $\pi:P_t\to X$.
As noted above, the set of spin structures on $X$ is a torsor for $H^1(X,\mathbb{Z}_2)$. If $\rho$ denotes a spin structure and $t\in H^1(X,\mathbb{Z}_2)$ we write the action as $\rho \to \rho +t$. We \underline{define} the equivariant lift of the $\text{Arf}$ theory to be 
\begin{equation}\label{eq:Arf-Eq-Def}
    \text{Arf}^{\text{eq}}[P_t,X_\rho]:= \text{Arf}[X_{\rho+t}] 
\end{equation}
where $X_\rho$ is the manifold $X$ with spin structure $\rho$. 
In particular,  
\begin{equation}
\begin{split}
    \text{Arf}^{\rm{eq}}[P_{\bar 0},S^1_{\mathrm{NS}}]  & = \mathbb{C}^{1\vert 0}  \\ 
    \text{Arf}^{\rm{eq}}[P_{\bar 1},S^1_{\mathrm{NS}}]  & = \mathbb{C}^{0\vert 1} \\ 
    \text{Arf}^{\rm{eq}}[P_{\bar 0},S^1_{\mathrm{R}}]  & = {\mathbb{C}^{0\vert 1}}   \\ 
    \text{Arf}^{\rm{eq}}[P_{\bar 1},S^1_{\mathrm{R}}]  & = \mathbb{C}^{1\vert 0} \\ 
\end{split}
\end{equation}
We define the lifted global symmetry $(-1)^F_{\text{Arf}^{\text{eq}}}$ on $\text{Arf}^{\text{eq}}$ again to be the parity operator.  
 
The equivariant theory $\mathscr{C}^{\text{eq}}$ inherits a $\mathbb{Z}_2$ symmetry $\tilde\sigma$ from the action of $\sigma$ on $\mathscr{C}$.  The action of $\tilde\sigma$  on $\mathscr{H} = \mathscr{C}^{\text{eq}}(P_{\bar 0}) $ is just the original action of $\sigma$. The action of $\tilde\sigma$ on $\mathscr{H}^{\text{eq}} := \mathscr{C}^{\text{eq}}(P_{\bar 1})$ is part of the definition of the gauging. In our case, as we will see presently, the action is constrained by the modular covariance of the torus partition functions when combined with the Hilbert space interpretation of torus partition functions.  
\footnote{In \underline{topological} theories this action must be the identity. See Theorem 4 of \cite{Moore:2006dw}. But in general CFT this need not be so. }
Now, the   product theory $\mathscr{C}^{\text{eq}}\otimes \text{Arf}^{\text{eq}}$ has a $\mathbb{Z}_2$ automorphism given by the diagonal $\sigma_d:= \tilde\sigma \otimes (-1)^F_{\text{Arf}^{\text{eq}}}$. The \emph{spin lift} $\mathscr{C}^{\text{SL}}$ 
of $\mathscr{C}$ is 
the spin theory, with domain category $\text{Bord}_2^{\text{Spin}}$, obtained by gauging the group of 
automorphisms generated by $\sigma_d$. Thus, the Hilbert spaces on the two spin circles ($S^1_{\text{NS}}$ is the circle with bounding spin structure and $S^1_{\text{R}}$ is the circle with nonbounding spin structure, see Footnote \ref{foot:spin_st_circle}) are 
\begin{equation}\label{eq:hilspSL}
\begin{split} 
    \mathscr{H}^{\text{SL}}_{\text{NS}}:=\mathscr{C}^{\text{SL}}(S^1_{\text{NS}})  & =\left( \mathscr{H}(S^1)\otimes \mathbb{C}^{1\vert 0}\right)^{\sigma_d=+1} \oplus \left( \mathscr{H}^{\text{eq}}\otimes \mathbb{C}^{0\vert 1} \right)^{\sigma_d = +1} \\ 
    & \cong \mathscr{H}^+ \oplus \mathscr{H}^{\text{eq},-} \\
    \mathscr{H}^{\text{SL}}_{\text{R}}:=\mathscr{C}^{\text{SL}}(S^1_{\text{R}})& =\left( \mathscr{H}(S^1)\otimes \mathbb{C}^{0\vert 1}\right)^{\sigma_d=+1} \oplus \left( \mathscr{H}^{\text{eq}}\otimes \mathbb{C}^{1\vert 0} \right)^{\sigma_d = +1} \\ 
    & \cong \mathscr{H}^- \oplus \mathscr{H}^{\text{eq},+}, 
    \end{split}
\end{equation}
where $\mathscr{H}^{\text{eq},\pm}$ is the $\pm 1$ eigenspace of $\mathscr{H}^{\text{eq}}$ under the action of $\tilde\sigma$. 
Now, whenever we gauge a theory by an Abelian group $A$ the gauged theory has a global symmetry group isomorphic to the Pontryagin dual $\tilde A$. In our case $A$ is generated by $\sigma_d$ so 
the Pontryagin dual is $\tilde A\cong \mathbb{Z}_2$. We denote the nontrivial element by 
$\check{\sigma}_d$. From eq.  \eqref{eq:pontdualactonHgd} we see that the action of $\check{\sigma}_d$ on the spin lift is given by
\begin{equation}\label{eq:FactonSL}
\check{\sigma}_d=\begin{cases}
    +1&\text{on }~\mathscr{H}^+\oplus\mathscr{H}^-\\-1&\text{on }~\mathscr{H}^{\text{eq},+}\oplus\mathscr{H}^{\text{eq},-}.
\end{cases}    
\end{equation}  

Now we turn to the partition functions of the spin-lifted theory. As we remarked above, 
the principal $\mathbb{Z}_2$ bundles over a Riemann surface $\Sigma$ are in 1-1 correspondence with 
$H^1(\Sigma,\mathbb{Z}_2)$. When $\Sigma$ is connected the automorphism group of these bundles is isomorphic to $\mathbb{Z}_2$. 
When we consider the $\sigma_d$-equivariant lift of $\mathscr{C}\otimes \text{Arf}$ there are a  
total of $|H^1(\Sigma,\mathbb{Z}_2)|=2^{2g}$ equivariant partition functions indexed by ``background $\mathbb{Z}_2$-connections'' $T$.  The partition function for the equivariant lift of the product theory is given by (see eq.  \eqref{eq:partfuncarf} and eq.  \eqref{eq:Arf-Eq-Def})
\begin{equation}
    Z^{\text{eq}}[P_T,\Sigma_\rho]:=Z_{\mathscr{C}^{\text{eq}}}[P_T]Z_{\text{Arf}^{\text{eq}}}[P_T,\Sigma_{\rho}]=Z_{\mathscr{C}^{\text{eq}}}[P_T]e^{ i\pi \text{Arf}[\rho+T]}.
\end{equation}
Next, we gauge the equivariant theory, so we promote the background connection $T$ to a dynamical connection $t$. That is, we sum over isomorphism classes of principal $\mathbb{Z}_2$-bundles over $\Sigma$. Applying 
\eqref{eq:gaugedpartfunc} the partition function of the gauged theory is given by 
\footnote{A version of  \eqref{eq:partfuncZ2cover} appeared as equation $(2.5)$ of \cite{Lin:2019hks}, but our \eqref{eq:partfuncZ2cover} differs from equation $(2.5)$ of \cite{Lin:2019hks} in a few minor ways.  First we have chosen $S=0$ relative to that paper. This is inconsequential since spin structures form a torsor for $H^1(\Sigma, \mathbb{Z}_2)$. Second, there is no factor of $\exp[i \pi \text{Arf}[\rho]]$ in our version. This is merely a choice of ``gravitational counterterm,'' making our formula more in line with the description in terms of gauging the  $\mathbb{Z}_2$ symmetry 
$\langle \sigma_d \rangle$. Finally we have an overall factor ${1/2}$, again because in the description in terms of gauging the automorphism group of a principal $\mathbb{Z}_2$ bundle over a connected surface is simply $\mathbb{Z}_2$. Equation 
$(2.5)$ of \cite{Lin:2019hks} has a factor of $2^{-g}$, giving a factor of   $2^{1-g}$ relative to our normalization. Again this can be interpreted as coupling to a background invertible TQFT (an Euler character theory) so the difference is not essential. The choice $2^{-g}$ made in \cite{Lin:2019hks}  has the nice property that if the original bosonic theory couples trivially to the background $\mathbb{Z}_2$ gauge theory then the spin lift has the same partition function.  
}
\begin{equation}\label{eq:partfuncZ2cover}
    \begin{aligned} Z^{\text{SL}}[\Sigma_\rho]&=\frac{1}{2} \sum_{t\in H^1(\Sigma,\mathbb{Z}_2)} Z^{\text{eq}}[P_t,\Sigma_\rho]\\&= \frac{1}{2} \sum_{t\in H^1(\Sigma,\mathbb{Z}_2)} Z_{\mathscr{C}^{\text{eq}}}[P_t]\exp \left(i \pi\operatorname{Arf}[\rho+t]\right) .\end{aligned}
\end{equation}

Now we would like to check compatibility of \eqref{eq:partfuncZ2cover} with the 
Hilbert space description \eqref{eq:hilspSL} by examining the special case where  $\Sigma$ is a torus. There are four spin structures on the torus  and we can denote them by $T^2_{00},T^2_{01},T^2_{10},T^2_{11}$ where 0 and 1 denotes the bounding and nonbounding spin structure on a cycle of the torus. The corresponding partition functions of the spin-lifted theory are denoted by 
$Z^{\text{SL}}[00], \dots, Z^{\text{SL}}[11]$, respectively. 
Interpreting the first circle factor of $T^2=S^1 \times S^1$ as the spatial circle 
we can expect that   the partition functions \eqref{eq:partfuncZ2cover} can be interpreted as the NS and R partition functions with and without $\check{\sigma}_d$ insertion:
\begin{equation}\label{eq:torpartfunc01}
\begin{split}
    &Z^{\text{SL}}[00]=\text{Tr}_{\mathscr{H}^{\text{SL}}_{\text{NS}}}q^{L_0-\frac{c}{24}},\\&Z^{\text{SL}}[01]=\text{Tr}_{\mathscr{H}^{\text{SL}}_{\text{NS}}}\check{\sigma}_dq^{L_0-\frac{c}{24}},\\&Z^{\text{SL}}[10]=\text{Tr}_{\mathscr{H}^{\text{SL}}_{\text{R}}}q^{L_0-\frac{c}{24}},\\&Z^{\text{SL}}[11]=\text{Tr}_{\mathscr{H}^{\text{SL}}_{\text{R}}}\check{\sigma}_dq^{L_0-\frac{c}{24}},\quad q=e^{2\pi i\tau}.
\end{split}    
\end{equation}
We now verify that the relations in eq.  \eqref{eq:torpartfunc01} is consistent with eq.  \eqref{eq:partfuncZ2cover} and eq.  \eqref{eq:hilspSL}.
%
%
%
%
%
%
From \eqref{eq:partfuncZ2cover}   we have 
\begin{equation}
\begin{split}
&Z^{\text{SL}}[00]=\frac{1}{2}\left[Z^{\text{eq}}(P_{\bar{0},\bar{0}})+Z^{\text{eq}}(P_{\bar{0},\bar{1}})+Z^{\text{eq}}(P_{\bar{1},\bar{0}})-Z^{\text{eq}}(P_{\bar{1},\bar{1}})\right] \\&Z^{\text{SL}}[01]=\frac{1}{2}\left[Z^{\text{eq}}(P_{\bar{0},\bar{0}})+Z^{\text{eq}}(P_{\bar{0},\bar{1}})-Z^{\text{eq}}(P_{\bar{1},\bar{0}})+Z^{\text{eq}}(P_{\bar{1},\bar{1}})\right] \\&Z^{\text{SL}}[10]=\frac{1}{2}\left[Z^{\text{eq}}(P_{\bar{0},\bar{0}})-Z^{\text{eq}}(P_{\bar{0},\bar{1}})+Z^{\text{eq}}(P_{\bar{1},\bar{0}})+Z^{\text{eq}}(P_{\bar{1},\bar{1}})\right] \\&Z^{\text{SL}}[11]=\frac{1}{2}\left[-Z^{\text{eq}}(P_{\bar{0},\bar{0}})+Z^{\text{eq}}(P_{\bar{0},\bar{1}})+Z^{\text{eq}}(P_{\bar{1},\bar{0}})+Z^{\text{eq}}(P_{\bar{1},\bar{1}})\right]
\end{split}
\end{equation}
Using   \eqref{eq:partfuncdefecthilb} we see that 
\begin{equation}
\begin{split}
Z^{\text{SL}}[00]&=\frac{1}{2}\left[Z^{\text{eq}}(P_{\bar{0},\bar{0}})+Z^{\text{eq}}(P_{\bar{0},\bar{1}})+Z^{\text{eq}}(P_{\bar{1},\bar{0}})-Z^{\text{eq}}(P_{\bar{1},\bar{1}})\right]\\&=\text{Tr}_{\mathscr{H}}\left(\frac{1+\tilde{\sigma}}{2}\right)q^{L_0-\frac{c}{24}}+\text{Tr}_{\mathscr{H}^{\text{eq}}}\left(\frac{1-\tilde{\sigma}}{2}\right)q^{L_0-\frac{c}{24}}\\&=\text{Tr}_{\mathscr{H}^+}q^{L_0-\frac{c}{24}}+\text{Tr}_{\mathscr{H}^{\text{eq}-}}q^{L_0-\frac{c}{24}}=\text{Tr}_{\mathscr{H}^{\text{SL}}_{\text{NS}}}q^{L_0-\frac{c}{24}}.
\end{split}
\end{equation}
Similarly one can check all other relations in  \eqref{eq:torpartfunc01}.

We now review how modular invariance helps us determine the equivariant Hilbert space $\mathscr{H}^{\text{eq}}(P_{\bar 1})$ as a $\langle \tilde \sigma\rangle$-module. To this end, note that since $g_1,g_2$ are involutions 
\begin{equation}
    Z^{\text{eq}}(P_{g_1,g_2})(\tau)=Z^{\text{eq}}(P_{g_2,g_1})(-1/\tau)
\end{equation}
since exchanging the two cycles on the torus changes $\tau\to-1/\tau$ and $P_{g_1,g_2}\to P_{g_2,g_1}$. Note that this in particular implies that $Z^{\text{eq}}(P_{g,g})(\tau)$ is invariant under $\tau\to-1/\tau$. For $g=\bar{0}$, this is obvious since it is the partition function of the original theory. But for $g\neq \bar{0}$ we obtain a nontrivial identity. Taking $g_2=\bar{0}$, we see that the trace over\footnote{In modern terminology, $\mathscr{H}^{\text{eq}}(P_{g_1})$ is called the \textit{defect Hilbert space} corresponding to the $\mathbb{Z}_2$ line defect inserted along the longitude of the cylinder \cite{Huang:2021zvu,Pal:2020wwd}.} $\mathscr{H}^{\text{eq}}(P_{g_1})$ can be obtained from the trace over the original Hilbert space $\mathscr{H}$ with the insertion of $g_1$ and performing the modular transformation $\tau\to-1/\tau$: 
\begin{equation}\label{eq:defecthilmodtrans}
\text{Tr}_{\mathscr{H}^{\text{eq}}(P_{g_1})}q^{L_0-\frac{c}{24}}=Z_{\mathscr{H}}^{g_1}(-1/\tau)    
\end{equation}
where 
\begin{equation}
Z_{\mathscr{H}}^{g_1}(\tau):= \text{Tr}_{\mathscr{H}}g_1q^{L_0-\frac{c}{24}}.   
\end{equation}
This relation for $g_1=\bar{1}$ in principle can be used to determine the nontrivial equivariant Hilbert space with the understanding that
\begin{equation}
Z^{\bar{1}}_{\mathscr{H}^{\text{eq}}}(\tau)=\text{Tr}_{\mathscr{H}^{\text{eq}}}\tilde\sigma q^{L_0-\frac{c}{24}}.    
\end{equation}
where $\tilde\sigma$ acts on the equivariant Hilbert space. Moreover modular invariance also implies that 
\begin{equation}\label{eq:P11fromP10}
Z^{\text{eq}}(P_{\bar{1},\bar{1}})(\tau)=Z^{\text{eq}}(P_{\bar{1},\bar{0}})(\tau+1).    
\end{equation}
Using the equivariant Hilbert space, one can determine the $\check{\sigma}_d=\pm 1$ R and NS sectors as follows:
\begin{equation}\label{eq:trNS-R-}
\begin{split}
    &Z_{\text{NS}^-}^{\text{SL}}(\tau):=\text{Tr}_{\mathscr{H}_{\text{NS}}^{\text{SL}-}}q^{L_0-\frac{c}{24}}=\frac{1}{2}\left[Z^{\text{eq}}(P_{\bar{1},\bar{0}})-Z^{\text{eq}}(P_{\bar{1},\bar{1}})\right],\\&Z_{\text{R}^+}^{\text{SL}}(\tau):=\text{Tr}_{\mathscr{H}_{\text{R}}^{\text{SL}+}}q^{L_0-\frac{c}{24}}=\frac{1}{2}\left[Z^{\text{eq}}(P_{\bar{1},\bar{0}})+Z^{\text{eq}}(P_{\bar{1},\bar{1}})\right]
\end{split}    
\end{equation}
where $\mathscr{H}_{\text{NS}}^{\text{SL}\pm}$ is the $\check{\sigma}_d=\pm 1$ subspace of $\mathscr{H}_{\text{NS}}^{\text{SL}}$ and so on. More explicitly
\begin{equation}
\mathscr{H}_{\text{NS}}^{\text{SL}}:=\mathscr{H}^{\text{SL}+}_{\text{NS}}\oplus\mathscr{H}^{\text{SL}-}_{\text{NS}},\quad \mathscr{H}_{\text{R}}^{\text{SL}}:=\mathscr{H}^{\text{SL}+}_{\text{R}}\oplus\mathscr{H}^{\text{SL}-}_{\text{R}}.    
\end{equation} 
One can also check that 
\begin{equation}\label{eq:NSR+-partfunc}
\begin{split}
&Z_{\text{NS}^+}^{\text{SL}}(\tau):=\text{Tr}_{\mathscr{H}_{\text{NS}}^{\text{SL}+}}q^{L_0-\frac{c}{24}}=\frac{1}{2}\left[Z^{\text{eq}}(P_{\bar{0},\bar{0}})+Z^{\text{eq}}(P_{\bar{0},\bar{1}})\right],\\&Z_{\text{R}^-}^{\text{SL}}(\tau):=\text{Tr}_{\mathscr{H}_{\text{R}}^{\text{SL}-}}q^{L_0-\frac{c}{24}}=\frac{1}{2}\left[Z^{\text{eq}}(P_{\bar{0},\bar{0}})-Z^{\text{eq}}(P_{\bar{0},\bar{1}})\right]
\end{split}    
\end{equation}
As we will show with the explicit example of the Monster CFT, these relations determine the action of $\tilde{\sigma}$ on $\mathscr{H}^{\text{eq}}$.\\\\

Finally, if we gauge the $\check{\sigma}_d$ symmetry of the spin conformal field theory we just obtained, we recover the original theory. Let $Z_{\mathscr{H}}$ denote the gauged torus partition function of the resulting bosonic theory. By eq.  \eqref{eq:toruspartfuncgen} we have
\begin{equation}
\begin{split}
    Z_{\mathscr{H}}&=\frac{1}{2}\sum_{g_1,g_2\in\mathbb{Z}_2}Z^{\text{eq}}(P_{g_1,g_2})=Z_{\mathrm{NS}^+}^{\text{SL}}+Z_{\mathrm{R}^+}^{\text{SL}}\\&=\text{Tr}_{\mathscr{H}^+}q^{L_0-\frac{c}{24}}+\text{Tr}_{\mathscr{H}^-}q^{L_0-\frac{c}{24}}=\text{Tr}_{\mathscr{H}}q^{L_0-\frac{c}{24}}
\end{split}
\end{equation}
where we used eq.  \eqref{eq:trNS-R-} and \eqref{eq:NSR+-partfunc} in the second equality and eq.  \eqref{eq:FactonSL} and eq.  \eqref{eq:hilspSL} for the final equality. Thus we obtain the partition function for the original theory. On the other hand using eq.  \eqref{eq:torpartfunc01} we see that 
\begin{equation}
\begin{split}
\frac{1}{2}\sum_{\rho\in H^1(T^2,\mathbb{Z}_2)}Z^{\text{SL}}[\rho]&=\text{Tr}_{\mathscr{H}^{\text{SL}}_{\text{NS}}}\left(\frac{1+\check{\sigma}_d}{2}\right)q^{L_0-\frac{c}{24}}+\text{Tr}_{\mathscr{H}^{\text{SL}}_{\text{R}}}\left(\frac{1+\check{\sigma}_d}{2}\right)q^{L_0-\frac{c}{24}}\\&=\text{Tr}_{\mathscr{H}^{\text{SL}+}_{\text{NS}}}q^{L_0-\frac{c}{24}}+\text{Tr}_{\mathscr{H}^{\text{SL}+}_{\text{R}}}q^{L_0-\frac{c}{24}}=Z_{\mathscr{H}}.
\end{split}
\end{equation}
This is the usual GSO projection which involves summing over spin structures with the insertion of $\frac{1+\check{\sigma}_d}{2}$ in the traces over NS and R sectors \cite{Gliozzi:1976qd}. From this trace formula for the partition function, we see that the gauged Hilbert space is the $\check{\sigma}_d=1$ space of the spin theory but without the spin dependence, i.e. we forget the supervector space structure of the RHS below:
\begin{equation}
\mathscr{H}^{\text{SL}+}_{\text{NS}}\oplus\mathscr{H}^{\text{SL}+}_{\text{R}}\cong\mathscr{H}^+\oplus\mathscr{H}^-\cong\mathscr{H}.
\end{equation}
\\\\

In mathematical terms, the spin lift of a CFT is an abelian intertwing algebra, see Remark \ref{rem:abelian_int} below. In general the spin lift of a holomorphic CFT defined by a VOA is not an SCVOA or even an SVOA. But in some cases, one might be able to identify a subspace of the spin lift which forms an SVOA and view the spin lift as a non-local extension of the SVOA. Suppose such an SVOA subsector exists in a spin lift of a CFT, then from Proposition \ref{prop:propscft}, the SVOA contains vertex operators of half-integral conformal dimensions. It is possible that the SVOA contains a conformal weight $\frac{3}{2}$ primary which satisfies the OPE \eqref{eq:opetbtf} and hence is in fact an SCVOA. Our goal in this paper is to show that the Monster CFT carries precisely that structure. In Subsection \ref{subsec:BB} we will show that the ``Beauty and the Beast'' SCFT is the spin lift of the Monster CFT with respect to a $\mathbb{Z}_2$ automorphism of the Monster CFT and has a dimension $\frac{3}{2}$ primary satisfying the OPE \eqref{eq:opetbtf}. We will also identify a subsector of the ``Beauty and the Beast'' SCFT which has an SCVOA structure. 

\section{Superconformal Symmetry And The Moonshine Module}\label{sec:supconfbb}

In this section, we will first review the Monster VOA $\widetilde{\mathscr{H}}(\Lambda_{\mathrm{L}})$ as the $\mathbb{Z}_2$-orbifold model of the Leech lattice $\Lambda_{\mathrm{L}}$ and then, following \cite{Lin:2019hks}, interpret 
 the ``Beauty and the Beast'' theory of \cite{Dixon:1988qd} as a spin lift of the Monster theory. We next identify a SVOA subalgebra of $\mathscr{H}_{BB}$ and then, 
 in section \ref{subsec:ConstructTF} we 
 proceed to the main result of the paper: The explicit construction of a dimension $\frac{3}{2}$ primary inside that SVOA which can serve as a superconformal current.

\subsection{The Beauty And The Beast Theory As The Spin Lift Of The FLM VOA }\label{subsec:BB}

The FLM Moonshine module is constructed, as we have reviewed, by applying 
 Theorem \ref{thm:latvoaext} part (2) to the Leech lattice. The construction 
 can be interpreted as gauging the Leech torus theory using the $\mathbb{Z}_2$ automorphism group generated by $\theta$. The value of the gauged theory on 
 the bundle $P_{\bar 1}$ is the $\theta = +1$ subspace of $\mathscr{H}_T(\Lambda_L)$. 
The quantum automorphism group dual to $\langle \theta \rangle$ (see discussion around \eqref{eq:pontdualactonHgd}) is the $\mathbb{Z}_2$ automorphism group of $\widetilde{\mathscr{H}}(\Lambda_{\mathrm{L}})$ generated by 
$\iota$, where $\iota$ is the identity on the untwisted sector $\mathscr{H}^+(\Lambda_{\mathrm{L}})$ and acts as $-1$ on the twisted sector $\mathscr{H}^+_T(\Lambda_{\mathrm{L}})$.

%
%
%

Following \cite{Dixon:1988qd} the FLM Moonshine module can be extended by considering the Hilbert space  
\begin{equation}\label{eq:BB-HilbertSpace}
    \mathscr{H}_{BB}:=\mathscr{H}(\Lambda_{\mathrm{L}})\oplus\mathscr{H}_T(\Lambda_{\mathrm{L}}).
\end{equation} 
%
%
%
The spin lift of the Monster VOA with respect to the quantum symmetry $\iota$ of the Monster theory gives $\mathscr{H}_{BB}$ \cite{Lin:2019hks}
and we will now review why this is so. 
\footnote{Incidentally, there is another nontrivial $\mathbb{Z}_2$ automorphism $\sigma$ of the Monster module called the \textit{triality involution} in  \cite{Frenkel:1988xz,Dolan:1994st}.  The involution $\sigma$ mixes the untwisted and twisted sectors. It is also shown in \cite{Lin:2019hks} that the 
spin lift of the Monster VOA with respect to $\sigma$ gives the tensor product of the Baby Monster VOA \cite{hoehn2007selbstduale} and a Majorana-Weyl CFT.}
The $\check{\sigma}_d=+1$ NS and R sector of the spin lift corresponding to $\iota$ is simply given by 
eq.  \eqref{eq:hilspSL}:
\begin{equation}\label{eq:NSR+BB}
\mathscr{H}^+_{\text{NS}}=\mathscr{H}^+(\Lambda_{\mathrm{L}}),\quad \mathscr{H}^+_{\text{R}}=\mathscr{H}^+_{T}(\Lambda_{\mathrm{L}}).
\end{equation}
The twisted sector corresponding to the gauging of $\iota$ can be determined using \eqref{eq:defecthilmodtrans}. The right hand side of \eqref{eq:defecthilmodtrans} is the McKay Thompson series 
\begin{equation}
\begin{split}
    Z^\iota_{\widetilde{\mathscr{H}}(\Lambda_{\mathrm{L}})}(\tau)&=\text{Tr}_{\widetilde{\mathscr{H}}(\Lambda_{\mathrm{L}})}\iota q^{L_0-1}=q^{-1}\text{Tr}_{\mathscr{H}^+(\Lambda_{\mathrm{L}})} q^{L_0}-q^{-1}\text{Tr}_{\mathscr{H}^+_T(\Lambda_{\mathrm{L}})} q^{L_0^T}\\&=\frac{1}{2}\left[\frac{\Theta_{\Lambda_{\text{L}}}(\tau)}{\eta(\tau)^{24}}+q^{-1}\prod_{n=1}^{\infty}(1+q^n)^{-24}\right]-\frac{2^{12}q^{\frac{1}{2}}}{2}\left[\prod_{n=1}^{\infty}(1-q^{n-\frac{1}{2}})^{-24}-\prod_{n=1}^{\infty}(1+q^{n-\frac{1}{2}})^{-24}\right]\\&=\frac{1}{2}\frac{\Theta_{\Lambda_{\text{L}}}(\tau)}{\eta(\tau)^{24}}+\frac{1}{2\eta(\tau)^{24}}\left[(\vartheta_3\vartheta_4)^{12}-(\vartheta_2\vartheta_3)^{12}+(\vartheta_2\vartheta_4)^{12}\right]\\&=\frac{1}{2}(J(\tau)+24)-\frac{1}{2}(J(\tau)-24)+\frac{(\vartheta_3\vartheta_4)^{12}}{\eta(\tau)^{24}}\\&=\frac{\eta(\tau)^{24}}{\eta(2 \tau)^{24}}+24\\&=\frac{1}{q}+276 q-2048 q^2+11202 q^3+\mathcal{O}\left(q^4\right)
\end{split}    
\end{equation}
where we used the infinite product expansions for the Jacobi theta functions:
\begin{equation}
\begin{split}  &\vartheta_2(q)=\frac{2 \eta^2(2\tau)}{\eta(\tau)}=2 q^{\frac{1}{8}}\prod_{n=1}^{\infty}\left(1-q^{ n}\right)\left(1+ q^{n} \right)^2 \\ & \vartheta_3(q)=\frac{\eta^5(\tau)}{\eta^2\left(\frac{1}{2} \tau\right) \eta^2(2 \tau)}=\prod_{n=1}^{\infty}\left(1-q^{n}\right)\left(1+ q^{n-\frac{1}{2}} \right)^2 \\ & \vartheta_4(q)=\frac{\eta^2\left(\frac{1}{2} \tau\right)}{\eta(\tau)}=\prod_{n=1}^{\infty}\left(1-q^{n}\right)\left(1- q^{n-\frac{1}{2}} \right)^2.
\end{split}    
\end{equation}
along with numerous identities such as $\vartheta_2 \vartheta_3 \vartheta_4 = 2\eta^3$ and $\vartheta_3^4 = \vartheta_2^4 + \vartheta_4^4$. 
Now we can take the  modular transform  \cite{Lin:2019hks}:
\begin{equation}
\begin{aligned}
Z^\iota_{\widetilde{\mathscr{H}}(\Lambda_{\mathrm{L}})}(-1/\tau)& =2^{12} \frac{\eta(\tau)^{24}}{\eta(\tau / 2)^{24}}+24 \\& =24+4096 q^{1 / 2}+98304 q+1228800 q^{3 / 2}+10747904 q^2+\mathcal{O}\left(q^{5 / 2}\right)\\&=\operatorname{Tr}_{\mathscr{H}^{\text{eq}}(P_{\bar{1}})}q^{L_0-1}\\&:=Z_{\mathscr{H}^{\text{eq}}(P_{\bar{1}})}(\tau),
\end{aligned}    
\end{equation}
%
%
On the other hand, we have
\begin{equation}
\begin{split}
\operatorname{Tr}_{\mathscr{H}^{-}(\Lambda_{\mathrm{L}})}&q^{L_0-1}+\operatorname{Tr}_{\mathscr{H}^{-}_T(\Lambda_{\mathrm{L}})}q^{L_0^T-1}\\&=\frac{1}{2}\left[\frac{\Theta_{\Lambda_{\text{L}}}(\tau)}{\eta(\tau)^{24}}-q^{-1}\prod_{n=1}^{\infty}(1+q^n)^{-24}\right]+\frac{2^{12}q^{\frac{1}{2}}}{2}\left[\prod_{n=1}^{\infty}(1-q^{n-\frac{1}{2}})^{-24}+\prod_{n=1}^{\infty}(1+q^{n-\frac{1}{2}})^{-24}\right]\\&=\frac{1}{2}\frac{\Theta_{\Lambda_{\text{L}}}(\tau)}{\eta(\tau)^{24}}+\frac{1}{2\eta(\tau)^{24}}\left[(\vartheta_2\vartheta_3)^{12}+(\vartheta_2\vartheta_4)^{12}-(\vartheta_3\vartheta_4)^{12}\right]\\&=\frac{\eta(\tau)^{24}}{\eta(2 \tau)^{24}}+24-\frac{1}{2\eta(\tau)^{24}}\left[2(\vartheta_2\vartheta_3)^{12}-2(\vartheta_3\vartheta_4)^{12}\right]\\&=2^{12} \frac{\eta(\tau)^{24}}{\eta(\tau / 2)^{24}}+24\\&=24+4096 q^{1 / 2}+98304 q+1228800 q^{3 / 2}+10747904 q^2+\mathcal{O}\left(q^{5 / 2}\right). 
\end{split}
\end{equation}
Thus we identify the nontrivial equivariant Hilbert space of the Monster module as 
\begin{equation}
\mathscr{H}^{\text{eq}}(P_{\bar{1}})=\mathscr{H}^{-}(\Lambda_{\mathrm{L}})\oplus\mathscr{H}^{-}_T(\Lambda_{\mathrm{L}}).    
\end{equation}
Next by eq.  \eqref{eq:P11fromP10} we have 
\begin{equation}
Z^{\iota}_{\mathscr{H}^{\text{eq}}(P_{\bar{1}})}(\tau):=\operatorname{Tr}_{\mathscr{H}^{\text{eq}}(P_{\bar{1}})}\iota q^{L_0-1}=Z_{\mathscr{H}^{\text{eq}}(P_{\bar{1}})}(\tau+1).  
\end{equation}
Note that we must have 
\begin{equation}
Z^\iota_{\mathscr{H}^{\text{eq}}(P_{\bar{1}})}(-1/\tau) =Z^\iota_{\mathscr{H}^{\text{eq}}(P_{\bar{1}})}(\tau)   
\end{equation}
for consistency. It is true and can be checked using the fact that the McKay Thompson series $Z^\iota_{\widetilde{\mathscr{H}}(\Lambda_{\mathrm{L}})}(\tau)$ is a Hauptmodul for $\Gamma_0(2).$ Indeed 
\begin{equation}
\begin{split}
Z^\iota_{\mathscr{H}^{\text{eq}}(P_{\bar{1}})}(-1/\tau) &=Z_{\mathscr{H}^{\text{eq}}(P_{\bar{1}})}\left(-\frac{1}{\tau}+1\right)= Z^\iota_{\widetilde{\mathscr{H}}(\Lambda_{\mathrm{L}})}\left(-\frac{\tau}{\tau-1}\right)=Z^\iota_{\widetilde{\mathscr{H}}(\Lambda_{\mathrm{L}})}\left(-\frac{\tau}{\tau-1}+1\right)\\&=Z^\iota_{\widetilde{\mathscr{H}}(\Lambda_{\mathrm{L}})}\left(-\frac{1}{\tau-1}\right)=Z_{\mathscr{H}^{\text{eq}}(P_{\bar{1}})}(\tau-1)=Z_{\mathscr{H}^{\text{eq}}(P_{\bar{1}})}(\tau+1)=Z^\iota_{\mathscr{H}^{\text{eq}}(P_{\bar{1}})}(\tau)   \end{split} 
\end{equation}
where we used the fact that $Z^\iota_{\widetilde{\mathscr{H}}(\Lambda_{\mathrm{L}})}(\tau)$ is invariant under $\tau\to\tau+1$ and $Z_{\mathscr{H}^{\text{eq}}(P_{\bar{1}})}(\tau)$ is invariant under $\tau\to\tau+2$.
From eq.  \eqref{eq:trNS-R-} we see that 
\begin{equation}
\begin{split}
\text{Tr}_{\mathscr{H}_{\text{NS}}^{-}}q^{L_0-\frac{c}{24}}&=\frac{1}{2}\left[Z_{\mathscr{H}^{\text{eq}}(P_{\bar{1}})}(\tau)-Z^\iota_{\widetilde{\mathscr{H}}(\Lambda_{\mathrm{L}})}(\tau)\right] \\&=\frac{2^{12}}{2}\left[ \frac{\eta(\tau)^{24}}{\eta(\tau / 2)^{24}}-\frac{\eta(\tau)^{24}}{\eta\left(\frac{\tau+1} {2}\right)^{24}}\right]\\&=\frac{2^{12}q^{\frac{1}{2}}}{2}\left[\prod_{n=1}^{\infty}(1-q^{n-\frac{1}{2}})^{-24}+\prod_{n=1}^{\infty}(1+q^{n-\frac{1}{2}})^{-24}\right]\\&=\text{Tr}_{\mathscr{H}_{T}^{-}(\Lambda_{\mathrm{L}})}q^{L_0-\frac{c}{24}},
\end{split}   
\end{equation}
where we used the infinite product expansion of the Dedekind eta function:
\begin{equation}
    \eta(\tau)=q^{\frac{1}{24}}\prod_{n=1}^\infty(1-q^n)^{24}.
\end{equation}
Thus we identify
\begin{equation}\label{eq:NSR-BB}
\mathscr{H}^-_{\text{NS}}=\mathscr{H}^-_T(\Lambda_{\mathrm{L}}).    
\end{equation}
This also confirms that $\iota$ acts as $-1$ on $\mathscr{H}^-_T(\Lambda_{\mathrm{L}})$. Similarly one can check that 
\begin{equation}
\mathscr{H}^-_{\text{R}}=\mathscr{H}^-(\Lambda_{\mathrm{L}})    
\end{equation}
which also implies that $\iota$ acts as $+1$ on $\mathscr{H}^-(\Lambda_{\mathrm{L}}) $.
This completes the demonstration (at least at the level of state spaces) that the spin lift of the Monster CFT with respect to $\iota$ is the Beauty and the Beast SCFT 
\eqref{eq:BB-HilbertSpace}.

To summarize, we have
\begin{equation}
\mathscr{H}_{\mathrm{NS}}= \mathscr{H}^+(\Lambda_{\mathrm{L}})\oplus \mathscr{H}_T^-(\Lambda_{\mathrm{L}}),\quad \mathscr{H}_{\mathrm{R}}=\mathscr{H}_T^+(\Lambda_{\mathrm{L}})\oplus \mathscr{H}^-(\Lambda_{\mathrm{L}})    
\end{equation}
in agreement with \eqref{eq:NSR+BB} and \eqref{eq:NSR-BB}. 
The superscripts on the RHS refer to the $\theta$ eigenvalue. 
These formulae can be refined to:
\begin{equation}
\mathscr{H}_{\mathrm{NS}}^+\cong \mathscr{H}^+(\Lambda_{\mathrm{L}}),~~\mathscr{H}_{\mathrm{NS}}^-\cong \mathscr{H}_T^-(\Lambda_{\mathrm{L}}),\quad  \mathscr{H}_{\mathrm{R}}^+\cong \mathscr{H}_T^+(\Lambda_{\mathrm{L}}),~~\mathscr{H}_{\mathrm{R}}^-\cong \mathscr{H}^-(\Lambda_{\mathrm{L}}),   
\end{equation}
where the superscripts on the LHS refer to the $(-1)_q^F$ eigenvalue. Here $(-1)_q^F$ is the emergent automorphism $\check{\sigma}_d$ (see discussion around \eqref{eq:FactonSL}).

As an aside note that a corollary of the above discussion is that the $\iota$-symmetry of the Monster theory is the quantum $\mathbb{Z}_2$ symmetry of the $\theta$-gauged Leech torus theory, and if we gauge the $\iota$-symmetry of the Monster theory we recover the Leech torus theory, as expected on general grounds. Similarly, if we apply the GSO projection to the BB theory we obtain the Monster theory.

Let us now begin the search for a SVOA structure within $\mathscr{H}_{BB}$. 
We stress that this space is not itself a SVOA.

As a preliminary, note that the vertex operators on $\mathscr{H}_{BB}$ can be described using the block form 
relative to the direct sum: 
\begin{equation}
\widetilde{V}(\psi,z)=\begin{pmatrix}
V(\psi,z)&0\\0&V_T(\psi,z)
\end{pmatrix},\quad \psi\in\mathscr{H}(\Lambda_{\mathrm{L}}),
\label{eq:veropuntbb}
\end{equation}
where $V(\psi,z)$ and $V_T(\psi,z)$ for homogeneous states are defined in eq.  \eqref{eq:veropuntwisted} and eq.  \eqref{eq:twistverop} respectively and the extension to general states is  defined by linearity. Vertex operators for twisted states take the form: 
\begin{equation}
\widetilde{V}(\chi,z)=\begin{pmatrix}
0&\overline{W}(\chi,z)\\W(\chi,z)&0
\end{pmatrix},\quad \chi\in\mathscr{H}_T(\Lambda_{\mathrm{L}}).
\label{eq:veroptbb}
\end{equation}
Here $W$ and $\overline{W}$ are the intertwiners associated to the representation $\mathscr{H}_T(\Lambda_{\mathrm{L}})$ of the VOA $\mathscr{H}^+(\Lambda_{\mathrm{L}})$ (see Theorem \ref{thm:latvoaext}) defined by the relations in eq.  \eqref{eq:defW} and \eqref{eq:defWbar}. 
The stress tensor is defined as 
\begin{equation}
    T(z):=\widetilde{V}(\psi_L,z)=\begin{pmatrix}V(\psi_L,z)&0\\0&V_T(\psi_L,z)\end{pmatrix}.
\label{eq:}
\end{equation}

\begin{prop}
The OPE of a 
vertex operator corresponding to a state in $\mathscr{H}_T^-(\Lambda_{\mathrm{L}})$ with a vertex operator corresponding to $\psi\in\mathscr{H}^-(\Lambda_{\mathrm{L}})$ or $\chi\in\mathscr{H}_T^+(\Lambda_{\mathrm{L}})$ has square root singularity and the OPE with vertex operator corresponding to $\psi\in\mathscr{H}^+(\Lambda_{\mathrm{L}})$ or $\chi\in\mathscr{H}_T^-(\Lambda_{\mathrm{L}})$ has integer power singularity. 
\label{prop:sqrtsingope}
\begin{proof}
Consider the vertex operator $\widetilde{V}(\chi,z),~\chi\in\mathscr{H}_T^-(\Lambda_{\mathrm{L}})$. We first analyse its OPE with an untwisted vertex operator $\widetilde{V}(\psi,z)$ for $\psi\in\mathscr{H}^-(\Lambda_{\mathrm{L}}).$ Note that any vector in $\mathscr{H}^-(\Lambda_{\mathrm{L}})$ is a linear combination of states of the form: 
\begin{equation}
   \Psi_{\lambda}=\left(\prod_{a=1}^M a^{j_a}_{-m_a}\right)\left(\frac{|\lambda\rangle\pm|-\lambda\rangle}{\sqrt{2}}\right)
\end{equation}
where we choose $+$ sign when $M$ is odd and $-$ sign when $M$ is even. It is easy to see that the Laurent expansion of $V_T(\Psi_{\lambda},z)$ contains $\mathbb{Z}+\frac{1}{2}$ powers of $z$. Indeed when $M$ is odd, the derivatives of the fields $R^{j_a}$ in the expression \eqref{eq:twistverop} gives $\mathbb{Z}+\frac{1}{2}$ powers of $z$ while the exponential factor in \eqref{eq:twistverop} combines to give $\cosh (i\lambda\cdot R)$ which still contributes integer powers of $z$ and hence the final vertex operator contains $\mathbb{Z}+\frac{1}{2}$ powers of $z$. Similarly when $M$ is even the derivatives of the fields $R^{j_a}$ contains integer powers of $z$ while the exponentials combine to give $\sinh (i\lambda\cdot R)$ which contributes $\mathbb{Z}+\frac{1}{2}$ powers of $z$ and hence the final vertex operator contains $\mathbb{Z}+\frac{1}{2}$ powers of $z$. Thus we can expand as 
\begin{equation}
    V_T(\Psi_{\lambda},z)=\sum_{n\in \mathbb{Z}+\frac{1}{2}}\left(V_T(\Psi_{\lambda})\right)_{n}z^{-n-h_{\Psi_{\lambda}}}.
\end{equation}
Now the duality of this vertex operator with $\widetilde{V}(\chi,z)$ implies 
\begin{equation}
\begin{split}
\widetilde{V}(\Psi_{\lambda},z)\widetilde{V}(\chi,w)&=\sum_{n\in\mathbb{Z}+\frac{1}{2}}(z-w)^{n-h_{\Psi_{\lambda}}}\widetilde{V}(\chi_n,w)\\&=  \sum_{n\in\mathbb{Z}}(z-w)^{n-h_{\Psi_{\lambda}}-h_{\chi}}\widetilde{V}(\widetilde{\chi}_n,w)  
\end{split}
\label{eq:opeh-ht-}
\end{equation}
where 
\begin{equation}
    \widetilde{\chi}_n=\left(V_T(\Psi_{\lambda})\right)_{h_{\chi}-n}\chi
    \label{eq:chitilden}
\end{equation}
Noting that $h_{\chi}\in\mathbb{Z}+\frac{1}{2}$ and $h_{\Psi_{\lambda}}\in\mathbb{Z}$, we see that the OPE has square root singularity. Similarly if $\widetilde{V}(\chi',z),~\chi'\in\mathscr{H}^+_T(\Lambda_{\mathrm{L}})$ is a twisted vertex operator then $\chi'$ has the form \eqref{eq:chitwistcft} with $M$ even. The duality now gives 
\begin{equation}
 \widetilde{V}(\chi',z)\widetilde{V}(\chi,w)=\widetilde{V}(\overline{W}(\chi',z-w)\chi,w),   
\end{equation}
where we used the fact that 
\begin{equation}
    \widetilde{V}(\chi',z-w)\chi=\overline{W}(\chi',z-w)\chi.
\end{equation}
The Laurent expansion of $\overline{W}(\chi',z)$ is of the form 
\begin{equation}
  \overline{W}(\chi',z)=\sum_n\overline{W}_n(\chi')z^{-n-h_{\chi'}}  
\end{equation}
Thus we get 
\begin{equation}
    \widetilde{V}(\chi',z)\widetilde{V}(\chi,w)=\sum_n(z-w)^{n-h_{\chi}-h_{\chi'}}\widetilde{V}(\widetilde{\chi}(n),w)
\end{equation}
where 
\begin{equation}
    \widetilde{\chi}(n)=\overline{W}_{n-h_{\chi}}(\chi')\chi.
\end{equation}
Now using \eqref{eq:momcomlm}, we have
\begin{equation}
L_0\overline{W}_n(\chi)-\overline{W}_n(\chi)L_0^T=-n\overline{W}_n(\chi)
\end{equation}
This implies that  $\overline{W}_n(\chi')\chi=0$ for $n\in\mathbb{Z}$ since the untwisted sector has integer conformal weights. Thus we see that 
\begin{equation}
    \widetilde{V}(\chi',z)\widetilde{V}(\chi,w)=\sum_{n\in\mathbb{Z}}(z-w)^{n-h_{\chi}-h_{\chi'}}\widetilde{V}(\widetilde{\chi}(n),w)
\end{equation}
Again since $h_{\chi'}\in\mathbb{Z}$ and $h_{\chi}\in\mathbb{Z}+\frac{1}{2}$, we see that the OPE has square-root singularity. The second part of the statement follows by similar arguments. 
\end{proof}
\end{prop}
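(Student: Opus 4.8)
The plan is to reduce the proposition to two inputs: the conformal-weight spectra of the four summands of $\mathscr{H}_{BB}$, and the duality relation of Proposition \ref{prop:propscft}, which writes $\widetilde V(A,z)\widetilde V(B,w)=\widetilde V(\widetilde V(A,z-w)B,w)$ and expands the right-hand side in powers $(z-w)^{h_C-h_A-h_B}$ (up to non-negative integer shifts), with $C$ ranging over a subspace of $\mathscr{H}_{BB}$ that the block forms eq. \eqref{eq:veropuntbb}--\eqref{eq:veroptbb} pin down. First I record the weights: every state of $\mathscr{H}^{\pm}(\Lambda_{\mathrm{L}})$ has weight in $\mathbb Z$, while by eq. \eqref{eq:confwttwist} (with $r=24$, so $r/16=3/2$) a twisted state built from $N$ oscillators has weight $\tfrac32+\sum m_a+\tfrac N2$; comparing this with the $\theta$-action eq. \eqref{eq:thetatwisteddef} shows that $\mathscr{H}_T^+(\Lambda_{\mathrm{L}})$ is spanned by the $N$-odd states and has weights in $\mathbb Z$, whereas $\mathscr{H}_T^-(\Lambda_{\mathrm{L}})$ is spanned by the $N$-even states and has weights in $\mathbb Z+\tfrac12$.

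From the block structure, $\widetilde V(A,z-w)B$ lies in $\mathscr{H}(\Lambda_{\mathrm{L}})$ when $A$ and $B$ are both untwisted or both twisted (via $V(\psi,z-w)$ or $\overline W(\chi,z-w)$) and in $\mathscr{H}_T(\Lambda_{\mathrm{L}})$ when exactly one of them is twisted (via $V_T(\psi,z-w)$ or $W(\chi,z-w)$). In the first case the target has all weights in $\mathbb Z$, so the exponents $h_C-h_A-h_B$ occurring all lie in a single coset of $\mathbb Z$, namely $\mathbb Z-h_A-h_B$, and the singularity is an integer power iff $h_A+h_B\in\mathbb Z$. In the second case one extra fact is needed: if $\psi$ has $\theta$-parity $\epsilon$ then $V_T(\psi,z)$ sends $\mathscr{H}_T^{\eta}(\Lambda_{\mathrm{L}})$ into $\mathscr{H}_T^{\epsilon\eta}(\Lambda_{\mathrm{L}})$, so $\widetilde V(\psi,z-w)\chi$ again lands in a sector whose weights lie in one coset of $\mathbb Z$, and the exponents lie in $\mathbb Z-h_\psi-h_\chi$ (if $\epsilon\eta=+1$) or in $\mathbb Z+\tfrac12-h_\psi-h_\chi$ (if $\epsilon\eta=-1$). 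This last fact follows from $\theta\,V_T(\psi,z)\,\theta^{-1}=V_T(\theta\psi,z)$, which one checks from eq. \eqref{eq:twistverop}--\eqref{eq:A(z)exp}: $\theta$ commutes with $e^{A(-z)}$ and with each $e_\lambda$ (the latter because $\theta$ acts as $-\mathbf 1$ on $\mathcal S(\Lambda_{\mathrm{L}})$ for $r=24$) while it reverses $R^j$; equivalently, as in the power-counting already used above, a $\theta$-eigenstate $\psi_\lambda$ gives a $V_T$ built out of $\cosh(i\lambda\cdot R)$ or $\sinh(i\lambda\cdot R)$ times a product of $R$-derivatives, and the parity of these factors fixes the power-type in $z$.

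The proposition is then a short table; the ordering of the two operators is immaterial, since by skew-symmetry the fusion sector and the total weight $h_A+h_B$ are the same for $A\times B$ and $B\times A$. Fix a state of $\mathscr{H}_T^-(\Lambda_{\mathrm{L}})$: it is twisted, has $\theta$-eigenvalue $-1$, and weight in $\mathbb Z+\tfrac12$. Pairing it with $\psi\in\mathscr{H}^-(\Lambda_{\mathrm{L}})$ the target is $\mathscr{H}_T^{(-1)(-1)}=\mathscr{H}_T^+(\Lambda_{\mathrm{L}})$ (integer weights), and pairing it with $\chi'\in\mathscr{H}_T^+(\Lambda_{\mathrm{L}})$ the target is $\mathscr{H}(\Lambda_{\mathrm{L}})$; in both cases the exponents lie in $(\mathbb Z)-(\mathbb Z)-(\mathbb Z+\tfrac12)=\mathbb Z+\tfrac12$, a square-root singularity. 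Pairing it instead with $\psi\in\mathscr{H}^+(\Lambda_{\mathrm{L}})$ the target is $\mathscr{H}_T^{(+1)(-1)}=\mathscr{H}_T^-(\Lambda_{\mathrm{L}})$ (weights in $\mathbb Z+\tfrac12$), giving exponents in $(\mathbb Z+\tfrac12)-(\mathbb Z)-(\mathbb Z+\tfrac12)=\mathbb Z$, and pairing it with $\chi\in\mathscr{H}_T^-(\Lambda_{\mathrm{L}})$ the target is $\mathscr{H}(\Lambda_{\mathrm{L}})$, giving exponents in $(\mathbb Z)-(\mathbb Z+\tfrac12)-(\mathbb Z+\tfrac12)=\mathbb Z$: an integer power in both cases.

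The only genuinely computational point is the $\theta$-equivariance of $V_T$ invoked in the two mixed cases, where the target $\mathscr{H}_T(\Lambda_{\mathrm{L}})$ is not itself weight-homogeneous modulo $\mathbb Z$ and one must track how $V_T$ interacts with the $\mathbb Z_2$-grading; the purely twisted--twisted cases are automatic because there the target $\mathscr{H}(\Lambda_{\mathrm{L}})$ already has all weights in $\mathbb Z$. Everything else is bookkeeping with the weight spectra and the duality relation of Proposition \ref{prop:propscft}.
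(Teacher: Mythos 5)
Your argument is correct and complete, and it organizes the proof somewhat differently than the paper. The paper establishes the half-integer moding in the mixed (twisted $\times$ untwisted) case by inspecting the Laurent expansion of $V_T(\psi_\lambda,z)$ directly: the exponential factor in eq. \eqref{eq:twistverop} organizes into $\cosh(i\lambda\cdot R)$ or $\sinh(i\lambda\cdot R)$ according to the parity of $M$, from which the moding is read off. You reach the same conclusion structurally, via the $\theta$-equivariance $\theta\, V_T(\psi,z)\,\theta^{-1}=V_T(\theta\psi,z)$ (which the paper itself invokes elsewhere), and then reduce every case to the single bookkeeping statement that the exponent $-h_C-h_A-h_B$ in the duality expansion lives in a coset of $\mathbb Z$ determined by the weight cosets of the three sectors involved. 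Both routes are sound; yours treats all four cases uniformly and thus makes the paper's ``second part follows by similar arguments'' explicit, while the paper's direct calculation of the half-integer modes of $V_T(\psi_\lambda,z)$ is a byproduct that is re-used later. Two small points worth recording if you write out the argument: (i) the sign of $h_C$ in the exponent is irrelevant since $-h_C\equiv h_C\pmod{\mathbb Z}$, so your tally of $h_C-h_A-h_B$ is fine; (ii) the $\theta$-equivariance of $V_T$ relies on $e_{[\lambda]}=e_{[-\lambda]}$, which holds because $[\lambda]=[-\lambda]$ in $\Lambda_{\mathrm{L}}/2\Lambda_{\mathrm{L}}$, and this step deserves a sentence.
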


We will now identify a subspace of $\mathscr{H}_{BB}$ which forms an SVOA. To get an SVOA, we anticipate that we need to include the $\theta=-1$ subspace of the twisted sector. But then from Proposition \ref{prop:sqrtsingope}, we see that we can neither have the $\theta=+1$ subspace of the twisted sector nor the $\theta=-1$ subspace of the untwisted sector. Thus we are led to the following theorem which was first proved in \cite{huang_1996} in the standard formalism of VOAs.
\begin{thm}\label{thm:BBSVOA}
The subspace
\begin{equation}\label{eq:Hscdef}
\mathscr{H}_{SC}:=\mathscr{H}_{\mathrm{NS}}=\mathscr{H}^+(\Lambda_{\mathrm{L}})\oplus\mathscr{H}^-_T(\Lambda_{\mathrm{L}})   
\end{equation} 
of the ``Beauty and the Beast'' SCFT possesses the structure of an SVOA.
\begin{proof}
We first put a super-vector space structure on $\mathscr{H}_{SC}$ by declaring $\mathscr{H}^+(\Lambda_{\mathrm{L}})$ to the even space and $\mathscr{H}^-_T(\Lambda_{\mathrm{L}})$ to be the odd space. We will call them respectively the untwisted (bosonic) and twisted (fermionic) sector. The Fock space of $\mathscr{H}_{SC}$ is 
\begin{equation}
\mathscr{F}_{SC}=\mathscr{F}^+\oplus \mathscr{F}_T^-
\end{equation}
where $\mathscr{F}$ and $\mathscr{F}_T$ is the untwisted and twisted Fock spaces respectively and the superscript $\pm 1$ indicates the $\theta=\pm 1$ subspaces of $\mathscr{F},\mathscr{F}_T$. The conformal and vacuum vector is as in the BB module $\mathscr{H}_{BB}$ and the vertex operators are also defined analogous to the BB module. 
We now need to prove the locality relation  
\begin{equation}
    \widetilde{V}(\psi,z)\widetilde{V}(\phi,w)=(-1)^{|\psi||\phi|}\widetilde{V}(\phi,w)\widetilde{V}(\psi,z),\quad \psi,\phi\in\mathscr{H}_{SC}.
\label{eq:locHsc}
\end{equation}
For bosonic states $\psi,\phi$, the locality relation eq.  \eqref{eq:locHsc} is equivalent to the locality relations
\begin{equation}
    V(\psi,z)V(\phi,w)=V(\phi,w)V(\psi,z),\quad V_T(\psi,z)V_T(\phi,w)=V_T(\phi,w)V_T(\psi,z).
\label{eq:}
\end{equation}
The first relation is proved in \cite[Section 5, 6]{Dolan:1989vr} and the second follows from \cite[Proposition 3.2 (1)]{Dolan:1994st}. This implies the duality of the untwisted vertex operators
\begin{equation}
   \widetilde{V}(\psi,z)\widetilde{V}(\phi,w)=\widetilde{V}(\widetilde{V}(\psi,z-w)\phi,w).
   \label{eq:dualityuntwist}
\end{equation}
The locality of an untwisted and a twisted vertex operator is 
\begin{equation}
    \widetilde{V}(\psi,z)\widetilde{V}(\chi,w)=\widetilde{V}(\chi,w)\widetilde{V}(\psi,z)
\label{eq:}
\end{equation}
This follows from the locality relations 
\begin{equation}
    V(\psi,z)\overline{W}(\chi,w)=\overline{W}(\chi,w)V_T(\psi,z),\quad V_T(\psi,z)W(\chi,w)=W(\chi,w)V(\psi,z).
\label{eq:}
\end{equation}
The first relation is proved in \cite[Section 6]{Dolan:1989vr} while the second follows from Proposition \ref{prop:reploc} and Theorem \ref{thm:latvoaext} (1). 
This implies the duality of the twisted-untwisted vertex operators
\begin{equation}
   \widetilde{V}(\psi,z)\widetilde{V}(\chi,w)=\widetilde{V}(\widetilde{V}(\psi,z-w)\chi,w).
   \label{eq:dualityuntwisttwist}
\end{equation}
The locality relation for twisted vertex operators with 
$\chi,\chi' \in \mathscr{H}_T^-(\Lambda_{\text{L}})$ is
\begin{equation}
    \widetilde{V}(\chi,z)\widetilde{V}(\chi',w)=-\widetilde{V}(\chi',w)\widetilde{V}(\chi,z).
\label{eq:}
\end{equation}
For this, we need to prove the locality relations 
\begin{equation}
\begin{split}
\overline{W}(\chi,z)W(\chi',w)=-\overline{W}(\chi',w)W(\chi,z),\\W(\chi,z)\overline{W}(\chi',w)=-W(\chi',w)\overline{W}(\chi,z),
\end{split}
\label{eq:locWbarWWWbar}
\end{equation}
for $\chi,\chi'\in\mathscr{H}_T^-(\Lambda_{\mathrm{L}}).$ 
These relations, especially the second, are highly nontrivial. It is  proven  in Appendix \ref{App:TwistedLocality} that the locality relations are satisfied if the conjugation operation on $\mathscr{H}_T^+(\Lambda_{\mathrm{L}})$ is extended to $\mathscr{H}_T(\Lambda_{\mathrm{L}})$ as follows:
\begin{equation}
\overline{\chi}=e^{-i\pi L_0^T}\theta M(\chi),
\label{eq:modconjdef}
\end{equation}
where $\chi$  is a general homogeneous state as in \eqref{eq:chitwistcft} in $\mathscr{H}_T^-(\Lambda_{\mathrm{L}})$ and the conjugation is extended to all other states by antilinearity. It is easy to check that this modification still satisfies $\overline{\overline{\chi}}=\chi$ using the fact that $\overline{z\chi}=z^*\overline{\chi}$ for a complex number number $z$. All other axioms in Definition \ref{defn:scft} follows from the results of \cite{Dolan:1989vr,Dolan:1994st}.
\end{proof}
\end{thm}
\begin{remark}\label{rem:abelian_int}
We include here a remark from one of the referees. Similar remarks were explained to us by S. Carnahan. 
Let us give a broader context on Theorem \ref{thm:BBSVOA}. The first proof of Theorem \ref{thm:BBSVOA} in the standard formalism of super vertex operator algebra appeared in  \cite{huang_1996}. Alternatively, the existence of the SVOA structure on $\mathscr{H}_{SC}$ follows in general from Theorem 3.9 in \cite{Creutzig_2019}.
This requires the simple current property of the odd module, which was worked out by Dong in \cite{dong1994representations}. This construction of (super) vertex operator algebras can be generalized vastly using the general theory of abelian intertwining algebras. Given a nice (called strongly rational in VOA literature) holomorphic vertex operator algebra $V$ and an automorphism $g$ of the VOA one can consider the fixed-point vertex operator algebra $V^g$. Then one can show that $V^g$ again satisfies the niceness assumptions \cite{dong_mason_1998,miyamoto_2015,carnahan_miyamoto_2016,mcrae_2020}. Then one can determine the representation category of $V^g$, which is a twisted Drinfeld double \cite{svenmoller,van_ekeren_moeller_scheithauer_2015} (see also \cite{MIYAMOTO200480,dong_ren_xu_2001}), essentially proving a conjecture in \cite{dijkgraaf_pasquier_roche_1991}. In particular, all irreducible $V^g$-modules are simple currents, which translates into the statement that on the direct sum of irreducible $V^{g}$-modules one gets the structure of an abelian intertwining algebra in the sense of \cite{dong_lepowsky_1993}. What this entails is that the extensions of $V^g$ to larger vertex operator (super)algebras (with the same conformal vector) can be fully described in terms of isotropic subgroups of discriminant forms (again, this is described in \cite{svenmoller,van_ekeren_moeller_scheithauer_2015} and essentially a special case of \cite{huang_kirillov_lepowsky_2003}).
Specializing to order 2 and assuming that $g$ is anomaly-free, the structure of the representation category (and hence of the extensions) of $V^g$ is described by the abelian group $\mathbb{Z}_2 \times \mathbb{Z}_2$ with quadratic form $0,0,0,1 / 2$. 
The situation is now very simple: $V^g$ has three non-trivial extensions, all of which are holomorphic. One is the original vertex operator algebra, one is the orbifolded vertex operator algebra and the third is a vertex operator superalgebra. When we apply this general result to the Monster VOA, then the two holomorphic vertex operator algebras are the Leech lattice vertex operator algebra $\mathscr{H}(\Lambda_{\mathrm{L}})$ and the moonshine module $\widetilde{\mathscr{H}}(\Lambda_{\mathrm{L}})$ (usually denoted by $V^{\natural}$ in literature), and the holomorphic vertex operator superalgebra is the SVOA of Theorem \ref{thm:BBSVOA}. 
\end{remark}

In the next section, we will finally  construct a superconformal vector inside $\mathscr{H}_{SC}$ making it into an $\mathcal{N}=1$ SCVOA. 
\subsection{Constructing The Supercurrent}\label{subsec:ConstructTF}

Our supercurrent will be a vertex operator associated to a twisted sector 
ground state   $\mathcal{S}(\Lambda_{\mathrm{L}})$, which is 
the $2^{12}$-dimensional representation of the Heisenberg group \footnote{Also known in the finite group theory literature as an extraspecial group and denoted by $2_+^{1+24}$.} $\Gamma(\Lambda_{\mathrm{L}})$ defined analogous to \eqref{eq:gammamatalg}.
%
%
For any $\chi\in\mathcal{S}(\Lambda_{\mathrm{L}})$, the corresponding vertex operator (see Theorem \ref{thm:extensionofcft} for notations) has the block diagonal form relative to the direct sum decomposition $\mathscr{H}_{\text{BB}} = \mathscr{H}(\Lambda_{\text{L}}) \oplus  \mathscr{H}_{T}(\Lambda_{\text{L}})$: 
\begin{equation}
\widetilde{V}(\chi,z)=\begin{pmatrix}
0&\overline{W}(\chi,z)\\W(\chi,z)&0
\end{pmatrix},\quad \chi\in\mathcal{S}(\Lambda_{\mathrm{L}})
\end{equation}
are well-defined and are present in the extended theory $\mathscr{H}_{BB}$.
We Laurent expand the vertex operators $\widetilde{V}(\chi,z):$
\begin{equation}
\widetilde{V}(\chi,z)=\sum_{n}\widetilde{V}_{n}(
\chi)z^{-n-\frac{3}{2}},
\end{equation}
where 
\begin{equation}
\widetilde{V}_{n}(\chi)=\begin{pmatrix}
0&\overline{W}_{n}(\chi)\\W_n(\chi)&0
\end{pmatrix}
\end{equation}

and 
\begin{equation}
\overline{W}(\chi,z)=\sum_{n}\overline{W}_{n}(\chi)z^{-n-\frac{3}{2}},\quad W(\chi,z)=\sum_{n}W_{n}(\chi)z^{-n-\frac{3}{2}}
\label{eq:WWbarmodes}
\end{equation}
\begin{prop}
The twisted vertex operators $\widetilde{V}(\chi,z)$ are conformal primary operators of weight $3/2$. 
\begin{proof}
First note that the states $\chi$ are of conformal weight $3/2$. We now compute the OPE of $\widetilde{V}(\chi,z)$ with the energy momentum tensor $\widetilde{V}(\psi_L,z)$. From the locality of all vertex operators and intertwiners and using the general form \eqref{eq:genope} of the OPE of vertex operators, we get 
\begin{equation}
\widetilde{V}(\psi_L,z)\widetilde{V}(\chi,w)=\sum_{\substack{n\in\mathbb{Z}+\frac{1}{2}\\n\geq 0}}(z-w)^{n-2-3/2}\widetilde{V}(\chi(n),w)
\end{equation}
  
where 
\begin{equation}
\chi(n)=L^T_{3/2-n}\chi.
\end{equation}

We easily see that 
\begin{equation}
\widetilde{V}(\psi_L,z)\widetilde{V}(\chi,w)=\frac{3/2}{(z-w)^{2}}\widetilde{V}(\chi,w)+\frac{1}{(z-w)}\partial_w\widetilde{V}(\chi,w)+\text{reg}.
\end{equation}

\end{proof}
\end{prop}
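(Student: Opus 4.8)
The plan is to show that the state $\chi \in \mathcal{S}(\Lambda_{\mathrm{L}})$ is conformal primary of weight $3/2$ and then invoke the general OPE machinery from Proposition \ref{prop:propcft} and Proposition \ref{prop:propscft}. The weight is immediate from eq.  \eqref{eq:confwttwist} with $r = 24$: the twisted ground states $\mathcal{S}(\Lambda_{\mathrm{L}})$ have $h = r/16 = 24/16 = 3/2$, and they carry no oscillator excitations, so $L_0^T \chi = \tfrac{3}{2}\chi$. For the primary condition I need $L_n^T \chi = 0$ for $n > 0$; since $\chi$ is a ground state annihilated by all positive-moded oscillators $c^j_k$ ($k>0$), the formula $L_n^T = \tfrac12\sum_{k\in\mathbb{Z}+1/2} :c_k\cdot c_{n-k}: + \tfrac{r}{16}\delta_{n,0}$ acting on $\chi$ gives zero for $n > 0$ because every term contains at least one annihilation operator hitting $\chi$ (for $n \geq 1$, in each product $c_k c_{n-k}$ at least one of $k, n-k$ is positive). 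Hence $\chi$ is conformal primary.

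Next I would apply the OPE formula eq.  \eqref{eq:genope}: for vertex operators corresponding to $L_0$-eigenstates,
\begin{equation}
\widetilde{V}(\psi_L,z)\widetilde{V}(\chi,w) = \sum_{n=0}^{\infty} (z-w)^{-n - h_{\psi_L} - h_\chi}\,\widetilde{V}(\chi_n, w), \qquad \chi_n := \widetilde{V}_{h_\chi - n}(\psi_L)\chi = L^T_{\,3/2 - n}\chi,
\end{equation}
where I have used $h_{\psi_L} = 2$, and that the moments of $\widetilde{V}(\psi_L,z)$ restricted to the twisted sector are the $L^T_m$. The fact that only $n \geq 0$ terms appear (i.e. no essential singularities worse than a finite pole) follows because $\widetilde{V}(\psi_L,z)$ and $\widetilde{V}(\chi,z)$ are mutually local — this is exactly the locality/duality package established in the proof of Theorem 5.1 (eq.  \eqref{eq:dualityuntwisttwist}), so I may simply cite it. The moding index $n \in \mathbb{Z} + \tfrac12$ shifted appropriately — actually since both $\psi_L$ and $\chi$ sit in sectors with $h \in \tfrac12\mathbb{Z}$ and the product is local with integer-power singularities (stress tensor is bosonic, $\chi$ twisted, but $T$ acts within each block), the powers $(z-w)^{n-2-3/2}$ run over $n \in \mathbb{Z}_{\geq 0}$ so the singular terms are $(z-w)^{-7/2}, \dots$; but primality kills all of them except two.

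The heart of the computation is then just reading off the leading terms. Since $\chi$ is primary, $L^T_{3/2 - n}\chi = 0$ for $3/2 - n > 0$, i.e. for $n \leq 0$... more precisely $L^T_m \chi = 0$ for $m > 0$, which kills $\chi_n$ for $n < 3/2$ except we must track: $\chi_0 = L^T_{3/2}\chi = 0$ (since $3/2 > 0$), and the only surviving contributions are $n$ such that $3/2 - n \leq 0$. The term $n = 3/2$ gives $L^T_0 \chi = \tfrac32 \chi$, contributing $\tfrac{3/2}{(z-w)^2}\widetilde{V}(\chi,w)$; the term $n = 5/2$ gives $L^T_{-1}\chi$, and by the translation property eq.  \eqref{eq:transprop}, $\widetilde{V}(L^T_{-1}\chi, w) = \partial_w \widetilde{V}(\chi, w)$, contributing $\tfrac{1}{z-w}\partial_w \widetilde{V}(\chi,w)$; all higher $n$ give regular terms. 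This reproduces
\begin{equation}
\widetilde{V}(\psi_L,z)\widetilde{V}(\chi,w) = \frac{3/2}{(z-w)^2}\widetilde{V}(\chi,w) + \frac{1}{z-w}\partial_w \widetilde{V}(\chi,w) + \mathrm{reg}.,
\end{equation}
which is the defining OPE of a weight-$3/2$ primary. I do not anticipate a real obstacle here: the only subtlety worth stating carefully is that one is entitled to the finite-pole OPE expansion eq.  \eqref{eq:genope} at all, which rests on the mutual locality of $\widetilde{V}(\psi_L, z)$ with twisted-sector vertex operators — and that is precisely what was proved in establishing the SVOA structure in Theorem 5.1 (via the intertwiner locality relations from \cite{Dolan:1989vr,Dolan:1994st}), so it can simply be cited rather than reproven.
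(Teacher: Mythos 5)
Your proof is correct and follows essentially the same route as the paper's: apply the duality/OPE expansion to $\widetilde{V}(\psi_L,z)\widetilde{V}(\chi,w)$ and use that $\chi$ is a weight-$\tfrac{3}{2}$ primary so that only the $(z-w)^{-2}$ and $(z-w)^{-1}$ poles survive; in fact you are more explicit than the paper in one useful respect, namely spelling out why $L^T_n\chi = 0$ for $n>0$ via normal ordering of the half-integer-moded oscillators, a step the paper leaves implicit. One small exposition wobble worth fixing: in your middle paragraph you momentarily let the OPE sum run over $n\in\mathbb{Z}_{\geq 0}$ (which would give half-integer exponents $(z-w)^{-7/2},\dots$ and contradict the mutual locality you invoke in the same sentence), but since $h_{\psi_L}=2$ is integral while $h_\chi=\tfrac{3}{2}$ is not, the modes $L^T_{h_\chi - n}$ force $n\in\mathbb{Z}+\tfrac{1}{2}$, $n\geq 0$ — which is exactly the convention you actually use when you read off the $n=\tfrac{3}{2}$ and $n=\tfrac{5}{2}$ terms, and matches the paper's indexing.
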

For any $\chi_a,\chi_b\in\mathcal{S}(\Lambda_{\mathrm{L}})$, we now compute the OPE of $\widetilde{V}(\chi_a,w)$ and $\widetilde{V}(\chi_b,w)$. We have 
\begin{equation}\label{eq:opexaxb}
\widetilde{V}(\chi_a,z)\widetilde{V}(\chi_b,w)=\sum_{\substack{n\geq 0}}(z-w)^{n-3}\widetilde{V}(\widetilde{\chi}_{ab}(n),w)
\end{equation}
where we  can identify 
\begin{equation}
    \widetilde{\chi}_{ab}(n)= \overline{W}_{3/2-n}(\chi_a)\chi_b
\end{equation}
because the formalism identifies 
%
\begin{equation}
\begin{pmatrix} 
\widetilde{\chi}_{ab}(n)\\
0 \\ 
\end{pmatrix} = 
\widetilde{V}_{3/2-n}(\chi_a) \begin{pmatrix} 0 \\ \chi_b\\ \end{pmatrix} = 
    \begin{pmatrix}
    0&\overline{W}_{3/2-n}(\chi_a)\\W_{3/2-n}(\chi_a)&0
    \end{pmatrix}\begin{pmatrix}0\\\chi_b\end{pmatrix} . 
\label{eq:}
\end{equation}
Note that using \eqref{eq:momcomlm}, we have
\begin{equation}
L_0\overline{W}_n(\chi_a)-\overline{W}_n(\chi_a)L_0^T=-n\overline{W}_n(\chi_a)
\end{equation}
and hence: 
\begin{equation}
    L_0 \left( \overline{W}_n(\chi_a)\chi_b\right) + n \left( \overline{W}_n(\chi_a)\chi_b\right) = 3/2 \left( \overline{W}_n(\chi_a)\chi_b\right)
\end{equation}
which immediately implies that $\overline{W}_n(\chi_a)\chi_b=0$ for $n\in\mathbb{Z}$ since the untwisted sector has integer conformal weights. In particular, the OPE of any two twist fields $\widetilde{V}(\chi_a,z),\widetilde{V}(\chi_b,w)$ is local. We now restrict to the case $a=b$.
\begin{lemma}
The states $\overline{W}_{3/2}(\chi)\chi$ is $\mathfrak{su}(1,1)$-invariant and the states $\overline{W}_{1/2}(\chi)\chi$ and $\overline{W}_{-1/2}(\chi)\chi$ are of conformal weights $1$ and $2$ respectively. 

\end{lemma}
\begin{prop}
We have that 
\begin{equation}\label{eq:barW-1/2chi}
\begin{split}
\overline{W}_{3/2}(\chi)\chi&=\alpha(\chi)|0\rangle,\\\overline{W}_{1/2}(\chi)\chi&=0,\\ \overline{W}_{-1/2}(\chi)\chi&=\frac{\alpha(\chi)}{8}\psi_L+\sum_{\lambda^2=4}\kappa_{\lambda}(\chi)|\lambda\rangle
\end{split}
\end{equation}
where\footnote{The sign of $\kappa_{\lambda}(\chi)$   depends on the choice of lift of $\lambda$ to the Heisenberg group. Given such a lift, $\kappa_{\lambda}(\chi)$ depends only on the class $[\lambda]\in\Lambda_{\mathrm{L}}/2\Lambda_{\mathrm{L}}$. } 
\begin{equation}\label{eq:alp_kappa_chi}
\alpha(\chi)=\langle\overline{\chi}|\chi\rangle,\quad \kappa_{\lambda}(\chi)=16\langle\chi|e_{\lambda}|\overline{\chi}\rangle^*.     
\end{equation}
and
\begin{proof}
By uniqueness of an $\mathfrak{su}(1,1)$-invariant state we have
\begin{equation}
    \overline{W}_{3/2}(\chi)\chi=\alpha(\chi)|0\rangle.
\label{eq:}
\end{equation}
Taking the inner product with $\langle 0|$ and using the fact that $\overline{W}_{3/2}(\chi)^{\dagger}=W_{-3/2}(\overline{\chi})$ (see Proposition \ref{prop:propcft} (3)), we get
\begin{equation}
\alpha(\chi)=\langle\overline{\chi}|\chi\rangle.
\label{eq:alphachi}
\end{equation}
Next, we prove the third relation of \eqref{eq:barW-1/2chi}. Since $\overline{W}_{-1/2}(\chi)\chi$ has conformal dimension 2, we can expand it in a complete basis of states of conformal dimension 2 as: 
\begin{equation}\label{eq:Wbar1/2xexp}
\overline{W}_{-1/2}(\chi)\chi=\sum_{i}\kappa_{ii} a_{-1}^ia_{-1}^i|0\rangle+\sum_{i<j}\kappa_{ij}a^{i}_{-1}a^{j}_{-1}|0\rangle+\sum_j\kappa_ja^j_{-2}|0\rangle+\sum_{\lambda^2=4}\kappa_{\lambda}(\chi)|\lambda\rangle
\end{equation}
Under the $\mathbb{Z}_2$-orbifold action (see \cite[eq. (3.11), eq. (3.27)]{Dolan:1989vr})
\begin{equation}
    \theta \widetilde{V}(\chi,z)\widetilde{V}(\chi,w)\theta^{-1}=\theta \widetilde{V}(\chi,z)\theta^{-1}\theta\widetilde{V}(\chi,w)\theta^{-1}=\widetilde{V}(\theta\chi,z)\widetilde{V}(\theta\chi,w)=\widetilde{V}(\chi,z)\widetilde{V}(\chi,w)~.
\label{eq:}
\end{equation}
Thus the LHS of \eqref{eq:opexaxb} is even. Thus the invariance of RHS of \eqref{eq:opexaxb} under $\theta$ implies that $W_n(\chi)\chi$ must be invariant under $\theta$ for all $n$. Observing that $a_n^j$ is anti-invariant under $\theta$ we conclude that $\kappa_i=0$.  
Next using \eqref{eq:momcomlm} for $m=2$ and $n=-1/2$, we get 
\begin{equation}
L_2\overline{W}_{-1/2}(\chi)-\overline{W}_{-1/2}(\chi)L_{2}^T=\frac{3}{2}\overline{W}_{3/2}(\chi).
\end{equation}
Applying it on $\chi$, we get 
\begin{equation}
L_2\overline{W}_{-1/2}(\chi)\chi=\alpha(\chi)\frac{3}{2}|0\rangle
\end{equation}

Using the above expansion of $\overline{W}_{-1/2}(\chi)\chi$, we get
\begin{equation}\label{eq:barW-1/2chisum}
\sum_i\kappa_{ii} L_2a^i_{-1}a^{i}_{-1}|0\rangle+\sum_{i<j}\kappa_{ij}L_2a^{i}_{-1}a^{j}_{-1}|0\rangle+\sum_{\lambda^2=4}\kappa_{\lambda}(\chi)L_2|\lambda\rangle=\alpha(\chi)\frac{3}{2}|0\rangle.
\end{equation}
Using the fact that 
\begin{equation}
L_{2}a^{i}_{-1}a^{j}_{-1}|0\rangle=\delta^{ij}|0\rangle,\quad L_2|\lambda\rangle=0, 
\end{equation}
we get
\begin{equation}
\sum_{i}\kappa_{ii}=\frac{3}{2}\alpha(\chi).
\label{eq:gammasum}
\end{equation}
We now find the explicit form of and $\kappa_{ii}$ and $\kappa_{ij}$. Taking inner product of $\langle 0|a^{m}_{1}a^{n}_{1}$ with $\overline{W}_{-1/2}(\chi)\chi$ and using the expression \eqref{eq:Wbar1/2xexp} for $\overline{W}_{-1/2}(\chi)\chi$, we get 
\begin{equation}
\begin{split}
\langle 0|a^{m}_{1}a^{n}_{1}\overline{W}_{-1/2}(\chi)|\chi\rangle&=\kappa_{nm};\quad n\neq m\\&=2\kappa_{mm};\quad n=m.
\end{split}
\end{equation}
where we used the fact that 
\begin{equation}
\langle 0|a^{m}_{1}a^{n}_{1}a_{-1}^ia_{-1}^j|0\rangle=\delta^{ni}\delta^{mj}+\delta^{mi}\delta^{nj},\quad \langle 0|a^{m}_{1}a^{n}_{1}a_{-2}^j|0\rangle=0,\quad a^i_1|\lambda\rangle =0.
\end{equation}
Note that  
\begin{equation}
\langle 0|a^{m}_{1}a^{n}_{1}\overline{W}_{-1/2}(\chi)|\chi\rangle=\langle \chi|W_{1/2}(\overline{\chi})a_{-1}^ma_{-1}^n|0\rangle^{*}.
\end{equation}
Using eq.  \eqref{eq:defW}, we get 
\begin{equation}
W(\overline{\chi},z)a_{-1}^ma_{-1}^n|0\rangle=e^{zL_{-1}^T}V_T(a_{-1}^ma_{-1}^n|0\rangle,-z)\overline{\chi} .
\end{equation}
To compute $V_T(a_{-1}^ma_{-1}^n|0\rangle,z)$, we need to calculate $e^{A(-z)}a_{-1}^ma_{-1}^n|0\rangle$ (see eq.  \eqref{eq:twistverop}), we have (see \cite[Page 92]{Dolan:1994st})
\begin{equation}
e^{A(-z)}a_{-1}^ma_{-1}^n|0\rangle=a_{-1}^ma_{-1}^n|0\rangle+\frac{1}{8}z^{-2}\delta^{nm}|0\rangle.
\end{equation}
This gives 
\begin{equation}
\begin{split}
W(\overline{\chi},z)a_{-1}^ma_{-1}^n|0\rangle&=-e^{zL_{-1}^T}:\left(\frac{dR^m(-z)}{dz}\frac{dR^n(-z)}{dz}+\frac{1}{8}z^{-2}\delta^{nm}\right):\overline{\chi}\\&=e^{zL_{-1}^T}\left(\sum_{r,s\in\mathbb{Z}+\frac{1}{2}}(-1)^{-r-s}:c_r^mc_s^n:z^{-r-s-2}+\frac{1}{8}\delta^{mn}z^{-2}\right)\overline{\chi}\\&=\left(\sum_{\substack{r,s\in\mathbb{Z}+\frac{1}{2}\\t\in\mathbb{N}_0}}(-1)^{-r-s}\frac{(L_{-1}^T)^t:c_r^mc_s^n:}{t!}z^{t-r-s-2}+\frac{\delta^{mn}}{8}\sum_{t\in\mathbb{N}_0}\frac{(L_{-1}^T)^t}{t!}z^{t-2}\right)\overline{\chi}.
\end{split}
\end{equation}
Thus we have
\begin{equation}
\begin{split}
\left(\sum_{r\in\mathbb{Z}+\frac{1}{2}}W_{r}(\overline{\chi})z^{-r-3/2}\right)a_{-1}^ma_{-1}^n|0\rangle=\left(\sum_{\substack{r,s\in\mathbb{Z}+\frac{1}{2}\\t\in\mathbb{N}_0}}(-1)^{-r-s}\frac{(L_{-1}^T)^t:c_r^mc_s^n:}{t!}z^{t-r-s-2}\right.\\+\left.\frac{\delta^{mn}}{8}\sum_{t\in\mathbb{N}_0}\frac{(L_{-1}^T)^t}{t!}z^{t-2}\right)\overline{\chi},
\end{split}
\end{equation}
which on comparing the powers of $z^{-2}$ gives 
\begin{equation}
\begin{split}
W_{1/2}(\overline{\chi})a_{-1}^ma_{-1}^n|0\rangle=&\left(\sum_{\substack{r\in\mathbb{Z}+\frac{1}{2}\\t\in\mathbb{N}_0}}\frac{(-L_{-1}^T)^t:c_r^mc_{t-r}^n:}{t!}+\frac{\delta^{mn}}{8}\right)\overline{\chi}.\\&=\frac{\delta^{mn}}{8}\overline{\chi}
\end{split}
\end{equation} 
This implies that 
\begin{equation}
\kappa_{mn}=0,\quad \kappa_{ii}=\frac{1}{16}\alpha(\chi)\quad\text{for all $i$}.
\end{equation}
Note that this result is consistent with eq.  \eqref{eq:gammasum}.
To compute $\kappa_{\lambda}(\chi)$, note that 
\begin{equation}
    \kappa_{\lambda}(\chi)=\langle\lambda|\overline{W}_{-\frac{1}{2}}(\chi)|\chi\rangle.
\label{eq:}
\end{equation}
Next using \eqref{eq:defW} we get 
\begin{equation}
W(\overline{\chi},z)|\lambda\rangle=e^{zL_{-1}^T}V_T(|\lambda\rangle,z)\overline{\chi}.   
\end{equation}
Since 
\begin{equation}
    e^{A(-z)}|\lambda\rangle=(-4z)^{-\frac{\lambda^2}{2}}|\lambda\rangle,
\end{equation}
we get
\begin{equation}
    W(\overline{\chi},z)|\lambda\rangle=e^{zL_{-1}^T}(-4z)^{-\frac{\lambda^2}{2}}:\exp(i\lambda\cdot R(-z)):e_{\lambda}\overline{\chi}
\label{eq:W1/2chibarlamb}
\end{equation}
Thus using the fact that $\lambda^2=4$ and comparing the coefficient of $z^{-2}$ on both sides of \eqref{eq:W1/2chibarlamb} we obtain 
\begin{equation}
    W_{\frac{1}{2}}(\overline{\chi})|\lambda\rangle=\text{coefficient of $z^{-2}$ in }\frac{16}{z^2}e^{zL_{-1}^T}:\exp(i\lambda\cdot R(-z)):e_{\lambda}\overline{\chi}.
\label{eq:}
\end{equation}
Because of the factor $16/z^2$, the only contribution to $z^{-2}$ on the RHS comes from the product of terms with $z^{\ell}$ in $e^{zL_{-1}^T}$ and $z^{-\ell}$ in $:\exp(i\lambda\cdot R(-z)):$ with $\ell\geq 0$. Note that $\ell$ must be an integer since $e^{zL_{-1}^T}$ only has integer powers of $z$ in its expansion. The quantity 
$e^{zL_{-1}^T}$ contributes $(L_{-1}^T)^\ell z^{\ell}$ but $:\exp(i\lambda\cdot R(-z)):$ may have nontrivial contributions. Write 
\begin{equation}
    :\exp(i\lambda\cdot R(-z)):=1+\sum_{n=1}^{\infty}\frac{1}{n!}:\left(i\sum_{k\in\mathbb{Z}+\frac{1}{2}}\frac{\lambda\cdot c_k}{k}(-z)^{-k}\right)^n:.
\label{eq:explambR-zexp}
\end{equation}
When $\ell=0$, only the constant term 1 in \eqref{eq:explambR-zexp} contributes. When $\ell>0$, to get a term with $z^{-\ell}$ from the sum, we have to consider $n>0$ terms in the sum. This involves the product of at least two oscillators $c_k$ since $\ell$ is an integer and the summation index $k$ runs over strictly half-integers. But then the normal ordering annihilates $e_{\lambda}\overline{\chi}$. So the only nontrivial contribution from the sum is 1 when $\ell=0$.
When $\ell>0$, there is no nontrivial contribution.
So we finally have 
\begin{equation}
    W_{\frac{1}{2}}(\overline{\chi})|\lambda\rangle=16e_{\lambda}\overline{\chi}.
\label{eq:}
\end{equation}
Thus 
\begin{equation}
\kappa_{\lambda}(\chi)=\langle\lambda|\overline{W}_{-\frac{1}{2}}(\chi)|\chi\rangle=\langle\chi|W_{\frac{1}{2}}(\overline{\chi})|\lambda\rangle^*=16\langle\chi|e_{\lambda}|\overline{\chi}\rangle^*.
\label{eq:kappalambda}
\end{equation}
%
Thus, we have 
\begin{equation}
\overline{W}_{-1/2}(\chi)\chi=\frac{\alpha(\chi)}{8}\psi_L+\sum_{\lambda^2=4}\kappa_{\lambda}|\lambda\rangle.
\end{equation}

It remains to prove the second equation in \eqref{eq:barW-1/2chi}.   By \eqref{eq:momcomlm} we have 
\begin{equation}
    L_{1}\overline{W}_{-1/2}(\chi)-\overline{W}_{-1/2}(\chi)L^T_1=\overline{W}_{1/2}(\chi).
\end{equation}
Applying this on $\chi$ and using the fact that $L_1^T\chi=0$, we obtain 
\begin{equation}
\begin{split}
\overline{W}_{1/2}(\chi)\chi&=L_{1}\overline{W}_{-1/2}(\chi)\chi\\&=\frac{\alpha(\chi)}{8}L_{1}\psi_L+\sum_{\lambda^2=4}\kappa_{\lambda}L_{1}|\lambda\rangle.
\end{split}
\end{equation}
Using the commutator $[a^i_m,L_n]=ma^i_{m+n}$, which is easy to check using the expression \eqref{eq:virgenuntwisted} for Virasoro generators, we see that $L_1\psi_L=0$. Moreover, $L_1|\lambda\rangle=0$. Thus we obtain 
\begin{equation}
\overline{W}_{1/2}(\chi)\chi=0.    
\end{equation}
This concludes the proof of \eqref{eq:barW-1/2chi} above. 
\end{proof}
\end{prop}
%

We use the OPE \eqref{eq:opexaxb} and the expression \eqref{eq:barW-1/2chi} to get: 
\begin{equation}
\widetilde{V}(\chi,z)\widetilde{V}(\chi,w)=\frac{\alpha(\chi)}{(z-w)^{3}}+\frac{\alpha(\chi)/8}{(z-w)}T_B(w)+\sum_{\lambda^2=4}\frac{\kappa_{\lambda}(\chi)}{z-w}\begin{pmatrix}e^{i\lambda\cdot X(w)}\sigma_{\lambda}&0\\0&e^{i\lambda\cdot R(w)}e_{\lambda}\end{pmatrix}.
\label{eq:opesimplified}
\end{equation}
Our goal is to provide a superconformal current. This will be achieved if the third term above 
vanishes. Therefore we now search for a twisted ground state $\chi$ such that $\alpha(\chi)=1$ and $\kappa_{\lambda}(\chi)=0$ for every $\lambda\in\Lambda_{\mathrm{L}}$ with $\lambda^2=4$. Choose an orthonormal basis of $\mathcal{S}(\Lambda_{\mathrm{L}})$. Then we have
\begin{equation}
\alpha(\chi)^*=\langle\chi|\overline{\chi}\rangle=-is^{\dagger}Ms^*,
\label{eq:alphachi} 
\end{equation}
where $s$ is the column vector representing $\chi$ in the chosen basis and and $M$ is as in  \eqref{eq:mdef}. We have also used  \eqref{eq:modconjdef} here. As discussed in Remark \ref{rem:Mi}, we will take the representation matrices $\gamma_{\lambda}$ to be real and $M=\mathds{1}$. 
\\\\
We next construct an explicit matrix representation of the Heisenberg group $\Gamma(\Lambda_{\mathrm{L}})$, defined analogous to \eqref{eq:gammamatalg}. Let $\{\gamma_i\}_{i=1}^{24}$ be  $(2^{12}\times 2^{12})$ real gamma matrices in 24 dimensions satisfying $\gamma_i^2=\mathds{1}$ and $\gamma_i\gamma_j=-\gamma_j\gamma_i$ for $i\neq j$.
Let $\Gamma$ denote the group generated by the $24$ Dirac gamma matrices. It has order $2 \cdot 2^{24}$ and is a Heisenberg extension of $\mathbb{Z}_2^{24}$. Introduce the notation:
\begin{equation}
    \gamma_w:=\gamma_1^{w^1}\cdots\gamma_{24}^{w^{24}}\quad\text{for } w=(w^1,\dots,w^{24})\in\mathbb{F}_2^{24}.
\label{eq:}
\end{equation}
Then clearly 
\begin{equation}
    \Gamma :=\{\pm\gamma_w : w\in\mathbb{F}_{2}^{24}\}.
\label{eq:}
\end{equation}
As is well-known, there is a unique irreducible representation of the Heisenberg group, 
up to isomorphism. Since $\Gamma$ is a group of matrices it has a canonical representation, we take $\mathcal{S}(\Lambda_{\mathrm{L}})$ to be this canonical representation.  
See \cite[Appendix C]{Dolan:1989vr} for one choice of $\gamma_i$. The group 
$\Gamma(\Lambda_{\mathrm{L}})$ is also a Heisenberg extension of $\mathbb{Z}_2^{24}$ and hence is 
isomorphic to $\Gamma$. To construct an explicit isomorphism, pick a $\mathbb{Z}_2$-basis $\{[\alpha_i]\}_{i=1}^{24}$ for $\Lambda_{\mathrm{L}}/2\Lambda_{\mathrm{L}}$ using Leech lattice vectors $\alpha_i$ with the property that 
\begin{equation}\label{eq:basisalphidotprod}
    \alpha_i\cdot\alpha_j\equiv (1-\delta_{ij})\bmod 4.
\end{equation} 
An explicit demonstration of the existence of such a  basis is given in 
\eqref{eq:specrepheisdiracgammamat} et. seq. Because of the property \eqref{eq:basisalphidotprod},
\begin{equation}
    \rho(e_{\alpha_i})=\gamma_{i},\quad\rho(-e_{\alpha_i})=-\gamma_{i} 
    \label{eq:basisgammamatrep}
\end{equation}
defines a group isomorphism $\rho: \Gamma(\Lambda_{\mathrm{L}}) \longrightarrow \Gamma$. Here $\{e_{\alpha_i},-e_{\alpha_i}\}$ is the fiber above $[\alpha_i]$.  
%
For any $\lambda\in\Lambda_{\mathrm{L}}$ with $[\lambda]=\sum_{i=1}^{24}w^i[\alpha_i]$ with $w^i\in\mathbb{Z}_2$, we get 
\begin{equation}
e_{\lambda}=v([\lambda])\prod_{i=1}^{24}e_{\alpha_i}^{w^i}
\end{equation}
where $v(\lambda)\in\{\pm 1\}$ is determined using the value of the cocycle $\varepsilon$ on the basis vectors. 
We can then easily check that 
\begin{equation}
\rho(e_{\lambda}) 
=v([\lambda])\prod_{i=1}^{24}\gamma_{i}^{w^i} =v([\lambda])\gamma_w
\label{eq:heisrepdirac}
\end{equation}
where $w=(w^1\dots w^{24})\in\mathbb{F}_2^{24}$.  As mentioned in Remark \ref{rem:Udaggerrep} we will often write 
\begin{equation}
    \gamma_\lambda:=\rho(e_\lambda)~.
\end{equation}
\begin{remark}
A choice of basis $\{[\alpha_i]\}_{i=1}^{24}$ for $\Lambda_{\mathrm{L}}/2\Lambda_{\mathrm{L}}$ with the property \eqref{eq:basisalphidotprod} is equivalent to a choice of representation of the Heisenberg group $\Gamma(\Lambda_{{\rm L}})$, the representation being specified by the group isomorphism \eqref{eq:basisgammamatrep}. The choice of representation gives the action of $e_\lambda$ on $\mathcal{S}(\Lambda_{{\rm L}})$. A difference choice of such a basis and hence a distinct isomorphic representation of $\Gamma(\Lambda_{{\rm L}})$ gives another isomorphic realization of the Beauty and the Beast SCFT. The explicit isomorphism is identical to the one in Remark \ref{rem:deponsectioniso}. Throughout the paper, we choose one such basis and fix it.      
\end{remark}
The choice of basis $\{[\alpha_i]\}_{i=1}^{24}$ of $\Lambda_{\mathrm{L}}/2\Lambda_{\mathrm{L}}$ with the property \eqref{eq:basisalphidotprod} gives us a map
\begin{equation}
    \begin{split}
&f:\Lambda_{\mathrm{L}}\longrightarrow \mathbb{F}_2^{24}\\&\pm\lambda\mapsto \left(w^{1}\dots w^{24}\right)
\end{split}
\label{eq:isof}
\end{equation}
where 
\begin{equation}
    [\lambda]=\sum_{i=1}^{24}w^i[\alpha_i],\quad w^i\in\mathbb{F}_2.
\label{eq:w^i_def}
\end{equation}
This map is well defined, surjective and an $\mathbb{F}_2$-module homomorphism. It is easy to see that $\text{Ker}(f)=2\Lambda_{\mathrm{L}}$. We denote the induced isomorphism of $\mathbb{F}_2$-modules $\Lambda_{\mathrm{L}}/2\Lambda_{\mathrm{L}}$ and $\mathbb{F}_2^{24}$ by the same symbol $f$. 
We now have the following lemma.
\begin{lemma}
Let $\Gamma_{12}\subset\Gamma$ be an Abelian subgroup of order $2^{12}$ consisting of elements of the form $\xi_w \gamma_w$, with $ w\in\mathbb{F}_2^{24}$ and $\xi_w\in \{ \pm 1 \}$  such that $-\mathds{1}\not\in\Gamma_{12}$.  
Then the image $\mathcal{C}_{12}$ of $\Gamma_{12}$ under the map $\Gamma\to \mathbb{F}_2^{24}$ given by $\pm\gamma_w\mapsto w$ is a $[24,12]$ binary linear code.
\label{lemma:subcodegamma12}
\begin{proof}
Since $\Gamma_{12}$ is a subgroup, $\mathds{1}\in \Gamma_{12}$ and hence $(00\dots 0)\in\mathcal{C}_{12}$. Moreover since $\gamma_{w}\gamma_{w'}=\pm\gamma_{w+w'}$ we see that $\mathcal{C}_{12}$ is a linear subspace of $\mathbb{F}_2^{24}$. 
Since $-\mathds{1}\not\in\Gamma_{12}$, we cannot have both $\gamma_w$ and $-\gamma_w$  in $\Gamma_{12}$. It is thus clear that $|\mathcal{C}_{12}|=2^{12}$. The statement now follows. 
\end{proof}
\end{lemma}
\begin{lemma}\label{lemma:cobbtri}
Let $\Gamma_{12}$ and $\mathcal{C}_{12}$ be as in Lemma \ref{lemma:subcodegamma12} with the additional property that $\gamma_w^2=\mathds{1}$ for all $\gamma_w\in\Gamma_{12}$. Then the cocycle $\varepsilon$ is trivialisable on $\mathcal{C}_{12}$ by a $U(1)$ coboundary $b:\mathbb{F}_{2}^{24}\longrightarrow U(1)$. That is 
\begin{equation}
    \varepsilon(w_1,w_2)=\frac{b(w_1+w_2)}{b(w_1)b(w_2)}.
\label{eq:}
\end{equation}
\begin{proof}
The isomorphism class of $U(1)$ central extensions is determined by the commutator function \cite[Proposition A.1]{Freed:2006ya} 
\begin{equation}
    \kappa(w_1,w_2)=\gamma_{w_1}\gamma_{w_2}\gamma_{w_1}^{-1}\gamma_{w_2}^{-1}.
\label{eq:}
\end{equation}
Since elements of $\Gamma_{12}$ square to identity, $\gamma_{w}^{-1}=\gamma_{w}$ for all $w\in\mathcal{C}_{12}$ and since $\Gamma_{12}$ is abelian, it is clear that $\kappa(w_1,w_2)=1$ for all $w_1,w_2\in\mathcal{C}_{12}$.
\end{proof}
\end{lemma}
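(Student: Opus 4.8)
The plan is to read the claim cohomologically: since $\mathcal{C}_{12}$ is a subgroup of $\mathbb{F}_2^{24}$, asserting that $\varepsilon|_{\mathcal{C}_{12}}$ equals a $U(1)$ coboundary is the same as asserting that its class in $H^2(\mathcal{C}_{12};U(1))$ vanishes. For a finite abelian group $A$, the antisymmetrization map $c\mapsto\big((x,y)\mapsto c(x,y)c(y,x)^{-1}\big)$ descends to an isomorphism of $H^2(A;U(1))$ with the group of alternating bicharacters $A\times A\to U(1)$; in particular a $2$-cocycle on $A$ is a $U(1)$ coboundary if and only if its commutator pairing is identically $1$. This is exactly \cite[Proposition A.1]{Freed:2006ya}, which the paper already invokes. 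So it suffices to compute the commutator pairing attached to $\varepsilon|_{\mathcal{C}_{12}}$ and check it is trivial.

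Next I would make that computation. Transporting $\varepsilon$ through the isomorphism $\rho$ of eq. \eqref{eq:heisrepdirac}, the section $w\mapsto\gamma_w$ realizes the extension $\Gamma^{12}$, so $\gamma_{w_1}\gamma_{w_2}=\varepsilon(w_1,w_2)\,\gamma_{w_1+w_2}$ up to the sign twist $v$, which cancels in any commutator. Hence the commutator pairing of $\varepsilon|_{\mathcal{C}_{12}}$ is $\kappa(w_1,w_2)=\gamma_{w_1}\gamma_{w_2}\gamma_{w_1}^{-1}\gamma_{w_2}^{-1}$. By the hypotheses of Lemma~\ref{lemma:subcodegamma12}, $\Gamma^{12}$ is abelian and each $\gamma_w$ squares to $\mathds{1}$, so $\gamma_w^{-1}=\gamma_w$ and $\kappa(w_1,w_2)=1$ for all $w_1,w_2\in\mathcal{C}_{12}$. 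Therefore the pushforward of the $\mathbb{Z}_2$-extension $\Gamma^{12}$ to a $U(1)$-extension of $\mathcal{C}_{12}$ splits, equivalently there is $b:\mathcal{C}_{12}\to U(1)$ with $\varepsilon(w_1,w_2)=b(w_1+w_2)/\big(b(w_1)b(w_2)\big)$; extending $b$ by $1$ on any $\mathbb{F}_2$-linear complement of $\mathcal{C}_{12}$ in $\mathbb{F}_2^{24}$ produces the map $b:\mathbb{F}_2^{24}\to U(1)$ in the statement.

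The one step that genuinely deserves care — and the reason the lemma speaks of a $U(1)$ coboundary rather than a $\mathbb{Z}_2$ one — is the implication ``trivial commutator pairing $\Rightarrow$ coboundary'' over $U(1)$ coefficients. Concretely, fixing an $\mathbb{F}_2$-basis $e_1,\dots,e_{12}$ of $\mathcal{C}_{12}$, a $2$-cocycle is determined up to coboundary by the values $\varepsilon(e_i,e_j)$; once $\kappa\equiv1$ the off-diagonal antisymmetric part is absorbed, and the diagonal values $\varepsilon(e_i,e_i)\in\{\pm1\}$ are killed by choosing $b(e_i)$ to be a fourth root of unity with $b(e_i)^2=\varepsilon(e_i,e_i)^{-1}$ — possible in $U(1)$, impossible in $\mathbb{Z}_2$. (In the present application $\varepsilon(e_i,e_i)=+1$ anyway, since $\gamma_w^2=\mathds{1}$, so the diagonal obstruction is actually absent and one can even take $b$ to be $\{\pm1\}$-valued; but the $U(1)$ formulation is the clean general statement.) Beyond this bookkeeping I expect no obstacle: everything else is the defining abelianness of $\Gamma^{12}$ together with the cited classification of central extensions of finite abelian groups.
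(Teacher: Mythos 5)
Your proof is correct and follows essentially the same route as the paper: both reduce the claim to triviality of the commutator pairing via \cite[Proposition A.1]{Freed:2006ya}, and both observe that this pairing vanishes because $\Gamma^{12}$ is abelian and its elements square to $\mathds{1}$. The extra remarks you add (why $U(1)$ rather than $\mathbb{Z}_2$ coefficients appear in the general statement, and the observation that $b$ can here even be taken $\{\pm 1\}$-valued since the diagonal obstruction $\varepsilon(e_i,e_i)$ vanishes) are sound and somewhat more explicit than the paper's proof, but do not change the underlying argument.
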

\begin{lemma}
Let $b:\mathbb{F}_{2}^{24}\longrightarrow U(1)$ be the coboundary from Lemma \ref{lemma:cobbtri}. Define $\widetilde{\gamma}_{w}=b(w)\gamma_w$ for $w\in\mathbb{F}_2^{24}$. Then the operator 
\begin{equation}
    P_{12}:=\frac{1}{2^{12}}\sum_{w\in\mathcal{C}_{12}}\widetilde{\gamma}_w
\label{eq:}
\end{equation}
is a rank one projection operator acting on $\mathcal{S}(\Lambda_{\mathrm{L}})$. 
\begin{proof}
Observe that 
\begin{equation}
    \widetilde{\gamma}_{w_1}\widetilde{\gamma}_{w_2}=b(w_1)b(w_2)\varepsilon(w_1,w_2)\gamma_{w_1+w_2}=b(w_1+w_2)\gamma_{w_1+w_2}=\widetilde{\gamma}_{w_1+w_2}.
\label{eq:}
\end{equation}
Thus 
\begin{equation}
    P_{12}^{2}=\frac{1}{2^{24}}\sum_{w\in\mathcal{C}_{12}}\sum_{w'\in\mathcal{C}_{12}}\widetilde{\gamma}_{w+w'}=\frac{1}{2^{12}}\sum_{w\in\mathcal{C}_{12}}P_{12}=P_{12}.
\label{eq:}
\end{equation}
Moreover using the identity
\begin{equation}
    \begin{gathered}
\operatorname{Tr}\left(\gamma_{\mu_{1}} \ldots \gamma_{\mu_{n}}\right)=\delta_{\mu_{1} \mu_{2}} \operatorname{Tr}\left(\gamma_{\mu_{3}} \ldots \gamma_{\mu_{n}}\right)-\delta_{\mu_{1} \mu_{3}} \operatorname{Tr}\left(\gamma_{\mu_{2}} \gamma_{\mu_{4}} \ldots \gamma_{\mu_{n}}\right)+\ldots \\
+\delta_{\mu_{1} \mu_{n}} \operatorname{Tr}\left(\gamma_{\mu_{2}} \ldots \gamma_{\mu_{n-1}}\right)~,
\end{gathered}
\label{eq:}
\end{equation}
it is clear that 
\begin{equation}
    \text{Tr}(\gamma_w)=0,\quad \forall~~w\in\mathbb{F}_{2}^{24}\backslash\{0\}.
\label{eq:}
\end{equation}
Thus 
\begin{equation}
    \text{Tr}P_{12}=1.
\label{eq:}
\end{equation}
Since the only eigenvalues of $P_{12}$ are 0 and 1, it is clear that $P_{12}$ is rank one.
\end{proof}
\end{lemma}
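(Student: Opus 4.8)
The plan is to check that $P_{12}$ is a Hermitian idempotent of trace $1$; since an idempotent has minimal polynomial dividing $x^{2}-x$, it is diagonalizable with eigenvalues in $\{0,1\}$, and for such an operator the rank equals the trace, so these three facts together show at once that $P_{12}$ is a well-defined (orthogonal) projection of rank one, in particular nonzero.

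The essential input is the multiplicativity of the rescaled matrices on the code $\mathcal{C}_{12}$. Reordering a product of distinct gamma matrices gives $\gamma_{w_{1}}\gamma_{w_{2}}=\varepsilon(w_{1},w_{2})\gamma_{w_{1}+w_{2}}$, with $\varepsilon$ the same $\{\pm1\}$-valued cocycle on $\mathbb{F}_{2}^{24}$ that Lemma~\ref{lemma:cobbtri} trivializes on $\mathcal{C}_{12}$ (this is the identification transported through the isomorphism $\rho$, cf.\ eq.~\eqref{eq:heisrepdirac}). Hence for $w_{1},w_{2}\in\mathcal{C}_{12}$,
\begin{equation}
\widetilde{\gamma}_{w_{1}}\widetilde{\gamma}_{w_{2}}=b(w_{1})b(w_{2})\varepsilon(w_{1},w_{2})\gamma_{w_{1}+w_{2}}=b(w_{1}+w_{2})\gamma_{w_{1}+w_{2}}=\widetilde{\gamma}_{w_{1}+w_{2}},
\end{equation}
so $w\mapsto\widetilde{\gamma}_{w}$ is a genuine homomorphism from the additive group $\mathcal{C}_{12}$ to the unitary matrices. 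In particular $\widetilde{\gamma}_{0}=\mathds{1}$ (equivalently $b(0)=1$, which also follows from $\varepsilon(0,0)=1$), and each $\widetilde{\gamma}_{w}$ is unitary (a product of the unitary generators $\gamma_{i}$ times a phase) with $\widetilde{\gamma}_{w}^{2}=\mathds{1}$, hence Hermitian; consequently $P_{12}^{\dagger}=P_{12}$.

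Idempotency is then the standard group-averaging computation: since $\mathcal{C}_{12}$ is a group, $w'\mapsto v=w+w'$ is for each fixed $w$ a bijection of $\mathcal{C}_{12}$, whence
\begin{equation}
P_{12}^{2}=\frac{1}{2^{24}}\sum_{w\in\mathcal{C}_{12}}\sum_{w'\in\mathcal{C}_{12}}\widetilde{\gamma}_{w+w'}=\frac{1}{2^{24}}\sum_{w\in\mathcal{C}_{12}}\sum_{v\in\mathcal{C}_{12}}\widetilde{\gamma}_{v}=\frac{2^{12}}{2^{24}}\sum_{v\in\mathcal{C}_{12}}\widetilde{\gamma}_{v}=P_{12}.
\end{equation}
For the trace I would use the standard recursion for traces of products of Euclidean gamma matrices, which yields $\operatorname{Tr}\gamma_{w}=0$ for every $w\in\mathbb{F}_{2}^{24}\setminus\{0\}$ while $\gamma_{0}=\mathds{1}$ has trace $2^{12}$; since $|b(w)|=1$, only the $w=0$ summand survives and $\operatorname{Tr}P_{12}=2^{-12}\operatorname{Tr}\mathds{1}=1$.

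Putting the pieces together, $P_{12}$ is a Hermitian idempotent with $\operatorname{Tr}P_{12}=1$, hence a rank-one projection. The only step that requires genuine care is the one at the start --- confirming that the $\pm1$ picked up when reordering gamma matrices is precisely the cocycle trivialized in Lemma~\ref{lemma:cobbtri}, so that the coboundary $b$ really does turn $\widetilde{\gamma}$ into a homomorphism on $\mathcal{C}_{12}$; everything after that is the routine averaging and gamma-trace bookkeeping.
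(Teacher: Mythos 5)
Your proof is correct and follows essentially the same route as the paper: use the coboundary to upgrade $w\mapsto\widetilde{\gamma}_w$ to a homomorphism on $\mathcal{C}_{12}$, then group-average to get idempotency and use the vanishing gamma-matrix traces to get $\operatorname{Tr}P_{12}=1$. Your extra verification of Hermiticity (making $P_{12}$ an orthogonal projection) is a harmless bonus but is not needed for the rank-one conclusion — any idempotent has minimal polynomial dividing $x(x-1)$, hence is diagonalizable with eigenvalues in $\{0,1\}$, so rank equals trace — and the "genuine care" you flag at the end is resolved by construction, since the cocycle $\varepsilon$ trivialized in Lemma~\ref{lemma:cobbtri} is \emph{defined} by the reordering sign $\gamma_{w_1}\gamma_{w_2}=\varepsilon(w_1,w_2)\gamma_{w_1+w_2}$.
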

\begin{defn}\label{def:supconflat}
A sublattice $\Lambda_{\text{SC}}\subset\Lambda_{\mathrm{L}}$ is called a \textit{superconformal sublattice} if the following properties are satisfied:  
\begin{enumerate}
    \item For every $\lambda_1,\lambda_2,\lambda\in\Lambda_{\text{SC}}$, $\langle\lambda_1,\lambda_2\rangle\equiv 0\bmod~2$ and $\langle\lambda,\lambda\rangle\equiv 0\bmod ~4$. 
    \item $\Lambda_{\text{SC}}$ does not contain any vector of norm-squared 4.
    \item $2\Lambda_{\mathrm{L}}\subset \Lambda_{\text{SC}}$ with index $2^{12}$. 
\end{enumerate} 
Hypothesis (1) and (3) implies that $\widetilde{\Gamma}(\Lambda_{\text{SC}}):=\{e_{\lambda}:[\lambda]\in\widetilde{\Lambda}_{\text{SC}}\}$ is an order $2^{12}$ Abelian subgroup of $\Gamma(\Lambda_{\mathrm{L}})$, where $\widetilde{\Lambda}_{\text{SC}}:=\Lambda_{\text{SC}}/2\Lambda_{\mathrm{L}}$. Hypothesis (2), as we will see below, is crucial for the construction of the supercurrent. 
\end{defn}
We can extend the map $f$, defined in \eqref{eq:isof} to a homomorphism $f:\Gamma(\Lambda_{\mathrm{L}})\longrightarrow\mathbb{F}_2^{24}$ as follows:
\begin{equation}
    \begin{split}
&f:\Gamma(\Lambda_{\mathrm{L}})\longrightarrow \mathbb{F}_2^{24}\\&\pm e_\lambda\mapsto \left(w^{1}\dots w^{24}\right)
\end{split}
\label{eq:isof_Gamma}
\end{equation}
where $w^i$ is defined as in \eqref{eq:w^i_def}. We then claim that for any superconformal sublattice $\Lambda_{\text{SC}}$ 
\begin{equation}
\mathcal{C}_{\mathrm{SC}}:=f(\widetilde{\Gamma}(\Lambda_{\mathrm{SC}}))
\end{equation}
is a $[24,12]$ binary code.  
It is clear that the group $\rho(\widetilde{\Gamma}(\Lambda_{\text{SC}}))$  satisfies all hypotheses of Lemma \ref{lemma:subcodegamma12}, where $\rho:\Gamma(\Lambda_{\text{L}})\longrightarrow\Gamma$ is the group isomorphism described in  \eqref{eq:basisgammamatrep}. Indeed using Property (1) of Definition \ref{def:supconflat} 
\begin{equation}
    e_{\lambda}e_{\mu}=(-1)^{\lambda\cdot\mu}e_{\mu}e_{\lambda}=e_{\mu}e_{\lambda}\quad\forall~ [\lambda],[\mu]\in \Lambda_{\text{SC}}/2\Lambda_{\mathrm{L}}. 
\end{equation}
Thus $\widetilde{\Gamma}(\Lambda_{\text{SC}})$ is abelian. By the same property, we have 
\begin{equation}
    e_{\lambda}^2=(-1)^{\frac{\lambda^2}{2}}\mathds{1}=\mathds{1},\quad\forall~[\lambda]\in \Lambda_{\text{SC}}/2\Lambda_{\mathrm{L}}.
\end{equation}
Finally $-\mathds{1}\not\in\rho(\widetilde{\Gamma}(\Lambda_{\text{SC}}))$ since otherwise if $ e_\lambda\in\widetilde{\Gamma}(\Lambda_{\text{SC}})$ then so does $ -e_\lambda\in\widetilde{\Gamma}(\Lambda_{\text{SC}})$ and $\left|\overline{\widetilde{\Gamma}(\Lambda_{\text{SC}})}\right|=|\Lambda_{\text{SC}}/2\Lambda_{\mathrm{L}}|=2^{11}$ contradicting the index of $2\Lambda_{\mathrm{L}}$ in $\Lambda_{\text{SC}}$. 
Hence $\mathcal{C}_{\mathrm{SC}}:=f(\widetilde{\Gamma}(\Lambda_{\mathrm{SC}}))$ is a $[24,12]$ binary code. 
\begin{thm}\label{thm:supcongenspin}
Let $\Lambda_{\mathrm{SC}}\subset\Lambda_{\mathrm{L}}$ be a superconformal sublattice and let $\mathcal{C}_{\mathrm{SC}}=f(\widetilde{\Gamma}(\Lambda_{\mathrm{SC}}))$ be the $[24,12]$ binary code with corresponding projection operator $P_{\mathrm{SC}}$. Then there exists $\chi_{\mathrm{SC}}\in\mathrm{Im}P_{\mathrm{SC}}$ 
such that $\alpha(\chi_{\mathrm{SC}})=1$ and $\kappa_{\lambda}(\chi_{\mathrm{SC}})=0$ for all $\lambda\in\Lambda_{\mathrm{L}}$ with norm squared 4, where $\alpha(\chi),\kappa_\lambda(\chi)$ is as in \eqref{eq:alp_kappa_chi}.
\end{thm}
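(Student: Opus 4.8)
The plan is to let $\chi$ be the (essentially unique) vector spanning the rank-one space $\mathrm{Im}\,P_{\mathrm{SC}}$, to rewrite $\kappa_\lambda(\chi)$ as a single Dirac-matrix trace $\mathrm{Tr}\!\left(\rho(e_\lambda)P_{\mathrm{SC}}\right)$, and then to observe that this trace vanishes precisely because a norm-$4$ vector of $\Lambda_{\mathrm{L}}$ maps under $f$ to a class \emph{outside} $\mathcal{C}_{\mathrm{SC}}$, so that no term in the expansion of $P_{\mathrm{SC}}$ can contribute to the trace. The role of $\alpha(\chi)$ is only to guarantee that $\chi$ can be rescaled to the required normalisation.

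First I would fix a convenient model: by Remark~\ref{rem:Mi} realise $\mathcal{S}(\Lambda_{\mathrm{L}})$ on a real space, so $M=\mathds{1}$ and the $\gamma_w$ are real. As explained after Definition~\ref{def:supconflat}, $\rho(\Gamma(\Lambda_{\mathrm{SC}}))$ satisfies the hypotheses of Lemma~\ref{lemma:subcodegamma12}; since its elements are of the form $\pm\gamma_w$ with $(\pm\gamma_w)^2=\mathds{1}$, the trivialising coboundary $b$ of Lemma~\ref{lemma:cobbtri} is forced to take values in $\{\pm1\}$ on $\mathcal{C}_{\mathrm{SC}}$ (indeed $\widetilde\gamma_w^2=\mathds{1}=\gamma_w^2$ forces $b(w)^2=1$ there), so $P_{\mathrm{SC}}=2^{-12}\sum_{w\in\mathcal{C}_{\mathrm{SC}}}\widetilde\gamma_w$ is a \emph{real, self-adjoint}, rank-one projection. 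Hence $\mathrm{Im}\,P_{\mathrm{SC}}$ contains a real unit vector; take $\chi$ to be that vector, so $P_{\mathrm{SC}}=|\chi\rangle\langle\chi|$. Because $M\chi=\chi^\ast=\chi$, $\theta\chi=-\chi$, and $e^{-i\pi L_0^T}$ acts by $e^{-3\pi i/2}=i$ on conformal weight $\tfrac32$, eq.~\eqref{eq:modconjdef} gives the clean relation $\overline\chi=-i\,\chi$; in particular $\alpha(\chi)=\langle\overline\chi|\chi\rangle=i\|\chi\|^2\neq 0$.

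Next I would compute $\kappa_\lambda(\chi)$. Fix $\lambda\in\Lambda_{\mathrm{L}}$ with $\lambda^2=4$ and set $w:=f(\lambda)$. Using the formula eq.~\eqref{eq:kappalambda} together with $\overline\chi=-i\chi$ and $P_{\mathrm{SC}}=|\chi\rangle\langle\chi|$,
\[
\kappa_\lambda(\chi)=16\,\langle\chi|e_\lambda|\overline\chi\rangle^\ast
=16\,i\,\langle\chi|\rho(e_\lambda)|\chi\rangle^\ast
=16\,i\,\overline{\mathrm{Tr}\!\left(\rho(e_\lambda)P_{\mathrm{SC}}\right)}.
\]
Now $\rho(e_\lambda)=v([\lambda])\gamma_w$ with $v([\lambda])\in\{\pm1\}$ by eq.~\eqref{eq:heisrepdirac}, and each product $\gamma_w\widetilde\gamma_{w'}$ is a scalar multiple of $\gamma_{w+w'}$, while $\mathrm{Tr}\,\gamma_u=2^{12}\delta_{u,0}$ (the gamma-trace identity used in the proof that $P_{12}$ has trace one). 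Hence $\mathrm{Tr}\!\left(\rho(e_\lambda)P_{\mathrm{SC}}\right)$ is a sum over $w'\in\mathcal{C}_{\mathrm{SC}}$ of scalars times $\delta_{w+w',0}=\delta_{w,w'}$. But if $\lambda^2=4$ then $w\notin\mathcal{C}_{\mathrm{SC}}$: otherwise $[\lambda]\in\widetilde\Lambda_{\mathrm{SC}}$, so $\lambda\in\Lambda_{\mathrm{SC}}+2\Lambda_{\mathrm{L}}=\Lambda_{\mathrm{SC}}$ by Property~(3) of Definition~\ref{def:supconflat}, contradicting Property~(2). Therefore $w\neq w'$ for every $w'\in\mathcal{C}_{\mathrm{SC}}$, every term vanishes, and $\kappa_\lambda(\chi)=0$. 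Finally, both $\alpha$ and each $\kappa_\lambda$ are homogeneous of degree $2$ under $\chi\mapsto c\chi$, so replacing $\chi$ by $c\chi$ with $c^2=\alpha(\chi)^{-1}$ achieves $\alpha(\chi)=1$ while keeping $\kappa_\lambda(\chi)=0$ and $\chi\in\mathrm{Im}\,P_{\mathrm{SC}}$, which is the assertion.

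The main point requiring care, rather than a serious obstacle, is the interplay between the real model ($M=\mathds{1}$, real $\gamma_w$, $b$ forced to $\pm1$ on the code) and the conjugation eq.~\eqref{eq:modconjdef}: it is exactly the phase $\overline\chi=-i\chi$ that makes $\alpha(\chi)\neq 0$ and that reduces $\kappa_\lambda$ to the trace. If one prefers to avoid the projection formula for $P_{\mathrm{SC}}$, an alternative is to use that $\mathcal{C}_{\mathrm{SC}}$ is a Lagrangian subspace of $\mathbb{F}_2^{24}$ for the (nondegenerate, since $\Gamma$ is Heisenberg) commutator pairing, so any $w\notin\mathcal{C}_{\mathrm{SC}}=\mathcal{C}_{\mathrm{SC}}^{\perp}$ anticommutes with some $w'\in\mathcal{C}_{\mathrm{SC}}$; since $\chi$ is a $\gamma_{w'}$-eigenvector and $\gamma_{w'}^2=\mathds{1}$, conjugating by $\gamma_{w'}$ flips the sign of $\langle\chi|\gamma_w|\chi\rangle$, forcing it to be zero. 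Either way, the ``error-correcting'' input is just the elementary fact that norm-$4$ classes fall outside $\mathcal{C}_{\mathrm{SC}}$, which is built into the definition of a superconformal sublattice.
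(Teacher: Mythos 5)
Your proof is correct and follows essentially the same route as the paper: take a (real) unit vector spanning $\mathrm{Im}\,P_{\mathrm{SC}}$, rewrite $\kappa_{\lambda}$ as a trace against $P_{\mathrm{SC}}=2^{-12}\sum_{w'\in\mathcal{C}_{\mathrm{SC}}}\widetilde{\gamma}_{w'}$, use $\mathrm{Tr}\,\gamma_u=2^{12}\delta_{u,0}$ together with the fact that $\lambda^2=4$ forces $f(\lambda)\notin\mathcal{C}_{\mathrm{SC}}$, and rescale at the end. The one place where you go slightly beyond the paper is the explicit derivation $\overline{\chi}=-i\chi$ (using $M=\mathds{1}$, $\theta\chi=-\chi$, and $e^{-i\pi L_0^T}=e^{-3\pi i/2}$), which makes the phase bookkeeping cleaner; note your choice $c^2=\alpha(\chi)^{-1}$ correctly yields $c=e^{-i\pi/4}$, whereas the paper writes $e^{i\pi/4}$ — almost certainly a typo there, since $\alpha(\chi_1)=i$ and $\alpha(c\chi_1)=c^2\alpha(\chi_1)$. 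Your closing alternative argument — that $\mathcal{C}_{\mathrm{SC}}$ is Lagrangian for the commutator pairing, so any $w\notin\mathcal{C}_{\mathrm{SC}}$ anticommutes with some $\gamma_{w'}$ for which $\chi$ is an eigenvector, forcing $\langle\chi|\gamma_w|\chi\rangle=0$ — is a genuinely different and arguably more conceptual mechanism than the trace computation, and worth keeping as a remark, but the paper does not use it.
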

\begin{proof}
Choose a normalised vector $\chi_1$ (i.e., scaled such that its norm is 1) in $\mathrm{Im}P_{\mathrm{SC}}$ and extend it to a basis $\{\chi_1,\chi_2,\dots,\chi_{2^{12}}\}$ of $\mathcal{S}(\Lambda_{\mathrm{L}})$. We have 
\begin{equation}
    P_{\mathrm{SC}}=|\chi_1\rangle\langle\chi_1|,
    \label{eq:projectorchi}
\end{equation}
where $|\chi\rangle\langle\chi|$ denotes the outer product of vectors.
By \eqref{eq:alphachi}  $\alpha(\chi_1)^*=-i$. 
%
%
%
So we scale $\chi_1$ by $e^{i\pi/4}$: 
\begin{equation}
    \chi_{\mathrm{SC}}:=e^{-i\pi/4}\chi_1.
\label{eq:}
\end{equation}
then we have 
\begin{equation}
\alpha(\chi_{\mathrm{SC}})^*=-i\langle e^{-i\pi/4}\chi_1|e^{i\pi/4}\chi_1\rangle=\langle\chi_1|\chi_1\rangle =1 =\alpha(\chi_{\mathrm{SC}}).  
\end{equation}
We now prove that $\kappa_{\lambda}(\chi_{\mathrm{SC}})=0.$ Since $\Lambda_{\text{SC}}$ does not contain vectors of norm squared 4, we see that if $\gamma_{\lambda}=\gamma_w$ with $w\in\mathbb{F}_2^{24}$ and $\lambda'\in[\lambda]$ with $\lambda'^2=4$, then $w\not\in \mathcal{C}_{\text{SC}}$. Thus if $\gamma_{\lambda}=\gamma_{w'}$ with $w'\not\in\mathcal{C}_{\text{SC}}$,
we have 
\begin{equation}
    \begin{split}
\kappa_\lambda(\chi_{\mathrm{SC}})=16\langle\chi_{\mathrm{SC}}|e_{\lambda}|\overline{\chi}_{\mathrm{SC}}\rangle &=-16i\langle e^{-i\pi/4}\chi_1|e_{\lambda}|e^{i\pi/4}\chi_1\rangle\\&=16\langle \chi_1|e_{\lambda}|\chi_1\rangle\\&=\frac{16}{2^{12}}\text{Tr}\sum_{w\in\mathcal{C}_{\text{SC}}}b(w)\gamma_{\lambda}\gamma_w\\&=\frac{16}{2^{12}}\sum_{w\in\mathcal{C}_{\text{SC}}}b(w)\text{Tr}\gamma_{w'}\gamma_{w},
\end{split}
\label{eq:}
\end{equation}
where we used the property of outer product:
\begin{equation}
    \langle\Psi|\Phi\rangle=\text{Tr}(|\Psi\rangle\langle\Phi|)~.
\end{equation}
Now since $w+w'\not\in\mathcal{C}_{\text{SC}}$ for any $w\in\mathcal{C}_{\text{SC}}$ and $w'\in\mathbb{F}_2^{24}\backslash\mathcal{C}_{\text{SC}}$, we have that $w+w'\neq 0$ and hence $\text{Tr}\gamma_{w'}\gamma_{w}=0$ since $w\in\mathcal{C}_{\text{SC}}$ and $\kappa_{\lambda}(\chi_{\mathrm{SC}})=0$. The proof is complete.
\end{proof}
Thus the existence of a superconformal vector reduces to the existence of a superconformal sublattice. We now prove that such a superconformal sublattice exists completing the proof of superconformal symmetry in the (spin lift of the) Moonshine module.
\begin{thm}
There exists a superconformal sublattice of $\Lambda_{\mathrm{L}}$.
\begin{proof}
It is proved in \cite[Lemma 4.3]{dong1996associative} that for any Niemeier lattice $L$, the lattice $\sqrt{2}L$, the lattice isomorphic to $L$ with all norms doubled, can be isometrically embedded\footnote{for an explicit embedding, see \cite{10.1006/eujc.1999.0360}.} in $\Lambda_{\mathrm{L}}$. We denote this embedded sublattice by  $\sqrt{2}L^\iota\subseteq\Lambda_{\mathrm{L}}$ where $\iota:\sqrt{2}L\longrightarrow\Lambda_{\mathrm{L}}$ is the emebedding map. Moreover, it is also true that $2\Lambda_{\mathrm{L}}\subset\sqrt{2}L^\iota$ and $\sqrt{2}L^\iota/2\Lambda_{\mathrm{L}}$ is a Lagrangian subspace  of $\Lambda_{\mathrm{L}}/2\Lambda_{\mathrm{L}}$. In particular, if we choose the Niemeier lattice $L$ to be the Leech lattice itself, then we have the following properties
\begin{equation}
    \begin{split}
         &(i)\quad \lambda^2\in 4\mathbb{N},\quad \lambda\in  \sqrt{2}\Lambda^\iota_{\mathrm{L}}\backslash\{0\}\\&(ii)\quad \lambda_1\cdot\lambda_2\in 2\mathbb{Z},\quad \lambda_i\in \sqrt{2}\Lambda^\iota_{\mathrm{L}}\\&(iii)\quad \lambda^2\neq 4\quad \forall~\lambda\in \sqrt{2}\Lambda^\iota_{\mathrm{L}}.
    \end{split}
    \label{eq:prop2E8}
\end{equation} 
The last property follows from the fact that Leech lattice has minimal norm 4. 
Since $\sqrt{2}\Lambda^\iota_{\mathrm{L}}/2\Lambda_{\mathrm{L}}$ is a Lagrangian subspace of $\Lambda_{\mathrm{L}}/2\Lambda_{\mathrm{L}}$, it is clear that the subgroup of 
$\Gamma(\Lambda_{\mathrm{L}})$: 
\footnote{Recall that we can choose a cocycle that is trivial on this subset}
\begin{equation}
    \widetilde{\Gamma}(\sqrt{2}\Lambda^\iota_{\mathrm{L}}):=\{e_{\lambda}:[\lambda]\in \sqrt{2}\Lambda^\iota_{\mathrm{L}}/2\Lambda_{\mathrm{L}}\}
\label{eq:heisliftsublatt}
\end{equation}
has order $2^{12}$. By \eqref{eq:heisliftsublatt} $\sqrt{2}\Lambda^\iota_{\mathrm{L}}$ is a superconformal sublattice. 
\end{proof}
\end{thm}
\begin{thm}
Let $\chi_{\mathrm{SC}}$ be the twisted ground state of Theorem \ref{thm:supcongenspin} for a choice of a superconformal sublattice $\Lambda_{\mathrm{SC}}$. Then $T_F(z):=2V(\chi_{\mathrm{SC}},z)$ is a superconformal current.
\begin{proof}
From \eqref{eq:opesimplified}, the OPE is 
\begin{equation}
    \widetilde{V}(\chi_{\mathrm{SC}},z)\widetilde{V}(\chi_{\mathrm{SC}},w)\sim\frac{1}{(z-w)^{3}}+\frac{1/8}{(z-w)}T_B(w).
\label{eq:}
\end{equation}
Any potential supercurrent for $\mathscr{H}_{SC}$ must have central charge  $\widehat{c}=16$. This is obvious since the  Monster VOA has central charge $c=24$. 
It is now clear that $T_F$ has the required OPE of a supercurrent with central charge $\widehat{c}=16$.
\end{proof} 
\end{thm}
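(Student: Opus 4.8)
The plan is to simply assemble the ingredients that have already been established and verify that the $\mathcal{N}=1$ superconformal OPE in eq.~\eqref{eq:opetbtf} holds with $T_F(z):=2\widetilde V(\chi_{\mathrm{SC}},z)$ and $\widehat c = 16$. First I would recall the three inputs. (a) From Theorem~\ref{thm:supcongenspin} applied to the chosen superconformal sublattice $\Lambda_{\mathrm{SC}}$ we have a twisted ground state $\chi_{\mathrm{SC}}$ with $\alpha(\chi_{\mathrm{SC}})=1$ and $\kappa_{\lambda}(\chi_{\mathrm{SC}})=0$ for every $\lambda\in\Lambda_{\mathrm{L}}$ of norm-squared $4$. (b) From the simplified twist-field OPE eq.~\eqref{eq:opesimplified}, substituting these values kills the entire $\sum_{\lambda^2=4}$ sum and leaves exactly
\begin{equation}
\widetilde V(\chi_{\mathrm{SC}},z)\widetilde V(\chi_{\mathrm{SC}},w)\sim\frac{1}{(z-w)^{3}}+\frac{1/8}{(z-w)}T_B(w).
\end{equation}
(c) From the earlier proposition that $\widetilde V(\chi,z)$ is a Virasoro primary of weight $3/2$ for any $\chi\in\mathcal S(\Lambda_{\mathrm{L}})$, we already have the $T_B\,T_F$ OPE in eq.~\eqref{eq:opetbtf}, and the $T_B\,T_B$ OPE is just the Virasoro OPE with $c=24$, i.e. $\widehat c=16$ since $3\widehat c/4 = c/2 = 12$.

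Next I would set $T_F(z)=2\widetilde V(\chi_{\mathrm{SC}},z)$ and read off the $T_F\,T_F$ OPE by multiplying the displayed OPE by $4$:
\begin{equation}
T_F(z)T_F(w)\sim\frac{4}{(z-w)^{3}}+\frac{1/2}{(z-w)}T_B(w).
\end{equation}
Comparing with the third line of eq.~\eqref{eq:opetbtf}, the $1/(z-w)$ term already matches with coefficient $1/2$, and the cubic pole coefficient is $4=\widehat c/4$ precisely when $\widehat c=16$, consistent with (c). So the normalization factor $2$ is exactly what is needed: a smaller state would give $\alpha(\chi)\neq 1$ by the scaling in Theorem~\ref{thm:supcongenspin}, and the overall factor of $2$ converts $\alpha(\chi)/8 = 1/8$ into the $1/2$ required by the algebra. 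I would also remark that $T_F$ lives in $\mathscr H_T^-(\Lambda_{\mathrm{L}})$ (since $\chi_{\mathrm{SC}}\in\mathcal S(\Lambda_{\mathrm{L}})$ and $\mathcal S$ is odd under $\theta$, cf. eq.~\eqref{eq:thetatwisteddef}), hence it is a genuine element of the SVOA $\mathscr H_{SC}$ of eq.~\eqref{eq:Hscdef}, and that its modes close into the $\mathcal N=1$ superconformal algebra eq.~\eqref{eq:NSsuperalg} by the standard contour-integral translation of the OPEs.

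There is essentially no obstacle left in this final theorem; all the hard work has been done in Theorem~\ref{thm:supcongenspin} and in deriving eq.~\eqref{eq:opesimplified}. The only thing to be careful about is bookkeeping: that $\widehat c = 16$ is forced (not chosen) because the ambient VOA has $c=24$ and $T_B$ is unchanged, so $3\widehat c/4 = 18$ would be wrong --- one must note $c = 3\widehat c /2$ gives $\widehat c = 16$ --- and that the conventions in eq.~\eqref{eq:NSsuperalg} use $\widehat c/8$ in the Virasoro term and $\widehat c/4$ in the cubic $G$-pole, both of which are satisfied by $\widehat c=16$, $c=24$. I would therefore write the proof as three short verifications: (i) cite eq.~\eqref{eq:opesimplified} and Theorem~\ref{thm:supcongenspin} to get the reduced $T_F\,T_F$ OPE; (ii) observe the $T_B\,T_B$ and $T_B\,T_F$ OPEs are already in hand with $\widehat c = 16$; (iii) match coefficients against eq.~\eqref{eq:opetbtf} to conclude $T_F$ is a superconformal current, making $\mathscr H_{SC}$ an $\mathcal N=1$ SCVOA with $\widehat c=16$.
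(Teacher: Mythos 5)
Your proof is correct and follows the same route as the paper's: use Theorem \ref{thm:supcongenspin} to kill the $\lambda^2=4$ sum in eq.~\eqref{eq:opesimplified}, rescale by $2$, and match the resulting $T_F\,T_F$ OPE against eq.~\eqref{eq:opetbtf}. You are somewhat more explicit than the published proof in checking the coefficient bookkeeping --- multiplying the OPE by $4$ to get $\frac{4}{(z-w)^3}+\frac{1/2}{z-w}T_B(w)$, then reading off $\widehat c/4=4$ and independently confirming $\widehat c = 2c/3 = 16$ from the Virasoro normalization in eq.~\eqref{eq:NSsuperalg} --- and you also verify the parity bookkeeping that places $\chi_{\mathrm{SC}}$ in $\mathscr{H}_T^-(\Lambda_{\mathrm{L}})$ via eq.~\eqref{eq:thetatwisteddef} with $r=24$; these are welcome explicit checks but do not change the argument.
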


%
\subsection{Classification Of $\mathcal{N}=1$ Structures Obtained From Codes}\label{sec:N>1susy}

Our construction of the supercurrent crucially depends on a choice of superconformal sublattice $\Lambda_{\text{SC}}$ of the Leech lattice.  
%
%
If there exists one such choice of superconformal sublattice $\Lambda_{\text{SC}}$, then for any $R\in\text{Co}_0$, where $\text{Co}_0$ is the Conway group, $R\cdot(\Lambda_{\text{SC}})$ is again a superconformal sublattice. 
\begin{defn}
Two supercurrents constructed from superconformal lattices are said to be \emph{equivalent} if the superconformal lattices are related by an automorphism of the Leech lattice or the codes associated to the two superconformal lattices are isomorphic.     
\end{defn}
This terminology is motivated by the fact that the $\mathcal{N}=1$ structures obtained from $\Lambda_{\text{SC}}$ and $R\cdot(\Lambda_{\text{SC}})$ (i.e., the SCVOA $\mathscr{H}_{SC}$ with superconformal states obtained from $\Lambda_{\text{SC}}$ and $R\cdot(\Lambda_{\text{SC}})$) are isomorphic (see Definition \ref{def:scvoa_hom}). Similarly $\mathcal{N}=1$ structures obtained from two isomorphic codes is isomorphic. We now explicitly demonstrate the isomorphism. 
\begin{prop}\label{prop:equiv_sup}
Let $\chi_{\mathrm{SC}},\chi'_{\mathrm{SC}}$ be two supercurrents obtained from either two isomorphic codes or from two Conway equivalent (i.e., related by a Conway rotation) superconformal sublattices. Then there exists vertex operators $V'(\psi,z)$ such that the \emph{SCVOA}s $(\mathscr{H}_{SC},V,|0\rangle,\psi_L,\chi_{\mathrm{SC}})$ and $(\mathscr{H}_{SC},V',|0\rangle,\psi_L,\chi'_{\mathrm{SC}})$ are isomorphic.    
\end{prop}
\begin{proof}
Let us first suppose that 
\begin{equation}
|\chi_{\text{SC}}\rangle=P|\chi_0\rangle,\quad   |\chi'_{\text{SC}}\rangle=P'|\chi_0\rangle~,\quad |\chi_0\rangle\in\mathcal{S}(\Lambda_{\mathrm{L}}),~\langle\chi_0|\chi_0\rangle=1~,  
\end{equation}
where 
\begin{equation}
    P:=\frac{1}{2^{12}}\sum_{w\in\mathcal{C}}\gamma_w~,\quad P':=\frac{1}{2^{12}}\sum_{w\in\mathcal{C}'}\gamma_w~,
\end{equation}
and $\mathcal{C},\mathcal{C}'$ are isomorphic $[24,12]$ codes such that $P,P'$ are rank 1 projectors. Let $S:\mathbb{F}_2^{24}\longrightarrow\mathbb{F}_2^{24}$ be a permutation such that $\mathcal{C}'=\mathcal{C}S$. Note that 
\begin{equation}
    \Gamma':=\{\pm\gamma_{wS}:w\in\mathbb{F}_2^{24}\}~,
\end{equation}
is defines another unitary irreducible representation of the Heisenberg extension of $\mathbb{Z}_2^{24}$ acting on $\mathcal{S}(\Lambda_{\mathrm{L}})$. By uniqueness of the representation, there exists a unitary matrix $\widehat{S}$ such that\footnote{Note that there is no $\pm 1$ factor because $S$ preserves the cocycle $\varepsilon(w,w')=\varepsilon(Sw,Sw')$.} 
\begin{equation}\label{eq:int_code_iso}
    \widehat{S}\gamma_w\widehat{S}^\dagger=\gamma_{wS}~,\quad w\in\mathbb{F}_2^{24}~.
\end{equation}
Now define the map $u_S:\mathscr{H}_{BB}\longrightarrow \mathscr{H}_{BB}$ by 
\begin{equation}
\begin{split}
u_{S} a_n^j u_{S}^{-1}&=a_n^j, \quad\quad  u_{S} c_k^j u_{S}^{-1}=c_k^j \\ u_{S}|\lambda\rangle &=|\lambda\rangle, \quad u_{S} |\chi\rangle=\widehat{S} |\chi\rangle~.
\end{split}
\end{equation}
Define $V'$ as in \eqref{eq:veropuntbb}, \eqref{eq:veroptbb} with $V'_T$ defined as in \eqref{eq:twistverop} with $e_\lambda$ replaced by $e_{(f^{-1}\circ S\circ f)([\lambda])}$. Then 
one can check that $u_S$ satisfies $u_SV(\psi,z)u_S^{-1}=V'(u_S\psi,z)$ for all $\psi\in \mathscr{H}_{BB}$ and preserves the vacuum and conformal vector. We now show that 
\begin{equation}
    u_S|\chi_{\text{SC}}\rangle=|\chi'_{\text{SC}}\rangle~.
\end{equation}
It is clear from \eqref{eq:int_code_iso} that $\widehat{S}P=P'\widehat{S}$. Thus we have 
\begin{equation}
 u_S|\chi_{\text{SC}}\rangle=\widehat{S}P|\chi_0\rangle=P'\widehat{S}|\chi_0\rangle=|\chi'_{\text{SC}}\rangle~.   
\end{equation}
Next suppose that $\chi_{\text{SC}}$ and $\chi'_{\text{SC}}=:\chi^R_{\text{SC}}$ be the superconformal vectors constructed from superconformal sublattices $\Lambda_{\text{SC}}$ and $R\cdot(\Lambda_{\text{SC}})$ respectively for some $R\in {\rm Co}_0$. Notice that $\{\pm\gamma_{R\lambda}:\lambda\in\Lambda_{{\rm L}}\}$ is another unitary representation of the Heisenberg group $\Gamma(\Lambda_{{\rm L}})$. Thus by uniqueness of unitary representation of the Heisenberg group, there exists a unitary matrix $S$ such that 
\begin{equation}
    S\gamma_\lambda S^\dagger=v_R(\lambda)\gamma_{R\lambda}~,
\end{equation}
for some $v_R(\lambda)=\pm 1$ which satisfies 
\begin{equation}\label{eq:vRlambepsilon}
    \frac{v_R(\lambda)v_R(\lambda')}{v_R(\lambda+\lambda')}\varepsilon(\lambda,\lambda')=\varepsilon(R\lambda,R\lambda')~.
\end{equation}
Define the map $u_R:\mathscr{H}_{BB}\longrightarrow \mathscr{H}_{BB}$ by 
\begin{equation}
\begin{split}
u_{R} a_n^\ell u_{R}^{-1}&=R^\ell_{~j} a_n^j, \quad\quad  u_{R} c_k^\ell u_{R}^{-1}=R^\ell_{~j} c_k^j \\ u_{R}|\lambda\rangle &=v_{R}(\lambda)|R \lambda\rangle, \quad u_{R} |\chi\rangle=S |\chi\rangle~.
\end{split}
\end{equation}
One can check that $u_R$ satisfies $u_RV(\psi,z)u_R^{-1}=V(u_R\psi,z)$ for all $\psi\in \mathscr{H}_{BB}$ and preserves the vacuum and conformal vector. We now show that 
\begin{equation}
    u_R|\chi_{\text{SC}}\rangle=|\chi^R_{\text{SC}}\rangle~.
\end{equation}
Observe that if $b(\lambda)=\pm 1$ trivializes the cocycle on $\widetilde{\Gamma}(\Lambda_{{\rm SC}})$ then $b(\lambda)v_R(\lambda)$ trivializes the cocycle on $\widetilde{\Gamma}(R\cdot (\Lambda_{{\rm SC}}))$. Indeed, using \eqref{eq:vRlambepsilon} we have 
\begin{equation}
\frac{b(\lambda)b(\lambda')}{b(\lambda+\lambda')}\frac{v_R(\lambda)v_R(\lambda')}{v_R(\lambda+\lambda')}=\varepsilon(\lambda,\lambda') \frac{v_R(\lambda)v_R(\lambda')}{v_R(\lambda+\lambda')}= \varepsilon(R\lambda,R\lambda')~.  
\end{equation}
Let $P_{{\rm SC}}$ and $P_{{\rm SC}}^R$ be the projection operators corresponding to the superconformal sublattices $\Lambda_{{\rm SC}}$ and $R\cdot(\Lambda_{{\rm SC}})$ respectively. Then we have 
\begin{equation}
\begin{split}
    SP_{{\rm SC}}S^\dagger=\frac{1}{2^{12}}\sum_{\lambda\in\Lambda_{{\rm SC}}/2\Lambda_{{\rm L}}}b(\lambda)S\gamma_\lambda S^\dagger &=\frac{1}{2^{12}}\sum_{\lambda\in\Lambda_{{\rm SC}}/2\Lambda_{{\rm L}}}b(\lambda)v_R(\lambda)\gamma_{R\lambda}=P_{{\rm SC}}^R~.
\end{split}
\end{equation}
Since $|\chi_{\text{SC}}\rangle\in {\rm Im}P_{{\rm SC}}$ we have 
\begin{equation}
    S|\chi_{\text{SC}}\rangle=SP_{{\rm SC}}|\chi_{\text{SC}}\rangle=P^R_{{\rm SC}}S|\chi_{\text{SC}}\rangle=|\chi^R_{\text{SC}}\rangle~,
\end{equation}
where we used the fact that $P^R_{{\rm SC}}$ has rank 1. 
\end{proof}

In the previous section we used the existence of an embedding of $\sqrt{2}\Lambda_{{\rm L}}$ into $\Lambda_{{\rm L}}$.  
In fact, there can be different embeddings not related by the action of the Conway group. 
In our language, these give inequivalent supercurrents as long as the resulting codes are also non-isomorphic. 
We will now show that there can be at most 9 inequivalent supercurrents constructed using embeddings $\iota: \sqrt{2}\Lambda_{{\rm L}} \hookrightarrow \Lambda_{{\rm L}}$. Moreover we construct two such inequivalent supercurrents explicitly. Since there are several supercurrents in the Beauty and the Beast SCFT, a natural question to ask is whether supersymmetry can be enhanced to $\mathcal{N}>1$. At the end of this section we show that there cannot be $\mathcal{N}>1$ supersymmetry in the Beauty and the Beast SCFT. 
%
\\\\ 
Suppose that we can find another sublattice $\Lambda'$ of the Leech lattice with the properties listed in \eqref{eq:prop2E8}, then we can construct another supercurrent using the code $f(\Lambda'/2\Lambda_{\mathrm{L}})$ and the associated projection operator. Our construction is based on the isometric embedding of $\sqrt{2}\Lambda_{\mathrm{L}}$ into $\Lambda_{\mathrm{L}}$ which provides such a sublattice. Now there are many inequivalent embeddings 
of $\sqrt{2}\Lambda_{\mathrm{L}}$ into $\Lambda_{\mathrm{L}}$. We now want to classify all the supercurrents that can be constructed using these embeddings. As before, let us denote the image $\iota(\sqrt{2}\Lambda_{\mathrm{L}})$ of an embedding $\iota:\sqrt{2}\Lambda_{\mathrm{L}}\longrightarrow\Lambda_{\mathrm{L}}$ by $\sqrt{2}\Lambda_{\mathrm{L}}^{\iota}$.  We need the following proposition. 
\begin{prop}
Let $\iota:\sqrt{2}\Lambda_{\mathrm{L}}\longrightarrow\Lambda_{\mathrm{L}}$ be an embedding and $\{[\alpha_i]\}_{i=1}^{24}$ be the basis of $\Lambda_{\mathrm{L}}/2\Lambda_{\mathrm{L}}$ chosen in \eqref{eq:basisalphidotprod}. Suppose further that 
\begin{equation}\label{eq:sumalphiselfdualcodecond}
    \sum_{i=1}^{24}[\alpha_i]\in \sqrt{2}\Lambda_{\mathrm{L}}^{\iota}/2\Lambda_{\mathrm{L}}~.
\end{equation}
Then the code 
\begin{equation}
    \mathcal{C}_{12}^{\iota}:=\{w\in\mathbb{F}_2^{24}: \gamma_w=\gamma_{\lambda}=\rho(e_\lambda)~\text{for some}~e_{\lambda}\in\widetilde{\Gamma}(\sqrt{2}\Lambda_{\mathrm{L}}^{\iota})\}
    \label{eq:ciota12}
\end{equation}
where $\widetilde{\Gamma}(\sqrt{2}\Lambda_{\mathrm{L}}^{\iota})$ is defined analogous to \eqref{eq:heisliftsublatt}, is a self-dual, doubly even code. 
\label{prop:sddecode}
\begin{proof}
%

Let $\{\alpha_i\}_{i=1}^{24}$ be a set of Leech vectors  representing the basis cosets $\{[\alpha_i]\}_{i=1}^{24}$ in the hypothesis. Then we have 
\begin{equation}
    \alpha_i \cdot \alpha_j = (1-\delta_{ij})\bmod 4~.
\end{equation} 
Then condition \eqref{eq:sumalphiselfdualcodecond} on an embedding implies that 
%
\begin{equation}
    f\left(\sum_{i=1}^{24}[\alpha_i]\right) =(11\dots 1)\in\mathbb{F}_2^{24}.
\end{equation}
This implies that $(11\dots 1)\in\mathcal{C}_{12}^{\iota}$. Next note that for any  $w,w'\in\mathbb{F}_2^{24}$ \cite{Moore:2022}, 
\begin{equation}
    \gamma_w\gamma_{w'}=(-1)^{\text{wt}(w)\text{wt}(w')+w\cdot w'}\gamma_{w'}\gamma_{w}.
    \label{eq:commdiracgamma}
\end{equation}
Now since $\widetilde{\Gamma}(\sqrt{2}\Lambda_{\mathrm{L}}^{\iota})$ is abelian, using \eqref{eq:commdiracgamma} with $w'=(11\dots 1)$, we conclude that wt$(w)$ is even for every $w\in\mathcal{C}_{12}^{\iota}$. 
Since $\gamma_w^2=\mathds{1}$ for every $w\in\mathcal{C}_{12}^{\iota}$ and 
\begin{equation}
    \gamma_w^2=(-1)^{\frac{\text{wt}(w)(\text{wt}(w)-1)}{4}}\mathds{1},
\end{equation}
we must have wt$(w)=4k$ or $4k+1$ for $k$ a non-negative integer. But since wt$(w)$ is even for every $w\in\mathcal{C}_{12}^{\iota}$, we see that the code $\mathcal{C}_{12}^{\iota}$ is doubly even. The fact that it is self-dual follows by the fact that $(11\dots 1)\in\mathcal{C}_{12}^{\iota}.$ 
\end{proof}
\end{prop}

We now show that for any  
embedding $\iota:\sqrt{2}\Lambda_{\mathrm{L}}\longrightarrow\Lambda_{\mathrm{L}}$, 
there is an equivalent embedding $R\circ\iota$, for some suitable $R\in\text{Co}_0$,
such that the hypotheses in Proposition \ref{prop:sddecode} are satisfied for 
$R\circ\iota$.
 
We start by constructing 24 Leech lattice vectors $\{\alpha_i\}_{i=1}^{24}$ such that $\alpha_i\cdot\alpha_j=3\delta_{ij}+1$. (Reference \cite{Basis:leech} contains some useful information for this construction.)  Recall that the Mathieu group $M_{24}$ acts on $\Lambda_{\mathrm{L}}$ by permuting the coordinates. In particular, $M_{23}$ acts by permuting the first 23 coordinates. Now start with the norm-squared 4 vector \cite{conway:1999} 
\begin{equation}
\begin{split}
\alpha_1=\frac{1}{\sqrt{8}}\left(\begin{array}{l}-1,  1,  1,  1,  1,  -1,  1,  -1,  1,  1,  -1,  -1,  1,  1,  -1,  -1,  1,  -1,  1,  -1,  -1,  -1,  -1,  -3\end{array}\right).    
\end{split} 
\label{eq:basis1}
\end{equation}
Obtain 22 more vectors $\alpha_2,\dots,\alpha_{23}$ by acting on this vector by the order 23 permutation $(12\dots 23)\in M_{23}\hookrightarrow S_{23}\hookrightarrow S_{24}$, where 
$S_{23}\hookrightarrow S_{24}$ is the embedding by the action on the first $23$ coordinates.  
 Notice that in these 23 vectors, the $+1$s and $-3$ is supported on dodecads which intersect at 6 places pairwise. This will be very important.  Choose the last vector to be another norm squared 4 vector \cite{conway:1999} 
\begin{equation}
 \begin{split}
&\alpha_{24}=\frac{1}{\sqrt{8}}  \left(\begin{array}{l}1 , 1 , 1 , 1 , 1 , 1 , 1 , 1 , 1 , 1 , 1 , 1 , 1 , 1 , 1 , 1 , 1 , 1 , 1 , 1 , 1 , 1 , 1 , -3\end{array}\right).
\end{split} 
\label{eq:basis24}
\end{equation}
It is easily checked that these 24 vectors $\alpha_i$ have mutual inner product 1 and hence are inequivalent modulo $2\Lambda_{\mathrm{L}}$. This is because if $[\alpha_i]=[\alpha_j]$ then $\alpha_i=\alpha_j+2\lambda$ for some $\lambda\in\Lambda_{\mathrm{L}}$ and $\alpha_i\cdot\alpha_j\equiv 0\bmod~2$ contradicting the fact that $\alpha_i\cdot\alpha_j=1$. Thus we conclude that $\{[\alpha_i]\}$ with $\alpha_i$ constructed above is a basis for $\Lambda_{\mathrm{L}}/2\Lambda_{\mathrm{L}}$. 
\footnote{We remark that the $\alpha_i$ do \underline{not} form a basis for the Leech lattice. Rather, they span a sublattice of index $3^{13}$.}
Alternatively, the Gram matrix for the 24 classes $[\alpha_i]$ has 1 on off diagonal and 0 on diagonal and hence is invertible implying that $[\alpha_i]$ is a basis for $\Lambda_{\mathrm{L}}/2\Lambda_{\mathrm{L}}$. Next, note that the norm squared of each of the 24 vectors $\alpha_i$ is 4 and hence any representative of $[\alpha_i]$ will also have norm-squared divisible by 4. Finally it is easily checked that
\begin{equation}
    \sum_{i=1}^{24}[\alpha_i]=\left[\frac{1}{\sqrt{8}}(0,0,\dots,-72)\right]=\left[\frac{1}{\sqrt{8}}(0,0,\dots,-8)\right].
\end{equation}
Note that $\frac{1}{\sqrt{8}}(0,0,\dots,-8)$ is a norm-squared 8 vector.

Now consider an arbitrary  
embedding $\iota:\sqrt{2}\Lambda_{\mathrm{L}}\longrightarrow\Lambda_{\mathrm{L}}$. 
Of course $\sqrt{2}\Lambda_{\mathrm{L}}^{\iota}$ contains norm squared 8 vectors. 
Choose one. 
Since the Conway group Co$_0$ is transitive on norm-squared 8 vectors \cite[Theorem 29, Chapter 10]{conway:1999}   we can postcompose the embedding map $\iota$ by an automorphism of $\Lambda_{\mathrm{L}}$ to obtain an equivalent embedding containing the vector $\frac{1}{\sqrt{8}}(0~~0\dots-8)$. This means that $[\frac{1}{\sqrt{8}}(0~~0\dots-8)]\in\widetilde{\Gamma}(\sqrt{2}\Lambda_{\mathrm{L}}^{R\circ \iota})$ and we have satisfied the key condition \eqref{eq:sumalphiselfdualcodecond} of Proposition \ref{prop:sddecode}. 

This proves the following proposition.

\begin{prop}\label{prop:selddeequivemb}
Every equivalence class of embeddings\footnote{Two embeddings $\iota_1,\iota_2$ are equivalent if there exists $R\in {\rm Co}_0$ such that $\iota_2=R\circ\iota_1$.} contains a representative $\iota:\sqrt{2}\Lambda_{\mathrm{L}}\longrightarrow\Lambda_{\mathrm{L}}$ such that $\mathcal{C}^\iota_{12}$ is a self-dual doubly even $[24,12]$ code.
\end{prop}
\begin{thm}
Embeddings of $\sqrt{2}\Lambda_{\mathrm{L}}$ into $\Lambda_{\mathrm{L}}$ give rise to at most 9 inequivalent supercurrents in the ``Beauty and the Beast'' SCFT $\mathscr{H}_{BB}$.
\begin{proof}
By Proposition \ref{prop:selddeequivemb}, the code generating the supercurrents obtained from equivalence classes of embeddings is self-dual and doubly even.  There are only 9 such codes up to equivalence \cite{PLESS1975313}. Moreover, by Porposition \ref{prop:equiv_sup}, equivalent codes give rise to equivalent supercurrents. Hence the supercurrents constructed from such embeddings can only be one of the 9 constructed from the 9 selfdual, doubly even binary linear codes. 
\end{proof}
\end{thm}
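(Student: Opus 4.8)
The plan is to combine three ingredients that are almost entirely in place by the time this statement is reached: (i) the classification of superconformal sublattices obtained as images of embeddings $\iota:\sqrt{2}\Lambda_{\mathrm{L}}\hookrightarrow\Lambda_{\mathrm{L}}$, (ii) the fact, proved in Proposition~\ref{prop:selddeequivemb}, that each equivalence class of such embeddings produces a well-defined self-dual doubly even $[24,12]$ code, and (iii) the classical fact that there are exactly $9$ self-dual doubly even binary codes of length $24$ up to equivalence \cite{PLESS1975313}. First I would recall that by Theorem~\ref{thm:supcongenspin} every superconformal sublattice $\Lambda_{\mathrm{SC}}$ yields a twisted ground state $\chi_{\mathrm{SC}}\in\mathrm{Im}\,P_{\mathrm{SC}}$ with $\alpha(\chi_{\mathrm{SC}})=1$ and $\kappa_\lambda(\chi_{\mathrm{SC}})=0$ for all norm-squared-$4$ vectors $\lambda$, and hence, by the final theorem of Subsection~\ref{subsec:ConstructTF}, that $T_F(z):=2\widetilde V(\chi_{\mathrm{SC}},z)$ is a supercurrent with $\widehat c=16$. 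So the counting of inequivalent supercurrents coming from embeddings reduces to counting the equivalence classes of codes $\mathcal{C}_{12}^{\iota}$ that can arise.

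Next I would assemble the key reduction: given an \emph{arbitrary} embedding $\iota$, the discussion preceding Proposition~\ref{prop:selddeequivemb} shows that after postcomposing with a suitable $R\in\mathrm{Co}_0$ (using transitivity of $\mathrm{Co}_0$ on norm-squared-$8$ vectors, \cite[Theorem 29, Chapter 10]{conway:1999}) we may assume $\frac{1}{\sqrt 8}(0,\dots,0,-8)\in\sqrt{2}\Lambda_{\mathrm{L}}^{R\circ\iota}$, which is exactly hypothesis~(2) of Proposition~\ref{prop:sddecode} for the explicit basis $\{[\alpha_i]\}$ constructed from eq.~\eqref{eq:basis1} and eq.~\eqref{eq:basis24}; hypothesis~(1) holds by construction since each $\alpha_i$ has norm-squared $4$. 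Proposition~\ref{prop:sddecode} then gives that $\mathcal{C}_{12}^{R\circ\iota}$ is self-dual and doubly even, and Proposition~\ref{prop:selddeequivemb} guarantees $\mathcal{C}_{12}^{\iota}\cong\mathcal{C}_{12}^{R\circ\iota}$ as codes. Therefore every embedding-derived supercurrent is, up to equivalence, one arising from one of the finitely many self-dual doubly even $[24,12]$ codes. Invoking the Pless--Sloane classification \cite{PLESS1975313}, there are precisely $9$ such codes up to permutation equivalence, so there are at most $9$ equivalence classes of embedding-derived superconformal sublattices, hence at most $9$ inequivalent supercurrents of this type.

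Finally I would note the mutual-locality claim: all these supercurrents live inside $\mathscr{H}_{SC}=\mathscr{H}^+(\Lambda_{\mathrm{L}})\oplus\mathscr{H}^-_T(\Lambda_{\mathrm{L}})$, which is an SVOA by Theorem~5.1 (the statement labelled by eq.~\eqref{eq:Hscdef}), and any two twist-field vertex operators $\widetilde V(\chi,z),\widetilde V(\chi',w)$ with $\chi,\chi'\in\mathcal{S}(\Lambda_{\mathrm{L}})\subset\mathscr{H}^-_T(\Lambda_{\mathrm{L}})$ satisfy the (fermionic) locality relation established in the proof of that theorem, so in particular they are mutually local; this is already implicit in the discussion following eq.~\eqref{eq:opexaxb} where it was shown that $\overline W_n(\chi_a)\chi_b=0$ for $n\in\mathbb{Z}$, i.e.\ the twist-twist OPE has only integer-power singularities. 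The main obstacle — really the only non-formal point — is making sure the reduction in the second paragraph is airtight: one must check that postcomposing $\iota$ with $R\in\mathrm{Co}_0$ genuinely produces an \emph{equivalent} supercurrent (not merely an equivalent code), which follows because $R$ is a VOA automorphism of the Leech torus theory that intertwines $P_{\mathrm{SC}}$ with $P_{R(\mathrm{SC})}$ and hence maps $\chi_{\mathrm{SC}}$ to (a scalar multiple of) $\chi_{R(\mathrm{SC})}$ up to the normalization fixing $\alpha=1$; and that the chosen basis $\{[\alpha_i]\}$ is legitimately a basis of $\Lambda_{\mathrm{L}}/2\Lambda_{\mathrm{L}}$ (verified via the invertible Gram matrix with $0$ on the diagonal and $1$ off-diagonal over $\mathbb{F}_2$). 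Everything else is bookkeeping on top of results already proved above.
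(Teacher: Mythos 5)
Your proposal is correct and follows essentially the same approach as the paper: reduce to the classification of self-dual doubly even $[24,12]$ codes via Proposition~\ref{prop:selddeequivemb} and the reduction to Proposition~\ref{prop:sddecode} after postcomposing with a suitable $R\in\mathrm{Co}_0$, then invoke the Pless count of $9$. You additionally spell out the mutual-locality claim (via the twist-twist OPE having only integer singularities) and flag that postcomposing with $R$ yields an \emph{equivalent} supercurrent by the paper's definition (sublattices related by a Leech automorphism); these are worth making explicit, but the logical skeleton is the paper's.
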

We now explore some explicit examples of the codes in above theorem. We pick the following representation of the Heisenberg group: for the basis $[\alpha_i]$ of $\Lambda_{\mathrm{L}}/2\Lambda_{\mathrm{L}}$ constructed in \eqref{eq:basis1} and \eqref{eq:basis24}, assign 
\begin{equation}
    \rho(e_{\alpha_i})=\gamma_{\alpha_i}=\gamma_i,\quad i=1,\dots,24
    \label{eq:specrepheisdiracgammamat}
\end{equation}
where $\gamma_i$ are the standard Dirac gamma matrices. This is well defined because $\alpha_i\cdot\alpha_j=3\delta_{ij}+1$ and hence is compatible with $\{\gamma_i,\gamma_j\}=2\delta_{ij}\mathds{1}$. For any other $\lambda$, we first write 
\begin{equation}
    [\lambda]=\sum_{i=1}^{24}x_i[\alpha_i],\quad x_i\in\{0,1\}
\end{equation}
and then we have 
\begin{equation}
    \gamma_{\lambda}=\pm\gamma_1^{x_1}\cdots\gamma_{24}^{x_{24}}
\end{equation}
where the choice of sign is fixed by the cocycle $\varepsilon$ of the Heisenberg extension. 
\begin{prop}
With the above representation of the Heisenberg group $\Gamma(\Lambda_{\mathrm{L}})$, if $\emph{wt}(w)=4$ with $\gamma_w=\gamma_{\lambda}$ for $\lambda\in\sqrt{2}\Lambda_{\mathrm{L}}^{\iota}$ for any embedding $\iota:\sqrt{2}\Lambda_{\mathrm{L}}\longrightarrow\Lambda_{\mathrm{L}}$ then there is a representative $\lambda$ of $[\lambda]$ with norm-squared 12 and of the shape $\frac{1}{\sqrt{8}}(\pm 2^{8},\pm 4^{4},0^{12}),\frac{1}{\sqrt{8}}(\pm 2^{12},\pm 4^{3},0^{9})$ or $\frac{1}{\sqrt{8}}(\pm 2^{16},\pm 4^{2},0^{6})$.
\label{prop:wt4codewords}
\begin{proof}
Observe that for $\text{wt}(w)=4$, if $\gamma_{\lambda}=\gamma_w$ then there is a $\lambda\in[\lambda]$ such that 
\begin{equation}
    \lambda=\alpha_{i_1}+\alpha_{i_2}+\alpha_{i_3}+\alpha_{i_4},
\end{equation}
for some $i_1,\dots,i_4\in\{1,\dots,24\}.$ 
Notice that since the dodecads on which the $-1$s of $\alpha_i,i=1,\dots,23$ are supported intersect at 6 places pairwise, for any $1\leq i_1,i_2\leq 23$, $\alpha_{i_1}+\alpha_{i_2}$ has the shape $8^{-1/2}(2^{5},-2^{6},0^{12},-6)$ with the nonzero coordinates supported on a dodecad. Moreover, for any $\alpha_i+\alpha_{24}$ for $i=1,\dots 23$ has the shape $8^{-1/2}(2^{11},0^{12},-6)$. This moreover implies that $\lambda$ has the shape  $8^{-1/2}(\pm 2^{k_1},\pm 4^{k_2},0^{23-k_1-k_2},-12)$. Next, using the fact that $\alpha_i\cdot\alpha_j=3\delta_{ij}+1$, we see that $\lambda^2=28$. Thus the norm squared of $8^{-1/2}(\pm 2^{k_1},\pm 4^{k_2},0^{12},-12)$ being 28 implies that 
\begin{equation}
    k_1+4k_2=20.
\end{equation}
The integral solutions to this equation are $(k_1,k_2)=(20,0),(16,1),(12,2),(8,3),(4,4),(0,5)$.  
Moreover, the only possible choices of $k_1$ are $8,12,16$ in which case $k_2=3,2,1$. Other choices can be ruled out using the fact that $\pm 4$ appears only where the dodecads in $\alpha_{i_1}+\alpha_{i_2}$ and $\alpha_{i_3}+\alpha_{i_4}$ intersect and the fact that dodecads can only intersect in 0,4,6, and 8 places. Adding $8^{-1/2}(0^{23},16)$ to $\lambda$ does not change the class, so we have another representative of $[\lambda]$ of the shape $8^{-1/2}(\pm 2^{k_1},\pm 4^{k_2+1},0^{23-k_1-k_2})$. Thus $[\lambda]$ can have representative of the shape $8^{-1/2}(\pm 2^{8},\pm 4^{4},0^{12}), 8^{-1/2}(\pm 2^{12},\pm 4^{3},0^{9})$ or $8^{-1/2}(\pm 2^{16},\pm 4^{2},0^{6})$.  
\end{proof}
\end{prop}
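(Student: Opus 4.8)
The plan is to reduce the statement to a concrete calculation inside the Leech lattice, identify a natural representative of $[\lambda]$, read off its coordinate profile, and then prune the possibilities using a short combinatorial argument about Golay dodecads.

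First I would unwind the hypothesis. In the representation fixed in eq.~\eqref{eq:specrepheisdiracgammamat} one has $\gamma_{[\alpha_i]}=\gamma_i$, so if $\mathrm{wt}(w)=4$ with support $\{i_1,i_2,i_3,i_4\}$ and $\gamma_w=\gamma_{[\lambda]}$, then $\gamma_w=\gamma_{i_1}\gamma_{i_2}\gamma_{i_3}\gamma_{i_4}$ forces $[\lambda]=[\alpha_{i_1}+\alpha_{i_2}+\alpha_{i_3}+\alpha_{i_4}]$. Thus $\lambda_0:=\alpha_{i_1}+\alpha_{i_2}+\alpha_{i_3}+\alpha_{i_4}$ is a representative, and from $\alpha_i\cdot\alpha_j=3\delta_{ij}+1$ we get $\lambda_0^2=4\cdot4+2\binom{4}{2}=28$.

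Next I would determine the coordinate shape of $\lambda_0$. The inputs are that $\alpha_{24}=\tfrac{1}{\sqrt8}(1^{23},-3)$, and that for $i\le23$ the vector $\alpha_i$ has entries $\pm1$ in coordinates $1,\dots,23$ and $-3$ in coordinate $24$, with the $-1$-positions forming a dodecad and these $23$ dodecads meeting pairwise in exactly $6$ coordinates (the $M_{23}$-orbit property used to build the basis). From this, $\alpha_a+\alpha_b$ has shape $\tfrac{1}{\sqrt8}(2^5,(-2)^6,0^{12},-6)$ when $a,b\le23$ and $\tfrac{1}{\sqrt8}(2^{11},0^{12},-6)$ when $b=24$; in both cases its support is a $12$-element Golay dodecad containing coordinate $24$. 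Writing $\lambda_0=(\alpha_{i_1}+\alpha_{i_2})+(\alpha_{i_3}+\alpha_{i_4})$ as a sum of two such dodecad vectors, coordinate $24$ equals $-6-6=-12$, and every coordinate among $1,\dots,23$ lies in $\{0,\pm2,\pm4\}$: a $\pm4$ occurs only on the overlap of the two dodecads (where the $\pm2$'s reinforce) and a $\pm2$ only on their symmetric difference. Hence $\lambda_0=\tfrac{1}{\sqrt8}((\pm2)^{k_1},(\pm4)^{k_2},0^{23-k_1-k_2},-12)$, and $\lambda_0^2=28$ gives $4k_1+16k_2+144=224$, i.e.\ $k_1+4k_2=20$.

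The remaining task is to cut $k_1+4k_2=20$ down to $(k_1,k_2)\in\{(16,1),(12,2),(8,3)\}$. Here I would use that two Golay dodecads intersect in $0,4,6,8$ or $12$ coordinates: with $c$ the size of the overlap, $k_1$ equals the size of the symmetric difference $24-2c$ while $k_2$ is a subset of the $c-1$ overlap coordinates away from coordinate $24$, which combined with $k_1+4k_2=20$ forces $c$ even and $k_2=(c-2)/2$. The honest main obstacle is excluding $c=12$, which would give the spurious pair $(0,5)$: this requires ruling out that two distinct index-pairs produce the same dodecad support, or an equivalent combinatorial fact about the $23$ dodecads, and is the delicate point of the argument; everything else is bookkeeping. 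Finally, since $\tfrac{1}{\sqrt8}(0^{23},8)\in\Lambda_{\mathrm L}$ (e.g.\ $\tfrac{1}{\sqrt8}(0^{22},4,4)+\tfrac{1}{\sqrt8}(0^{22},4,-4)$), the vector $\tfrac{1}{\sqrt8}(0^{23},16)$ lies in $2\Lambda_{\mathrm L}$; adding it replaces the last coordinate $-12$ of $\lambda_0$ by $+4$ without changing $[\lambda]$, yielding $\lambda'=\tfrac{1}{\sqrt8}((\pm2)^{k_1},(\pm4)^{k_2+1},0^{23-k_1-k_2})$ with $(\lambda')^2=(4k_1+16k_2+16)/8=12$; substituting the three admissible $(k_1,k_2)$ produces exactly the shapes $\tfrac{1}{\sqrt8}(\pm2^8,\pm4^4,0^{12})$, $\tfrac{1}{\sqrt8}(\pm2^{12},\pm4^3,0^9)$, $\tfrac{1}{\sqrt8}(\pm2^{16},\pm4^2,0^6)$, as claimed.
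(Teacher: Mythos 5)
Your route is the same as the paper's, and in places you are actually more precise. You derive $k_1 = 24-2c$ (from the support/symmetric-difference analysis, with $c$ the overlap size of the two dodecad supports) together with $k_1 + 4k_2 = 20$ (from $\lambda_0^2=28$), so that $k_2 = (c-2)/2$ is forced; the paper instead lists all six non-negative integer solutions of $k_1 + 4k_2 = 20$ and then dismisses $(20,0),(4,4),(0,5)$ by appealing to dodecad intersection sizes being $0,4,6,8$, without making the $c \leftrightarrow k_1$ link explicit. You then honestly flag the $c=12$ possibility, which would give $(k_1,k_2)=(0,5)$ and a fourth shape $\frac{1}{\sqrt{8}}(\pm 4^6, 0^{18})$ absent from the proposition's list. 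That is indeed the soft spot, and the paper's own proof has exactly the same gap: the appeal to ``dodecads intersect in $0,4,6,8$ places'' implicitly assumes the two dodecad supports are distinct, i.e.\ that $\lambda_0$ has at least one $\pm 2$ coordinate, equivalently that $\chi_{D_{i_1}^-}+\chi_{D_{i_2}^-}+\chi_{D_{i_3}^-}+\chi_{D_{i_4}^-}\neq 0$ in $\mathcal{G}$ for distinct indices — and neither your argument nor the paper's establishes this. (If it failed, the subsequent Mathematica search in the following proposition, which only ranges over the three listed shapes, would also be missing a case.) So the gap you located is genuine and shared with the published argument; aside from that your proposal is correct and follows the paper's proof closely.

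One small slip: $\frac{1}{\sqrt{8}}(0^{22},4,4)+\frac{1}{\sqrt{8}}(0^{22},4,-4)=\frac{1}{\sqrt{8}}(0^{22},8,0)$, not $\frac{1}{\sqrt{8}}(0^{23},8)$. Use $\frac{1}{\sqrt{8}}(0^{22},-4,4)$ as the second summand, or simply observe that $\frac{1}{\sqrt{8}}(0^{23},8)$ is a coordinate permutation of the first row of the generator matrix displayed in the appendix, so it lies in $\Lambda_{\mathrm{L}}$ and hence $\frac{1}{\sqrt{8}}(0^{23},16)\in 2\Lambda_{\mathrm{L}}$ as needed.
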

Using the construction of \cite{10.1006/eujc.1999.0360}, we will now explicitly construct two embeddings of $\sqrt{2}\Lambda_{\mathrm{L}}$ into $\Lambda_{\mathrm{L}}$ which will give rise to inequivalent codes and hence inequivalent supercurrents. \\\\
To construct an embedding, we start with a $\mathbb{Z}_4$-code $\mathscr{C}$ which is generated by codewords $g\in\mathbb{Z}_4^{24}$ such that the following holds: for each generator $g$ of $\mathscr{C}$, there is a vector $\widetilde{g}\in\mathbb{Z}^{24}$ such that $\frac{\widetilde{g}}{\sqrt{2}}\in\Lambda_{\mathrm{L}}$ and $g\equiv \widetilde{g}(\bmod~4)$. In simple terms $\frac{g}{\sqrt{2}}\in\Lambda_{\mathrm{L}}$ if $\frac{g}{\sqrt{2}}$ is considered as an element of $\mathbb{R}^{24}$ under the map $\mathbb{Z}_4\to\mathbb{R}$ which sends $0\to 0,1\to 1,2\to 2$ and $3\to 3$. Pseudo Golay codes and the Harada-Kitazume codes are examples of such $\mathbb{Z}_4$-codes, described in Appendix \ref{app:Z4codes}.
A typical element of the associated lattice $A(\mathscr{C})$ (see Theorem \ref{thm:latticez4code}) is of the form 
\begin{equation}
    \frac{g}{2}+\frac{4x}{2},\quad x\in\mathbb{Z}^{24}.
\end{equation}
This implies that 
\begin{equation}\label{eq:sq2lamintoleech}
    \sqrt{2}\left(\frac{g}{2}+\frac{4x}{2}\right)=\frac{g}{\sqrt{2}}+\frac{8x}{2\sqrt{2}}\in\Lambda_{\mathrm{L}}
\end{equation}
since $\frac{g}{\sqrt{2}}\in\Lambda_{\mathrm{L}}$ and $\frac{8x}{2\sqrt{2}}\in\Lambda_{\mathrm{L}}$ for any $x\in\mathbb{Z}^{24}$. 
Hence $\sqrt{2}A(\mathscr{C})\subseteq\Lambda_{\mathrm{L}}$ which gives an explicit embedding if we choose $\mathscr{C}$ such that $A(\mathscr{C})\cong\Lambda_{\mathrm{L}}$. To explicitly construct the embedding, recall that $\sqrt{2}\Lambda_{\mathrm{L}}=(\Lambda_{\mathrm{L}},2\langle\cdot,\cdot\rangle)$ where $\langle\cdot,\cdot\rangle$ is the Euclidean inner product in $\mathbb{R}^{24}$. This means that $\sqrt{2}\Lambda_{\mathrm{L}}\cong \sqrt{2}A(\mathscr{C})$ isometrically. The inclusion map $i:\sqrt{2}A(\mathscr{C})\longrightarrow\Lambda_{\mathrm{L}}$ then provides the embedding $\iota_{\mathscr{C}}:\sqrt{2}\Lambda_{\mathrm{L}}\longrightarrow\Lambda_{\mathrm{L}}$. 
\subsubsection{Duncan's Spinor Defines A Supercurrent In The BB Module}\label{subsec:duncspinor}
Recall that pseudo Golay codes $\mathscr{C}$ are $\mathbb{Z}_4$ codes such that Res$(\mathscr{C})=\mathcal{G}$. From \cite[Theorem 9]{RAINS1999215}, we see that every pseudo Golay code is Type II with minimum Euclidean distance $d_E(\mathscr{C})=16$. Thus by Theorem \ref{thm:latticez4code} and the uniqueness of Leech lattice, we see that $A(\mathscr{C})\cong\Lambda_{\mathrm{L}}$. We now show that the generators $g$ of the code satisfy $\frac{g}{\sqrt{2}}\in\Lambda_{\mathrm{L}}$. Indeed since $g ~(\bmod~2)\in\mathcal{G}$, the positions of $1$ and $3$ in $g$ must be an octad in the Golay code and there are even number of 2s in $g$. So the shape of $\frac{g}{\sqrt{2}}$ is
\begin{equation}
\begin{split}
    \frac{g}{\sqrt{2}}=\frac{2g}{2\sqrt{2}}&=\frac{1}{\sqrt{8}}(2^{k},6^{8-k},4^{2\ell},0^{16-2\ell})\\&=\frac{1}{\sqrt{8}}(\underbrace{2^{k},-2^{8-k}}_{\text{octad}},0^{16})+\frac{1}{\sqrt{8}}(8^{8-k},4^{2\ell},0^{16+k-2\ell}).
\end{split}
\end{equation}
Moreover $k$ is even for all the 13 pseudo Golay codes \cite{RAINS1999215}. Clearly from the list of vectors in the Leech lattice given in \cite[Table 4.13, Chapter 4]{conway:1999}, we see that $\frac{g}{\sqrt{2}}\in\Lambda_{\mathrm{L}}$.
\begin{prop}
Let $\iota_i:\sqrt{2}A(\mathscr{C}_i)\hookrightarrow\Lambda_{\mathrm{L}},~i=1,\dots,13$ be the embedding of $\sqrt{2}\Lambda_{\mathrm{L}}$ into $\Lambda_{\mathrm{L}}$ obtained from the 13 pseudo Golay codes $\mathscr{C}_i$. Then  
\begin{equation}
    \mathcal{C}_{12}^{\iota_i}\cong\mathcal{G},\quad i=1,\dots,13
\end{equation}
where $\mathcal{C}_{12}^{\iota_i}$ is defined as in \eqref{eq:ciota12}.
\label{prop:pseudogol}
\begin{proof}
By Proposition \ref{prop:sddecode}, the code $\mathcal{C}_{12}^{\iota_i}$ is a self-dual, doubly even binary $[24,12]$ code. It suffices to prove that for every $w\in\mathcal{C}_{12}^{\iota_i}$, wt$(w)\neq 4$. We prove this computationally using Mathematica. In view of Proposition \ref{prop:wt4codewords}, it suffices to prove that the ${24\choose 4}=10626$ vectors of the shape $\frac{1}{\sqrt{8}}(\pm 2^{8},\pm 4^{4},0^{12}),\frac{1}{\sqrt{8}}(\pm 2^{12},\pm 4^{3},0^{9})$ and $\frac{1}{\sqrt{8}}(\pm 2^{16},\pm 4^{2},0^{6})$ does not belong to $\sqrt{2}A(\mathscr{C}_i)$. This is equivalent to proving that the 10626 codewords of the form \footnote{note that $-1,-2$ are considered mod 4 in the code.} $(\pm 1^{8},\pm 2^{4},0^{12}),(\pm 1^{12},\pm 2^{3},0^{9})$ and $(\pm 1^{16},\pm 2^{2},0^{6})$ does not belong to $\mathscr{C}_i$. This was explicitly checked using Mathematica for each of the 13 pseudo Golay codes and we indeed found that this is true.
\end{proof}
\end{prop}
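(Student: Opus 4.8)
The plan is to pin down $\mathcal{C}_{12}^{\iota_i}$ within the short classification list of self-dual doubly even binary $[24,12]$ codes. After post-composing $\iota_i$ with a suitable $R\in\mathrm{Co}_{0}$ so that the hypotheses of Proposition \ref{prop:sddecode} hold---which, by Proposition \ref{prop:selddeequivemb}, alters $\mathcal{C}_{12}^{\iota_i}$ only within its equivalence class---each $\mathcal{C}_{12}^{\iota_i}$ is self-dual and doubly even. There are exactly nine such codes up to equivalence \cite{PLESS1975313}; a self-dual doubly even code of length $24$ has minimum weight $4$ or $8$, the Golay code $\mathcal{G}$ is the unique one attaining minimum weight $8$, and each of the other eight contains a weight-$4$ word. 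So it suffices to prove that $\mathcal{C}_{12}^{\iota_i}$ has no codeword of Hamming weight $4$.

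To make this finite I would use Proposition \ref{prop:wt4codewords}. If $w\in\mathcal{C}_{12}^{\iota_i}$ is the indicator vector of a $4$-set $\{j_1,j_2,j_3,j_4\}\subset\{1,\dots,24\}$, then $\gamma_w=\gamma_{j_1}\gamma_{j_2}\gamma_{j_3}\gamma_{j_4}=\pm\gamma_{[\lambda]}$ with $[\lambda]=[\alpha_{j_1}+\alpha_{j_2}+\alpha_{j_3}+\alpha_{j_4}]$, and by that proposition $[\lambda]$ has a norm-squared $12$ representative $\lambda$ of one of the shapes $\tfrac{1}{\sqrt 8}(\pm 2^8,\pm 4^4,0^{12})$, $\tfrac{1}{\sqrt 8}(\pm 2^{12},\pm 4^3,0^9)$, $\tfrac{1}{\sqrt 8}(\pm 2^{16},\pm 4^2,0^6)$. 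Since $2\Lambda_{\mathrm{L}}\subset\sqrt 2\,\Lambda_{\mathrm{L}}^{\iota_i}=\sqrt 2\,A(\mathscr{C}_i)$, the relation $w\in\mathcal{C}_{12}^{\iota_i}$ forces $\lambda\in\sqrt 2\,A(\mathscr{C}_i)$, equivalently $\sqrt 2\,\lambda\bmod 4$ is a codeword of the $\mathbb{Z}_4$-code $\mathscr{C}_i$; under $\lambda\mapsto\sqrt 2\,\lambda\bmod 4$ the three shapes become the $\mathbb{Z}_4$-words $(\pm 1^8,\pm 2^4,0^{12})$, $(\pm 1^{12},\pm 2^3,0^9)$, $(\pm 1^{16},\pm 2^2,0^6)$, with positions and signs fixed by the $4$-set, a total of $\binom{24}{4}=10626$ candidates. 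Thus the claim reduces to checking, for each of the thirteen pseudo-Golay codes $\mathscr{C}_i$, that none of these $10626$ words belongs to $\mathscr{C}_i$.

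That last step I would carry out by direct computation: enumerate the $10626$ candidate $\mathbb{Z}_4$-words and, for each $i$, test membership in $\mathscr{C}_i$ via the syndrome against a generator matrix; if all $13\times 10626$ tests fail then $\mathcal{C}_{12}^{\iota_i}$ is self-dual, doubly even, and has no weight-$4$ word, hence $\mathcal{C}_{12}^{\iota_i}\cong\mathcal{G}$, and the supercurrent built from $\iota_i$ is the BB-module analogue of Duncan's Conway spinor. The delicate point is not the search itself but the \emph{exhaustiveness} of the reduction: the real work lies in Proposition \ref{prop:wt4codewords}, whose proof leans on the combinatorics of how the dodecads supporting the $\alpha_i$ intersect (pairwise in $6$ coordinates, and dodecads meeting only in $0,4,6,8$ coordinates) to discard the other integer solutions of $k_1+4k_2=20$; one must also be careful with the $\sqrt 2$-bookkeeping relating a Leech vector $\lambda$, the point $\lambda/\sqrt 2\in A(\mathscr{C}_i)$, and its $\mathbb{Z}_4$-reduction. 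A more conceptual alternative would be to argue that $\Gamma(\sqrt 2\,\Lambda_{\mathrm{L}}^{\iota})$ contains no $2A$-involution of the Monster---these project to norm-squared $4$ Leech vectors, of which $\sqrt 2\,\Lambda_{\mathrm{L}}^{\iota}$ has none---and to translate this into the absence of weight-$4$ codewords via the dictionary between involutions in $\Gamma(\Lambda_{\mathrm{L}})$ and Monster conjugacy classes; I expect this works but requires more Monster-specific input than the brute-force check.
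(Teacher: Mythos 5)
Your proposal is correct and takes essentially the same route as the paper: reduce via Propositions \ref{prop:sddecode} and \ref{prop:wt4codewords} to a finite membership test of the $10626$ candidate $\mathbb{Z}_4$-words against each $\mathscr{C}_i$, and invoke the classification of self-dual doubly even $[24,12]$ codes to conclude that absence of a weight-$4$ word forces the Golay code. Your explicit remark that one first post-composes with an element of $\mathrm{Co}_0$ so the hypotheses of Proposition \ref{prop:sddecode} hold (changing $\mathcal{C}_{12}^{\iota_i}$ only within its equivalence class, per Proposition \ref{prop:selddeequivemb}) merely spells out a step the paper leaves implicit.
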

\begin{remark}\label{rem:HKcodesupcurgen}
Recall that the Harada-Kitazume code $\mathscr{C}_{HK}$ is another $\mathbb{Z}_4$ code satisfying all properties required to construct an embedding $\sqrt{2}A(\mathscr{C}_{HK})\subseteq\Lambda_{\mathrm{L}}$. Using similar Mathematica calculations as in Proposition \ref{prop:pseudogol}, we found that $\mathcal{C}_{12}^{HK}$ defined by 
\begin{equation}
    \mathcal{C}_{12}^{HK}:=\{w\in\mathbb{F}_2^{24}: \gamma_w=\gamma_{\lambda}=\rho(e_\lambda)~\text{for some}~e_{\lambda}\in\widetilde{\Gamma}(\sqrt{2}A(\mathscr{C}_{HK}))\}
    \label{eq:ciota12}
\end{equation}
is isomorphic to the Golay code $\mathcal{G}$.
\end{remark}
The above results imply that one of the supercurrent is generated by the projection operator constructed out of the Golay code as is the case for \textit{Conway supermoonshine} elucidated in \cite{Harvey:2020jvu}. It was called the Duncan spinor in \cite{Harvey:2020jvu} as it was first constructed by John Duncan in \cite{https://doi.org/10.48550/arxiv.math/0502267,Duncan:2014eha}. Thus one of the supercurrents in the ``Beauty and the Beast'' SCFT is based on the Duncan spinor.
\subsubsection{Supercurrent From Quaternary Golay Code}\label{subsec:quatgolaysupcur}
Using the quaternary Golay code, we now construct a supercurrent which is inequivalent to the Duncan spinor, see \eqref{eq:quatgolayexseq} and Table \ref{table:q4} for the definition of quaternary Golay code.  
We first show that the quaternary Golay code $\widehat{\mathcal{Q}}_4$ gives rise to an embedding of $\sqrt{2}\Lambda_{\mathrm{L}}$ into $\Lambda_{\mathrm{L}}$. Depending on $n_{\pm 1}\in\{0,8,12,16,24\}$, it is clear that any element of the code $\widehat{\mathcal{Q}}_4$ is of the shape 
\begin{equation}
    g=(1^{k},3^{n_{\pm}(g)-k},0^{24-n_{\pm}(g)})+(2^{2\ell},0^{24-2\ell}),
\end{equation}
where $k$ is even since otherwise the Euclidean norm of $g$ would be odd and hence it cannot form an even lattice. This implies that 
\begin{equation}
\begin{split}
    \frac{g}{\sqrt{2}}=\frac{2g}{\sqrt{8}}&=\frac{1}{\sqrt{8}}(2^{k},6^{n_{\pm}(g)-k},0^{24-n_{\pm}(g)})+\frac{1}{\sqrt{8}}(4^{2\ell},0^{24-2\ell})\\&=\frac{1}{\sqrt{8}}(\underbrace{2^{k},-2^{n_{\pm}(g)-k}}_{\text{Golay codeword}},0^{24-n_{\pm}(g)})+\frac{1}{\sqrt{8}}(8^{n_{\pm}(g)-k},0^{24+k-n_{\pm}(g)})+\frac{1}{\sqrt{8}}(2^{2\ell},0^{24-2\ell}).
\end{split}
\end{equation}
It is now clear that $\frac{g}{\sqrt{2}}\in\Lambda_{\mathrm{L}}$. Finally, by \cite[Theorem 4.3]{Bonnecaze97quaternaryquadratic} we see that $A(\widehat{\mathcal{Q}}_4)=\Lambda_{\mathrm{L}}$. Thus using \eqref{eq:sq2lamintoleech} this code gives us an embedding of $\sqrt{2}\Lambda_{\mathrm{L}}$ into $\Lambda_{\mathrm{L}}$. 

\begin{thm}\label{thm:quatgolcodsupcur}
The code $\mathcal{C}_{12}^{\widehat{\mathcal{Q}}_4}$ defined as in \eqref{eq:ciota12} is not isomorphic to the Golay code.    
\end{thm}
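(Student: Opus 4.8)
The plan is to distinguish $\mathcal{C}_{12}^{\widehat{\mathcal{Q}}_4}$ from the binary Golay code $\mathcal{G}$ by exhibiting a codeword of weight $4$, which $\mathcal{G}$ does not possess (its minimum weight is $8$). By Proposition \ref{prop:sddecode} the code $\mathcal{C}_{12}^{\widehat{\mathcal{Q}}_4}$ is already known to be a self-dual doubly even $[24,12]$ code, so the only obstruction to it being $\mathcal{G}$ is the presence of weight-$4$ words; and by Proposition \ref{prop:wt4codewords} any such word, if it exists, must arise from a Leech vector $\lambda\in\sqrt{2}\Lambda_{\mathrm{L}}^{\iota}$ lying in one of the three shape classes $\frac{1}{\sqrt 8}(\pm 2^8,\pm 4^4,0^{12})$, $\frac{1}{\sqrt 8}(\pm 2^{12},\pm 4^3,0^9)$, $\frac{1}{\sqrt 8}(\pm 2^{16},\pm 4^2,0^6)$. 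So the strategy mirrors the proof of Proposition \ref{prop:pseudogol}, except that now we want to \emph{find} such a vector rather than rule them out.

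First I would translate the membership condition into the language of the $\mathbb{Z}_4$ code. As established in Subsection \ref{subsec:quatgolaysupcur}, the embedding $\iota_{\widehat{\mathcal{Q}}_4}$ identifies $\sqrt{2}\Lambda_{\mathrm{L}}$ with $\sqrt{2}A(\widehat{\mathcal{Q}}_4)$, so a Leech vector of shape $\frac{1}{\sqrt 8}(\pm 2^{k_1},\pm 4^{k_2},0^{24-k_1-k_2})$ lies in $\sqrt{2}\Lambda_{\mathrm{L}}^{\iota}$ precisely when the corresponding quaternary word $(\pm 1^{k_1},\pm 2^{k_2},0^{24-k_1-k_2})$ (entries taken mod $4$) belongs to $\widehat{\mathcal{Q}}_4$. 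Thus the theorem reduces to the purely combinatorial assertion that at least one of the $\binom{24}{4}=10626$ weight-$4$ supports produces a quaternary word in $\widehat{\mathcal{Q}}_4$ of one of the three admissible shapes. This is exactly the complement of the computation carried out for the pseudo-Golay codes in Proposition \ref{prop:pseudogol}, so I would run the same Mathematica check against the explicit generator matrix of the quaternary Golay code $\widehat{\mathcal{Q}}_4$ (as recorded in Appendix \ref{sec:codeslat}) and extract an explicit witness $w$ with $\mathrm{wt}(w)=4$ and $\gamma_w\in\Gamma(\sqrt{2}\Lambda_{\mathrm{L}}^{\iota})$.

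Once a weight-$4$ codeword $w\in\mathcal{C}_{12}^{\widehat{\mathcal{Q}}_4}$ is produced, the conclusion is immediate: the binary Golay code $\mathcal{G}$ has minimum distance $8$, so $w\notin\mathcal{G}$ up to the coordinate permutations that implement code equivalence (weight is a permutation invariant), hence $\mathcal{C}_{12}^{\widehat{\mathcal{Q}}_4}\not\cong\mathcal{G}$. Combined with Proposition \ref{prop:pseudogol} and Remark \ref{rem:HKcodesupcurgen}, this shows that the supercurrent built from $\widehat{\mathcal{Q}}_4$ is inequivalent to the Duncan spinor, which is the point of the subsection. The main obstacle is simply organizing and verifying the finite computation reliably — in particular being careful that the sign ambiguity $\gamma_{[\lambda]}=\pm\gamma_w$ coming from the cocycle $\varepsilon$ does not matter (it does not, since $-\mathds{1}\notin\Gamma(\sqrt{2}\Lambda_{\mathrm{L}}^{\iota})$ forces exactly one of $\pm\gamma_w$ to lie in the abelian subgroup, and membership of the associated class in $\widetilde{\Lambda}_{\mathrm{SC}}$ is sign-independent), and that the chosen weight-$4$ support really does lift to a Leech vector of the claimed shape rather than one reducible mod $2\Lambda_{\mathrm{L}}$ to norm-squared $4$ (which by Proposition \ref{prop:wt4codewords} would instead mean $w\notin\mathcal{C}_{12}^{\widehat{\mathcal{Q}}_4}$). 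Producing and displaying one concrete such vector settles this unambiguously.
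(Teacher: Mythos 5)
Your overall strategy is sound and matches the paper's goal: exhibit a weight-$4$ codeword in $\mathcal{C}_{12}^{\widehat{\mathcal{Q}}_4}$, so it cannot be the (minimum-distance-$8$) binary Golay code, and translate membership of a candidate $\lambda$ in $\sqrt{2}\Lambda_{\mathrm{L}}^{\iota}$ into membership of the corresponding quaternary word $(\pm 1^{k_1},\pm 2^{k_2},0^{24-k_1-k_2})$ in $\widehat{\mathcal{Q}}_4$. Your reduction via Proposition \ref{prop:wt4codewords} and the cautionary remarks about the sign ambiguity in the cocycle are both correct.

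However, the route you propose is not the one the paper takes. You want to run the same exhaustive Mathematica scan over the $\binom{24}{4}=10626$ shapes that was used for the pseudo-Golay codes in Proposition \ref{prop:pseudogol}, this time hoping to find rather than exclude a match. The paper instead avoids the brute-force search entirely and obtains the weight-$4$ codeword a priori from Lemma \ref{rem:wt4cwQ4}. That lemma is a counting argument: from the symmetrized weight enumerator of $\widehat{\mathcal{Q}}_4$ recorded in Table \ref{table:q4}, the number of codewords of type $(n_0;n_{\pm 1};n_2)=(9;12;3)$ is $220\cdot 2576$, which is exactly $\binom{12}{3}\cdot(\text{number of dodecads in }\mathcal{G})$; since $\operatorname{Res}(\widehat{\mathcal{Q}}_4)=\mathcal{G}$ forces the $\pm 1$-support of any such word to be a dodecad, the pigeonhole is tight and one concludes that for \emph{every} dodecad and \emph{every} choice of three of the twelve complementary positions there is a codeword $(\pm 1^{12},2^{3},0^{9})$ in $\widehat{\mathcal{Q}}_4$. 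Combined with the observation from the proof of Proposition \ref{prop:wt4codewords} that a suitable vector of shape $8^{-1/2}(\pm 2^{12},4^{3},0^{9})$ maps to a weight-$4$ word under $f$, this exhibits the required codeword with no computer search. The trade-off is illustrative: your proposal would produce a concrete witness but gives no insight into why one must exist, whereas the paper's counting argument explains the phenomenon directly in terms of the weight enumerator of $\widehat{\mathcal{Q}}_4$ — which is exactly the structural difference between $\widehat{\mathcal{Q}}_4$ and the pseudo-Golay codes that the theorem is meant to capture.
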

\begin{proof}
It suffices to show that the code $\mathcal{C}_{12}^{\widehat{\mathcal{Q}}_4}$ contains a weight 4 codeword and hence cannot be the Golay code. 
Indeed from Lemma \ref{rem:wt4cwQ4}, we see for any dodecad $w\in\mathcal{G}$, the code $\widehat{\mathcal{Q}}_4$ contains the codeword $(\pm 1^{12},2^{3},0^{9})$ for any choice of position of the three 2s with $\pm 1$ supported on $w$. From the proof of Proposition \ref{prop:wt4codewords}, we see that there exists a vector of the shape $8^{-1/2}(\pm 2^{12},4^{3},0^{9})$ which map to a weight 4 codeword under the homomorphism $f$. But then this means that $\sqrt{2}A(\widehat{\mathcal{Q}}_4)$ contains Leech vectors which under the homomorphism $f$ of \eqref{eq:isof} map to weight 4 codewords. Thus the code $\mathcal{C}_{12}^{\widehat{\mathcal{Q}}_4}$ cannot be the Golay code. 
\end{proof}
Thus we see that the code $\mathcal{C}_{12}^{\widehat{\mathcal{Q}}_4}$ thus gives rise to a new supercurrent different from the Duncan spinor. 
\\\\
We have now shown that there are at least two inequivalent supercurrents in the Beauty and the Beast module. Let us now calculate the OPE of the supercurrents with dimension 1 primaries in the theory since they are required for $\mathcal{N}>1$ supersymmetry. There are 24 linearly independent dimension 1 fields: $\widetilde{V}(a_{-1}^{i}|0\rangle,z)$. In terms of the fields $X^i,R^i$ it is given by  
\begin{equation}
   \widetilde{V}(a_{-1}^{i}|0\rangle,z)=\begin{pmatrix}
   :\partial_zX^i(z):&0\\0&:\partial_zR^i(z):
   \end{pmatrix} 
\end{equation}
Let $\widetilde{V}(\chi_s,z)$ be a supercurrent. By \eqref{eq:opeh-ht-} and \eqref{eq:chitilden} we have the OPE
\begin{equation}
   \widetilde{V}(a_{-1}^{i}|0\rangle,z) \widetilde{V}(\chi_s,w)=\frac{1}{(z-w)^{\frac{1}{2}}}\widetilde{V}(c_{-1/2}^i\chi_s,w)+\mathcal{O}((z-w)^{\frac{1}{2}}),
\end{equation}
where we used the fact that $c_k^i\chi=0$ for $k>0$ and $\chi\in\mathcal{S}(\Lambda_{\mathrm{L}})$. Thus the dimension 1 primaries have a square root singularity with the supercurrents. Therefore there is no mutually local dimension one $U(1)$ R-symmetry current and therefore we cannot have extended supersymmetry in the BB module. It might be possible to define an $\mathcal{N}=2$ supersymmetry algebra by passing to suitable covering spaces. In any case, there is a novel symmetry structure here which might be worth understanding more deeply.\\\\We also note that similar calculations as in Section \ref{subsec:BB} give us the OPE of two vertex operators $\widetilde{V}(\chi_a,z)$ and $\widetilde{V}(\chi_b,z)$ corresponding to twisted ground states $\chi_a,\chi_b\in\mathcal{S}$ to be  
\begin{equation}
\widetilde{V}(\chi_a,z)\widetilde{V}(\chi_b,w)\sim =\frac{\alpha^{ab}}{(z-w)^{3}}+\frac{\alpha^{ab}/8}{(z-w)}T_B(w)+\sum_{\lambda^2=4}\frac{\kappa_{\lambda}^{ab}}{z-w}\begin{pmatrix}e^{i\lambda\cdot X(w)}\sigma_{\lambda}&0\\0&e^{i\lambda\cdot R(w)}e_{\lambda}\end{pmatrix}     
\end{equation}
where 
\begin{equation}
    \alpha^{ab}:=\langle \overline{\chi}_a|\chi_b\rangle,\quad \kappa_{\lambda}^{ab}=16i\langle \chi_a|e_\lambda|\overline{\chi}_b\rangle.
\end{equation}
Thus any two supercurrents are local with respect to each other. For the two supercurrents constructed above, it may be possible to calculate the coefficients $\alpha^{ab}$ and $\kappa_{\lambda}^{ab}$ explicitly. But we will leave this calculation for future. This suggests that there is a novel chiral algebra in the BB module.
\begin{appendix}

\section{Codes And Lattices}\label{sec:codeslat}
Linear codes and their associated lattices play a central role in our discussion of the superconformal symmetry in the Monster module. We collect the definitions and some preliminary results about codes and lattices in this section.
\subsection{Binary Linear Codes}
\begin{defn}
Let $\mathbb{F}_2$ be the field with two elements $0,1$.
\begin{enumerate}
    \item A \textit{binary linear code} $\mathcal{C}$ of length $n$ is a vector subspace of $\mathbb{F}_2^n$. If the dimension of $\mathcal{C}$ is $k$ then it is called an $[n,k]$ binary code. The elements of the code are called \textit{codewords}.
    \item Two binary linear codes $\mathcal{C}_1$ and $\mathcal{C}_2$ are said to be \textit{equivalent} if there exists a permutation matrix $P$ such that $\mathcal{C}_1P=\mathcal{C}_2$. 
    \item A $k\times n$ matrix with rows being the basis vectors of $\mathcal{C}$ is called a generator matrix of $\mathcal{C}$. 
    \item The inner product of codewords $w,w'\in\mathcal{C}$ is defined as 
    \begin{equation}
        w\cdot w'=\sum_{i=1}^nw_i\cdot w'_i ~(\bmod~2).
    \end{equation}
    where $w=(w_1,\dots,w_n),w'=(w'_1,\dots,w'_n)$.
The code \textit{dual} to $\mathcal{C}$ is denotes by $\mathcal{C}^{\perp}$ and is defined as 
\begin{equation}
    \mathcal{C}^{\perp}:=\{w\in\mathbb{F}_2^n:w\cdot w'=0~\forall~w'\in\mathcal{C}\}.
\end{equation}
The code is called \textit{self-orthogonal} if $\mathcal{C}^{\perp}\subseteq\mathcal{C}$ and \textit{self-dual} if $\mathcal{C}^{\perp}=\mathcal{C}$.
\item The \textit{(Hamming) weight} of a codeword $w$ is defined as the number of nonzero components in the codeword and is denoted by wt($w$). The code is called \textit{even} and \textit{doubly even} if the weight of every codeword is divisible by 2 and 4 respectively.   
\item The \textit{Hamming distance} $d_H(w,w')$ of two codewords $w,w'$ is defined as the number of places in which the codewords differ. More precisely, $d_H(w,w')=\text{wt}(w-w').$ The \textit{(Hamming) distance} of the code is defined 
\begin{equation}
    d_H(\mathcal{C}):=\min_{\substack{w,w'\in\mathcal{C}\\w\neq w'}}d_H(w,w').
\end{equation}
With the notation $d_H:=d_H(\mathcal{C})$, we call $\mathcal{C}$ an $[n,k,d_H]$ code.
\item The \textit{automorphism group} Aut$(\mathcal{C})$ of the code $\mathcal{C}$ is a subset of the permutation group $S_n$ which fixes the code as a set.   
\end{enumerate}
\end{defn}
By standard row echelon form algorithm, it is easy to see that for any $[n,k]$ code $\mathcal{C}$, there is an equivalent code with generator matrix $G$ of the form 
\begin{equation}
    G=[\mathds{1}_k,A]
\end{equation}
where $\mathds{1}_k$ is the $k\times k$ identity matrix and $A$ is some $k\times(n-k)$ matrix. The dual code $\mathcal{C}^{\perp}$ is an $[n,n-k]$ code with generator matrix $[-A^{\text{tr}},\mathds{1}_{n-k}]$. In particular, if $\mathcal{C}$ is self-dual then $n-k=k$ and $n=2k$. We will be interested in $[24,12]$ codes. The following theorem gives the classification of self-dual, doubly even $[24,12]$ codes.
\begin{thm}\emph{\cite{PLESS1975313}}
There are exactly 9 inequivalent self-dual, doubly even $[24,12]$ codes. Up to equivalence, there is a unique  $[24,12,8]$ code called the \emph{Golay code} denoted by $\mathcal{G}$. 
\end{thm}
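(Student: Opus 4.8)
The plan is to reproduce the classical enumeration of Pless \cite{PLESS1975313}, organized around the structure of the minimum-weight codewords; the argument runs parallel to the classification of the $24$ Niemeier lattices, though the list is much shorter because gluing over $\mathbb{F}_2$ is far more restrictive.

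\emph{Setup via Gleason's theorem.} First I would record that a self-dual doubly even (``Type II'') binary code can only occur in lengths divisible by $8$, and that by Gleason's theorem the weight enumerator of any Type II code is a polynomial in the weight enumerators of the $[8,4,4]$ extended Hamming code and of the $[24,12,8]$ Golay code. For $n=24$ the corresponding graded piece of this invariant ring is two-dimensional, so the weight enumerator of a Type II code $\mathcal{C}$ of length $24$ is pinned down by the single parameter $A_4$, the number of weight-$4$ codewords; non-negativity of all the $A_i$ leaves only finitely many admissible values of $A_4$, and in particular the minimum distance is $4$ or $8$.

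\emph{The case of minimum distance $8$.} Here $A_4=0$, the weight enumerator is forced to have $A_0=A_{24}=1$, $A_8=A_{16}=759$, $A_{12}=2576$, and the supports of the $759$ weight-$8$ codewords carry a Steiner system $S(5,8,24)$. One then proves — this is precisely the uniqueness of the $[24,12,8]$ code — that such a code is unique up to a permutation of coordinates; it is the Golay code $\mathcal{G}$, settling the ``unique $[24,12,8]$'' assertion. In the remaining case $d=4$, two weight-$4$ codewords, viewed as $4$-element subsets $A,B$ of the coordinate set, must meet in $0$ or $2$ points, since their sum has weight $8-2|A\cap B|$, which is divisible by $4$. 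Hence the subcode $\mathcal{C}_0\subseteq\mathcal{C}$ generated by the weight-$4$ codewords decomposes, after permuting coordinates, into an orthogonal direct sum of indecomposable blocks of ADE type $d_{2k}$ $(k\ge 2)$, $e_7$, $e_8$, each block occupying a fixed block of coordinates. Enumerating the block decompositions that fit inside $24$ coordinates, and then, for each one, enumerating the ``glue'' codes extending $\mathcal{C}_0$ to a length-$24$ Type II code (with the glue vectors constrained by orthogonality and by their weights modulo $4$), produces a finite, explicit list; reducing it modulo equivalence — first by weight enumerator and by the ADE type of the weight-$4$ subcode, then by a direct isomorphism test in the few residual cases — yields exactly $8$ codes with $d=4$, hence $9$ in total.

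\emph{Main obstacle.} The genuinely laborious part is the $d=4$ analysis: although the ADE decomposition of $\mathcal{C}_0$ is clean, enumerating all admissible glue codes and, above all, proving that the resulting codes are pairwise inequivalent — in particular separating codes that share the same weight enumerator — is exactly the case-by-case work carried out in \cite{PLESS1975313}. The uniqueness of the Golay code invoked in the $d=8$ step, while standard, is itself a non-trivial combinatorial input. For the purposes of the present paper these results are used as a black box, so I would simply cite \cite{PLESS1975313} (and \cite{conway:1999} for the Golay/Steiner-system background) rather than reproduce the enumeration.
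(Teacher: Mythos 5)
The statement is cited in the paper directly to \cite{PLESS1975313} with no proof given, and your proposal likewise defers to Pless as a black box; the additional sketch you provide (Gleason/Mallows--Sloane forcing $d\in\{4,8\}$, uniqueness of the $[24,12,8]$ code for $d=8$, and the ``components $d_{2k}$, $e_7$, $e_8$ plus glue'' enumeration for $d=4$, in analogy with the Niemeier lattice classification) is an accurate account of how the classical argument actually goes. So you and the paper take the same approach, and the extra detail is a correct and harmless elaboration rather than a divergence.
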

\subsubsection{Golay Code}
The Golay code is of particular interest to us due to its application in the construction of Leech lattice. It is determined by a generator matrix of the form 
\begin{equation}
     \left(\begin{array}{llllll}
1000&0000&0000&1010&0011&1011\\
0100&0000&0000&1101&0001&1101\\
0010&0000&0000&0110&1000&1111\\
0001&0000&0000&1011&0100&0111\\
0000&1000&0000&1101&1010&0011\\
0000&0100&0000&1110&1101&0001\\
0000&0010&0000&0111&0110&1001\\
0000&0001&0000&0011&1011&0101\\
0000&0000&1000&0001&1101&1011\\
0000&0000&0100&1000&1110&1101\\
0000&0000&0010&0100&0111&0111\\
0000&0000&0001&1111&1111&1110 
    \end{array}\right)
\end{equation}
The following results can be proved using the standard constructions of the Golay code. All codewords have weights of 0, 8, 12, 16, or 24. Codewords of weight 8 are called \textit{octads} and codewords of weight 12 are called \textit{dodecads}.
\begin{prop}\emph{\cite{conway:1999}} The following is true about the Golay code:
\begin{enumerate}
    \item There are 759 octads and 2576 dodecads in the Golay code.
    \item Two octads intersect in 0, 2, or 4 coordinates and an octad and a dodecad intersect at 2, 4, or 6 coordinates. 
\end{enumerate}
\end{prop}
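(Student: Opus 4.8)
The plan is to reduce the proposition to two ingredients: the weight enumerator of $\mathcal{G}$, and the fact that $\mathcal{G}$ is a linear subspace of $\mathbb{F}_2^{24}$ closed under symmetric difference of supports. Part (1) is the statement that the weight enumerator is $A_0=A_{24}=1$, $A_8=A_{16}=759$, $A_{12}=2576$ and $A_w=0$ otherwise, and part (2) is then a one-line consequence of linearity together with the list of weights actually occurring.

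First I would pin down which weights occur. Since $\mathcal{G}$ is doubly even, every weight is a multiple of $4$, so the candidates are $0,4,8,12,16,20,24$; the minimum distance $8$ — read off the explicit generator matrix, or taken from the standard construction — eliminates weight $4$. Self-duality forces $\mathbf 1=(1,\dots,1)\in\mathcal{G}$, because $\mathbf 1\cdot c=\mathrm{wt}(c)\equiv 0\bmod 2$ for every $c\in\mathcal{G}$ (an even code), so $c\mapsto c+\mathbf 1$ is a weight-complementing involution of $\mathcal{G}$; this eliminates weight $20$ (complementary to the absent weight $4$), and also shows $A_8=A_{16}$. Thus the occurring weights are exactly $\{0,8,12,16,24\}$. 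To get the \emph{exact} value $A_8=759$ I would invoke Gleason's theorem for Type II codes of length $24$: the weight enumerator lies in $\mathbb C[\theta_8,\theta_{24}]$ with $\theta_8=x^8+14x^4y^4+y^8$ and $\theta_{24}=x^4y^4(x^4-y^4)^4$, so $W_{\mathcal G}=a\,\theta_8^{\,3}+b\,\theta_{24}$; matching the coefficient of $x^{24}$ gives $a=1$, and $A_4=0$ gives $42a+b=0$, i.e.\ $b=-42$. Expanding, the coefficient of $x^{16}y^8$ is $591-42\cdot(-4)=759$, so $A_8=759$. (Alternatively, one simply enumerates the $2^{12}$ codewords from the generator matrix.) Finally $\sum_w A_w=2^{12}$ together with $A_0=A_{24}=1$ and $A_8=A_{16}$ forces $A_{12}=4096-2-2\cdot759=2576$, which is part (1).

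For part (2) I would use only linearity. If the codewords $c,c'\in\mathcal{G}$ are the indicator vectors of subsets $S,S'\subseteq\{1,\dots,24\}$, then $c+c'\in\mathcal{G}$ is the indicator of $S\triangle S'$, so $\mathrm{wt}(c+c')=|S|+|S'|-2|S\cap S'|$, and this must be one of the occurring weights $0,8,12,16,24$. Taking $S,S'$ two distinct octads: $16-2|S\cap S'|\in\{8,12,16\}$ (it is not $0$ since $c\neq c'$, and $4,20,24$ are not weights), hence $|S\cap S'|\in\{0,2,4\}$. Taking $S$ an octad and $S'$ a dodecad: $20-2|S\cap S'|\in\{8,12,16\}$ (the value $0$ would force $|S\cap S'|=10>8$, and $4,20,24$ are not weights), hence $|S\cap S'|\in\{2,4,6\}$. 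This is precisely the claimed intersection pattern.

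The only real obstacle is the \emph{precise} count $A_8=759$, as opposed to merely identifying the set of weights: for that one genuinely needs either Gleason's theorem (equivalently, the structure of the invariant ring of the MacWilliams/doubly-even symmetry group in length $24$) or an explicit codeword enumeration — there is no short elementary derivation. Everything else — the weight set, the complementation symmetry, and all of part (2) — follows immediately from doubly-evenness, self-duality and linearity.
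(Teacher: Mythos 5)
The paper gives no proof of this proposition; it is stated as a known fact with a citation to Conway--Sloane. Your proof is correct and self-contained. For part (1), the reduction of the weight set to $\{0,8,12,16,24\}$ via doubly-evenness, the minimum distance, and the complementation symmetry $c\mapsto c+\mathbf 1$ is exactly right, and your Gleason computation checks out: the coefficient of $x^{20}y^4$ in $\theta_8^3=(x^8+14x^4y^4+y^8)^3$ is $3\cdot 14=42$, the coefficient of $x^{16}y^8$ is $3+3\cdot 14^2=591$, and $\theta_{24}=x^{20}y^4-4x^{16}y^8+6x^{12}y^{12}-4x^8y^{16}+x^4y^{20}$, so with $a=1$, $b=-42$ one gets $A_8=591+168=759$ and then $A_{12}=4096-2-2\cdot 759=2576$. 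Part (2) via $\mathrm{wt}(c+c')=|S|+|S'|-2|S\cap S'|$ and membership in the weight set is the standard one-line argument and is correct (you rightly rule out intersection $10$ for the octad--dodecad case since it would exceed the octad size). The only thing worth flagging is a slight looseness in ``the occurring weights are exactly $\{0,8,12,16,24\}$'' before the counts are derived: at that point you have only excluded other weights, and it is the subsequent positivity of $A_8, A_{12}$ that shows they all actually occur. Your route through Gleason's theorem is a genuinely different path from the one the cited reference takes (Conway--Sloane derive $759$ via the hexacode/MOG or the Steiner system $S(5,8,24)$); the Gleason route is cleaner if one is willing to quote the invariant-theory input, while the constructive route gives more structural information about the octads.
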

 This proposition also implies that there are 759 weight 16 codewords obtained by adding $\underline{1}=(11\dots 1)\in\mathcal{G}$ to the octads totalling to $2+(2\times 759)+2576=2^{12}$ codewords. 
\subsection{$\mathbb{Z}_4$-Codes}\label{app:Z4codes}
We now recall the definition and some properties of $\mathbb{Z}_4$-codes (see \cite{CARYHUFFMAN2005451} for details). 
\begin{defn}
Let $\mathbb{Z}_4$ be the ring with elements 0,1,2,3 with modulo 4 addition and multiplication as the operations. 
\begin{enumerate}
    \item A $\mathbb{Z}_4$-code $\mathscr{C}$ of length $n$ is a submodule of $\mathbb{Z}_4^n$. The elements of the code are called codewords. 
    \item A matrix whose rows are a minimal spanning set for the code is called a \textit{generator matrix}. 
    \item The \textit{Euclidean weights} of the elements 0, 1, 2 and $3=-1$ of $\mathbb{Z}_4$ are ordinary integers 0, 1, 4 and 1, respectively,
and the Euclidean weight of a codeword is the sum of the Euclidean weights of its
components. The minimum Euclidean weight $d_E$ of $\mathscr{C}$ is the smallest Euclidean weight among all non-zero codewords of $\mathscr{C}$. 
\item The (Hamming) weight wt$(x)$ of a codeword $x\in\mathscr{C}$ is the number of non-zero coordinates of $x$. The Hamming distance of two codewords is $d_H(x,y)=\text{wt}(x-y)$ and the Hamming distance $d_H(\mathscr{C})$ of the code is defined as before as
\begin{equation}
    d_H(\mathscr{C})=\min_{\substack{x,y\in\mathscr{C}\\x\neq y}}d_H(x,y).
\end{equation} 
\item An invertible monomial matrix $M$ over the ring $\mathbb{Z}_4$ is an invertible matrix of the form $M = P D$ where $P$ is a permutation matrix and $D$ is a diagonal matrix
with $\pm 1$ on the diagonal. Two codes $\mathscr{C}_1$ and, $\mathscr{C}_2$ are \textit{equivalent} if there exists an
invertible monomial matrix $M$ such that $\mathscr{C}_1M=\mathscr{C}_2$ where $\mathscr{C}_1M=\{xM:x\in\mathscr{C}_1\}.$
\item Self-orthogonality and self-duality of the code is defined as before with the same inner product now calculated modulo 4. Self-dual
codes with the property that all Euclidean weights are divisible by 8 are called Type II.
\item The \textit{residue code} Res$(\mathscr{C})$ and the \textit{torsion code} Tor$(\mathscr{C})$ are binary linear codes defined by the exact sequence 
\begin{equation}
    0\longrightarrow \mathrm{Tor}(\mathscr{C})\stackrel{f}{\longrightarrow}\mathscr{C}\stackrel{g}{\longrightarrow}\mathrm{Res}(\mathscr{C})\longrightarrow 0
\end{equation}
obtained from the componentwise exact sequence
\begin{equation}
    0\longrightarrow \mathbb{Z}_2\stackrel{f}{\longrightarrow}\mathbb{Z}_4\stackrel{g}{\longrightarrow}\mathbb{Z}_2\longrightarrow 0
\end{equation}
where $f:\mathbb{Z}_2\longrightarrow\mathbb{Z}_4$ maps $0\to 0,1\to 2$ and $g:\mathbb{Z}_4\longrightarrow\mathbb{Z}_2$ maps $0,1,2,3\in\mathbb{Z}_4$ to $0,1,0,1\in\mathbb{Z}_2$ respectively.
Explicitly, the residue and torsion codes are given by 
\begin{equation}
    \text{Res}(\mathscr{C})=\{x~ (\bmod~2):x\in\mathscr{C}\}.
\end{equation}
and
\begin{equation}
    \text{Tor}(\mathscr{C})=\{x~(\bmod ~2): x\in\mathbb{Z}_4^{n}:2x\in\mathscr{C}\}.
\end{equation}
\end{enumerate}
\end{defn} 
We have the following theorem: 
\begin{thm}\emph{\cite{568705}}
Let $\mathscr{C}$ be a Type II $\mathbb{Z}_4$-code of length
$n$ and minimum Euclidean weight $d_E(\mathscr{C})$. Then 
\begin{equation}
    d_E(\mathscr{C})\leq
    8\lfloor n/24\rfloor + 8. 
\end{equation}
\label{thm:extremalcd}
\end{thm}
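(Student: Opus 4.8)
The plan is to prove this by the Gleason--Mallows--Sloane method, in direct parallel with the Mallows--Sloane bound $d_H\le 4\lfloor n/24\rfloor+4$ for doubly even self-dual binary codes (the bound underlying the $[24,12]$ classification quoted above) and with the bound $\min\le 2\lfloor n/24\rfloor+2$ for even unimodular lattices. The idea is to exploit the strong symmetry forced on the weight enumerator of $\mathscr{C}$ by self-duality and by the Type II hypothesis, and then run an extremality count on the resulting ring of invariants.

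First I would set up the weight enumerator. For $x\in\mathscr{C}$ let $n_j(x)$ be the number of coordinates of $x$ equal to $j\in\mathbb{Z}_4$, and set $\operatorname{swe}_{\mathscr{C}}(a,b,c):=\sum_{x\in\mathscr{C}}a^{\,n_0(x)}b^{\,n_1(x)+n_3(x)}c^{\,n_2(x)}$, a homogeneous polynomial of degree $n$; grouping the coordinate values $1$ and $3=-1$ is legitimate because the Euclidean weight is insensitive to sign and $\mathscr{C}$, being self-dual, is invariant under negating any coordinate. Since the Euclidean weight of $x$ is $n_1(x)+n_3(x)+4n_2(x)$, the entire Euclidean weight distribution of $\mathscr{C}$ is read off from $\operatorname{swe}_{\mathscr{C}}$. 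Now self-duality together with the MacWilliams identity for $\mathbb{Z}_4$-codes gives invariance of $\operatorname{swe}_{\mathscr{C}}$ under $T:(a,b,c)\mapsto\tfrac12\,(a+2b+c,\ a-c,\ a-2b+c)$, while the Type II hypothesis --- all Euclidean weights divisible by $8$ --- gives invariance under the diagonal transformation $D:(a,b,c)\mapsto(a,\zeta_8 b,-c)$ with $\zeta_8=e^{2\pi i/8}$, because $\zeta_8^{\,n_1(x)+n_3(x)}(-1)^{n_2(x)}=\zeta_8^{\operatorname{wt}_E(x)}=1$. Hence $\operatorname{swe}_{\mathscr{C}}$ lies in the ring of invariants of the finite group $G=\langle T,D\rangle\subset\mathrm{GL}_3(\mathbb{C})$.

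The technical core is then to compute the Molien series of $G$ and describe the invariant ring. The expectation, mirroring the binary case, is that the dimension of the space of degree-$n$ invariants equals $\lfloor n/24\rfloor+1$ once $24\mid n$ (and is governed by the Molien series in general), that there is a degree-$8$ generator --- essentially the symmetrized weight enumerator of the length-$8$ octacode, which is why Euclidean weights start at $8$ --- and a further generator of degree $24$ whose Euclidean-weight expansion is divisible by a high power of the ``weight'' variable. Writing $\operatorname{swe}_{\mathscr{C}}$ as a linear combination of products of these generators leaves exactly $\lfloor n/24\rfloor+1$ free coefficients, so one can at most arrange that the codeword-counts in Euclidean weights $8,16,\dots,8\lfloor n/24\rfloor$ all vanish; demanding in addition that there be no codeword of Euclidean weight $8\lfloor n/24\rfloor+8$ over-determines the linear system and forces $\operatorname{swe}_{\mathscr{C}}\equiv 0$, a contradiction. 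Therefore $d_E(\mathscr{C})\le 8\lfloor n/24\rfloor+8$.

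The main obstacle is precisely this invariant-theory step: pinning down the generators and degrees of the $G$-invariant ring and verifying that the degree-$24$ generator vanishes to the order needed to make the count yield $8\lfloor n/24\rfloor+8$ rather than something weaker; once that is in hand the extremality bookkeeping is routine. As an alternative one might pass to the rank-$n$ even unimodular lattice $A(\mathscr{C})$ attached to $\mathscr{C}$ by Construction $\mathrm{A}_4$ (Theorem~\ref{thm:latticez4code}) and invoke the classical bound for even unimodular lattices via $\Theta_{A(\mathscr{C})}$ being a polynomial in $E_4$ and $\Delta$; but this controls only $\min A(\mathscr{C})$, which saturates at $4$ since $2\mathbb{Z}^n\subseteq A(\mathscr{C})$, so recovering the sharp bound on $d_E(\mathscr{C})$ would require the shadow / Jacobi-theta refinement of the theta series, which is essentially equivalent to the invariant-theoretic computation above.
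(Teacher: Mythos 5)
The paper does not actually prove this theorem; it is cited wholesale from \cite{568705}, so there is no ``paper's own proof'' to compare against. Evaluating your sketch on its own merits, the general framework (Gleason--Mallows--Sloane invariant theory applied to the symmetrized weight enumerator, with the Type II condition and MacWilliams self-duality supplying a finite matrix group $G$ under which $\operatorname{swe}_{\mathscr{C}}$ is invariant) is the right \emph{starting} point and matches the literature. But the extremality count as you lay it out cannot work, and the anticipated outcome of the ``main obstacle'' step is not merely unverified but impossible.

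The specific problem is the claim that the space of degree-$n$ invariants has dimension $\lfloor n/24\rfloor+1$. Your group $G=\langle T,D\rangle$ is a \emph{finite} subgroup of $\mathrm{GL}_3(\mathbb{C})$, so the invariant ring $\mathbb{C}[a,b,c]^G$ has Krull dimension 3 and its Hilbert series has a pole of order 3 at $t=1$; hence $\dim(\mathbb{C}[a,b,c]^G)_n$ grows \emph{quadratically} in $n$, roughly $\binom{n+2}{2}/|G|$. In the binary Gleason argument the weight enumerator lives in two variables, so the invariant ring has Krull dimension 2 with the two free generators $W_8, W_{24}$, giving exactly the linear growth $\lfloor n/24\rfloor+1$ that makes the vanishing constraints over-determine the system. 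Here the constraints (vanishing of the coefficients of Euclidean weights $8,16,\dots,8\lfloor n/24\rfloor$ in the specialization $\operatorname{swe}_{\mathscr{C}}(1,y,y^4)$) number only $\lfloor n/24\rfloor$, which is hopelessly few against $O(n^2)$ free parameters. The naive count therefore does not yield a contradiction, and the degree-8 and degree-24 generators you posit cannot be the whole story --- there must be at least a third algebraically independent generator, and typically secondary invariants as well.

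What is genuinely missing, and what the actual proof in the literature supplies, is a mechanism that collapses the relevant space back down to $\lfloor n/24\rfloor + 1$ dimensions. You gesture at this in your last paragraph when you mention the shadow / Jacobi-theta refinement and say it is ``essentially equivalent'' to the invariant-theoretic computation --- but that is exactly the ingredient you are missing, not an equivalent reformulation of something you already have. The shadow of $\mathscr{C}$ (equivalently, the Jacobi theta series of $A(\mathscr{C})$ refined by the embedding $2\mathbb{Z}^n\subset A(\mathscr{C})\subset\frac{1}{2}\mathbb{Z}^n$) imposes additional linear constraints on the swe that the $G$-invariance alone does not see, and it is this joint constraint that cuts the achievable Euclidean weight enumerators down to a one-parameter-per-24 family. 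Without that refinement the $\mathbb{Z}_4$ analogue of Gleason's theorem identifies a ring that is too big for the extremality bookkeeping, so the proof as sketched would fail at precisely the step you flag as the bottleneck.
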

\begin{defn}
If equality holds in Theorem \ref{thm:extremalcd}, then $\mathscr{C}$ is called an \textit{extremal code}. Such extremal codes will play a role in our explicit constructions of superconformal currents. 
\end{defn}
There is no known classification of extremal $\mathbb{Z}_4$ codes. In section \ref{sec:N>1susy},  we 
will relate the construction of superconformal currents to certain codes. Three of the codes that will be relevant are described below: \\\\
\textbf{Pseudo Golay codes:}
One set of $\mathbb{Z}_4$ codes which concern us here are those whose residue code is the Golay code. Following \cite{RAINS1999215}, we call such codes \textit{pseudo Golay codes}. There are exactly 13 pseudo Golay codes \cite{RAINS1999215}. The generator matrix for two of these pseudo Golay code is given below. The rest of the 11 generator matrices can be found in \cite{RAINS1999215}.
\[
\begin{pmatrix}
1000&0000&0000&1300&2111&0123 \\
0100&0000&0000&1210&2300&3111 \\
0010&0000&0000&3311&2330&1200 \\
0001&0000&0000&0331&1233&0120 \\
0000&1000&0000&0033&1123&3012 \\
0000&0100&0000&2023&3312&1123 \\
0000&0010&0000&1012&1231&1223 \\
0000&0001&0000&1311&0023&0233 \\
0000&0000&1000&1301&3302&2130 \\
0000&0000&0100&0130&1330&2213 \\
0000&0000&0010&1223&2033&3332\\
0000&0000&0001&2102&3003&1111
\end{pmatrix}\quad \begin{pmatrix}
1000 & 0000 & 0000 & 2202 & 2313 & 3311 \\
0100 & 0000 & 0000 & 2233 & 3301 & 1001 \\
0010 & 0000 & 0000 & 1010 & 3203 & 3312 \\
0001 & 0000 & 0000 & 0101 & 3220 & 3331  \\
0000 & 1000 & 0000 & 3223 & 3013 & 2121  \\
0000 & 0100 & 0000 & 1113 & 3210 & 1020  \\
0000 & 0010 & 0000 & 1302 & 3130 & 2310  \\
0000 & 0001 & 0000 & 2112 & 3231 & 0211  \\
0000 & 0000 & 1000 & 3121 & 2323 & 1320  \\
0000 & 0000 & 0100 & 1231 & 0113 & 2212  \\
0000 & 0000 & 0010 & 3332 & 0102 & 1013 \\
0000 & 0000 & 0001 & 2311 & 0132 & 2121
\end{pmatrix}
\]
\noindent\textbf{Quaternary Golay code:}
Another important example of a $\mathbb{Z}_4$ code is a code denoted $\widehat{\mathcal{Q}}_4$ in \cite{Bonnecaze97quaternaryquadratic}. It has the property that Res$(\widehat{\mathcal{Q}}_4)\cong \text{Tor}(\widehat{\mathcal{Q}}_4) \cong \mathcal{G}$. Thus it fits in an exact sequence 
\begin{equation}\label{eq:quatgolayexseq}
    0\longrightarrow \mathcal{G}\stackrel{f}{\longrightarrow}\widehat{\mathcal{Q}}_4\stackrel{g}{\longrightarrow}\mathcal{G}\longrightarrow 0.
\end{equation}
Let $n_0(x),n_{\pm 1}(x),n_2(x)$ denote the number of entries in a   codeword $x\in \widehat{\mathcal{Q}}_4$ with $0, \pm 1, 2 \in \mathbb{Z}/4\mathbb{Z}$, respectively.   
%
%
Since  Res$(\widehat{\mathcal{Q}}_4)=\mathcal{G}$, $n_{+ 1}(x)+n_{- 1}(x)=0,8,12,16,24$ for every $x\in\widehat{\mathcal{Q}}_4$. 
The number of codewords of each type is listed in Table \ref{table:q4}.
\begin{remark}
The codewords in $\widehat{\mathcal{Q}}_4$ with $n_{\pm 1}=12$ have $\pm 1$s supported on a dodecad in the Golay code. The reason is that the mod 2 reduction of $\widehat{\mathcal{Q}}_4$ is the Golay code and $0,2$ reduce to 0 mod 2.    
\end{remark}
In Theorem \ref{thm:quatgolcodsupcur} of Subsection \ref{sec:N>1susy} the following result will be
useful in the construction of a nontrivial supercurrent, see Subsection \ref{subsec:quatgolaysupcur} for the details of the construction.
\begin{lemma}\label{rem:wt4cwQ4}
Let $w\in\mathcal{G}$ be a dodecad and $\widehat{\mathcal{Q}}_w\subset\widehat{\mathcal{Q}}_4$ be the preimage of the mod 2 reduction map from $\widehat{\mathcal{Q}}_4$ to $\mathcal{G}$. Then for any choice of three out of the 12 remaining positions, there is a codeword $x\in\widehat{\mathcal{Q}}_w$ with 2 at those three positions and 0 at the remaining 9 positions.  
\begin{proof}
Let $ x\in\widehat{\mathcal{Q}}_w$. Then $\pm 1$s in $x$ are supported at $w$. At the other 12 positions, we can either have 0 or 2. From Table \ref{table:q4}, we see that the number of codewords with 12 $\pm 1$s, 9 0s and 3 2s is $220\cdot 2576$. Moreover, since there are exactly 2576 dodecads in the Golay code, from Table \ref{table:q4} we see that, if we fix the position of the 12 $\pm 1$s at $w$ then there are a total of 220 codewords in $\widehat{\mathcal{Q}}_4$ with 9 0s and 3 2s at the other 12 positions. Since there are exactly ${12\choose 3}=220$ choices for the positions of 2, there is a codeword in $\widehat{\mathcal{Q}}_w$ with $\pm 1$s at $w$ and every possible position of the three 2s. 
\end{proof}
\end{lemma}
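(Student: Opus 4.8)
The plan is to exhibit $\widehat{\mathcal{Q}}_w$ as a coset of $2\mathcal{G}$ inside $\mathbb{Z}_4^{24}$ and then control where the entries equal to $2$ can sit. Write $\operatorname{supp}(w)$ for the twelve positions of the dodecad $w$ and $S$ for the complementary twelve positions. From the exact sequence $0\to\mathcal{G}\xrightarrow{f}\widehat{\mathcal{Q}}_4\xrightarrow{g}\mathcal{G}\to0$, with $f$ induced by $1\mapsto2$, $g$ reduction mod $2$, and $\operatorname{Res}(\widehat{\mathcal{Q}}_4)=\operatorname{Tor}(\widehat{\mathcal{Q}}_4)=\mathcal{G}$, one has $\ker g=f(\mathcal{G})=2\mathcal{G}$, so fixing any $x_0\in\widehat{\mathcal{Q}}_w$ gives $\widehat{\mathcal{Q}}_w=x_0+2\mathcal{G}$. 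Since $g(x)=w$ for every $x\in\widehat{\mathcal{Q}}_w$, the entries of $x$ are odd (hence $\pm1$) exactly on $\operatorname{supp}(w)$ and even (hence $0$ or $2$) exactly on $S$. Let $\tau(x)\in\mathbb{F}_2^{12}$ be the indicator vector, on $S$, of the coordinates where $x$ equals $2$. The lemma is precisely the statement that $\tau(\widehat{\mathcal{Q}}_w)$ contains every weight-$3$ vector of $\mathbb{F}_2^{12}$.

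First I would compute $|\tau(\widehat{\mathcal{Q}}_w)|$. As $v$ ranges over $\mathcal{G}$ one has $\tau(x_0+2v)=\tau(x_0)+\pi_S(v)$, where $\pi_S\colon\mathcal{G}\to\mathbb{F}_2^{12}$ is restriction to $S$; hence $\tau(\widehat{\mathcal{Q}}_w)$ is a coset of the subspace $\pi_S(\mathcal{G})$, whose dimension is $12-\dim\{v\in\mathcal{G}:\operatorname{supp}(v)\subseteq\operatorname{supp}(w)\}$. A nonzero Golay codeword supported inside the dodecad $\operatorname{supp}(w)$ cannot have weight $8$ (an octad meets a dodecad in $2$, $4$, or $6$ places), cannot have weight $\ge16$ (too large), and the only weight-$12$ such codeword is $w$ itself; so that space is $\{0,w\}$. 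Therefore $\dim\pi_S(\mathcal{G})=11$ and $|\tau(\widehat{\mathcal{Q}}_w)|=2^{11}$.

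The key step is to determine which coset this is, and for that I would use that $\widehat{\mathcal{Q}}_4$ is a Type~II $\mathbb{Z}_4$-code (this is what underlies $A(\widehat{\mathcal{Q}}_4)=\Lambda_{\mathrm{L}}$): all its Euclidean weights are divisible by $8$. For $x\in\widehat{\mathcal{Q}}_w$ the twelve $\pm1$'s on $\operatorname{supp}(w)$ each have Euclidean weight $1$, and if $k$ of the $S$-entries equal $2$ they each have Euclidean weight $4$, so the Euclidean weight of $x$ is $12+4k=4(3+k)$; divisibility by $8$ forces $k$ to be odd. Hence every element of $\tau(\widehat{\mathcal{Q}}_w)$ has odd Hamming weight. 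Since $\mathbb{F}_2^{12}$ contains exactly $2^{11}$ vectors of odd weight and $|\tau(\widehat{\mathcal{Q}}_w)|=2^{11}$, it follows that $\tau(\widehat{\mathcal{Q}}_w)$ equals the full set of odd-weight vectors. In particular every weight-$3$ vector lies in it, and unwinding $\tau$ this says: for each $3$-subset $T\subseteq S$ there is a codeword $x\in\widehat{\mathcal{Q}}_w$ equal to $2$ on $T$ and to $0$ on the remaining nine positions of $S$, which is the assertion of the lemma.

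The main obstacle is exactly this last step: the dimension count by itself would be consistent with $\tau(\widehat{\mathcal{Q}}_w)$ being the \emph{even}-weight code, in which case no weight-$3$ configuration would occur and the lemma would be false; everything hinges on the parity of $k$ forced by the Type~II condition. The only other input needing care is the octad--dodecad intersection data used for $\dim\pi_S(\mathcal{G})=11$. Alternatively one could replace the second and third steps by a direct enumeration against the weight enumerator of $\widehat{\mathcal{Q}}_4$ (Table~\ref{table:q4}), comparing the number of codewords of shape $(\pm1^{12},2^{3},0^{9})$ with the number of dodecads and the $\binom{12}{3}$ placements of the three $2$'s, but the coset argument is shorter and self-contained.
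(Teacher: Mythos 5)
Your proof is correct, and it takes a genuinely different and more rigorous route than the paper's. The paper argues by raw counting from Table~\ref{table:q4}: it reads off the number of codewords with shape $(9;12;3)$, divides by the number of dodecads, and concludes there is one codeword per $\binom{12}{3}=220$ placement. That argument has two soft spots. First, the arithmetic is off: the table entry is $2\cdot 220\cdot 2576$, so fixing the dodecad $w$ yields $440$ such codewords, not $220$; this is consistent with your finding that $\tau$ has fibers of size $2$ (the elements of a fiber differ by $2w$), but it breaks the paper's implicit ``$220$ codewords, $220$ placements, hence a bijection'' heuristic. Second, even with the right number, a count of codewords does not by itself rule out some placements occurring twice and others not at all. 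Your argument closes exactly that gap: you exhibit $\widehat{\mathcal{Q}}_w$ as a coset $x_0+2\mathcal{G}$, compute $\tau(x_0+2v)=\tau(x_0)+\pi_S(v)$, use the octad--dodecad intersection pattern to get $\dim\pi_S(\mathcal{G})=11$, and then invoke the Type~II (doubly even Euclidean weight) condition to pin down $\tau(\widehat{\mathcal{Q}}_w)$ as precisely the set of odd-weight vectors in $\mathbb{F}_2^{12}$. That last parity step is essential and is exactly what makes your proof conclusive where the counting argument is not; it is also structurally cleaner in that it relies only on $\operatorname{Res}=\operatorname{Tor}=\mathcal{G}$, the Golay intersection data, and Type~II-ness rather than on the full symmetrized weight enumerator of $\widehat{\mathcal{Q}}_4$.
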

%
%
\begin{landscape}
\vspace{10cm}
\begin{table}[h]
\centering
    \begin{tabular}{|c|c|c|c|c|}
    \hline
    $n_{\pm 1}=0$     & $n_{\pm 1}=8$ & $n_{\pm 1}=12$ & $n_{\pm 1}=16$ & $n_{\pm 1}=24$  \\\hline
     $(0; 0; 24)=1$  & $(2; 8; 14)=2\cdot8\cdot 759 $  & $(1; 12; 11)=2\cdot12\cdot2576$ &$(0; 16; 8)=2\cdot16\cdot 759$& $(0; 24; 0)=2^{12}$ \\ $(24; 0; 0)=1$ & $(14; 8; 2)=2\cdot8\cdot 759$ & $(11; 12; 1)=2\cdot12\cdot2576$&$(8; 16; 0)=2\cdot16\cdot 759$ & \\$(8; 0; 16)=759$ & $(4; 8; 12)=2\cdot112\cdot 759$ & $(3; 12; 9)=2\cdot220\cdot 2576$&$(2; 16; 6)=2\cdot448\cdot 759$ & Total $2^{12}$ \\  $(16; 0; 8)=759$ & $(12; 8; 4)=2\cdot112\cdot 759$ & $(9; 12; 3)=2\cdot220\cdot 2576$ &$(2; 16; 6)=2\cdot448\cdot 759$ &\\ $(12; 0; 12)=2576$ & $(6; 8; 12)=2\cdot504\cdot 759$ & $(5; 12; 7)=2\cdot792\cdot 2576$ &$(4; 16; 4)=2240\cdot 759$  & \\  & $ (12; 8; 6)=2\cdot504\cdot 759$ & $(7; 12; 5)=2\cdot792\cdot 2576$&  & \\Total $2^{12}$ & $(8; 8; 8)=1600\cdot 759$ & & Total $759\cdot 2^{12}$&\\ & &Total $2576\cdot 2^{12}$ && \\&Total $759\cdot 2^{12}$ & &&\\\hline 
    \end{tabular}
    \caption{Number of codewords of each type in $\widehat{\mathcal{Q}}_4$. The entries $(a;b;c)=N$ in each column means that there are $N$ codewords of type $(n_0;n_{\pm 1};n_2)$. The factor of 2 in the first few rows of columns 2, 3 and 4 arises from the fact that in those cases for every codeword with $n_{+1}(x)=a,n_{-1}(x)=b$ there is another codeword with $n_{+1}(x)=b,n_{-1}(x)=a$. }
    \label{table:q4}
\end{table}
\end{landscape}
\noindent\textbf{Harada-Kitazume code:}
One other $\mathbb{Z}_4$ code that we will be interested in was constructed in \cite{10.1006/eujc.1999.0360}. The generator matrix for the code is shown below.
\[
\left(\begin{array}{llllll}
2222 & 0000 & 0000 & 0000 & 0000 & 0000 \\
0022 & 2200 & 0000 & 0000 & 0000 & 0000 \\
0000 & 0022 & 2020 & 0000 & 0000 & 0000 \\
0000 & 0000 & 0202 & 2020 & 0000 & 0000 \\
0000 & 0000 & 0000 & 0202 & 2002 & 0000 \\
2020 & 2020 & 0000 & 0000 & 0000 & 0000 \\
0000 & 0220 & 2200 & 0000 & 0000 & 0000 \\
0000 & 0000 & 2002 & 2002 & 0000 & 0000 \\
0000 & 0000 & 0000 & 0022 & 2020 & 0000 \\
2000 & 2000 & 2000 & 2000 & 2000 & 2000 \\
1111 & 1111 & 2000 & 2000 & 0000 & 0000 \\
2000 & 1111 & 1111 & 0000 & 2000 & 0000 \\
0000 & 0000 & 1111 & 1111 & 2000 & 2000 \\
2000 & 0000 & 2000 & 1111 & 1111 & 0000 \\
2000 & 2000 & 0000 & 0000 & 1111 & 1111 \\
3012 & 1010 & 1001 & 1001 & 1100 & 1100 \\
3201 & 1001 & 1100 & 1100 & 1010 & 1010
\end{array}\right)
\]
It is an extremal Type II code. This code also gives rise to a supercurrent, see Section \ref{subsec:duncspinor}, specifically Remark \ref{rem:HKcodesupcurgen}.  
\subsection{Lattice Construction From Codes}\label{subsec:lattice}
Codes can be used to construct lattices. In this section we describe the construction of lattices from binary and $\mathbb{Z}_4$ codes. We begin by defining Euclidean lattices. 
\begin{defn}
Let $\mathbb{R}^n$ be the $n$-dimensional Euclidean space with inner product $x\cdot y=x_1y_1+\dots+x_ny_n$ for $x=(x_1,\dots,x_n)$ and $y=(y_1,\dots,y_n)$. 
\begin{enumerate}
\item An $n$-dimensional \textit{Euclidean lattice} $\Lambda$ is a free $\mathbb{Z}$-module spanned by $n$ vectors $e_{j}\in\mathbb{R}^n, 1 \leq j \leq n$ linearly independent in $\mathbb{R}^n$,  i.e. $$\Lambda=\left\{\sum_{j=1}^{n} n_{j} e_{j}: n_{j} \in \mathbb{Z}\right\}.$$
\item The \textit{norm} of a vector $x \in \Lambda$ is $||x||=\sqrt{x^2}=\sqrt{x\cdot x}$. 
\item The \textit{dual lattice} $\Lambda^*$ is defined as $\Lambda^{*}=\{y: x \cdot y \in \mathbb{Z}~ \forall~ x \in \Lambda\}$ (which is obviously a lattice). The lattice $\Lambda$ is said to be \textit{unimodular} if $\|\Lambda\|^{2} \equiv \operatorname{det}\left(e_{i} \cdot e_{j}\right)=1 .$ 
\item The lattice $\Lambda$ is said to be \textit{integral} if $\Lambda\subseteq\Lambda^*$, i.e. $x \cdot y \in \mathbb{Z}$ for all $x, y \in \Lambda$ and \textit{self-dual} if $\Lambda=\Lambda^*$. 
\item The lattice $\Lambda$ is said to be \textit{even} if $||x||^2$ is even for all $x \in \Lambda$. The set $\Lambda(\sqrt{2})\equiv\{x\in\Lambda:||x||^2=2\}$ is called the \textit{root system}\footnote{In general, a root is a primitive vector that generates a reflection isometry of the lattice. For even lattices, the roots can have different norms. For even, unimodular lattices, the norm-2 vectors are all the roots.} of $\Lambda$.
\item Two $n$-dimensional lattices $\Lambda_1,\Lambda_2$ are said to be isomorphic if there exists a norm-preserving  $\mathbb{Z}$-module isomorphism between them. 
\item An automorphism of the lattice is a rotation $R\in\mathrm{SO}(n)$ which fixes the lattice. The group of all automorphisms (the group operation being composition) is called the automorphism group of $\Lambda$ and denoted by Aut$(\Lambda)$.  
\end{enumerate}
\end{defn}
It is easy to see that a lattice $\Lambda$ is self-dual if and only if it is integral and unimodular since $\left\|\Lambda^{*}\right\|=\|\Lambda\|^{-1}.$ Moreover it is a well known result that the dimension of an even self-dual lattice has to be a multiple of 8. Even self-dual lattices in 24
dimensions have been completely classified up to isomorphism.  There are 24 inequivalent such lattices called the \textit{Neimeier lattices} classified by their root system. There is a unique lattice with empty root system called the Leech lattice. We denote the Leech lattice by $\Lambda_{\mathrm{L}}$.
%
%
%
%
An explicit generator matrix for the Leech lattice is given below: the rows of the matrix below generates the Leech lattice under integer linear combinations.
\[
\frac{1}{2\sqrt{2}}\left(\begin{array}{llllllllllllllllllllllll}
8 & 0 & 0 & 0 & 0 & 0 & 0 & 0 & 0 & 0 & 0 & 0 & 0 & 0 & 0 & 0 & 0 & 0 & 0 & 0 & 0 & 0 & 0 & 0 \\
4 & 4 & 0 & 0 & 0 & 0 & 0 & 0 & 0 & 0 & 0 & 0 & 0 & 0 & 0 & 0 & 0 & 0 & 0 & 0 & 0 & 0 & 0 & 0 \\
4 & 0 & 4 & 0 & 0 & 0 & 0 & 0 & 0 & 0 & 0 & 0 & 0 & 0 & 0 & 0 & 0 & 0 & 0 & 0 & 0 & 0 & 0 & 0 \\
4 & 0 & 0 & 4 & 0 & 0 & 0 & 0 & 0 & 0 & 0 & 0 & 0 & 0 & 0 & 0 & 0 & 0 & 0 & 0 & 0 & 0 & 0 & 0 \\
4 & 0 & 0 & 0 & 4 & 0 & 0 & 0 & 0 & 0 & 0 & 0 & 0 & 0 & 0 & 0 & 0 & 0 & 0 & 0 & 0 & 0 & 0 & 0 \\
4 & 0 & 0 & 0 & 0 & 4 & 0 & 0 & 0 & 0 & 0 & 0 & 0 & 0 & 0 & 0 & 0 & 0 & 0 & 0 & 0 & 0 & 0 & 0 \\
4 & 0 & 0 & 0 & 0 & 0 & 4 & 0 & 0 & 0 & 0 & 0 & 0 & 0 & 0 & 0 & 0 & 0 & 0 & 0 & 0 & 0 & 0 & 0 \\
2 & 2 & 2 & 2 & 2 & 2 & 2 & 2 & 0 & 0 & 0 & 0 & 0 & 0 & 0 & 0 & 0 & 0 & 0 & 0 & 0 & 0 & 0 & 0 \\
4 & 0 & 0 & 0 & 0 & 0 & 0 & 0 & 4 & 0 & 0 & 0 & 0 & 0 & 0 & 0 & 0 & 0 & 0 & 0 & 0 & 0 & 0 & 0 \\
4 & 0 & 0 & 0 & 0 & 0 & 0 & 0 & 0 & 4 & 0 & 0 & 0 & 0 & 0 & 0 & 0 & 0 & 0 & 0 & 0 & 0 & 0 & 0 \\
4 & 0 & 0 & 0 & 0 & 0 & 0 & 0 & 0 & 0 & 4 & 0 & 0 & 0 & 0 & 0 & 0 & 0 & 0 & 0 & 0 & 0 & 0 & 0 \\
2 & 2 & 2 & 2 & 0 & 0 & 0 & 0 & 2 & 2 & 2 & 2 & 0 & 0 & 0 & 0 & 0 & 0 & 0 & 0 & 0 & 0 & 0 & 0 \\
4 & 0 & 0 & 0 & 0 & 0 & 0 & 0 & 0 & 0 & 0 & 0 & 4 & 0 & 0 & 0 & 0 & 0 & 0 & 0 & 0 & 0 & 0 & 0 \\
2 & 2 & 0 & 0 & 2 & 2 & 0 & 0 & 2 & 2 & 0 & 0 & 2 & 2 & 0 & 0 & 0 & 0 & 0 & 0 & 0 & 0 & 0 & 0 \\
2 & 0 & 2 & 0 & 2 & 0 & 2 & 0 & 2 & 0 & 2 & 0 & 2 & 0 & 2 & 0 & 0 & 0 & 0 & 0 & 0 & 0 & 0 & 0 \\
2 & 0 & 0 & 2 & 2 & 0 & 0 & 2 & 2 & 0 & 0 & 2 & 2 & 0 & 0 & 2 & 0 & 0 & 0 & 0 & 0 & 0 & 0 & 0 \\
4 & 0 & 0 & 0 & 0 & 0 & 0 & 0 & 0 & 0 & 0 & 0 & 0 & 0 & 0 & 0 & 4 & 0 & 0 & 0 & 0 & 0 & 0 & 0 \\
2 & 0 & 2 & 0 & 2 & 0 & 0 & 2 & 2 & 2 & 0 & 0 & 0 & 0 & 0 & 0 & 2 & 2 & 0 & 0 & 0 & 0 & 0 & 0 \\
2 & 0 & 0 & 2 & 2 & 2 & 0 & 0 & 2 & 0 & 2 & 0 & 0 & 0 & 0 & 0 & 2 & 0 & 2 & 0 & 0 & 0 & 0 & 0 \\
2 & 2 & 0 & 0 & 2 & 0 & 2 & 0 & 2 & 0 & 0 & 2 & 0 & 0 & 0 & 0 & 2 & 0 & 0 & 2 & 0 & 0 & 0 & 0 \\
0 & 2 & 2 & 2 & 2 & 0 & 0 & 0 & 2 & 0 & 0 & 0 & 2 & 0 & 0 & 0 & 2 & 0 & 0 & 0 & 2 & 0 & 0& 0 \\ 
0 & 0 & 0 & 0 & 0 & 0 & 0 & 0 & 2 & 2 & 0 & 0 & 2 & 2 & 0 & 0 & 2 & 2 & 0 & 0 & 2 & 2 & 0 & 0 \\ 0 & 0 & 0 & 0 & 0 & 0 & 0 & 0 & 2 & 0 & 2 & 0 & 2 & 0 & 2 & 0 & 2 & 0 & 2 & 0 & 2 & 0 & 2 & 0 \\ -3 & 1 & 1 & 1 & 1 & 1 & 1 & 1 & 1 & 1 & 1 & 1 & 1 & 1 & 1 & 1 & 1 & 1 & 1 & 1 & 1 & 1 & 1 & 1
\end{array}\right)
\]
We will give a more conceptual description of the Leech lattice in terms of the Golay code in the 
next section.
\\\\
\noindent\textbf{Lattice construction from binary codes:}
We can construct Euclidean lattice from a code. For an $r$-dimensional binary code $\mathcal{C}$, define 
\begin{equation}
\Lambda(\mathcal{C}):=\frac{1}{\sqrt{2}}\mathcal{C}+\sqrt{2}\mathbb{Z}^r
\label{eq:latticecode}
\end{equation}
where $\mathcal{C}$ is now considered as a subset of $\mathbb{Z}^r$ via a map such as\footnote{This map is just an injective function. In particular it is not a group or ring homomorphism.}
\begin{equation}
\begin{split}    &\mathbb{F}_2\hookrightarrow\mathbb{Z}\\&0\mapsto 0,~~1\mapsto 1.
\end{split}
\end{equation}
\begin{prop}
The lattice $\Lambda(\mathcal{C})$ is integral if and only if $\mathcal{C}\subset\mathcal{C}^{\perp}$ and is even if and only if $\mathcal{C}$ is doubly-even. It is self-dual if and only if $\mathcal{C}$ is self-dual. Unimodularity of $\Lambda(\mathcal{C})$ is equivalent to $\mathrm{dim}~\mathcal{C}=\frac{1}{2}r$.  
\end{prop}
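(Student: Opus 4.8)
The plan is to work directly with an arbitrary vector of $\Lambda_{\mathcal{C}}$, which by eq.~\eqref{eq:latticecode} can be written $v=\tfrac{1}{\sqrt2}(c+2m)$ with $c$ the $0/1$ lift of a codeword of $\mathcal{C}$ and $m\in\mathbb{Z}^r$. First I would settle integrality and evenness by computing inner products. For $v=\tfrac{1}{\sqrt2}(c+2m)$ and $w=\tfrac{1}{\sqrt2}(c'+2m')$ one finds
\[
v\cdot w=\tfrac12\,c\cdot c'+\bigl(c\cdot m'+m\cdot c'+2\,m\cdot m'\bigr),
\]
where the parenthesized quantity is an integer; hence $v\cdot w\in\mathbb{Z}$ for all such $v,w$ iff $c\cdot c'$ is even for all codewords, i.e. iff $\mathcal{C}\subseteq\mathcal{C}^{\perp}$ (the case $m=m'=0$ forces the condition, and conversely it is manifestly sufficient). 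Likewise
\[
v\cdot v=\tfrac12\,\mathrm{wt}(c)+2\bigl(c\cdot m+m\cdot m\bigr),
\]
using $c\cdot c=\mathrm{wt}(c)$ for a $0/1$ vector; the last term is even, so $v\cdot v$ is even for every $v$ iff $\mathrm{wt}(c)\equiv 0\bmod 4$ for every codeword, i.e. iff $\mathcal{C}$ is doubly even. I would also remark here that doubly even implies self-orthogonal, via $\mathrm{wt}(c\oplus c')=\mathrm{wt}(c)+\mathrm{wt}(c')-2\lvert c\wedge c'\rvert$, so that evenness of $\Lambda_{\mathcal{C}}$ indeed entails integrality.

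Next I would handle unimodularity using the inclusions $\sqrt2\,\mathbb{Z}^r\subseteq\Lambda_{\mathcal{C}}\subseteq\tfrac{1}{\sqrt2}\mathbb{Z}^r$. The covolume of $\sqrt2\,\mathbb{Z}^r$ is $2^{r/2}$, and the $0/1$ representatives exhibit an isomorphism $\Lambda_{\mathcal{C}}/\sqrt2\,\mathbb{Z}^r\cong\mathcal{C}$, so $[\Lambda_{\mathcal{C}}:\sqrt2\,\mathbb{Z}^r]=2^{\dim\mathcal{C}}$ and therefore $\operatorname{covol}(\Lambda_{\mathcal{C}})=2^{\,r/2-\dim\mathcal{C}}$. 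Since $\det(e_i\cdot e_j)=\operatorname{covol}(\Lambda_{\mathcal{C}})^2$, unimodularity holds precisely when $\dim\mathcal{C}=\tfrac12 r$.

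Finally, the self-duality statement follows by combining the above with the remark (made just before the Proposition) that a lattice is self-dual iff it is integral and unimodular: this means $\mathcal{C}\subseteq\mathcal{C}^{\perp}$ together with $\dim\mathcal{C}=\tfrac12 r$, and since $\dim\mathcal{C}^{\perp}=r-\dim\mathcal{C}$ these two conditions are jointly equivalent to $\mathcal{C}=\mathcal{C}^{\perp}$. The whole argument is a short string of elementary computations; the one place that warrants care, and where I expect to spend most attention, is checking that the relevant invariants $c\cdot c'\bmod 2$ and $\mathrm{wt}(c)\bmod 4$ are genuinely functions of the codeword of $\mathcal{C}$ — i.e. are unchanged under replacing a representative by one differing by an element of $2\mathbb{Z}^r$ — since the lift $\mathbb{F}_2\hookrightarrow\mathbb{Z}$ used in eq.~\eqref{eq:latticecode} is not additive. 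This is not a real obstacle, merely the point requiring a clean statement.
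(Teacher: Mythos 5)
The paper records this proposition without proof (it is a standard fact in the coding-theory/lattice dictionary), so there is no in-text argument to compare against; your proof is correct and is essentially the canonical one. Your decomposition $v=\tfrac{1}{\sqrt2}(c+2m)$ is in fact canonical — every lattice vector has a unique such expression with $c\in\{0,1\}^r$ the lift of its image in $\Lambda_{\mathcal{C}}/\sqrt2\,\mathbb{Z}^r\cong\mathcal{C}$ — so the well-definedness concern you flag at the end is resolved automatically: you are not choosing a representative, and the quantities $c\cdot c'\bmod 2$ and $\mathrm{wt}(c)\bmod 4$ computed from the $0/1$ lift agree with the binary inner product and the Hamming weight of the codeword by construction. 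The covolume count $[\Lambda_{\mathcal{C}}:\sqrt2\,\mathbb{Z}^r]=|\mathcal{C}|$ and the reduction of self-duality to ``integral plus unimodular'' are both handled correctly, including the dimension count $\dim\mathcal{C}^{\perp}=r-\dim\mathcal{C}$. One cosmetic note: the paper's definition of ``integral'' contains a typo ($\Lambda^*\subseteq\Lambda$ where it should read $\Lambda\subseteq\Lambda^*$), but your argument uses the correct characterization via $x\cdot y\in\mathbb{Z}$, so nothing is affected.
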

We can also construct another lattice by ``twisting'' this lattice as follows: define the subspaces $\Lambda_a(\mathcal{C}), a=0,1,2,3$ as follows: 
\begin{equation}
\begin{split}
&\Lambda_0(\mathcal{C}):=\frac{1}{\sqrt{2}}\mathcal{C}+\sqrt{2}\mathbb{Z}^r_{+},\\
&\Lambda_1(\mathcal{C}):=\frac{1}{\sqrt{2}}\mathcal{C}+\sqrt{2}\mathbb{Z}^r_{-},\\
&\Lambda_2(\mathcal{C}):=\frac{1}{\sqrt{2}}\mathcal{C}+\frac{1}{2\sqrt{2}}\underline{1}+\sqrt{2}\mathbb{Z}^r_{(-)^{r+1}},\\
&\Lambda_3(\mathcal{C}):=\frac{1}{\sqrt{2}}\mathcal{C}+\frac{1}{2\sqrt{2}}\underline{1}+\sqrt{2}\mathbb{Z}^r_{(-)^{r}},
\end{split}
\end{equation}
where 
\begin{equation}
    \begin{split}
\mathbb{Z}^r_{+}:=\{\bm{x}\in\mathbb{Z}^r:\bm{x}^2\in 2\mathbb{Z}\},\\
\mathbb{Z}^r_{-}:=\{\bm{x}\in\mathbb{Z}^r:\bm{x}^2\in 2\mathbb{Z}+1\},
\end{split}
\label{eq:}
\end{equation}
and $\underline{1}$ is a vector with all components 1.
Clearly $\Lambda(\mathcal{C})=\Lambda_0(\mathcal{C})\cup \Lambda_1(\mathcal{C})$. We define the twisted lattice by 
\begin{equation}
\widetilde{\Lambda}(\mathcal{C}):=\Lambda_0(\mathcal{C})\cup \Lambda_3(\mathcal{C}).
\label{eq:twistlatticecode}
\end{equation}
The twisted construction is particularly important for our case because of the following theorem.
\begin{thm}\emph{\cite{conway:1999}}
Let $\mathcal{G}$ be the binary Golay code. Then 
\begin{equation}
    \widetilde{\Lambda}(\mathcal{G})\cong\Lambda_{\mathrm{L}}.
\end{equation}
\end{thm}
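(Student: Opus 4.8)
The plan is to verify directly that $\widetilde{\Lambda}(\mathcal{G})$ has the three properties which, by the classification recalled in Subsection~\ref{subsec:lattice}, pin down the Leech lattice uniquely: it is an even, unimodular lattice of rank $24$ with empty root system (no vector of norm-squared $2$). Since there are exactly $24$ even self-dual rank-$24$ lattices and $\Lambda_{\mathrm{L}}$ is the only one with no roots, establishing these three facts for $\widetilde{\Lambda}(\mathcal{G})=\Lambda_0(\mathcal{G})\cup\Lambda_3(\mathcal{G})$ forces $\widetilde{\Lambda}(\mathcal{G})\cong\Lambda_{\mathrm{L}}$.

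First I would check that $\widetilde{\Lambda}(\mathcal{G})$ is an additive subgroup of $\mathbb{R}^{24}$. This is where one uses that $\mathcal{G}$ is linear, that it contains the all-ones word $\underline{1}$ (so the glue shift $\tfrac{1}{2\sqrt2}\underline{1}$ closes up, $\Lambda_3(\mathcal{G})=\tfrac{1}{2\sqrt2}\underline{1}+\Lambda_0(\mathcal{G})$ and $\Lambda_3+\Lambda_3\subseteq\Lambda_0$), and that $\mathcal{G}$ is doubly even, which guarantees that the integer carry vectors produced when two $\{0,1\}$-representatives of codewords are added have even Euclidean norm and hence lie in $\mathbb{Z}^{24}_{+}$. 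Evenness: writing a typical element of $\Lambda_0(\mathcal{G})$ as $\tfrac{1}{\sqrt2}u$ with $u\in\mathbb{Z}^{24}$, $u\bmod 2\in\mathcal{G}$ and $\sum_i u_i\equiv 0\bmod 4$ (the last being the translation of "$y\in\mathbb{Z}^{24}_{+}$"), and a typical element of $\Lambda_3(\mathcal{G})$ as $\tfrac{1}{2\sqrt2}w$ with $w=2c+\underline{1}+4y$, one computes the norms $\tfrac12|u|^2$ resp.\ $\tfrac18|w|^2$ and checks they are even using $\mathrm{wt}(c)\equiv 0\bmod 4$ together with the parity conditions built into the definition. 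Unimodularity: $\sqrt2\,\mathbb{Z}^{24}_{+}$ is a sublattice of covolume $2^{13}$, with $[\Lambda_0(\mathcal{G}):\sqrt2\,\mathbb{Z}^{24}_{+}]=|\mathcal{G}|=2^{12}$ and $[\widetilde{\Lambda}(\mathcal{G}):\Lambda_0(\mathcal{G})]=2$, so $\operatorname{covol}\widetilde{\Lambda}(\mathcal{G})=2^{13}/(2^{12}\cdot 2)=1$; with evenness this makes $\widetilde{\Lambda}(\mathcal{G})$ even self-dual of rank $24$.

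The substantive step, which I expect to be the main obstacle, is the absence of norm-squared $2$ vectors. For $v=\tfrac{1}{\sqrt2}u\in\Lambda_0(\mathcal{G})$ one has $\|v\|^2=2$ iff $|u|^2=4$, so $u$ has shape $(\pm1^4,0^{20})$ or $(\pm2,0^{23})$ up to a permutation; the first is impossible because its reduction mod $2$ is then a nonzero word of weight $4$, whereas every nonzero codeword of $\mathcal{G}$ has weight $\ge 8$, and the second is impossible because $\sum_i u_i=\pm2\not\equiv 0\bmod 4$. For $v=\tfrac{1}{2\sqrt2}w\in\Lambda_3(\mathcal{G})$ every coordinate of $w$ is odd, so $|w|^2\ge 24$ and $\|v\|^2\ge 3>2$. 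Hence $\widetilde{\Lambda}(\mathcal{G})$ has no roots; since $\Lambda_0(\mathcal{G})$ visibly contains norm-$4$ vectors (e.g.\ $\tfrac{1}{\sqrt2}$ times an octad of $\mathcal{G}$), its minimum is exactly $4$. Combining this with the previous paragraph and the uniqueness of $\Lambda_{\mathrm{L}}$ among the rank-$24$ even self-dual lattices completes the proof.

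Finally, I would note an alternative, computation-light route to the no-roots statement: express $\Theta_{\widetilde{\Lambda}(\mathcal{G})}$ through the weight enumerator $W_{\mathcal{G}}$ of the Golay code evaluated on the Jacobi theta constants $\vartheta_2,\vartheta_3,\vartheta_4$, using the MacWilliams identity for the self-dual code $\mathcal{G}$ and standard theta identities such as $\vartheta_3^4=\vartheta_2^4+\vartheta_4^4$; the coefficient of $q$ then vanishes, and one recovers $\Theta_{\Lambda_{\mathrm{L}}}$ exactly. Either way, the combinatorics of $\mathcal{G}$ (minimum weight $8$ and the intersection patterns of its octads and dodecads) is the only genuinely non-formal ingredient, the remainder being bookkeeping with the Construction-$A$ cosets $\Lambda_0(\mathcal{G})$ and $\Lambda_3(\mathcal{G})$.
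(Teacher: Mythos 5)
The paper itself does not prove this theorem---it simply cites Conway--Sloane---and your strategy (show $\widetilde{\Lambda}(\mathcal{G})$ is even, unimodular, rank $24$ and has no norm-$2$ vectors, then invoke uniqueness of $\Lambda_{\mathrm{L}}$ among Niemeier lattices) is exactly the argument in that reference, so in spirit you are reproducing the intended proof, and your alternative theta-function/MacWilliams route is also standard.

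However, there is a genuine parity error in the one step you assert but do not carry out, namely evenness of $\Lambda_3(\mathcal{G})$. You write $\Lambda_3(\mathcal{G})=\tfrac{1}{2\sqrt2}\underline{1}+\Lambda_0(\mathcal{G})$, which is the same as taking $\sqrt2\,\mathbb{Z}^{24}_{+}$ inside $\Lambda_3$ (the naive reading of the paper's $\mathbb{Z}^r_{(-)^{r}}$ with $(-1)^{24}=+1$). With that choice $\Lambda_0\cup\Lambda_3$ is still a lattice, but it is \emph{odd}: the vector $\tfrac{1}{2\sqrt2}\underline{1}\in\Lambda_3$ has norm-squared $24/8=3$, and more generally for $w=2c+\underline{1}+4y$ with $\sum_i y_i$ even one finds $|w|^2\equiv 8\pmod{16}$, so $\tfrac18|w|^2$ is odd. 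What you get is the odd Leech lattice $O_{24}$, not $\Lambda_{\mathrm{L}}$. For the Leech lattice one must take $\sqrt2\,\mathbb{Z}^{24}_{-}$ in $\Lambda_3$ (equivalently: the rescaled coordinates $x=2\sqrt2\,v$ of $v\in\Lambda_3$ satisfy $\sum_i x_i\equiv 4\pmod 8$, with glue vector e.g. $\tfrac{1}{2\sqrt2}(-3,1^{23})$ rather than $\tfrac{1}{2\sqrt2}\underline{1}$). Then $|w|^2\equiv 0\pmod{16}$, all norms are even, $\tfrac{1}{2\sqrt2}\underline{1}\notin\Lambda_3$, and the minimum in $\Lambda_3$ is exactly $4$. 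In short, the paper's exponent in $\mathbb{Z}^r_{(-)^{r}}$ should be read as $r/8$, not $r$; the evenness check you deferred to ``the parity conditions built into the definition'' is precisely the step that distinguishes $\Lambda_{\mathrm{L}}$ from $O_{24}$, and it has to be done explicitly. The rest of your argument---group closure via linearity, $\underline{1}\in\mathcal{G}$ and double-evenness; the covolume count giving unimodularity; the no-roots argument for $\Lambda_0$ from minimum weight $8$ of $\mathcal{G}$ and $\sum u_i\equiv 0\pmod 4$---is correct once this is repaired.
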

This enables us to list all possible vectors of norm $4,6,8$ in the Leech lattice. This is listed in \cite[Table 4.13, Chapter 4]{conway:1999}. Finally one can construct lattices from $\mathbb{Z}_4$ codes as described in the theorem: 
\begin{thm}\emph{\cite[Theorem 4.1]{Bonnecaze97quaternaryquadratic}}
Let $\mathscr{C}$ be a self-dual $\mathbb{Z}_4$-code of length $n$ with
minimum Euclidean weight $d_E$. Then the lattice
\begin{equation}
    A(\mathscr{C}) =
\frac{1}{2}\left\{
\lambda \in \mathbb{Z}^n : \lambda \equiv x(\bmod~4) ~~\text{for some} ~~x \in \mathscr{C}\right\} 
\end{equation}
is an $n$-dimensional unimodular lattice with the minimum norm squared $\mathrm{min}\{4, d_E /4\}$.
Moreover, if $\mathscr{C}$ is Type II, then the lattice $A(\mathscr{C})$ is even and unimodular.
\label{thm:latticez4code}
\end{thm}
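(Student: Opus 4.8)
The plan is to recognise $A(\mathscr{C})$ as ``Construction $A_4$'' applied to $\mathscr{C}$ and to read off all four assertions from elementary congruences. First I would pass to the integer lattice
\begin{equation}
\Lambda(\mathscr{C}):=\{\lambda\in\mathbb{Z}^n:\lambda\bmod 4\in\mathscr{C}\},
\end{equation}
so that $A(\mathscr{C})=\tfrac12\Lambda(\mathscr{C})$. Since $\mathscr{C}$ is a $\mathbb{Z}_4$-submodule, $\Lambda(\mathscr{C})$ is closed under addition and contains $4\mathbb{Z}^n$, hence it is a full-rank sublattice of $\mathbb{Z}^n$ with $\Lambda(\mathscr{C})/4\mathbb{Z}^n\cong\mathscr{C}$, so $[\mathbb{Z}^n:\Lambda(\mathscr{C})]=4^n/|\mathscr{C}|$. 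From $\mathscr{C}=\mathscr{C}^\perp$ and $|\mathscr{C}|\,|\mathscr{C}^\perp|=4^n$ we get $|\mathscr{C}|=2^n$, hence $[\mathbb{Z}^n:\Lambda(\mathscr{C})]=2^n$ and the covolume of $A(\mathscr{C})=\tfrac12\Lambda(\mathscr{C})$ is $2^{-n}\cdot 2^n=1$; equivalently $\det\!\big(e_i\cdot e_j\big)=1$ for any basis, so $A(\mathscr{C})$ is unimodular in the sense of Subsection \ref{subsec:lattice}. Throughout I would use the elementary but crucial fact that $\lambda\cdot\mu\bmod 4$ and $\|\lambda\|^2\bmod 8$ depend only on $\lambda\bmod 4$ and $\mu\bmod 4$, since $(\lambda_i+4k)(\mu_i+4\ell)\equiv\lambda_i\mu_i\bmod 4$ and $(\lambda_i+4k)^2\equiv\lambda_i^2\bmod 8$.

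Next I would verify integrality. For $x=\tfrac12\lambda$ and $y=\tfrac12\mu$ in $A(\mathscr{C})$ with $c:=\lambda\bmod 4$, $d:=\mu\bmod 4$ in $\mathscr{C}$, one has $x\cdot y=\tfrac14\,\lambda\cdot\mu$ and $\lambda\cdot\mu\equiv c\cdot d\bmod 4$; self-duality gives $c\cdot d\equiv 0\bmod 4$, so $x\cdot y\in\mathbb{Z}$. Together with $\det=1$ this shows $A(\mathscr{C})$ is integral and unimodular, hence self-dual. For the Type II refinement I would compute $x\cdot x=\tfrac14\|\lambda\|^2$ and use $\|\lambda\|^2\equiv\mathrm{wt}_E(c)\bmod 8$, which follows from $\lambda_i^2\equiv c_i^2\bmod 8$ together with $0^2,1^2,2^2,3^2\equiv 0,1,4,1\bmod 8$ --- precisely the Euclidean weights of $0,1,2,3\in\mathbb{Z}_4$. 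If $\mathscr{C}$ is Type II then $\mathrm{wt}_E(c)\in 8\mathbb{Z}$ for every codeword, so $\|\lambda\|^2\in 8\mathbb{Z}$ and $x\cdot x\in 2\mathbb{Z}$; thus $A(\mathscr{C})$ is even.

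Finally, for the minimum norm I would stratify the nonzero vectors of $A(\mathscr{C})$ by their reduction $c\in\mathscr{C}$. If $c=0$ then $\lambda\in 4\mathbb{Z}^n\setminus\{0\}$, so $\|\lambda\|^2\ge 16$ and $\|x\|^2\ge 4$, with equality for $x=\tfrac12(4,0,\dots,0)$. If $c\neq 0$, then $\sum_i\lambda_i^2$ is minimised by taking in each coordinate the representative of $c_i$ of least absolute value ($0$ for $c_i=0$, $\pm1$ for $c_i\in\{1,3\}$, $2$ for $c_i=2$), which gives $\min\|\lambda\|^2=\mathrm{wt}_E(c)\ge d_E$ and hence $\|x\|^2\ge d_E/4$. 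Minimising over the two strata yields minimum norm squared $\min\{4,d_E/4\}$, attained in at least one of them. The argument is routine; the one point deserving care is the congruence $\|\lambda\|^2\equiv\mathrm{wt}_E(c)\bmod 8$ and, relatedly, checking scrupulously that all the inner-product and norm identities genuinely descend to $\mathbb{Z}_4$-residues --- once that bookkeeping is fixed there is no substantive obstacle.
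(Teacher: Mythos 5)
Your proof is correct. The paper itself states this result only as a citation to \cite{Bonnecaze97quaternaryquadratic} and gives no proof, so there is nothing to compare against in this text, but your argument is the standard one behind Construction $A_4$ and is almost certainly what that reference does: pass to the integer sublattice $\Lambda(\mathscr{C})\subset\mathbb{Z}^n$, use $|\mathscr{C}|=2^n$ to get $[\mathbb{Z}^n:\Lambda(\mathscr{C})]=2^n$ and hence covolume $1$ after the $\tfrac12$-rescaling, reduce integrality and evenness to the congruences $\lambda\cdot\mu\equiv c\cdot d\pmod 4$ and $\|\lambda\|^2\equiv\mathrm{wt}_E(c)\pmod 8$, and stratify the nonzero vectors by their residue codeword to get $\min\{4,d_E/4\}$. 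All the small checks you flag as needing care (the $\bmod\,8$ congruence on norms, the minimal-representative computation giving $\mathrm{wt}_E(c)$, the $c=0$ stratum giving the norm-$4$ vector $(2,0,\dots,0)$) are handled correctly.
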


\section{Locality Of The Twisted Vertex Operators}\label{App:TwistedLocality}
\noindent We will closely follow \cite[Section 7,8]{Dolan:1989vr} to establish the locality relation \eqref{eq:locWbarWWWbar}. The following lemma is proved similar to the proof of  
\eqref{eq:wL0Vw-L0}. We record the proof for completness.
\begin{lemma}
For any untwisted state $\psi$, we have 
\begin{equation}\label{eq:appwL0Vt}
w^{L_0^T}V_T(\psi,z)w^{-L_0^T}=w^{h_{\psi}}V_T(\psi,zw).    
\end{equation}
\begin{proof}
Using \eqref{eq:defW}, we have 
\[
w^{L_0^T}
V_T(\psi,z)w^{-L_0^T}\chi=w^{-h_{\chi}}w^{L_0^T}e^{-zL_{-1}^T}W(\chi,-z)\psi,
\]
where $\chi$ is a general twisted quasi-primary state.
We now use the Baker-Campbell-Hausdorff formula \eqref{eq:bch}
to write 
\[
w^{L_0^T}e^{zL_{-1}^T}=e^{L_0^T\ln w}e^{zL_{-1}^T}.
\]
Using the fact that $[L_0^T,L_{-1}^T]=L_{-1}^T$, we see that the above theorem implies 
\[
e^{L_0^T\ln w}e^{zL_{-1}^T}=e^{\exp(\ln w)L_{-1}^T}w^{L_0^T}.
\]
Thus we get 
\begin{equation}
w^{L_0^T}e^{zL_{-1}^T}=e^{zwL_{-1}^T}w^{L_0^T}.
\label{eq:bchform}
\end{equation}
This implies
\begin{equation}
    w^{L_0^T}V_T(\psi,z)w^{-L_0^T}\chi=w^{-h_{\chi}}e^{-zwL_{-1}^T}w^{L_0^T}W(\chi,-z)\psi.
\end{equation}
Next using \cite[eq. 2.18]{Dolan:1994st} for the twisted vertex operators $\widetilde{V}(\chi,z)$, we get
\begin{equation}
w^{L_0^T}W(\chi,z)w^{-L_0}=w^{h_{\chi}}W(\chi,zw)
\end{equation}
Thus we get 
\begin{equation}
    w^{L_0^T}V_T(\psi,z)w^{-L_0^T}\chi=w^{h_{\psi}}e^{-zwL_{-1}^T}W(\chi,-zw)\psi=w^{h_{\psi}}V_T(\psi,zw)\chi.
\end{equation}
\end{proof}
\end{lemma}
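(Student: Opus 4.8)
The plan is to reduce this twisted-sector dilatation identity to the corresponding, already-available statements for ordinary vertex operators and for intertwiners, and then simply to commute the dilatation operator $w^{L_0^T}$ through the two elementary building blocks of $V_T(\psi,z)$ acting on a twisted state. It suffices to verify the identity on a spanning set of states $\chi\in\mathscr{H}_T(\Lambda)$, so I would take $\chi$ to be an $L_0^T$-eigenvector, $L_0^T\chi=h_\chi\chi$, so that $w^{-L_0^T}\chi=w^{-h_\chi}\chi$; the general $\psi$ is then handled by linearity in $\psi$.

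First I would rewrite $V_T(\psi,z)\chi$ in terms of the intertwiner $W$ of eq.~\eqref{eq:defW}. Since $W(\chi,z)\psi=e^{zL_{-1}^T}V_T(\psi,-z)\chi$, inverting gives $V_T(\psi,z)\chi=e^{zL_{-1}^T}W(\chi,-z)\psi$, so that
\[
w^{L_0^T}V_T(\psi,z)w^{-L_0^T}\chi=w^{-h_\chi}\,w^{L_0^T}\,e^{zL_{-1}^T}\,W(\chi,-z)\psi .
\]
The middle factor is then handled by exactly the Lie-algebraic move used to establish eq.~\eqref{eq:wL0Vw-L0}: from $[L_0^T,L_{-1}^T]=L_{-1}^T$ and the Baker--Campbell--Hausdorff identity of Theorem~\ref{thm:bch} (applied with $X=(\ln w)L_0^T$, $Y=zL_{-1}^T$, $s=\ln w$) one obtains $w^{L_0^T}e^{zL_{-1}^T}=e^{zwL_{-1}^T}w^{L_0^T}$.

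Next I would transport $w^{L_0^T}$ through $W(\chi,-z)$ using the dilatation law for the intertwiner,
\[
w^{L_0^T}W(\chi,z)w^{-L_0}=w^{h_\chi}W(\chi,zw),
\]
which is eq.~2.18 of \cite{Dolan:1994st} --- the intertwiner analogue of eq.~\eqref{eq:wL0Vw-L0}, provable by the same reconstruction/uniqueness argument applied to the representation $\mathbf{U}$ together with $W$. Taking $\psi$ homogeneous of weight $h_\psi$ and inserting $w^{-L_0}w^{L_0}$ immediately to the right of $W(\chi,-z)$ lets me pull $w^{L_0^T}$ all the way through and collect a factor $w^{h_\psi}w^{h_\chi}$, giving
\[
w^{L_0^T}V_T(\psi,z)w^{-L_0^T}\chi=w^{h_\psi}\,e^{zwL_{-1}^T}\,W(\chi,-zw)\psi=w^{h_\psi}\,V_T(\psi,zw)\chi ,
\]
where the last equality is eq.~\eqref{eq:defW} once more, now with argument $zw$. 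Since $L_0^T$-eigenstates span $\mathscr{H}_T(\Lambda)$ and all expressions are linear and meromorphic in $z$, the operator identity~\eqref{eq:appwL0Vt} follows, and linearity in $\psi$ removes the homogeneity assumption.

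This lemma is essentially a bookkeeping exercise, so there is no substantial obstacle; the only thing that needs care is keeping the two distinct Virasoro zero-modes straight --- $L_0$ on $\mathscr{H}(\Lambda)$ versus $L_0^T$ on $\mathscr{H}_T(\Lambda)$ --- when moving $w^{L_0^T}$ across the intertwiner, and checking that the cited intertwiner dilatation law is stated in conventions compatible with the definition of $W$ adopted here. No analytic issues arise: convergence of the Laurent expansions involved is inherited from the vertex-operator axioms satisfied by these operators, so once conventions are pinned down the computation is routine.
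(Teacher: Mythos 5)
Your proof is correct and follows essentially the same route as the paper's: express $V_T(\psi,z)\chi$ via the intertwiner $W$ using eq.~\eqref{eq:defW}, commute $w^{L_0^T}$ through $e^{zL_{-1}^T}$ by the BCH identity of Theorem~\ref{thm:bch}, apply the intertwiner dilatation law $w^{L_0^T}W(\chi,z)w^{-L_0}=w^{h_\chi}W(\chi,zw)$ from \cite{Dolan:1994st}, and reassemble. In fact your sign bookkeeping for $e^{\pm zL_{-1}^T}$ is the correct one; the paper's displayed intermediate steps carry a consistent sign typo (writing $e^{-zL_{-1}^T}$ where $e^{zL_{-1}^T}$ is meant), which cancels out and does not affect its conclusion.
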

\subsection{Locality Relation $\overline{W}(\chi,z)W(\chi',w)=(-1)^{|\chi||\chi'|} \overline{W}(\chi',w)W(\chi,z)$}   
For quasi-primary states $\chi,\chi'\in\mathscr{H}_T^-(\Lambda_{\mathrm{L}})$, following the steps in \cite[Section 7]{Dolan:1989vr}, we see that the required locality relation is equivalent to proving 
\begin{equation}
\left\langle\overline{\chi}'\left|V_{T}(\phi,-z)\right| \chi\right\rangle z^{2 h_{\chi}} = \left\langle\overline{\chi}\left|V_{T}\left(e^{L_{1} / z} \phi, z\right)\right| \chi'\right\rangle z^{2 h_{\chi'}},
\label{eq:locWbarW}
\end{equation} 
where $\phi$ is an arbitrary untwisted state.
Now observe that 
\[
\begin{split}
V_{T}\left(e^{L_{1} / z} \phi, z\right)^{\dagger}&=\left(V_{T}\left(e^{L_{1} / z}z^{2L_0} \phi, z\right)z^{-2h_{\phi}}\right)^{\dagger}\\&=V_{T}\left(e^{L_{1} / z}z^{2L_0} \phi, z\right)^{\dagger}(z^*)^{-2h_{\phi}}\\&=V_{T}\left(\overline{\phi}, 1/z^*\right)(z^*)^{-2h_{\phi}}
\end{split}
\]
which implies that 
\[
V_{T}\left(\overline{\phi}, 1/z^*\right)^{\dagger}z^{-2h_{\phi}}=V_{T}\left(e^{L_{1} / z} \phi, z\right).
\]
So the right hand side of \eqref{eq:locWbarW} becomes (see eq.  \eqref{eq:fabcdefcor} for notation)
\[
\begin{split}
\left\langle\overline{\chi}\left|V_{T}\left(\overline{\phi}, 1/z^*\right)^{\dagger}\right| \chi'\right\rangle z^{2 h_{\chi'}-2h_{\phi}}&=\left\langle\chi'\left|V_{T}\left(\overline{\phi}, 1/z^*\right)\right| \overline{\chi}\right\rangle^* z^{2 h_{\chi'}-2h_{\phi}}\\&=\left(\left\langle\chi'\left|(z^{-L_0})^\dagger(z^*)^{L_0}V_{T}\left(\overline{\phi}, 1/z^*\right)(z^*)^{-L_0}\right| \overline{\chi}\right\rangle(z^*)^{h_{\chi}}\right)^*z^{2 h_{\chi'}-2h_{\phi}}\\&=\left((z^*)^{-h_{\chi'}}\left\langle\chi'\left|(z^*)^{L_0}V_{T}\left(\overline{\phi}, 1/z^*\right)(z^*)^{-L_0}\right| \overline{\chi}\right\rangle(z^*)^{h_{\chi}}\right)^*z^{2 h_{\chi'}-2h_{\phi}}\\&=\left\langle\chi'\left|V_{T}\left(\overline{\phi}, 1\right)\right| \overline{\chi}\right\rangle^*z^{h_{\chi}+h_{\chi'}-h_{\phi}}\\&=(f_{\overline{\chi}'\overline{\phi}\overline{\chi}})^*z^{h_{\chi}+h_{\chi'}-h_{\phi}},
\end{split}
\]
the left hand side of \eqref{eq:locWbarW} becomes 
\begin{equation}
\begin{aligned}
\left\langle\overline{\chi}^{\prime}\left|V_T(\phi,-z)\right| \chi\right\rangle z^{2 h_\chi}&=\left\langle\overline{\chi}^{\prime}\left|(-1 / z)^{-L_0}(-1 / z)^{L_0} V_T(\phi,-z)(-1 / z)^{-L_0}(-1 / z)^{L_0}\right| \chi\right\rangle z^{2 h_\chi} \\
& =(-1 / z)^{-h_{\chi^{\prime}}}\left\langle\overline{\chi}^{\prime}\left|V_T(\phi, 1)\right| \chi\right\rangle(-1 / z)^{h_\phi}(-1 / z)^{h_\chi} z^{2 h_\chi} \\
& =(-1)^{2 h_\chi}(-z)^{h_{\chi^{\prime}}+h_\chi-h_\phi} f_{\chi^{\prime} \phi \chi} \\&=(-1)^{h_{\chi'}+h_{\chi}+h_{\phi}}z^{h_{\chi}+h_{\chi'}-h_{\phi}}f_{\chi^{\prime} \phi \chi},
\end{aligned}   
\end{equation}
where we have used $(-z)^{-k}z^{2k}=(-z)^k$ for $k\in\frac{1}{2}\mathbb{Z}$ which is allowed\footnote{Since if $z$ is in the upper half-plane then $-z$ is in the lower half-plane and vice-versa.} in the principal branch of square root. 
So the locality relation is equivalent to proving  
\begin{equation}
(f_{\overline{\chi}'\overline{\phi}\overline{\chi}})^*=(-1)^{|\chi||\chi'|} (-1)^{h_{\chi'}+h_{\chi}+h_{\phi}}f_{\chi'\phi\chi}.
\label{eq:locWbarWlasteq}
\end{equation}
For $\chi,\chi'\in\mathscr{H}_T^+(\Lambda)$ (the sign is $+$ in this case), \eqref{eq:locWbarWlasteq} is proved in \cite{Dolan:1989vr}. We prove it for $\chi,\chi'\in\mathscr{H}_T^-(\Lambda)$. 
Following the calculations on \cite[Page 547]{Dolan:1989vr}, we have 
\[
\begin{aligned}
\langle\chi|V_{T}(\phi, 1)| &\chi'\rangle^{*} =\left\langle M(\chi)\left|V_{T}(\phi, 1)\right| M(\chi')\right\rangle \\
&=(-1)^{h_{\phi}}\left\langle M(\chi)\left|\theta V_{T}(\overline{\phi}, 1) \theta\right| M(\chi')\right\rangle \\
&=(-1)^{h_{\phi}}\left\langle M(\chi)\left|\theta e^{i\pi L_0^T}e^{-i\pi L_0^T} V_{T}(\overline{\phi}, 1)  e^{i\pi L_0^T}e^{-i\pi L_0^T}\theta\right| M(\chi')\right\rangle \\&=(-1)^{h_{\phi}}\left\langle \overline{\chi}\left|e^{-i\pi L_0^T} V_{T}(\overline{\phi}, 1)  e^{i\pi L_0^T}\right| \overline{\chi}'\right\rangle\\
&=(-1)^{h_{\phi}}e^{-i\pi h_{\chi}}e^{i\pi h_{\chi'}}\left\langle\overline{\chi}\left|V_{T}(\overline{\phi}, 1) \right| \overline{\chi}'\right\rangle \\
&=-(-1)^{h_{\chi}+h_{\phi}+h_{\chi'}}\left\langle\overline{\chi}\left|V_{T}(\overline{\phi}, 1)\right| \overline{\chi}'\right\rangle,
\end{aligned}
\]
where we used the fact that $h_{\chi}\in\mathbb{Z}+\frac{1}{2}$.
In the last step we used the definition of the conjugation operation as in \eqref{eq:modconjdef}. Thus we have established the locality relation.

\subsection{Locality Relation $W(\chi,z)\overline{W}(\chi',w)=(-1)^{|\chi||\chi'|}W(\chi',w)\overline{W}(\chi,z)$}

This is the hard one! 
We begin by proving a lemma.
\label{lemma:VTpsichi}
\begin{lemma}
We can choose a zero-momentum untwisted state $\psi$ such that $\langle \mu|V_T(\psi,w)$ and equivalently $V_T(\psi,w)|\varrho\rangle$ is an arbitrary twisted state.
\begin{proof}
We take 
$$
\psi=e^{(1 /w) L_{1}} w^{2 L_{0}} \overline{\psi}^{\prime}, \quad w=1 / \tilde{w}^{*}.
$$
By Remark \ref{rem:Udaggerrep}, we have 
$$
V_{T}(\psi, w)=V_{T}\left(\psi^{\prime}, \tilde{w}\right)^{\dagger}
$$
where we have defined
$$
\psi^{\prime}=e^{-\Delta(\tilde{w})} \tilde{\psi}
$$
where
$$
\tilde{\psi}=\left(\prod_{a=1}^{M} a_{-n_{a}}^{j_{a}}\right)|0\rangle
$$
and 
\begin{equation}
\Delta(z)=\frac{1}{2} \sum_{n, m \geq 0 \atop m+n>0}\left(\begin{array}{c}
-\frac{1}{2} \\
m
\end{array}\right)\left(\begin{array}{l}
\frac{1}{2} \\
n
\end{array}\right) \frac{(-z)^{-m-n}}{m+n} a_{m} \cdot a_{n}.
\end{equation}
We then have
$$
V_{T}(\psi, w)=V_{T}^{0}(\tilde{\psi}, \tilde{w})^{\dagger}
$$
We Laurent expand the vertex operator
\[
V_{T}^{0}(\tilde{\psi}, \tilde{w})=\sum_{n}(V_{T}^{0})_{n}(\tilde{\psi})\tilde{w}^{-n-h_{\tilde{\psi}}}
\]
We see that the mode $(V_{T}^{0})_{-h_{\tilde{\psi}}+M/2}$ creates twisted states containing $M$ oscillators with conformal weight $r/16+h_{\tilde{\psi}}-M/2$ by acting on twisted ground states. Next observe that the number of linearly independent states of conformal weight $r/16+h_{\tilde{\psi}}-M/2$ obtained by acting $M$ oscillators on a twisted ground state is the number of ways we can write $h_{\tilde{\psi}}-M/2$ as sum of $M$ positive strictly half-integers. Let this number be $p^{\mathbb{Z}+\frac{1}{2}}_M(h_{\tilde{\psi}}-M/2)$. On the other hand the number of linearly independent untwisted states with conformal weight $h_{\tilde{\psi}}$ obtained by acting $M$ oscillators on $|0\rangle$ is the number of ways we can write $h_{\tilde{\psi}}$ as sum of $M$ positive integers. Let this number be $p_{M}^{\mathbb{Z}}(h_{\tilde{\psi}})$. We want to show that 
\begin{equation}
    p^{\mathbb{Z}+\frac{1}{2}}_M(h_{\tilde{\psi}}-M/2)=p_{M}^{\mathbb{Z}}(h_{\tilde{\psi}}).
\end{equation}
Indeed if 
\begin{equation}
h_{\tilde{\psi}}=\sum_{a=1}^Mn_a,\quad n_a\in\mathbb{N}
\end{equation}
then we get a partition of $h_{\tilde{\psi}}-M/2$ into positive strictly half-integers:  $p^{\mathbb{Z}+\frac{1}{2}}_M(h_{\tilde{\psi}}-M/2)$ since 
\begin{equation}
    h_{\tilde{\psi}}-\frac{M}{2}=\sum_{a=1}^M\left(n_a-\frac{1}{2}\right).
\end{equation}
Thus 
\begin{equation}
    p^{\mathbb{Z}+\frac{1}{2}}_M(h_{\tilde{\psi}}-M/2)\geq p^{\mathbb{Z}}_M(h_{\tilde{\psi}}).
\end{equation}
Similarly 
\begin{equation}
    p^{\mathbb{Z}}_M(h_{\tilde{\psi}})\geq p^{\mathbb{Z}+\frac{1}{2}}_M(h_{\tilde{\psi}}-M/2).
\end{equation}
Thus we can get any twisted state by acting with suitable linear combinations of the modes of $V_{T}^{0}(\tilde{\psi}, \tilde{w})$.
\end{proof}
\end{lemma}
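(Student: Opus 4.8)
The plan is to read the statement as follows: with a fixed twisted ground state $|0\rangle_T\otimes\chi_0$, the Laurent modes of the operators $V_T(\psi,w)$, where $\psi$ ranges over \emph{zero-momentum} untwisted states (those in $\mathscr{F}_0$, built from the $a^j_{-n}$ acting on $|0\rangle$), span the whole twisted module $\mathscr{H}_T(\Lambda_{\mathrm{L}})$, and dually for the bra version $\langle\mu|V_T(\psi,w)$. This is exactly the input needed to reduce the matrix elements appearing in the locality relation $W(\chi,z)\overline{W}(\chi',w)=(-1)^{|\chi||\chi'|}W(\chi',w)\overline{W}(\chi,z)$ to matrix elements between twisted ground states, after which the untwisted--twisted localities of Proposition~\ref{prop:reploc} and Theorem~\ref{thm:latvoaext} allow one to slide the $V_T$'s through $W$ and $\overline{W}$.

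First I would reduce the two halves of the statement to a single one. On a zero-momentum state $p^i\psi=0$, so in $V_T(\psi,z)=V^0_T(e^{A(-z)}\psi,z)$ the logarithmic term $\frac{1}{2}a_0\cdot a_0\ln(-4z)$ of $A$ and the cocycle factor $e_\lambda=e_0=\mathds{1}$ both drop out, and only the purely quadratic part of $A$ survives. Using the Hermiticity of the twisted representation (Remark~\ref{rem:Udaggerrep}) I would take $\psi=e^{L_1/w}\,w^{2L_0}\overline{\psi}^{\prime}$ with $\tilde w=1/w^{*}$, so that $V_T(\psi,w)=V^0_T(\tilde\psi,\tilde w)^{\dagger}$ for $\tilde\psi=\bigl(\prod_a a^{j_a}_{-n_a}\bigr)|0\rangle$. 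Taking adjoints interchanges the bra and ket statements, so it suffices to prove that the modes of $V^0_T(\tilde\psi,\tilde w)$ applied to $|0\rangle_T\otimes\chi_0$ span $\mathscr{H}_T(\Lambda_{\mathrm{L}})$ as $\tilde\psi$ runs over the zero-momentum Fock basis.

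Next I would grade by oscillator number. Since $V^0_T(\tilde\psi,z)$ is the normal-ordered product $:\prod_a \frac{i}{(n_a-1)!}\partial_z^{n_a}R^{j_a}(z):$ and each factor contributes a single half-integer-moded oscillator $c$, the coefficient $(V^0_T)_{-h_{\tilde\psi}+M/2}(\tilde\psi)$ is the unique Laurent mode that creates exactly $M$ oscillators on the ground state, and the resulting state has conformal weight $\frac{r}{16}+h_{\tilde\psi}-\frac{M}{2}$. So, level by level, the claim reduces to surjectivity of the linear map $\Phi_M$ sending an $M$-oscillator zero-momentum untwisted state of level $N+\frac{M}{2}$ to that distinguished mode acting on $\chi_0$, landing in the space of $M$-oscillator twisted states of level $N$. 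The domain of $\Phi_M$ has dimension equal to the number of length-$M$ coloured partitions of $N+\frac{M}{2}$ into positive integers and the codomain to the analogous count into positive half-odd-integers; since the componentwise shift $n_a\leftrightarrow n_a-\frac{1}{2}$ is a colour-preserving bijection between these index sets, the two dimensions agree and surjectivity of $\Phi_M$ is equivalent to injectivity.

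Injectivity is the one genuinely nontrivial point, and I expect it to be the main obstacle. Writing $\Phi_M$ in the two partition bases identified above, its matrix entry is a product $\prod_a C(n_a,\ell_a)$ of the elementary coefficients produced by $\partial^{n}R$ acting on a ground state, each $C(n,\ell)$ being a nonzero value of an explicit polynomial in $\ell$ of degree $n-1$; I would establish nondegeneracy of this generalized-Vandermonde-type matrix from the nonvanishing of the diagonal term $\prod_a C(n_a,n_a-\frac{1}{2})$ together with a degree/dominance-order argument controlling the off-diagonal contractions, the one-oscillator case $M=1$ already being literally diagonal. The partition count alone is only a dimension match and does not by itself show that a \emph{single} composite vertex operator realizes an arbitrary twisted state, so this combinatorial control of the normal ordering (and of the residual quadratic dressing on states with more than one oscillator) is unavoidable; everything else is routine bookkeeping with the formulas for $V_T$, for $A(z)$, and with the Hermiticity relation.
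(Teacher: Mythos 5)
Your proposal follows essentially the same route as the paper's proof: use the Hermiticity relation (Remark~\ref{rem:Udaggerrep}) together with the choice $\psi = e^{(1/w)L_1}w^{2L_0}\overline{\psi}'$, $\psi'=e^{-\Delta(\tilde w)}\tilde\psi$, to obtain $V_T(\psi,w)=V_T^0(\tilde\psi,\tilde w)^\dagger$ exactly; note that zero momentum kills the cocycle and the $p^2\ln(-4z)$ term in $A(-z)$; grade by oscillator number $M$; and identify $M$-part partitions of $h_{\tilde\psi}$ into positive integers with $M$-part partitions of $h_{\tilde\psi}-M/2$ into positive half-odd integers via the shift $n_a\leftrightarrow n_a-\tfrac12$.

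Where you go beyond the paper is in flagging that the partition count, by itself, only establishes that the graded map $\Phi_M$ (distinguished mode of $V_T^0(\tilde\psi,\cdot)$ applied to the twisted ground state, $\tilde\psi$ running over $M$-oscillator Fock basis vectors of a fixed weight) has domain and codomain of equal dimension. This is a genuine point: the paper's closing sentence ``Thus we can get any twisted state by acting with suitable linear combinations of the modes of $V_T^0(\tilde\psi,\tilde w)$'' does not follow from the dimension match alone; one must still argue that $\Phi_M$ is injective (equivalently surjective). Your proposed fix---writing the matrix of $\Phi_M$ in the two partition bases with entries $\prod_a C(n_a,\ell_a)$, where $C(n,\ell)=(\ell-1)(\ell-2)\cdots(\ell-n+1)$ is a degree-$(n-1)$ polynomial in $\ell$ that is nonzero at each half-odd integer $\ell$, and then establishing nondegeneracy by a dominance-order triangularity argument anchored on the diagonal entry $\prod_a C(n_a,n_a-\tfrac12)$---is the standard way to close this and is correct in outline, with the $M=1$ case literally diagonal as you observe. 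You do not carry out the dominance argument, but you correctly identify it as the missing ingredient; the paper silently assumes it.

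One small overcaution in your sketch: because $\psi'=e^{-\Delta(\tilde w)}\tilde\psi$ is chosen precisely so that $V_T(\psi,w)=V_T^0(\tilde\psi,\tilde w)^\dagger$ holds on the nose, and because the normal ordering in $V_T^0$ applied to the twisted ground state retains only the pure creation part, there is no ``residual quadratic dressing'' to control on multi-oscillator states; the only combinatorics left are exactly the falling-factorial coefficients $C(n_a,\ell_a)$ entering the Vandermonde-type matrix.
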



\begin{lemma}
The relation $\langle\mu|W(\chi, z) V(\psi, w) \overline{W}(\varphi, \zeta)| \varrho\rangle=-\langle\mu|W(\varphi, \zeta) V(\psi, w) \overline{W}(\chi, z)| \varrho\rangle$ for $\mu,\chi,\varphi,\varrho$ twisted ground states is equivalent to the locality relation $W(\chi,z)\overline{W}(\varphi,w)=- W(\varphi,w)\overline{W}(\chi,z)$ for twisted ground states $\chi,\varphi$.
\label{lemma:WVbarWrel}
\begin{proof}
Using the locality relation of $V_TW$ and $V\overline{W}$, we have that 
\[
\langle\mu|W(\chi, z) V(\psi, w) \overline{W}(\varphi, \zeta)| \varrho\rangle=\langle\mu|V_T(\psi, w)W(\chi, z)  \overline{W}(\varphi, \zeta)| \varrho\rangle.
\]
But since $\langle\mu|V_T(\psi, w)$ can be taken to be a general twisted state, we see that 
\[
\begin{split}
&\langle\mu|W(\chi, z) V(\psi, w) \overline{W}(\varphi, \zeta)| \varrho\rangle=-\langle\mu|W(\varphi, \zeta) V(\psi, w) \overline{W}(\chi, z)| \varrho\rangle\\\implies &\langle\mu|V_T(\psi, w)W(\chi, z)  \overline{W}(\varphi, \zeta)| \varrho\rangle=- \langle\mu|V_T(\psi, w)W(\varphi, \zeta) \overline{W}(\chi, z)| \varrho\rangle\\\implies & W(\chi, z)  \overline{W}(\varphi, \zeta)| \varrho\rangle=- W(\varphi, \zeta) \overline{W}(\chi, z)| \varrho\rangle
\end{split}
\]
Next, for an arbitrary untwisted state $\psi$, we have from above 
\[
\begin{split}
V_T(\psi,w)W(\chi, z)  \overline{W}(\varphi, \zeta)| \varrho\rangle=- V_T(\psi,w)W(\varphi, \zeta) \overline{W}(\chi, z)| \varrho\rangle\\\implies W(\chi, z)  \overline{W}(\varphi, \zeta)V_T(\psi,w)| \varrho\rangle=- W(\varphi, \zeta) \overline{W}(\chi, z)V_T(\psi,w)| \varrho\rangle  
\end{split}
\] 
Since $V_T(\psi,w)| \varrho\rangle$ can be taken to be general twisted state, we get the required statement. Here $\chi$ and $\varphi$ are still twisted ground states.
\end{proof}
\end{lemma}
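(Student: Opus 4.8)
The plan is to prove the stated equivalence in both directions; the only nontrivial ingredient is Lemma~\ref{lemma:VTpsichi}, which says that the modes of $V_T(\psi,w)$ acting on a twisted ground state span a dense subspace of the twisted space $\mathscr{H}_T(\Lambda_{\mathrm{L}})$, and dually that the bra vectors $\langle\mu|V_T(\psi,w)$ span a dense subspace of $\mathscr{H}_T(\Lambda_{\mathrm{L}})^{*}$ as $\mu$ runs over twisted ground states and $\psi$ over zero-momentum untwisted states.

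\textbf{Locality $\Rightarrow$ sandwiched relation.} Here one just sandwiches the assumed operator identity. Starting from $\langle\mu|W(\chi,z)V(\psi,w)\overline{W}(\varphi,\zeta)|\varrho\rangle$, I would commute the untwisted vertex operator $V(\psi,w)$ to the left past $W(\chi,z)$ using the intertwining relation $W(\chi,z)V(\psi,w)=V_T(\psi,w)W(\chi,z)$ (this is \cite[Proposition 3.4]{Dolan:1994st} applied to the twisted-sector representation of Theorem~\ref{thm:latvoaext}(1)), so the matrix element becomes $\langle\mu|V_T(\psi,w)\,W(\chi,z)\overline{W}(\varphi,\zeta)|\varrho\rangle$. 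Applying the hypothesis $W(\chi,z)\overline{W}(\varphi,\zeta)=-W(\varphi,\zeta)\overline{W}(\chi,z)$ and then commuting $V(\psi,w)$ back through $W(\varphi,\zeta)$ reproduces the right-hand side with the stated minus sign.

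\textbf{Sandwiched relation $\Rightarrow$ locality.} This is the direction that uses Lemma~\ref{lemma:VTpsichi}. Using $W(\chi,z)V(\psi,w)=V_T(\psi,w)W(\chi,z)$ (and the same with $\chi\leftrightarrow\varphi$), I would rewrite the hypothesis as
\[
\langle\mu|V_T(\psi,w)\,W(\chi,z)\overline{W}(\varphi,\zeta)|\varrho\rangle=-\langle\mu|V_T(\psi,w)\,W(\varphi,\zeta)\overline{W}(\chi,z)|\varrho\rangle .
\]
Since $\langle\mu|V_T(\psi,w)$ ranges over a spanning set of $\mathscr{H}_T(\Lambda_{\mathrm{L}})^{*}$ by Lemma~\ref{lemma:VTpsichi}, this forces $W(\chi,z)\overline{W}(\varphi,\zeta)|\varrho\rangle=-W(\varphi,\zeta)\overline{W}(\chi,z)|\varrho\rangle$ for every twisted ground state $\varrho$. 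To upgrade this to an identity of operators on all of $\mathscr{H}_T(\Lambda_{\mathrm{L}})$, I would write an arbitrary twisted state as $V_T(\psi,w)|\varrho\rangle$ (again Lemma~\ref{lemma:VTpsichi}) and push $V_T(\psi,w)$ to the right through both factors: through $\overline{W}$ via $\overline{W}(\chi,z)V_T(\psi,w)=V(\psi,w)\overline{W}(\chi,z)$ (Proposition~\ref{prop:reploc}), then through $W$ via $W(\varphi,\zeta)V(\psi,w)=V_T(\psi,w)W(\varphi,\zeta)$. Both sides then acquire the common left factor $V_T(\psi,w)$, and since the $V_T(\psi,w)|\varrho\rangle$ span, the operator locality relation $W(\chi,z)\overline{W}(\varphi,\zeta)=-W(\varphi,\zeta)\overline{W}(\chi,z)$ follows.

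\textbf{Expected main obstacle.} For this lemma in isolation the only care needed is bookkeeping: matching each commutation step to the correct intertwining relation ($W$ versus $\overline{W}$, and the direction in which $V$ is moved) and invoking the spanning statement of Lemma~\ref{lemma:VTpsichi} on the appropriate side. The genuinely hard analytic work is deferred to the \emph{next} step — the direct verification of the sandwiched relation $\langle\mu|W(\chi,z)V(\psi,w)\overline{W}(\varphi,\zeta)|\varrho\rangle=-\langle\mu|W(\varphi,\zeta)V(\psi,w)\overline{W}(\chi,z)|\varrho\rangle$ — which requires the explicit bosonized forms of $W$ and $\overline{W}$ together with the reality structure encoded by the matrix $M$ of eq.~\eqref{eq:mdef} and the modified conjugation eq.~\eqref{eq:modconjdef}; it is there that the sign peculiar to the $\theta=-1$ twisted sector must be tracked carefully, following \cite{Dolan:1989vr}.
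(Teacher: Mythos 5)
Your proposal is correct and follows essentially the same route as the paper's proof: both directions rest on the intertwining relations $W V = V_T W$ and $\overline{W}V_T = V\overline{W}$ together with the spanning statement of Lemma~\ref{lemma:VTpsichi}, first to strip $\langle\mu|V_T(\psi,w)$ from the left and obtain the state identity on twisted ground states $|\varrho\rangle$, then to upgrade it to an operator identity by acting with $V_T(\psi,w)$ and commuting it through to the right. The only (minor) difference is that you spell out the trivial direction (operator locality $\Rightarrow$ sandwiched relation), which the paper leaves implicit.
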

\begin{thm}
For twisted ground states $\chi$ and $\varphi$, the locality relation 
\begin{equation}
    W(\chi,z)\overline{W}(\varphi,w)=-W(\varphi,w)\overline{W}(\chi,z)
\end{equation}
holds.
\begin{proof}
In view of Lemma \ref{lemma:WVbarWrel} and Lemma \ref{lemma:VTpsichi}, we prove that 
\begin{equation}
\langle\mu|W(\chi, z) V(\psi, w) \overline{W}(\varphi, \zeta)| \varrho\rangle=-\langle\mu|W(\varphi, \zeta) V(\psi, w) \overline{W}(\chi, z)| \varrho\rangle,    
\end{equation}
for twisted ground states $\mu,\chi,\varphi,\varrho$ and a general zero momentum untwisted state $\psi$. Then by Lemma \ref{lemma:WVbarWrel}, the required locality relation follows. We will closely follow \cite[Section 9]{Dolan:1989vr} for this proof.
We begin by simplifying the left hand side. From \cite[eq. 4.2]{Dolan:1989vr} we have 
\begin{equation}\label{eq:Wactonpsilambsum}
    W(\chi,z)\psi=e^{zL_{-1}^T}\sum_{\lambda\in\Lambda_{\mathrm{L}}}
    \langle\lambda|:e^{B(z)}e^{A(z)}:|\psi\rangle e_{\lambda}\chi
\end{equation}
where $A(z)$ is as in \eqref{eq:A(z)exp} and 
\begin{equation}
    B(z):=B_+(z)+B_-(z)
\end{equation}
with 
\begin{equation}
    B_{\pm}(z):=-\sum_{\substack{n\geq 0\\s\in\left(\mathbb{Z}+\frac{1}{2}\right)_{\pm}}}{-s\choose n}\frac{(-z)^{-n-s}}{s}a_n\cdot c_s,
\end{equation}
where $\left(\mathbb{Z}+\frac{1}{2}\right)_{\pm}$ denotes positive and negative half-integers respectively. Note that $\langle\lambda|:e^{B(z)}e^{A(z)}:|\psi\rangle$ is an operator on the twisted Hilbert space, and that it is normal ordered and acting on a twisted ground state. Since $c_s$ annihilates twisted ground states for $s>0$ we see that normal ordering gives
\begin{equation}
W(\chi,z)\psi=e^{zL_{-1}^T}\sum_{\lambda\in\Lambda_{\mathrm{L}}}
    \langle\lambda|:e^{B_-(z)}e^{A(z)}:|\psi\rangle e_{\lambda}\chi.    
\end{equation}
Next by \eqref{eq:defWbar} we have 
\begin{equation}
    \overline{W}(\varphi,\zeta)|\varrho\rangle=\zeta^{-2h_{\varphi}}W(\overline{\varphi},1/\zeta^*)^\dagger|\varrho\rangle
\end{equation}
where we used the fact that $L_1^T$ annihilates the twisted ground state. Replacing $\psi$ by $V(\psi, w) \overline{W}(\varphi, \zeta)| \varrho\rangle $ in \eqref{eq:Wactonpsilambsum} we obtain
\begin{equation}\label{eq:muWVWbarrho}
\begin{split}
\langle\mu|W(\chi, z) V(\psi, w) \overline{W}(\varphi, \zeta)| \varrho\rangle &=\sum_{\lambda \in \Lambda_{\mathrm{L}}}\left\langle\mu\left|e^{zL_{-1}^T}\left\langle\lambda \left|e^{B_-(z)}e^{A(z)}V(\psi, w) \overline{W}(\varphi, \zeta)\right| \varrho\right\rangle e_{\lambda}\right| \chi\right\rangle\\&=\sum_{\lambda \in \Lambda_{\mathrm{L}}}\left\langle\lambda \left|\left\langle e^{B_-(z)^\dagger}e^{zL_{1}^T}\mu\left|e_{\lambda}\right| \chi\right\rangle e^{A(z)}V(\psi, w) W(\overline{\varphi},1/\zeta^*)^\dagger\right| \varrho\right\rangle\zeta^{-2h_{\varphi}}\\&=\sum_{\lambda \in \Lambda_{\mathrm{L}}}\left\langle \mu\left|e_{\lambda}\right| \chi\right\rangle\left\langle\lambda \left|e^{A(z)}V(\psi, w) W(\overline{\varphi},1/\zeta^*)^\dagger\right| \varrho\right\rangle\zeta^{-2h_{\varphi}},  \end{split}    
\end{equation}
where we used the fact that $B_-(z)^\dagger$ and $L_1^T$ annihilates the twisted  ground state $\mu$.
Again using \eqref{eq:Wactonpsilambsum} we have 
\begin{equation}
    \begin{split}
     W(\overline{\varphi},1/\zeta^*)\left(e^{A(z)}V(\psi, w)\right)^\dagger|\lambda\rangle&=e^{L_{-1}^T/\zeta^*}\sum_{\lambda' \in \Lambda_{\mathrm{L}}}\left\langle\lambda' \left|e^{B_-(1/\zeta^*)}e^{A(1/\zeta^*)}\left(e^{A(z)}V(\psi, w)\right)^\dagger\right| \lambda\right\rangle e_{\lambda'}|\overline{\varphi}\rangle\\&=e^{L_{-1}^T/\zeta^*}\left\langle\lambda \left|e^{B_-(1/\zeta^*)}e^{A(1/\zeta^*)}\left(e^{A(z)}V(\psi, w)\right)^\dagger\right| \lambda\right\rangle e_{\lambda}|\overline{\varphi}\rangle
    \end{split}
\end{equation}
where we used the fact that $\langle\lambda| \mathcal{O}|\lambda'\rangle=0$ if $\lambda\neq\lambda'$ for any operator $\mathcal{O}$ which does not contain $e^{q\cdot (\lambda'-\lambda)}$ (see \eqref{eq:mudorqactonlamb}). Thus we have 
\begin{equation}\label{eq:lambeAVWdag}
\left\langle\lambda \left|e^{A(z)}V(\psi, w) W(\overline{\varphi},1/\zeta^*)^\dagger\right.\right.=\left\langle\lambda \left|e^{A(z)}V(\psi, w)e^{\overline{A}(\zeta)}e^{B_-(1/\zeta^*)^\dagger}\right| \lambda\right\rangle \langle\overline{\varphi}|e^\dagger_{\lambda}e^{L_{1}^T/\zeta^*},    
\end{equation}
where 
\begin{equation}
    \overline{A}(z)=A(1/z^*)^{\dagger}.
\end{equation}
Substituting eq.  \eqref{eq:lambeAVWdag} into eq.  \eqref{eq:muWVWbarrho} and using the fact that $L_1^T$ and $B_-(1/\zeta^*)^\dagger$ annihilates twisted ground states, we get in the region $|z|>|w|>|\zeta|$
\begin{equation}
    \begin{aligned}
&\langle\mu|W(\chi, z) V(\psi, w) \overline{W}(\varphi, \zeta)| \varrho\rangle =\sum_{\lambda \in \Lambda_{\mathrm{L}}}\left\langle\mu\left|e_{\lambda}\right| \chi\right\rangle\left\langle\lambda\left|e^{A(z)} V(\psi, w) e^{\overline{A}(\zeta)}\right| \lambda\right\rangle\left\langle\overline{\varphi}\left|e_{\lambda}^{\dagger}\right| \varrho\right\rangle \zeta^{-2 h_{\varphi}} .
\end{aligned}
\label{eq:appinteqWVW}
\end{equation}
Let 
\begin{equation}
    \Delta(-z)=\frac{1}{2} \sum_{n, m \geq 0 \atop m+n>0}\left(\begin{array}{c}
-\frac{1}{2} \\
m
\end{array}\right)\left(\begin{array}{l}
\frac{1}{2} \\
n
\end{array}\right) \frac{(-z)^{-m-n}}{m+n} a_{m} \cdot a_{n}.
\end{equation}
Then we have 
\begin{equation}
\begin{split}
    &\left\langle\lambda\left|e^{A(z)} V(\psi, w) e^{\overline{A}(\zeta)}\right| \lambda\right\rangle =\left\langle\lambda\left|e^{-\frac{1}{2}p^2\ln(-4z)}e^{\Delta(-z)} V(\psi, w) e^{\overline{\Delta}(-\zeta)}e^{-\frac{1}{2}p^2\ln(-4/\zeta)}\right| \lambda\right\rangle\\&=(-4z)^{-\frac{\lambda^2}{2}}(-4/\zeta)^{-\frac{\lambda^2}{2}}\left\langle\lambda\left|e^{\Delta(-z)} V(\psi, w) e^{\overline{\Delta}(-\zeta)}\right| \lambda\right\rangle\\&=\left(-4\sqrt{z/\zeta}\right)^{-\lambda^2}\left\langle\lambda\left|e^{\Delta(-z)}\left(\sqrt{z\zeta}\right)^{L_0}\left(\sqrt{z\zeta}\right)^{-L_0}V(\psi, w)\left(\sqrt{z\zeta}\right)^{L_0}\left(\sqrt{z\zeta}\right)^{-L_0} e^{\overline{\Delta}(-\zeta)}\right| \lambda\right\rangle\\&=(-4u^{-1})^{-\lambda^2}(z\zeta)^{-\frac{1}{2}h_{\psi}}\left\langle\lambda\left|e^{\Delta(-z)}\left(\sqrt{z\zeta}\right)^{L_0}V(\psi, y)\left(\sqrt{z\zeta}\right)^{-L_0} e^{\overline{\Delta}(-\zeta)}\right| \lambda\right\rangle
\end{split}    
\end{equation}
where we defined new variables
\begin{equation}
    u=\sqrt{\frac{\zeta}{z}},\quad y=\frac{w}{\sqrt{z\zeta}}
\end{equation}
and used \eqref{eq:zVz}. Now using the translation property 
\begin{equation}
    w^{L_0}a_nw^{-L_0}=w^{-n}a_n,
\end{equation}
we see that
\begin{equation}
    \left(\sqrt{z\zeta}\right)^{-L_0}\Delta(-z)\left(\sqrt{z\zeta}\right)^{L_0}=\Delta(-u^{-1}).
\end{equation}
Similarly using $a_n^{\dagger}=a_{-n}$ we have 
\begin{equation}
     \left(\sqrt{z\zeta}\right)^{-L_0}\overline{\Delta}(-\zeta)\left(\sqrt{z\zeta}\right)^{L_0}=\Delta(-u),
\end{equation}
So we get
\begin{equation}
    \begin{split}
    &\left\langle\lambda\left|e^{A(z)}V(\psi, w) e^{\overline{A}(\zeta)}\right| \lambda\right\rangle \\&=(-4u^{-1})^{-\lambda^2}(z\zeta)^{-\frac{1}{2}h_{\psi}}\left\langle\lambda\left|\left(\sqrt{z\zeta}\right)^{L_0}e^{\Delta(-u^{-1})}V(\psi, y) e^{\overline{\Delta}(-u)}\left(\sqrt{z\zeta}\right)^{-L_0}\right| \lambda\right\rangle\\&=(-4u^{-1})^{-\lambda^2}(z\zeta)^{-\frac{1}{2}h_{\psi}}\left\langle\lambda\left|e^{\Delta(-u^{-1})}V(\psi, y) e^{\overline{\Delta}(-u)}\right| \lambda\right\rangle
\end{split} 
\label{eq:appADelta}
\end{equation}
where we used the fact that 
\begin{equation}
    \langle\lambda|\left(\sqrt{z\zeta}\right)^{L_0}=\langle\lambda|\left(\left(\sqrt{z^*\zeta^*}\right)^{L_0}\right)^{\dagger}=\left(\sqrt{z\zeta}\right)^{\frac{\lambda^2}{2}}\langle\lambda|.
\end{equation}
Note that \eqref{eq:appADelta} is valid in the region $|u|<|y|<|u|^{-1}$ and $|u|<1$. The next step is to simplify \eqref{eq:appADelta}. For that we express the vertex operator $V(\psi,y)$ as exponential. Suppose 
\begin{equation}
    \psi=\left(\prod_{a=1}^M a^{j_a}_{-m_a}\right)|0\rangle.
\end{equation}
Then 
\begin{equation}
\begin{split}
    V(\psi,y)&=:\prod_{a=1}^M\frac{i}{(m_a-1)!}\frac{d^{m_a}X^{j_a}}{dy^{m_a}}:\\
    &=:\prod_{a=1}^M\mathscr{D}^{(m_a)}\sum_{n\in\mathbb{Z}}a^{j_a}_{n}y^{-n-1}:
\end{split}
\end{equation}
where 
\begin{equation}
   \mathscr{D}^{(m_a)}:=\frac{1}{(m_a-1)!}\frac{d^{m_a-1}}{dy^{m_a-1}}, 
\end{equation}
and $p^j=a_0^j$. One can now write this as an exponential. To do this we introduce the variables $\rho_a^{j_a}$ for $1\leq a\leq M$ and $1\leq j_a\leq 24$. If we make a fixed choice of $M$-tuple $(j_1, \dots, j_M)$ then, relative to this choice we define  
\begin{equation}
    \frac{\partial}{\partial\rho}:=\frac{\partial}{\partial \rho^{j_1}_1}\dots\frac{\partial}{\partial\rho^{j_M}_M}=\prod_{a=1}^M\frac{\partial}{\partial \rho^{j_a}_1}.
\end{equation}
We do not explicitly indicate the $M$-tuple $(j_1, \dots, j_M)$ in the notation 
$\frac{\partial}{\partial \rho}$ to avoid cluttering the notation. 
Next introduce vectors 
\begin{equation}
    \begin{split}
      &(\rho_{n}^<)^{k} =\sum_{a=1}^{M} \sqrt{n} \rho_{a}^{k} \mathscr{D}^{\left(m_{a}\right)} y^{n-1}=\sum_{a=1}^{M} \sqrt{n}\left(\begin{array}{c}
n-1 \\
m_{a}-1
\end{array}\right) \rho_{a}^{k} y^{n-m_{a}}, \\
&(\rho_{n}^>)^{k} =\sum_{a=1}^{M} \sqrt{n} \rho_{a}^{k} \mathscr{D}^{\left(m_{a}\right)} y^{-n-1}=\sum_{a=1}^{M} \sqrt{n}\left(\begin{array}{c}
-n-1 \\
m_{a}-1
\end{array}\right) \rho_{a}^{k} y^{-n-m_{a}}, \\
&(\rho_{0})^{k} =\sum_{a=1}^{M} \rho_{a}^{k} \mathscr{D}^{\left(m_{a}\right)} y^{-1}=\sum_{a=1}^{M}\left(\begin{array}{c}
-1 \\
m_{a}-1
\end{array}\right) \rho_{a}^{k} y^{-m_{a}} .  
    \end{split}
\end{equation}
and then define
\begin{equation}
    \begin{aligned}
\rho^{<} a^{\dagger} &=\sum_{n>0} \frac{\rho_{n}^{<} \cdot a_{-n}}{\sqrt{n}}, \quad \rho^{>} a=\sum_{n>0} \frac{\rho_{n}^{>} \cdot a_{n}}{\sqrt{n}}.
\end{aligned}
\end{equation}
Then we have 
\begin{equation}
   V(\psi,y)=\frac{\partial}{\partial\rho}\left.\left(e^{\rho^<a^{\dagger}}e^{p\cdot\rho_0}e^{\rho^>a}\right)\right|_{\{\rho_a^k=0\}}. 
\end{equation}
Next we have 
\begin{equation}
    \Delta(-u^{-1})=\sum_{n>0}{-\frac{1}{2}\choose n}\frac{(-u^{-1})^{-n}}{n}a_n\cdot p+\frac{1}{2} \sum_{n, m > 0}\left(\begin{array}{c}
-\frac{1}{2} \\
m
\end{array}\right)\left(\begin{array}{l}
\frac{1}{2} \\
n
\end{array}\right) \frac{(-u^{-1})^{-m-n}}{m+n} a_{m} \cdot a_{n}.
\end{equation}
Defining 
\begin{equation}
    T_{mn}=\frac{\sqrt{nm}}{n+m}\nu_n\nu_m,\quad \nu_n=(-u)^n{-\frac{1}{2}\choose n},
\end{equation}
we can write 
\begin{equation}
    \Delta(-u^{-1})=\widetilde{p}a+\frac{1}{2}aTa
\end{equation}
where 
\begin{equation}
    \widetilde{p}_n^j=\frac{p^j\nu_n}{\sqrt{n}},\quad \widetilde{p}a=\sum_{n>0}\widetilde{p}_n\cdot\frac{a_n}{\sqrt{n}},\quad aTa=\sum_{m,n>0}T_{mn}\frac{a_m}{\sqrt{m}}\cdot\frac{a_n}{\sqrt{n}} 
\end{equation}
Thus we have
\begin{equation}
\begin{split}
     \left\langle\lambda\left|e^{A(z)} V(\psi, w) e^{\overline{A}(\zeta)}\right| \lambda\right\rangle &=(-4u^{-1})^{-\lambda^2}(z\zeta)^{-\frac{1}{2}h_{\psi}}\frac{\partial}{\partial \rho}\left\langle\lambda\left|\exp\left(\widetilde{p}a+\frac{1}{2}aTa\right)\left(e^{\rho^<a^{\dagger}}e^{p\cdot\rho_0}e^{\rho^>a}\right)\right.\right.\\&\hspace{6cm}\times\left.\left.\left.\exp\left(\widetilde{p}a^{\dagger}+\frac{1}{2}a^{\dagger}Ta^{\dagger}\right)\right|\lambda\right\rangle\right|_{\{\rho_a^k=0\}}\\&=(-4u^{-1})^{-\lambda^2}(z\zeta)^{-\frac{1}{2}h_{\psi}}\frac{\partial}{\partial \rho}e^{\lambda\cdot\rho_0}\left\langle\lambda\left|\exp\left(\widetilde{\lambda}a+\frac{1}{2}aTa\right)\left(e^{\rho^<a^{\dagger}}e^{\rho^>a}\right)\right.\right.\\&\hspace{6cm}\times\left.\left.\left.\exp\left(\widetilde{\lambda}a^{\dagger}+\frac{1}{2}a^{\dagger}Ta^{\dagger}\right)\right|\lambda\right\rangle\right|_{\{\rho_a^k=0\}}
\end{split}
\end{equation}
where $\widetilde{\lambda}a$ is defined in exactly the same way as $\widetilde{p}a$ with $p^j$ replaced by $\lambda^j$. Now noting that $e^{\lambda\cdot q}|0\rangle=|\lambda\rangle$ and $q$ commutes with all other oscillators $a_n$, we have 
\begin{equation}
\begin{split}
     \left\langle\lambda\left|e^{A(z)} V(\psi, w) e^{\overline{A}(\zeta)}\right| \lambda\right\rangle &=(-4u^{-1})^{-\lambda^2}(z\zeta)^{-\frac{1}{2}h_{\psi}}\frac{\partial}{\partial \rho}e^{\lambda\cdot\rho_0}\left\langle 0\left|\exp\left(\widetilde{\lambda}a+\frac{1}{2}aTa\right)\left(e^{\rho^<a^{\dagger}}e^{\rho^>a}\right)\right.\right.\\&\hspace{6cm}\times\left.\left.\left.\exp\left(\widetilde{\lambda}a^{\dagger}+\frac{1}{2}a^{\dagger}Ta^{\dagger}\right)\right|0\right\rangle\right|_{\{\rho_a^k=0\}}
\end{split}
\end{equation} 
Normal ordering the oscillators, we can write 
\begin{equation}
    \left\langle 0\left|\exp\left(\widetilde{\lambda}a+\frac{1}{2}aTa\right)\left(e^{\rho^<a^{\dagger}}e^{\rho^>a}\right)\exp\left(\widetilde{\lambda}a^{\dagger}+\frac{1}{2}a^{\dagger}Ta^{\dagger}\right)\right|0\right\rangle=\left\langle 0\left|e^{\widetilde{\lambda}a}e^{\frac{1}{2}aTa}e^{\rho^<a^{\dagger}}e^{\rho^>a}e^{\widetilde{\lambda}a^{\dagger}}e^{\frac{1}{2}a^{\dagger}Ta^{\dagger}}\right|0\right\rangle
\end{equation}
The vacuum expectation value on the right hand side can be evaluated using coherent state techniques and has been done in \cite[Appendix D]{Dolan:1989vr}. We get 
\begin{equation}
\begin{split}
    \left\langle 0\left|e^{\widetilde{\lambda} a} e^{\frac{1}{2} a T a}e^{\rho^{<} a^{\dagger}} e^{\rho^{>}a}\right.\right.& \left.\left.e^{\frac{1}{2} a^{\dagger} T a^{\dagger}} e^{\bar{\lambda} a^{\dagger}}\right| 0\right\rangle\\&=\operatorname{det}\left(1-T^{2}\right)^{-\frac{1}{2}r} \exp \left(\widetilde{\lambda}(1-T)^{-1} \widetilde{\lambda}\right) \exp \left(\widetilde{\lambda}(1-T)^{-1}\left(\rho^{<}+\rho^{>}\right)\right) e^{h(u, y)}
\end{split}    
\end{equation}
where 
\begin{equation}
\begin{split}
    h(u, y)=\frac{1}{2}\left[\rho^{<} T\left(1-T^{2}\right)^{-1} T \rho^{>}+\rho^{>} T\left(1-T^{2}\right)^{-1} T \rho^{<}+\rho^{<}\left(1-T^{2}\right)^{-1} T \rho^{<}+\rho^{>}\left(1-T^{2}\right)^{-1} T \rho^{>}\right] .
\end{split}
\end{equation}
The determinant can be evaluated using group theory. The result is \cite[eq. (9.16)]{Dolan:1989vr}
\begin{equation}
    \operatorname{det}\left(1-T^{2}\right)^{-\frac{1}{2} r}=\left(1-u^{2}\right)^{-r / 8} \Theta_{3}(q^2)^{-r},
\end{equation}
where 
\begin{equation}\label{eq:quK}
    q=\exp \left\{-\pi \frac{K(\sqrt{1-u^2})}{K(u)}\right\}, 
\end{equation}
where $K(u)$ is a complete elliptic integral given by
\begin{equation}
    K(u)=\frac{1}{2} \int_{0}^{1} d x x^{-1 / 2}(1-x)^{-1 / 2}\left(1-u^{2} x\right)^{-1 / 2}=\frac{1}{2} \pi \Theta_{3}(q^2)^{2},
\end{equation}
Here $\Theta_3(q)$ is the Jacobi theta function \cite[Section 20.9]{NIST:DLMF} given by
\begin{equation}
\Theta_{3}(q)=\sum_{n\in\mathbb{Z}}q^{\frac{n^2}{2}}=\prod_{n=1}^{\infty}\left(1-q^{n}\right)\left(1+ q^{n-\frac{1}{2}} \right)^2.    
\end{equation}
Because we are working with $\vert \zeta \vert < \vert z \vert$ we have $\vert u\vert <1$. 
Also we have \cite[eq. (9.20)]{Dolan:1989vr} 
\begin{equation}
    \exp \left(\widetilde{\lambda}(1-T)^{-1} \widetilde{\lambda}\right) = (4u^{-1})^{\lambda^2}q^{\frac{1}{2}\lambda^2}.
\end{equation}
Next define 
\begin{equation}
    \lambda\cdot t(u,y)=\lambda\cdot\rho_0+\widetilde{\lambda}(1-T)^{-1}\left(\rho^{<}+\rho^{>}\right).
\end{equation}
Then we can easily see that 
\begin{equation}
    \begin{aligned}\left\langle\lambda\left|e^{A(z)} V(\psi, \omega) e^{\bar{A}(\zeta)}\right| \lambda\right\rangle=&(z \zeta)^{-\frac{1}{2} h_{\psi}} \frac{\partial}{\partial \rho}\left(1-u^{2}\right)^{-r/ 8} \Theta_{3}(q^2)^{-r} q^{\frac{1}{2} \lambda^{2}}\left.e^{\lambda \cdot t(u, y)} e^{h(u, y)}\right|_{\left\{\rho_{a}^{k}\right\}=0} . \end{aligned}
\end{equation}
This implies 
\begin{equation}
    \begin{aligned}\langle\mu|W(\chi, z) V(\psi, w) \overline{W}(\varphi, \zeta)| \rho\rangle=& \zeta^{-2 h_{\varphi}}(z \zeta)^{-\frac{1}{2} h_{\psi}} \sum_{\lambda \in \Lambda_{\mathrm{L}}}\left\langle\mu\left|e_{\lambda}\right| \chi\right\rangle\left\langle\overline{\varphi}\left|e_{\lambda}^{\dagger}\right| \varrho\right\rangle \\ & \times\left.\left(1-u^{2}\right)^{-r / 8} \Theta_{3}(q^2)^{-r} q^{\frac{1}{2} \lambda^{2}} \frac{\partial}{\partial \rho} e^{\lambda \cdot t(u, y)} e^{h(u, y)}\right|_{\left\{\rho^{k}_{a}\right\}=0} . \end{aligned}
    \label{eq:WWlocexp}
\end{equation}
We have kept the $r$ explicit in above formula but at the end we need to plug in $r=24$. We now need to show that the right hand side of \eqref{eq:WWlocexp} is invariant under $\varphi\leftrightarrow \chi$ and $z\leftrightarrow \zeta$  up to the factor $(-1)^{|\chi||\varphi|}=-1$.

To facilitate the exchange of $\varphi$ and $\chi$ we proceed as follows. 
We begin with the  relation \cite[Appendix C]{Dolan:1989vr}
\begin{equation}
    \sum_{\gamma\in\Gamma}\gamma_{ab}^{\dagger}\gamma_{cd}=2^{12}\delta_{ad}\delta_{bc}
\end{equation}
to transform the inner products in \eqref{eq:appinteqWVW}. Choose a basis for $\mathcal{S}$ and let us denote the column vectors for the states $\mu,\chi,\varphi,\varrho$ by the same symbol. We then have \begin{equation}
\begin{split}
    \left\langle\mu\left|e_{\lambda}\right| \chi\right\rangle\left\langle\overline{\varphi}\left|e_{\lambda}^{\dagger}\right| \varrho\right\rangle &=(e^{-i\pi h_{\varphi}})^*\left\langle\mu\left|e_{\lambda}\right| \chi\right\rangle\left\langle\theta M\varphi^*\left|e_{\lambda}^{\dagger}\right| \varrho\right\rangle\\&=-e^{i\pi h_{\varphi}}\left(\mu^{\dagger}\gamma_{\lambda}\chi\right)\left(\varphi^TM^*\gamma_{\lambda}^{\dagger}\varrho\right)\\&=-2^{-12}e^{i\pi h_{\varphi}}\sum_{\gamma_{\nu}\in\Gamma}\left(\mu^{\dagger}\gamma_{\nu}^{\dagger}\gamma_{\nu}\gamma_{\lambda}\chi\right)\left(\varphi^TM^*\gamma_{\lambda}^{\dagger}\varrho\right)\\&=-2^{-12}e^{i\pi h_{\varphi}}\sum_{\gamma_{\nu}\in\Gamma}\left(\mu^{\dagger}\gamma_{\nu}^{\dagger}\gamma_{\nu}\gamma_{\lambda}\chi\right)\left((M^*\gamma_{\lambda}^{\dagger}\varrho)^T\varphi\right)\\&=-2^{-12}e^{i\pi h_{\varphi}}\sum_{\gamma_{\nu}\in\Gamma}\left(\mu^{\dagger}\gamma_{\nu}^{\dagger}\varphi\right)\left((\varrho^T\gamma_{\lambda}^{*}M^*\gamma_{\nu}\gamma_{\lambda}\chi\right)\\&=-2^{-12}e^{i\pi h_{\varphi}}\sum_{\gamma_{\nu}\in\Gamma}\left(\mu^{\dagger}\gamma_{\nu}^{\dagger}\varphi\right)\left((\varrho^TM^*\gamma_{\lambda}\gamma_{\nu}\gamma_{\lambda}\chi\right)\\&=2^{-12}e^{i\pi h_{\varphi}}\sum_{\gamma_{\nu}\in\Gamma}\left(\mu^{\dagger}\gamma_{\nu}^{\dagger}\varphi\right)\left((\theta M^*\varrho)^T\gamma_{\lambda}\gamma_{\nu}\gamma_{\lambda}\chi\right)\\&=2^{-12}e^{i\pi (h_{\varphi}-h_{\varrho})}(-1)^{\frac{\lambda^2}{2}}\sum_{e_{\nu}\in\Gamma(\Lambda_{\mathrm{L}})}(-1)^{\lambda\cdot\nu}\left\langle\mu\left|e_{\nu}^{\dagger}\right|\varphi\right\rangle\left\langle\overline{\varrho}\left|e_{\nu}\right|\chi\right\rangle.
\end{split}  
\label{eq:fierztrans}
\end{equation} 
Thus we can write 
\begin{equation}
    \begin{aligned}
\langle\mu|W(\chi, z) V(\psi, w) \overline{W}(\varphi, \zeta)| \varrho\rangle
&=\zeta^{-2 h_{\varphi}}(z \zeta)^{-\frac{1}{2} h_{\psi}}e^{i\pi (h_{\varphi}-h_{\varrho})} 2^{-12} \sum_{e_{\nu} \in \Gamma(\Lambda_{\mathrm{L}})}\left\langle\mu\left|e_{\nu}^{\dagger}\right| \varphi\right\rangle\left\langle\overline{\varrho}\left|e_{\nu}\right| \chi\right\rangle \\
&\quad \times\left.\left(1-u^{2}\right)^{-r / 8} \Theta_{3}(q^2)^{-r} \frac{\partial}{\partial \rho}e^{h(u, y)} \Gamma(q, \nu)\right|_{\left\{\rho_{a}^{k}\right\}=0}
\end{aligned}
\end{equation}
where 
\begin{equation}
    \Gamma(q, \nu)=\sum_{\lambda \in \Lambda_{\mathrm{L}}}(-q)^{\frac{1}{2} \lambda^{2}}(-1)^{\lambda \cdot \nu} e^{\lambda \cdot t(u, y)}.
\end{equation}

We now need to exchange $z$ and $\zeta$, which is equivalent to changing  $u\to 1/u$.
Of course, this requires analytic continuation since $\vert 1/u \vert > 1$. 
Under $u\to 1/u$, $q$ does not transform simply. So we change variables to 
\begin{equation}\label{eq:q'uK}
    q'=\exp\left(-\pi\frac{K(u)}{K(\sqrt{1-u^2})}\right).
\end{equation}
This is because $q'\to-q'$ under $u\to 1/u$ \cite[eq. (9.27)]{Dolan:1989vr}. 
To change the variables from $q$ to $q'$ we need to express $\Gamma(q,\nu)$ in terms of $q'$. Note that if we write $q=e^{2\pi i\tau}$ and $q'=e^{2\pi i\tau'}$ then  
$\tau' =  -1/4\tau$.
One can now use the Poisson summation formula and get \cite[eq. (9.31)]{Dolan:1989vr} 
\begin{equation}
    \Gamma(q, \nu)=(-1)^{\frac{1}{2} \nu^{2}}\left(\frac{\ln q^{\prime}}{i \pi}\right)^{\frac{1}{2} r} \sum_{\lambda \in \Lambda_{\mathrm{L}}}\left(-q^{\prime}\right)^{\frac{1}{2} \lambda^{2}}(-1)^{\lambda \cdot \nu}\left(q^{\prime}\right)^{\lambda \cdot t(u, y) / i \pi}\left(q^{\prime}\right)^{-t(u, y)^{2} / 2 \pi^{2}} .
\end{equation}
Next we have the transformation of the theta function given by 
\begin{equation}
    \Theta_3(-1/\tau)=\sqrt{-i\tau}\Theta_3(\tau).
\end{equation}
where $\Theta_3(\tau):=\Theta_3(q)$ with $q=e^{2\pi i\tau}$.
Using this transformation, it is easy to see that 
\begin{equation}
    \Theta_3(q'^2)=\Theta_3(2\tau')=\Theta_3(-2/4\tau)=\sqrt{-2i\tau}\Theta_3(2\tau)=\sqrt{\frac{i}{2\tau'}}\Theta(q^2)=\Theta(q^2)\left(-\frac{\ln q^{\prime}}{ \pi}\right)^{-\frac{1}{2}} 
\end{equation}
where we used the fact that $q=e^{2\pi i\tau}\to q'=e^{2\pi i\tau'}$ under $\tau\to\tau'=-1/4\tau$. This gives us
\begin{equation}
    \Theta_3(q^2)=\Theta_3(q'^2)\left(-\frac{\ln q^{\prime}}{ \pi}\right)^{\frac{1}{2}} 
\end{equation}

At this point we can return to the algebraic computations needed to 
exchange $\varphi$ and $\chi$. 
Doing a computation similar to \eqref{eq:fierztrans}, we get 
\begin{equation}
    2^{-12}(-1)^{\frac{1}{2} \lambda^{2}} \sum_{e_{\nu} \in \Gamma(\Lambda_{\mathrm{L}})}\left\langle\overline{\varphi}\left|e_{\nu}^{\dagger}\right| \overline{\mu}\right\rangle\left\langle\overline{\varrho}\left|e_{\nu}\right| \chi\right\rangle(-1)^{\lambda \cdot \nu}=\left\langle\overline{\varphi}\left|e_{\lambda}\right| \chi\right\rangle\left\langle\mu\left|e_{\lambda}^{\dagger}\right| \varrho\right\rangle e^{-i\pi(h_{\mu}-h_{\varrho})}
\end{equation}
To use this relation, we need to relate $\left\langle\overline{\varphi}\left|e_{\nu}^{\dagger}\right| \overline{\mu}\right\rangle$ with $\left\langle\mu\left|e_{\nu}^{\dagger}\right| \varphi\right\rangle$. Indeed we have 
\begin{equation}
    \begin{split}
        \left\langle\overline{\varphi}\left|e_{\nu}^{\dagger}\right| \overline{\mu}\right\rangle&=(e^{-i\pi h_{\varphi}})^* e^{-i\pi h_{\mu}}(M\varphi^*)^{\dagger}\gamma_{\nu}^{\dagger}M\mu^*\\&=e^{i\pi (h_{\varphi}-h_{\mu})}\varphi^TM^*\gamma_{\nu}^{\dagger}M\mu^*\quad\quad(M^*\gamma_{\nu}^{\dagger}M=\gamma_{\nu}^{T})\\&=e^{i\pi (h_{\varphi}-h_{\mu})}(-1)^{\frac{1}{2}\nu^2}\varphi^T(\gamma_{\nu}^T)^{-1}\mu^*\\&=e^{i\pi (h_{\varphi}-h_{\mu})}(-1)^{\frac{1}{2}\nu^2}(\gamma_{\nu}^{\dagger}\varphi)^T\mu^*\\&=e^{i\pi (h_{\varphi}-h_{\mu})}(-1)^{\frac{1}{2}\nu^2}\langle\mu|e_{\nu}^{\dagger}|\varphi\rangle
    \end{split}
\end{equation}
where we used the fact that $e_{\nu}$ is unitary.

Putting these facts together we can express our matrix element in the form:  
\begin{equation}
\begin{split}\langle\mu|W(\chi, z) V(\psi, w) \overline{W}(\varphi, \zeta)| \varrho\rangle &=\zeta^{-2 h_{\varphi}}(z \zeta)^{-\frac{1}{2} h_{\psi}} \frac{\partial}{\partial \rho} \sum_{\lambda \in \Lambda_{\mathrm{L}}}\left\langle\overline{\varphi}\left|e_{\lambda}\right| \chi\right\rangle\left\langle\mu\left|e_{\lambda}^{\dagger}\right| \varrho\right\rangle\left(q^{\prime}\right)^{\frac{1}{2} \lambda^{2}}\left(q^{\prime}\right)^{\lambda \cdot t(u, y) / i \pi} \\ & \times\left.i^{\frac{1}{2}r}\left(1-u^{2}\right)^{-r / 8} \Theta_{3}\left(q^{\prime 2}\right)^{-r} e^{h(u, y)}\left(q^{\prime}\right)^{-t(u, y)^{2} / 2 \pi^{2}}\right|_{\left\{\rho_{a}^{k}=0\right\}} \end{split}
\label{eq:loctwistfieldsvalue}
\end{equation} 
We now show that the right hand side of \eqref{eq:loctwistfieldsvalue} is invariant under $\varphi\leftrightarrow \chi$ and $z\leftrightarrow \zeta$ up to the factor $(-1)^{|\chi||\phi|}=-1$. Under this, $u\leftrightarrow 1/u$ and $q'\leftrightarrow -q'$. To check invariance under exchange we need the identities:  
\begin{equation}
    \begin{split}
\left\langle\overline{\varphi}\left|e_{\lambda}\right| \chi\right\rangle=\left\langle\overline{\chi}\left|e_{\lambda}\right| \varphi\right\rangle e^{i\pi (h_{\chi}+h_{\varphi})}(-1)^{\frac{1}{2} \lambda^{2}}\\
\left(1-u^{-2}\right)^{-r / 8} \Theta_{3}\left(-q^{\prime 2}\right)^{-r}=\left(1-u^{2}\right)^{-r / 8} \Theta_{3}\left(q^{\prime 2}\right)^{-r}(-1)^{r / 8}(\zeta / z)^{-r / 8}\\\Theta_3(-q'^2)=u^{1/2}\Theta_3(q'^2)
\label{eq:thetarels}
\end{split}
\end{equation}
We also note that   $\mathrm{e}^{h(u, y)}\left(q^{\prime}\right)^{-t(u, y)^{2} / 2 \pi^{2}}$ and $\left(q^{\prime}\right)^{\lambda \cdot t(u, y) / i \pi}$ are invariant under $u\leftrightarrow 1/u$ and $q'\leftrightarrow -q'$ \cite[Section 9, 10]{Dolan:1989vr} and $r=24$. Putting all these together we finally arrive at the 
desired identity:  
\begin{equation}
    \langle\mu|W(\chi, z) V(\psi, w) \overline{W}(\varphi, \zeta)| \varrho\rangle = -\langle\mu|W(\varphi, \zeta) V(\psi, w) \overline{W}(\chi,z)| \varrho\rangle. 
\end{equation}

It remains to prove the identites \eqref{eq:thetarels}. 
The last relation in eq.  \eqref{eq:thetarels} is proved using the fact that $\Theta_3(-q'^2)=\Theta_4(q'^2)$ where 
\begin{equation}
    \Theta_4(q')=\sum_{n\in\mathbb{Z}}(-1)^nq'^{\frac{n^2}{2}}=\prod_{n=1}^{\infty}\left(1-q'^{n}\right)\left(1- q'^{n-\frac{1}{2}} \right)^2
\end{equation}
and \cite{NIST:DLMF}
\begin{equation}
    \Theta_4(q'(u)^2)=u^{\frac{1}{2}}\Theta_3(q'(u)^2),
\end{equation}
where $q'(u)$ is given by eq.  \eqref{eq:q'uK}.
The middle relation in \eqref{eq:thetarels} follows from the usual modular transformation rules.
The first relation in eq.  \eqref{eq:thetarels} is similar to, but different from one proven in \cite{Dolan:1989vr}, because their context is slightly different. We prove the first relation as follows:  
\begin{equation}
\begin{split}
\left\langle\overline{\varphi}\left|e_{\lambda}\right| \chi\right\rangle &=\left(e^{-i\pi h_{\varphi}}\right)^*\left\langle \theta M(\varphi)|e_{\lambda}|\chi\right\rangle\\&=-e^{i\pi h_{\varphi}}\varphi^TM^*\gamma_{\lambda}\chi\\&=-e^{i\pi h_{\varphi}}\varphi^T\gamma_{\lambda}^*M^*\chi\quad\quad(M\gamma_{\lambda}^*=\gamma_{\lambda} M\text{ and }M^*=M^{-1})\\&=e^{i\pi (h_{\varphi}-h_{\chi})}\varphi^T\gamma_{\lambda}^*\left(e^{-i\pi h_{\chi}}\theta M\chi^*\right)^*\\&=e^{i\pi (h_{\varphi}+h_{\chi})}\varphi^T\gamma_{\lambda}^*\overline{\chi}^*\\&=e^{i\pi (h_{\varphi}-h_{\chi})}\varphi^T\gamma_{\lambda}^T(\gamma_{\lambda}^{-1})^T\gamma_{\lambda}^*\overline{\chi}^*\\&=e^{i\pi (h_{\varphi}-h_{\chi})}(\gamma_{\lambda}\varphi)^T(\gamma_{\lambda}^*)^2\overline{\chi}^*\\&=e^{i\pi (h_{\varphi}-h_{\chi})}(-1)^{\frac{\lambda^2}{2}}(\gamma_{\lambda}\varphi)^T\overline{\chi}^*\\&=e^{i\pi (h_{\varphi}-h_{\chi})}(-1)^{\frac{\lambda^2}{2}}\langle\overline{\chi}|e_{\lambda}|\varphi\rangle,
\end{split}    
\end{equation}
where we used the properties of the gamma matrices and the fact that they are unitary.
\end{proof}
\end{thm}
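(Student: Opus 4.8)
The plan is to follow the strategy of Dolan, Goddard and Montague \cite{Dolan:1989vr}: reduce the operator identity to an explicit symmetry property of a matrix element. By Lemma \ref{lemma:WVbarWrel} it suffices to prove
\begin{equation}
\langle\mu|W(\chi, z) V(\psi, w) \overline{W}(\varphi, \zeta)| \varrho\rangle=-\langle\mu|W(\varphi, \zeta) V(\psi, w) \overline{W}(\chi, z)| \varrho\rangle
\end{equation}
for arbitrary twisted ground states $\mu,\chi,\varphi,\varrho\in\mathcal{S}(\Lambda_{\mathrm{L}})$ and an arbitrary zero-momentum untwisted state $\psi$; this reduction is legitimate because, by Lemma \ref{lemma:VTpsichi}, the vectors $V_T(\psi,w)|\varrho\rangle$ span the twisted state space (and $\langle\mu|V_T(\psi,w)$ its dual). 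First I would insert the DGM formula $W(\chi,z)\psi=e^{zL_{-1}^T}\sum_{\lambda}\langle\lambda|{:}e^{B(z)}e^{A(z)}{:}|\psi\rangle\,e_\lambda\chi$ together with $\overline{W}(\varphi,\zeta)|\varrho\rangle=\zeta^{-2h_\varphi}W(\overline\varphi,1/\zeta^*)^\dagger|\varrho\rangle$, and use that the half-integrally moded oscillators $c_s$, the operator $L_1^T$, and $B_-(\cdot)^\dagger$ all annihilate twisted ground states. This collapses the left side to $\zeta^{-2h_\varphi}\sum_{\lambda\in\Lambda_{\mathrm{L}}}\langle\mu|e_\lambda|\chi\rangle\langle\overline\varphi|e_\lambda^\dagger|\varrho\rangle\langle\lambda|e^{A(z)}V(\psi,w)e^{\overline A(\zeta)}|\lambda\rangle$, a product of a Clifford-algebraic factor and a purely bosonic oscillator factor.

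Next I would evaluate the bosonic factor $\langle\lambda|e^{A(z)}V(\psi,w)e^{\overline A(\zeta)}|\lambda\rangle$. Splitting off the $p^2$ zero-mode pieces of $A$ and $\overline A$ produces a power $(-4z)^{-\lambda^2/2}(-4/\zeta)^{-\lambda^2/2}$; introducing $u=\sqrt{\zeta/z}$, $y=w/\sqrt{z\zeta}$ and using the scaling law \eqref{eq:zVz} together with the translation property of the $a_n$ reduces what remains to $\langle\lambda|e^{\Delta(-u^{-1})}V(\psi,y)e^{\overline\Delta(-u)}|\lambda\rangle$. Writing $V(\psi,y)$ in generating-function form with auxiliary commuting variables $\rho_a^k$, and writing $\Delta(-u^{-1})=\widetilde p\,a+\tfrac12\,aTa$ with $T_{mn}=\frac{\sqrt{nm}}{n+m}\nu_n\nu_m$, $\nu_n=(-u)^n\binom{-1/2}{n}$, this becomes a vacuum expectation value of exponentials of oscillators, evaluated by the coherent-state identity of \cite[Appendix D]{Dolan:1989vr} as $\det(1-T^2)^{-r/2}=(1-u^2)^{-r/8}\Theta_3(q^2)^{-r}$ times a lattice sum $\Gamma(q,\nu)=\sum_\lambda(-q)^{\lambda^2/2}(-1)^{\lambda\cdot\nu}e^{\lambda\cdot t(u,y)}$ times $e^{h(u,y)}$, where $q=\exp(-\pi K(\sqrt{1-u^2})/K(u))$.

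The heart of the proof is to show that this explicit expression is antisymmetric under the simultaneous exchange $\chi\leftrightarrow\varphi$, $z\leftrightarrow\zeta$ (equivalently $u\leftrightarrow 1/u$), which I would establish by two independent moves. Algebraically, I would apply the completeness relation $\sum_{\gamma\in\Gamma}\gamma^\dagger_{ab}\gamma_{cd}=2^{12}\delta_{ad}\delta_{bc}$ to the Clifford factor, using $M\gamma_\lambda^*=\gamma_\lambda M$, $M^2=\mathds{1}$ and the \emph{modified} conjugation $\overline\chi=e^{-i\pi L_0^T}\theta M(\chi)$ of \eqref{eq:modconjdef}, to re-express $\langle\mu|e_\lambda|\chi\rangle\langle\overline\varphi|e_\lambda^\dagger|\varrho\rangle$ as a $\nu$-sum of $\langle\mu|e_\nu^\dagger|\varphi\rangle\langle\overline\varrho|e_\nu|\chi\rangle$ weighted by $(-1)^{\lambda\cdot\nu}(-1)^{\lambda^2/2}$ and explicit phases; this is exactly the step that produces the overall $-1=(-1)^{|\chi||\varphi|}$, the sign arising from the $\theta$-oddness of twisted ground states in dimension $r=24\equiv 8\ (\mathrm{mod}\ 16)$. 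Analytically, I would pass from $q$ to the modular variable $q'$ defined by $\tau'=-1/4\tau$ (so that $q'\to-q'$ under $u\to 1/u$), use Poisson resummation to rewrite $\Gamma(q,\nu)$ in terms of $q'$, and use $\Theta_3(-1/\tau)=\sqrt{-i\tau}\,\Theta_3(\tau)$ to rewrite $\Theta_3(q^2)$ through $\Theta_3((q')^2)$. Combining the identities $\langle\overline\varphi|e_\lambda|\chi\rangle=\langle\overline\chi|e_\lambda|\varphi\rangle\,e^{i\pi(h_\chi+h_\varphi)}(-1)^{\lambda^2/2}$, $\Theta_3(-(q')^2)=u^{1/2}\Theta_3((q')^2)$, $(1-u^{-2})^{-r/8}\Theta_3(-(q')^2)^{-r}=(1-u^2)^{-r/8}\Theta_3((q')^2)^{-r}(-1)^{r/8}(\zeta/z)^{-r/8}$, the $u\leftrightarrow 1/u$, $q'\leftrightarrow-q'$ invariance of $e^{h(u,y)}(q')^{-t(u,y)^2/2\pi^2}$ and of $(q')^{\lambda\cdot t(u,y)/i\pi}$, and finally setting $r=24$, one reads off that the whole expression reverses sign. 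By Lemma \ref{lemma:WVbarWrel} this gives the stated operator identity.

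The hard part will be the bookkeeping of signs, phases and branches: the claim is not merely agreement up to a phase but agreement up to exactly $-1$, and this hinges on (i) the correct extension \eqref{eq:modconjdef} of the conjugation to $\mathscr{H}_T^-(\Lambda_{\mathrm{L}})$, (ii) the oddness of $r/8=3$, and (iii) consistent use of the principal branch for the half-integral powers, where $(-z)^{-k}z^{2k}=(-z)^k$ for $k\in\tfrac12\mathbb{Z}$. A secondary subtlety is justifying the analytic continuation in $u$ across $|u|=1$, i.e. checking that the elliptic change of variables $q\mapsto q'$ and the theta-function transformation laws remain valid on the branch in play; once these are pinned down the symmetry identities above are routine, and the remaining SVOA axioms for $\mathscr{H}_{SC}$ then follow from \cite{Dolan:1989vr,Dolan:1994st}.
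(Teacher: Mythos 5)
Your proposal follows the paper's proof step for step: the reduction via Lemma \ref{lemma:WVbarWrel} and Lemma \ref{lemma:VTpsichi}, the DGM evaluation of $\langle\lambda|e^{A(z)}V(\psi,w)e^{\overline A(\zeta)}|\lambda\rangle$ in the $(u,y)$ variables, the gamma-matrix completeness (Fierz) identity on the Clifford factor, the passage $q\to q'$ via $\tau'=-1/4\tau$ with Poisson resummation, and the three theta/phase identities of eq. \eqref{eq:thetarels} together with $r=24$ producing the overall $-1$. You also correctly flag the same delicate points — the extension of the conjugation, the oddness of $r/8$, and the principal-branch conventions — so this is the paper's own argument, correctly sketched.
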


To complete the proof of locality, we need to promote the states $\chi$ and $\varphi$ to arbitrary fermionic states in $\mathscr{H}_T^-(\Lambda_{\mathrm{L}})$. 
\begin{lemma}
The locality relation $W(\chi,z)\overline{W}(\varphi,w)=-W(\varphi,w)\overline{W}(\chi,z)$ for arbitrary fermionic twisted states $\chi,\varphi$ holds if and only if
\begin{equation}
    W(\chi, z)\overline{W}(\varphi, \zeta)=-W(\varphi, \zeta) \overline{W}(\chi, z)
    \label{eq:WWbarlocrel}
\end{equation}
for $\chi,\varphi$ twisted ground states. 
\begin{proof}
Direct implication is obvious. We prove the converse. 
Since we already have the locality of $\widetilde{V}(\psi,z)\widetilde{V}(\chi,w)=\widetilde{V}(\chi,w)\widetilde{V}(\psi,z)$ for twisted state $\chi$ and untwisted state $\psi$, we have the duality relation
\begin{equation}
    \widetilde{V}(\psi,z)\widetilde{V}(\chi,w)=\widetilde{V}(\widetilde{V}(\psi,z-w)\chi,w).
\label{eq:VVdualitylocrel}
\end{equation}
Note that by Proposition \ref{prop:sqrtsingope}, the right hand side of from \eqref{eq:VVdualitylocrel} has square root singularity when $\chi\in\mathscr{H}_T^-(\Lambda_{\mathrm{L}})$ and $\psi\in\mathscr{H}^-(\Lambda_{\mathrm{L}})$. Fortunately we do not have to use  \eqref{eq:VVdualitylocrel} for $\psi\in\mathscr{H}^-(\Lambda_{\mathrm{L}})$. To prove this lemma, we only need $\psi\in\mathscr{H}^+(\Lambda_{\mathrm{L}})$. 
Using explicit matrix forms of $\widetilde{V}$  \eqref{eq:veropuntbb} and \eqref{eq:veroptbb}, we get 
\[
\begin{split}
V(\psi,z)\overline{W}(\chi,w)=\overline{W}(V_T(\psi,z-w)\chi,w)\\
V_T(\psi,z)W(\chi,w)=W(V_T(\psi,z-w)\chi,w)
\end{split}
\]
Acting on $W(\chi, z)  \overline{W}(\varphi, \zeta)| \varrho\rangle$ by $V_T(\psi,w)$ with $\psi\in\mathscr{H}^+(\Lambda_{\mathrm{L}})$ and using the above relations and Lemma \ref{lemma:VTpsichi}, we can generalise $\chi$ to arbitrary fermionic twisted state on both sides of \eqref{eq:WWbarlocrel}. This is because $V_T(\psi,z)$ maps fermions to bosons and vice-versa in the twisted sector if $\theta\psi=-\psi$ and preserve the parity if $\theta \psi=\psi$. Similarly acting with $V_T(\psi,w)$ on $W(\chi, z)  \overline{W}(\varphi, \zeta) \varrho$, we can generalise $\varphi$ to arbitrary fermionic twisted state on both sides of \eqref{eq:WWbarlocrel}. The statement of the lemma now follows. 
\end{proof}
\end{lemma}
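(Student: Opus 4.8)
The plan is to establish only the non-obvious implication, that the ground-state relation \eqref{eq:WWbarlocrel} forces the same relation for all fermionic twisted states; the converse is immediate, since for the Leech lattice $r=24\equiv 8\bmod 16$, so a twisted ground state $\chi_0\in\mathcal{S}(\Lambda_{\mathrm{L}})$ satisfies $\theta\chi_0=(-1)^{r/8}\chi_0=-\chi_0$ and hence already lies in $\mathscr{H}_T^-(\Lambda_{\mathrm{L}})$.

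For the forward direction, the first step is to record the relations between untwisted vertex operators and the intertwiners $W,\overline{W}$ that are already available, read off in block form from \eqref{eq:veropuntbb}--\eqref{eq:veroptbb}. From the duality $\widetilde V(\psi,z)\widetilde V(\chi,w)=\widetilde V(\widetilde V(\psi,z-w)\chi,w)$ for untwisted $\psi$ and twisted $\chi$ one gets $V_T(\psi,z)W(\chi,w)=W(V_T(\psi,z-w)\chi,w)$ and $V(\psi,z)\overline W(\chi,w)=\overline W(V_T(\psi,z-w)\chi,w)$, and from the locality of the same operators one gets $V_T(\psi,z)W(\chi,w)=W(\chi,w)V(\psi,z)$ and $V(\psi,z)\overline W(\chi,w)=\overline W(\chi,w)V_T(\psi,z)$. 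It is crucial that $\psi$ be untwisted; moreover I will only ever use $\psi\in\mathscr{H}^+(\Lambda_{\mathrm{L}})$, which is precisely why the square-root singularities of Proposition \ref{prop:sqrtsingope} never intervene.

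The second step is the reduction itself. By Lemma \ref{lemma:VTpsichi}, applied with a zero-momentum untwisted state $\psi$ built from an \emph{even} number of oscillators $a^j_{-n}$ on $|0\rangle$ — so that $\psi\in\mathscr{H}^+(\Lambda_{\mathrm{L}})$, $V_T(\psi,w_0)$ carries $\mathcal{S}(\Lambda_{\mathrm{L}})$ into $\mathscr{H}_T^-(\Lambda_{\mathrm{L}})$, and the partition-counting argument in that lemma shows the images span $\mathscr{H}_T^-(\Lambda_{\mathrm{L}})$ — every fermionic twisted state is a linear combination of states $\chi'=V_T(\psi_1,w_0)\chi$ with $\chi$ a twisted ground state. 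By bilinearity of both sides of the asserted relation in $\chi$ and $\varphi$, it suffices to treat $\chi'=V_T(\psi_1,w_0)\chi$ and $\varphi'=V_T(\psi_2,w_0')\varphi$ with $\chi,\varphi$ ground states. Using the duality relations, $W(\chi',z)=V_T(\psi_1,z+w_0)W(\chi,z)$ and $\overline W(\chi',z)=V(\psi_1,z+w_0)\overline W(\chi,z)$, and likewise for $\varphi'$. Inserting these and then using the locality relations to commute the untwisted operators past $W$ and $\overline W$, together with the commutativity $V_T(\psi_1,\cdot)V_T(\psi_2,\cdot)=V_T(\psi_2,\cdot)V_T(\psi_1,\cdot)$ (bosonic locality in the representation $\mathscr{H}_T(\Lambda_{\mathrm{L}})$ of $\mathscr{H}^+(\Lambda_{\mathrm{L}})$), one finds that $W(\chi',z)\overline W(\varphi',\zeta)$ equals $V_T(\psi_1,z+w_0)\,V_T(\psi_2,\zeta+w_0')$ applied to $W(\chi,z)\overline W(\varphi,\zeta)$, while $W(\varphi',\zeta)\overline W(\chi',z)$ equals the same operator applied to $W(\varphi,\zeta)\overline W(\chi,z)$. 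Adding, the bracket $W(\chi,z)\overline W(\varphi,\zeta)+W(\varphi,\zeta)\overline W(\chi,z)$ vanishes by \eqref{eq:WWbarlocrel}, which gives the desired relation for $\chi',\varphi'$, hence for all fermionic twisted states.

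Throughout, the identities are to be read as equalities of matrix elements after analytic continuation from the standard nested radial domains, exactly as in the bosonic locality relation. I expect the one genuinely delicate point to be the parity-and-momentum bookkeeping: one must make sure that Lemma \ref{lemma:VTpsichi} can be run with an even number of oscillators (so $\psi\in\mathscr{H}^+(\Lambda_{\mathrm{L}})$ and $\theta$-parities are preserved) and with zero momentum (so $V_T(\psi,\cdot)$ involves only the twisted oscillators $c_k$, with no exponential or $(4z)^{-\lambda^2/2}$ prefactor), so that the whole argument stays within the $\mathscr{H}^+(\Lambda_{\mathrm{L}})$/$\mathscr{H}_T^-(\Lambda_{\mathrm{L}})$ pair and never has to invoke duality for a $\theta$-odd untwisted state, which would spoil the singularity structure and the spanning statement.
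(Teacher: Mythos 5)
Your proof is correct and follows essentially the same route as the paper's: use Lemma \ref{lemma:VTpsichi} with $\psi\in\mathscr{H}^+(\Lambda_{\mathrm{L}})$ (even number of oscillators, zero momentum) to span $\mathscr{H}_T^-(\Lambda_{\mathrm{L}})$ from ground states, then commute $V_T(\psi,\cdot)$ and $V(\psi,\cdot)$ past $W$ and $\overline W$ via the duality and locality relations, staying entirely inside the $\mathscr{H}^+/\mathscr{H}_T^-$ sectors so that no square-root singularities arise. The only difference is presentational — you carry out the $\chi$ and $\varphi$ substitutions simultaneously and make the final use of $[V_T(\psi_1,\cdot),V_T(\psi_2,\cdot)]=0$ explicit, whereas the paper does the two generalizations in sequence — but the logical content is identical.
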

\end{appendix}
\noindent\textbf{Data availability:} Data sharing not applicable to this article as no datasets were generated or
analyzed during the current study.\\\\\noindent\textbf{Funding and/or Conflicts of interests/Competing interests:} The authors have no competing interests to declare that are relevant to the
contents of this article.
\printbibliography
\end{document}